\newcolumntype{x}[1]{>{\centering\arraybackslash}p{#1}}
\newtheorem{theorem}{Theorem}
\numberwithin{theorem}{section} 
\newtheorem*{theorem*}{Theorem}
\newtheorem{proposition}[theorem]{Proposition}
\newtheorem*{proposition*}{Proposition}
\newtheorem{lemma}[theorem]{Lemma}
\newtheorem*{lemma*}{Lemma}
\newtheorem{corollary}[theorem]{Corollary}
\newtheorem*{cor*}{Corollary}
\newtheorem*{cj*}{Conjecture}
\newtheorem{Def}[theorem]{Definition}
\newtheorem*{Def*}{Definition}
\newtheorem{problem}[theorem]{Problem}
\def\thmhead@plain#1#2#3{%
  \thmname{#1}\thmnumber{\@ifnotempty{#1}{ }\@upn{#2}}%
  \thmnote{ {\the\thm@notefont#3}}}
\let\thmhead\thmhead@plain
\theoremstyle{definition}
\newtheorem{rem}[theorem]{Remark}
\newtheorem{example}[theorem]{Example}
\newcommand{\vertiii}[1]{{\left\vert\kern-0.25ex\left\vert\kern-0.25ex\left\vert #1
    \right\vert\kern-0.25ex\right\vert\kern-0.25ex\right\vert}}
\newcommand{\ten}{\otimes}
\newcommand{\pl}{\hspace{.1cm}}
\newcommand{\ran}{\rangle}
\newcommand{\lan}{\langle}
\newcommand{\al}{\alpha}
\renewcommand{\si}{\sigma}
\newcommand{\la}{\lambda}
\newcommand{\eps}{\varepsilon}
\newcommand{\bb}{\begin{equation}}
\newcommand{\bbb}{\begin{equation*}}
\newcommand{\ee}{\end{equation}}
\newcommand{\eee}{\end{equation*}}
\newcommand{\M}{{\mathcal M}}
\newcommand{\cT}{\mathcal{T}}
\newcommand{\norm}[2]{\parallel \! #1 \! \parallel_{#2}}
\newcommand{\id}{\operatorname{id}}
\newcommand{\tr}{\operatorname{tr}}
\newcommand{\Id}{\mathds{1}}
\newcommand{\cD}{\mathcal{D}}
\newcommand{\cN}{\mathcal{N}}
\newcommand{\cM}{\mathcal{M}}
\newcommand{\cF}{\mathcal{F}}
\newcommand{\cH}{\mathcal{H}}
\newcommand{\cX}{\mathcal{X}}
\newcommand{\cB}{\mathcal{B}}
\newcommand{\cL}{\mathcal{L}}
\newcommand{\cP}{\mathcal{P}}
\newcommand{\cE}{\mathcal{E}}
\newcommand{\cZ}{\mathcal{Z}}
\newcommand{\cY}{\mathcal{Y}}
\newcommand{\CC}{\mathbb{C}}
\DeclareMathAlphabet{\pazocal}{OMS}{zplm}{m}{n}
\DeclareMathOperator{\supp}{supp}
\newcommand{\cK}{\mathcal{K}}
\newcommand{\lsmatrix}{\left(\begin{smallmatrix}}
\newcommand{\rsmatrix}{\end{smallmatrix}\right)}
\newcommand*\rel@kern[1]{\kern#1\dimexpr\macc@kerna}
\newcommand*\widebar[1]{%
  \begingroup
  \def\mathaccent##1##2{%
    \rel@kern{0.8}%
    \overline{\rel@kern{-0.8}\macc@nucleus\rel@kern{0.2}}%
    \rel@kern{-0.2}%
  }%
  \macc@depth\@ne
  \let\math@bgroup\@empty \let\math@egroup\macc@set@skewchar
  \mathsurround\z@ \frozen@everymath{\mathgroup\macc@group\relax}%
  \macc@set@skewchar\relax
  \let\mathaccentV\macc@nested@a
  \macc@nested@a\relax111{#1}%
  \endgroup
}
\title[Complete entropic inequalities for Quantum Markov chains]{Complete entropic inequalities \\for Quantum Markov chains
}
\author{Li Gao}
 \email[Li Gao]{li.gao@tum.de}
\author{Cambyse Rouz\'{e}}
 \address{Zentrum Mathematik, Technische Universit\"{a}t M\"{u}nchen, 85748 Garching, Germany}
\email[Cambyse Rouz\'{e}]{cambyse.rouze@tum.de}
\begin{document}
\begin{abstract}
We prove that every GNS-symmetric quantum Markov semigroup on a finite dimensional matrix algebra satisfies a modified log-Sobolev inequality. In the discrete time setting, we prove that every finite dimensional GNS-symmetric quantum channel satisfies a strong data processing inequality with respect to its decoherence free part. Moreover, we establish the first general approximate tensorization property of relative entropy. This extends the famous strong subadditivity of the quantum entropy (SSA) of two subsystems to the general setting of two subalgebras. All the three results are independent of the size of the environment and hence satisfy the tensorization property. They are obtained via a common, conceptually simple method for proving entropic inequalities via spectral or $L_2$-estimates. As applications, we combine our results on the modified log-Sobolev inequality and approximate tensorization to derive bounds for examples of both theoretical and practical relevance, including representation of sub-Laplacians on $\operatorname{SU}(2)$ and various classes of local quantum Markov semigroups such as quantum Kac generators and continuous time approximate unitary designs. For the latter, our bounds imply the existence of local continuous time Markovian evolutions on $nk$ qudits forming $\eps$-approximate $k$-designs in relative entropy for times scaling as $\widetilde{\mathcal{O}}(n^2 \operatorname{poly}(k))$.
\end{abstract}
\maketitle

\tableofcontents

\section{Introduction and main results}
The relative entropy is a fundamental information measure that has been widely used in probability, statistics and information theory. It was first introduced by Kullback and Leibler \cite{kullback1951information} for probability distributions (also called KL-divergence), and later extended by Umegaki \cite{umegaki1962conditional} to the noncommutative setting for quantum states. For two quantum states with density matrices $\rho$ and $\si$, 
the relative entropy of $\rho$ with respect to $\si$ is defined as
\begin{align}\label{eq:umegaki}D(\rho\|\si)=\tr(\rho\ln \rho-\rho\ln \si)\pl, \end{align}
where $\tr$ is the matrix trace. When $\rho$ and $\si$ share a same eigenbasis, \eqref{eq:umegaki} recovers the KL-divergence for two (discrete) probability densities. In both classical and quantum cases, $D(\rho\|\si)$ measures how well the classical or quantum state $\rho$ can be distinguished from $\sigma$ by statistical or quantum-mechanical experiments \cite{blahut1974hypothesis,hiai1991proper,ogawa2005strong}. In this work, we study several related inequalities of quantum relative entropy which have direct applications in quantum information theory and quantum many-body systems. Some of our results also yield new insights in the classical cases for probability distributions.\\

{\bf \noindent Modified logarithmic Sobolev inequality.}
The logarithmic Sobolev inequality is a functional inequality that was first introduced by Gross in his study of quantum field theory \cite{gross1975hypercontractivity} as an equivalent formulation of hypercontractivity \cite{nelson1973free}. Over the past decades, logarithmic Sobolev inequalities have been intensively studied for their applications in analysis, probability and information theory (see e.g.~the  \textit{} \cite{ledoux2011analytic,gross2014hypercontractivity} and the references therein). One of their variant formulation, called \emph{modified logarithmic Sobolev inequality}, is directly related to entropy.
Let $(\Omega,\mu)$ be a
probability space and $(\cT_t:L_\infty(\Omega,\mu)\to L_\infty(\Omega,\mu))_{t\ge0}$ be a Markov semigroup with the unique invariant measure $\mu$. $(\cT_t)_{t\ge 0}$ is said to satisfy the $\al$-modified logarithmic Sobolev inequality (in short, $\al$-MLSI ) for $\al>0$ if for any probability density $f\ge 0, \,\int f d\mu =1$,
\begin{align} \label{eq:classicalMLSI}\al \int f\ln f d\mu \le -\int L (f) \ln f d\mu, \end{align}
where $L$ is the generator of the semigroup, i.e.~$\cT_t=e^{Lt}$. The left hand side is the (classical) entropy functional $\text{Ent}(f):=\int f\ln f d\mu$. It is well known that $\al$-MLSI \eqref{eq:classicalMLSI} is equivalent to
\[ \text{Ent}(\cT_tf)\le e^{-\al t}\text{Ent}(f)\ ,\]
which means that the entropy of the system decays exponentially. This entropic convergence property is a powerful tool to derive mixing times of the semigroup.

The main purpose of this work is to study modified logarithmic Sobolev inequalities for quantum Markov semigroups. Quantum Markov semigroups are noncommutative generalizations of Markov semigroups where the underlying function spaces are replaced by matrix algebras or operator algebras. Let $\cH$ be a finite dimensional Hilbert space and let $\cB(\cH)$ be the bounded operators on $\cH$.
A quantum Markov semigroup (QMS) $(\cP_t: \cB(\cH)\to  \cB(\cH))_{t\ge 0}$ is a continuous semigroup of completely positive trace preserving maps. Such continuous time families of quantum channels model the Markovian evolution of dissipative open quantum systems. In recent years, the connection between logarithmic Sobolev inequalities and other functional inequalities, such as hypercontractivity, Poincar\'e inequality and transport cost inequality, have been largely extended to quantum Markov semigroup (see \cite{olkiewicz1999hypercontractivity,kastoryano2013quantum,datta2020relating,rouze2019concentration,carlen2017gradient,carlen2020non}). Some of them found direct applications in quantum information and quantum computational complexity (see e.g. \cite{muller2016entropy,beigi2020quantum,brandao2016local}).

Despite the rich connections to many aspects of quantum Markov processes, logarithmic Sobolev inequalities in the quantum framework are missing one key property---the tensorization property. For two classical Markov semigroups $(\mathcal{S}_t)_{t\ge 0}$ and $(\cT_t)_{t\ge 0}$, if
each semigroup satisfies $\al$-MLSI, then $(S_t\ten T_t)_{t\ge 0}$ also satisfies $\al$-MLSI \cite{bobkov2003modified} with the same constant $\al$. Tensorization is a powerful property that allow us to obtain MLSI for large, composite systems in terms of the dynamics on smaller subsystems, which is a technique that was already used by Gross in his very first work on the logarithmic Sobolev inequality. Nevertheless, the tensor stability of MLSI fails for general (non-ergodic) quantum Markov semigroups. The lack of tensorization property is a common difficulty in quantum information (see e.g.~the super-additivity of the channel capacity \cite{hastings2009superadditivity,smith2008quantum}). On the other hand, it was discovered in \cite{gao2020fisher} that the tensorization property is satisfied with a stronger definition of MLSI:
a quantum Markov semigroup $(\cP_t: \cB(\cH)\to  \cB(\cH))_{t\ge 0}$ is said to satisfy the $\al$-\textit{complete modified logarithmic Sobolev inequality} (in short, $\al$-CMLSI) if for any $n\ge 1$, the amplification $\cP_t\ten \id_{n}$
satisfies $\al$-MLSI, where $\id_{n}$ is the identity map on a $n$-dimensional quantum system.
Our first main result shows that such tensor stable modified log-Sobolev inequality generically holds in finite dimensions.

\begin{theorem} \label{thm:1}
Let $(\cP_t)_{t\ge 0}$ be a quantum Markov semigroup and denote by $\displaystyle E_*=\lim_{t\to\infty }\cP_t$ the projection onto its fixed point space. Suppose $(\cP_t)_{t\ge 0}$ is GNS-symmetric to some full-rank invariant state $\si$. Then for all $n\in\mathbb{N}$ and all states $\rho\in \cB(\cH\ten \mathbb{C}^n)$,
 \begin{align}\label{MLSI1}
 D(\cP_t\ten \id_{n}(\rho)\| E_*\ten \id_{n}(\rho))\le e^{-\al t}D(\rho\| E_*\ten \id_{n}(\rho))\pl .\pl  \tag{CMLSI}
 \end{align}
where $D(\cdot\|\cdot)$ denotes the relative entropy and the constant $\al$ satisfies
\[ \frac{\la}{C_{\operatorname{cb}}(E_*)}\le \al\le 2\la \pl. \]
Here $\la$ is the spectral gap and $C_{\operatorname{cb}}(E_*)$ is the complete Pimsner-Popa index of the map $E_*$.
\end{theorem}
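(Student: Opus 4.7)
My plan is to derive \eqref{MLSI1} from its infinitesimal (entropy-production) form
\[
  -\,\partial_t\Big|_{t=0}\, D\bigl(\cP_t\ten\id_n(\rho)\,\big\|\,E_*\ten\id_n(\rho)\bigr) \;\geq\; \al\, D\bigl(\rho\,\big\|\,E_*\ten\id_n(\rho)\bigr),
\]
which upgrades to exponential decay via a standard Gronwall argument. Because both the spectral gap and the completely bounded Pimsner--Popa index are stable under amplification by $\id_n$, the entire argument will be intrinsically uniform in $n$, and this is precisely what delivers the \emph{complete} version of MLSI rather than its ordinary counterpart.

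\textbf{Main steps.} The entropy-production inequality is established by chaining three bounds. First, GNS-symmetry makes the generator $\cL$ self-adjoint on $L_2(\si)$ with spectral gap $\la$ on $(\ker\cL)^\perp$, yielding a Poincar\'e inequality $\langle\rho,\cL\rho\rangle_\si \geq \la\,\chi^2(\rho\|E_*\ten\id_n(\rho))$ for the $L_2$-Dirichlet form. Second, a quantum Stroock--Varopoulos-type comparison (obtained by an integral representation of the logarithm such as $\log a = \int_0^\infty(\tfrac{1}{1+s}-\tfrac{1}{a+s})\,ds$, which rewrites the entropy production as an integral of resolvent-weighted quadratic forms) relates the entropy production to the $L_2$-Dirichlet form. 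Third, the complete Pimsner--Popa inequality $C_{\operatorname{cb}}(E_*)\cdot (E_*\ten\id_n)(\rho) \geq \rho$ supplies the uniform comparison between $\chi^2$-divergence and relative entropy needed to close the chain. Threading these bounds together delivers the lower bound $\al\geq\la/C_{\operatorname{cb}}(E_*)$.

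\textbf{Upper bound.} The companion bound $\al\leq 2\la$ follows by linearising \eqref{MLSI1} around a fixed point. Taking $\rho_\eps = \si + \eps\,\delta$ with $E_*(\delta)=0$, both sides of \eqref{MLSI1} scale as $\eps^2$ to leading order: $D(\rho_\eps\|\si)\sim\tfrac{\eps^2}{2}\|\delta\|_\si^2$ in the appropriate KMS/BKM metric, with the analogous expression for the evolved state. The inequality then reduces, in the limit $\eps\to 0$, to $\|\cP_t\delta\|_\si^2 \leq e^{-\al t}\|\delta\|_\si^2$; specialising $\delta$ to an eigenvector of $\cL$ at the gap eigenvalue $\la$ forces $\al\leq 2\la$.

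\textbf{Main obstacle.} The crux of the proof is the $L_2$-to-entropy comparison step. In the commutative setting such a comparison is essentially pointwise and controlled by bounds on the density ratio $\rho/\si$. In the noncommutative setting $\rho$ and $E_*(\rho)$ need not commute, forcing the argument through operator means and functional calculus, and --- crucially --- requiring the estimate to remain valid under arbitrary amplifications $\id_n$. This is precisely why the \emph{completely bounded} (rather than merely ordinary) Pimsner--Popa index $C_{\operatorname{cb}}(E_*)$ appears in the constant; controlling the error in the resulting operator-valued comparison, uniformly in $n$, is where the technical delicacy of the proof resides.
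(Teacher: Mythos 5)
Your strategy is essentially the paper's: bound the relative entropy by the BKM $\chi^2$-divergence, apply the Poincar\'e inequality for the Dirichlet form, and use the index inequality $\rho\le C_{\operatorname{cb}}(E_*)\,(E_*\ten \id_{n})(\rho)$ together with operator anti-monotonicity of $t\mapsto t^{-1}$ to pass between weights; the upper bound $\al\le 2\la$ by linearisation around an invariant state is also the standard argument the paper cites, and the uniformity in $n$ comes, exactly as you say, from the stability of both the gap and the cb-index under amplification.

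However, two links of your chain are mis-assembled. First, the index is \emph{not} what compares $\chi^2$ with the relative entropy: the direction needed here, $D(\rho\|\omega)\le\|\rho-\omega\|^2_{\omega^{-1}}$, holds with constant $1$ and with no density-ratio assumption (it is the reverse inequality $k(c)\chi^2\le D$ that requires $\rho\le c\,\omega$, and that direction is not used for the lower bound on $\al$). Where the index genuinely enters is inside your second step: after the integral representation of $\ln$, the entropy production is a sum of quadratic forms weighted by $\Gamma_{\rho}^{-1}$ (with the $\omega_j$-tilted kernels coming from the modular structure), whereas the Dirichlet form fed into the Poincar\'e inequality is the same form weighted by $\Gamma_{E_*(\rho)}^{-1}$; the passage $\|Y\|^2_{E_*(\rho)^{-1}}\le C_{\operatorname{cb}}(E_*)\,\|Y\|^2_{\rho^{-1}}$ is exactly Lemma \ref{lemma:compare} applied to $\rho\le C_{\operatorname{cb}}(E_*)E_*(\rho)$. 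As written, your chain omits the index from the one link that cannot close without it and spends it on a link that is free. Second, the Poincar\'e inequality must be taken in $L_2(E_*(\rho))$, not $L_2(\si)$, since the variance being bounded is $\|X-\Id\|^2_{E_*(\rho)}$ for $X=\Gamma_{E_*(\rho)}^{-1}(\rho)$; this is harmless only because GNS-symmetry makes the spectral gap independent of the choice of full-rank invariant state, a point the paper isolates in a separate lemma and which your sketch should record explicitly.
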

We refer to Section \ref{sec:CMLSI} for details on the definition of the spectral gap and Section \ref{sec:keylemma} for the index $C_{\operatorname{cb}}(E_*)$. We remark that Theorem \ref{thm:1} asserts that not only the semigroup $(\cP_t)_{t\ge 0}$ itself but also its amplifications
$(\cP_t\ten \id_{n})_{t\ge 0}$, coupling an environment system $\mathbb{C}^n$, admit exponential decay of relative entropy for a uniform rate $\al$ for all dimension $n$. This definition was introduced in \cite{gao2020fisher}, and proved to satisfy the tensorization property: whenever two quantum Markov semigroups satisfy $\al$-CMLSI, their tensor product satisfies $\al$-CMLSI. Later, Li, Junge and LaRacuente \cite{li2020graph} proved that the heat semigroup of Riemannian manifolds of positive curvature and all classical (continuous-time) finite Markov chains satisfy CMLSI. Using the noncommutative curvature lower bound introduced in \cite{carlen2017gradient,datta2020relating}, CMLSI was obtained for heat semigroup on all compact Riemannian manifolds and some examples from operator algebras \cite{brannan2020complete,brannan2020complete2}.
Despite the constant progress on this topic in the recent years, the problem of the positivity of the CMLSI constant for finite dimensional QMS has been left open. Here, our Theorem \ref{thm:1} finally provides a positive answer to the question via a relatively simple proof.\\

{\bf \noindent Strong Data processing inequality.}
One key property behind
the widespread applications of the quantum relative entropy is the data processing inequality. It states that the relative entropy is non-increasing under the action of a quantum channel $\Phi$ (complete positive trace perserving map). Namely, for all states $\rho$ and $\sigma$,
\begin{align}\label{DPI}
    D(\Phi(\rho)\|\Phi(\sigma))\le\,D(\rho\|\sigma)\,.
\end{align}
As the relative entropy is a measure of distinguishability, the data processing inequality
asserts that two states can not become more distinguishable after applying a
same channel to them. First proved by Lindblad \cite{lindblad1975completely} and Uhlmann \cite{uhlmann1977relative}, the data processing inequality for the relative entropy has been largely
refined and improved in recent years (e.g.\cite{muller2017monotonicity,junge2018universal,carlen2020recovery}).
As discussed in \cite{lesniewski1999monotone,muller2016entropy,hiai2016contraction,berta2019quantum},
one natural direction is to ask when the contraction of relative entropy observed in \eqref{DPI} can be strict, i.e. there exists a constant $c<1$ such that
\begin{align}
\label{SDPI}
D(\Phi(\rho)\|\Phi(\si))\le c\,D(\rho\|\si)\,.
\end{align}
This question has been intensively studied for classical channels and more general entropies (see e.g. \cite{AG76,Dobrushin56,Dobrushin56II,Csiszar67,Liese06,polyanskiy2017strong,Raginsky16} and the references therein) under the name \emph{strong data processing inequality} (SDPI). In the quantum setting,
despite progresses on some special cases \cite{muller2016entropy,hiai2016contraction}, the existence of a contractive coefficient for general channels in \eqref{SDPI} remains open.
Our second main result is the following strong data processing inequality as a discrete time analog of Theorem \ref{thm:1}.

\begin{theorem}[(c.f. Corollary \ref{cor:SDPI})]\label{thm:2}
Let $\Phi:\cB(\cH)\to \cB(\cH)$ be a quantum channel. Suppose $\Phi$ is $\operatorname{GNS}$-symmetric to a full-rank invariant state $\si=\Phi(\si)$. Then there exists an explicit constant $c<1$ such that for any $n\in\mathbb{N}$ and all bipartite states $\rho\in\cD(\cH\otimes \mathbb{C}^n)$,
 \begin{align}\label{CSDPI1}\tag{CSDPI}
D((\Phi\otimes\id_n)(\rho)\|(\Phi\circ E_*\otimes \id_n)(\rho))\le c\,D(\rho\|(E_*\otimes\id_n)(\rho))\,,
\end{align}
where $E_*$ is the projection onto the decoherence-free space of $\Phi$.
\end{theorem}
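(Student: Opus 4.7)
The proof plan follows the paper's announced common method: first establish an $L_2$/spectral contraction of $\Phi$ towards its decoherence-free part $E_*$, then upgrade this to an entropic contraction that survives tensorization with arbitrary ancillae, via the complete Pimsner--Popa index $C_{\operatorname{cb}}(E_*)$ already used in Theorem~\ref{thm:1}.

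The starting point is GNS-symmetry. Since $\Phi$ is self-adjoint with respect to the GNS inner product $\langle x,y\rangle_\sigma=\tr(\sigma^{1/2}x^*\sigma^{1/2}y)$ and satisfies $\Phi(\sigma)=\sigma$, its spectrum is real in $[-1,1]$; the decoherence-free algebra coincides with the fixed-point algebra, so $\Phi\circ E_*=E_*\circ \Phi=E_*$. Letting $\lambda>0$ denote the spectral gap of $\Phi$, the operator $\Phi-E_*$ contracts in $L_2(\sigma)$ by the factor $1-\lambda$, and because this is a Hilbert-space operator-norm bound it is automatically completely bounded and persists after tensoring with $\id_n$. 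Rewriting $(\Phi\circ E_*)\otimes\id_n=E_*\otimes\id_n$ (because $E_*(\rho)$ lies in the fixed-point algebra of $\Phi$), this step already produces a spectral contraction of the quantum $\chi^2$-divergence at rate $(1-\lambda)^2$ uniformly in $n$.

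The next step is to convert this spectral contraction to an entropic one. The key lemma of Section~\ref{sec:keylemma} provides a completely bounded comparison between the relative entropy and a quadratic functional $\mathcal{Q}$ that respects the $L_2$ contraction; more precisely, one can dominate $D(\rho\|(E_*\otimes\id_n)(\rho))$ by $C_{\operatorname{cb}}(E_*)\,\mathcal{Q}(\rho)$, where $\mathcal{Q}$ contracts along $\Phi\otimes\id_n$ with at least the rate of the $\chi^2$-divergence, while still controlling the entropic gain from a single application of $\Phi$. Chaining these estimates and using $\Phi\circ E_*=E_*$ to identify the reference state in \eqref{CSDPI1} as $(E_*\otimes\id_n)(\rho)$, one arrives at an explicit constant of the form
\[
c\;\le\;1-\frac{1-(1-\lambda)^2}{C_{\operatorname{cb}}(E_*)}\;<\;1,
\]
independent of the ancilla dimension $n$, mirroring the CMLSI rate $\alpha\ge \lambda/C_{\operatorname{cb}}(E_*)$ from Theorem~\ref{thm:1}.

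The main obstacle is exactly this conversion step: one must transport an $L_2$/spectral contraction into an entropic contraction in a way that survives tensorization with an arbitrary ancilla. A naive Pinsker- or $\chi^2$-style conversion would incur constants depending on the dimension of the environment; this is precisely why the complete Pimsner--Popa index $C_{\operatorname{cb}}(E_*)$, rather than its ordinary version, appears in the final bound. A secondary technical subtlety is the possible negative spectrum of $\Phi$, which can be handled by replacing $\Phi$ with $\Phi^2$ in the spectral step (only squaring the contraction factor) while preserving GNS-symmetry and the decoherence-free structure.
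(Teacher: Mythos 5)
Your overall strategy (an $L_2$/spectral gap for $\Phi-E_*$, combined with the index $C_{\operatorname{cb}}(E_*)$ and the two-sided entropy--$\chi^2$ comparison of Lemma \ref{lemma:keylemma}) is indeed the paper's, but the step you call ``chaining these estimates'' is exactly where the difficulty lies, and as described it does not go through. The natural chain --- upper-bound $D(\Phi(\rho)\|\Phi(\omega))$ by $\chi_2(\Phi(\rho),\Phi(\omega))$, contract the $\chi^2$ by $\la$, then lower-bound $\chi_2(\rho,\omega)$ by $k(C)^{-1}D(\rho\|\omega)$ --- yields $c=\la/k(C)$, which is typically much larger than $1$ since $k(C)\le 1/2$ and decays with the index. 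No single application of Lemma \ref{lemma:keylemma} on each side produces $c<1$. What the paper actually does (Theorem \ref{thm:local}) is a second-order interpolation argument: set $\rho_t=t\rho+(1-t)\omega$, $g(t)=D(\rho_t\|\omega)-D(\Phi(\rho_t)\|\Phi(\omega))$, compute $g''(t)=\norm{\rho-\omega}{\rho_t^{-1}}^2-\norm{\Phi(\rho-\omega)}{\Phi(\rho_t)^{-1}}^2$, and lower-bound this \emph{pointwise in $t$} by $\big(\tfrac{1}{1+(C-1)t}-\tfrac{\la^2}{1-t}\big)\norm{\rho-\omega}{\omega^{-1}}^2$ using Lemma \ref{lemma:compare}; integrating twice up to the crossover point $t_0$ and invoking the upper bound of Lemma \ref{lemma:keylemma} only at the very end gives an explicit $c(C,\la)<1$. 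This pointwise-in-$t$ comparison is the missing idea in your proposal, and your claimed bound $c\le 1-\frac{1-(1-\la)^2}{C_{\operatorname{cb}}(E_*)}$ is asserted rather than derived.

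A secondary but genuine error: you claim the decoherence-free algebra coincides with the fixed-point algebra, hence $\Phi\circ E_*=E_*$. Under $\si$-DBC the multiplicative domain $\cN$ of $\Phi^*$ is the span of the eigenspaces for eigenvalues $+1$ \emph{and} $-1$ (Lemma \ref{md}); only $(\Phi^*)^2\circ E_\cN=E_\cN$ holds, and $\Phi\circ E_{\cN*}\neq E_{\cN*}$ in general. This is precisely why the reference state in \eqref{CSDPI1} is $(\Phi\circ E_*\otimes\id_n)(\rho)$ rather than $(E_*\otimes\id_n)(\rho)$, and why the paper works with $D(\Phi(\rho)\|\Phi(\omega))$ for $\omega=E_{\cN*}(\rho)$ instead of collapsing the second argument. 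Your remark about passing to $\Phi^2$ addresses the negativity of the spectrum for the $L_2$ step but does not repair this identification.
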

We refer to Section \ref{sec:CSDPI} for the definition of $E_*$ and remark that the constant $c$ explicitly depends on the index $C_{\operatorname{cb}}(E_*)$ and an $L_2$-condition $\la:=\norm{\Phi-E_*:L_2\to L_2}{}$. The above inequality \eqref{CSDPI1} implies a discrete time entropy decay. Moreover, the inequality \eqref{CSDPI1} gives again a uniform control for all amplifications $\Phi\otimes\id_n$, which is the reason why we call it \emph{complete strong data processing inequality} (CSDPI).  These improvements over the standard data processing inequality have applications to quantum state preparation and quantum channel capacities \cite{bardet2019group,capel2020modified}. For instance, similar to CMLSI, CSDPI admits tensorization: if two quantum channels $\Phi$ and $\Psi$ satisfy CSDPI with contraction coefficient $c<1$, so does $\Phi\ten \Psi$. Also thanks to ``completeness'', Theorem \ref{thm:2} implies a concrete estimate on the convergence $\Phi^n \to \Phi^n\circ E_*$ in terms of the diamond norm.\\

{\bf \noindent Approximate tensorization of relative entropy.}
The data processing inequality is closely related to another celebrated inequality in quantum information theory, namely the \textit{strong subadditivity} (SSA). SSA can be equivalently stated in terms of relative entropies as follows: for any tripartite state $\rho^{ABC}$,
\[ D\Big(\rho^{ABC}\Big\| \frac{\Id_{AB}}{d_{AB}}\ten \rho^{C}\Big)\le D\Big(\rho^{ABC}\Big\| \frac{\Id_{A}}{d_{A}}\ten \rho^{BC}\Big)+D\Big(\rho^{ABC}\Big\| \frac{\Id_{B}}{d_{B}}\ten \rho^{AC}\Big)\pl.\]
Here $\frac{\Id_{AB}}{d_{AB}}$ is the completely mixed state on $AB$ whereas $\rho^C$ denotes the reduced density on $C$ (and similarly for the other terms). SSA was long known in classical information theory, and proved by Lieb and Ruskai \cite{lieb1973proof} for the quantum entropy. 
Later Petz \cite{Petz91} proved SSA in a very general setting: given any four matrix subalgebras $\cN\subset \cN_1,\cN_2\subset \cM$, and corresponding projections $E_1,\,E_2,\,E_\cN$ from $\cM$ onto $\cN_1,\,\cN_2$ and $\cN$, for all states $\rho$ on $\cM$, the following inequality holds
\begin{align}\label{tensorization}
    D(\rho\|E_{\cN*}(\rho))\le \,D(\rho\|E_{1*}(\rho))+D(\rho\|E_{2*}(\rho))\,
\end{align}
as long as $E_1\circ E_2=E_2\circ E_1=E_\cN$. This last commutation relation is usually referred to as a ``\textit{commuting square}'' condition and was introduced by Popa \cite{popa83}.

Although the commuting square gives a nice characterization of SSA,
SSA-type inequalities are also desired when the ``commuting square'' condition is not fully satisfied.
For instance, in the context of classical lattice spin systems, where the projections are conditional expectations onto different regions of the lattice with respect to a given Gibbs measure, the commuting square condition corresponds to the infinite temperature regime \cite{bardet2020approximate}. To assess the finite temperature regime, \eqref{tensorization} has to be modified in the following way \cite{cesi2001quasi,dai2002entropy}: there exists a constant $c >1$ such that for all states $\rho$,
\begin{align}\label{AT}
    D(\rho\|E_{\cN*}(\rho))\le \,c\,\big(D(\rho\|E_{1*}(\rho))+D(\rho\|E_{2*}(\rho))\big)\,,
\end{align}
where the constant $c$ is some measure of the violation of the commutation relation $\|E_1\circ E_2-E_\cN\|$ in some appropriate norm. This inequality, called \textit{approximate tensorization of the relative entropy}, was used in the classical case (i.e. when all algebras
are commutative) in the study of logarithmic Sobolev inequalities for lattice spin system \cite{cesi2001quasi}. In the quantum setting,
 a weaker bound to \eqref{AT} was derived in \cite{bardet2020approximate} with a further additive error term vanishing on classical states. However, the question of finding general bounds like \eqref{AT} without additive error term was left unresolved. Our third main theorem answers this question.
\begin{theorem}
\label{thm:3}
Let  $\cN\subset \cN_1,\cN_2\subset \cM$ be four finite dimensional von Neumann algebras. Let  $E_1,\,E_2,\,E_\cN$ be the corresponding projections from $\cM$ onto $\cN_1,\,\cN_2$ and $\cN$ such that $E_\cN\circ E_{1}=E_\cN\circ E_{2}=E_\cN$. Then there exists an explicit constant $c_{\operatorname{cb}}$ such that
any $n\in\mathbb{N}$ and all states $\rho\in \cM\otimes \cB(\mathbb{C}^n)$, we have
\begin{align}\label{completeATq1}
    D(\rho\|(E_{\cN*}\otimes\id)(\rho))\le \,c_{\operatorname{cb}}\big(D(\rho\|(E_{1*}\otimes \id)(\rho))+D(\rho\|(E_{2*}\otimes \id)(\rho))\big).
\end{align}
\end{theorem}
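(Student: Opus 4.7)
The plan is to reduce the approximate tensorization of relative entropy to a spectral/$L_2$-estimate via the key lemma of Section \ref{sec:keylemma}, which is exactly the tool used to prove Theorems \ref{thm:1} and \ref{thm:2}. Since that lemma is formulated in terms of the complete Pimsner-Popa index $C_{\operatorname{cb}}(E_\cN)$ and controls the ``residual'' relative entropy by an $L_2$-type quantity, the resulting constant $c_{\operatorname{cb}}$ is automatically stable under amplification by $\id_n$, so it suffices to prove the bound at the uncoupled level and then tensorize.

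The first step uses the chain rule for relative entropy. The compatibility condition $E_\cN\circ E_i = E_\cN$ dualizes to $E_{i*}\circ E_{\cN*} = E_{\cN*}$, so $E_{\cN*}\rho$ is a fixed point of both $E_{1*}$ and $E_{2*}$, and Petz's Pythagorean identity for GNS-symmetric conditional expectations yields
\begin{equation*}
D(\rho\|E_{\cN*}\rho) = D(\rho\|E_{i*}\rho) + D(E_{i*}\rho\|E_{\cN*}\rho), \qquad i=1,2.
\end{equation*}
Averaging these two identities, the theorem is reduced to establishing the ``residual'' contraction
\begin{equation*}
D(E_{1*}\rho\|E_{\cN*}\rho) + D(E_{2*}\rho\|E_{\cN*}\rho) \le k\, D(\rho\|E_{\cN*}\rho)
\end{equation*}
for some $k<2$, after which $c_{\operatorname{cb}} = 1/(2-k)$.

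The residual contraction would be obtained by combining two ingredients. First, the key lemma bounds each $D(E_{i*}\rho\|E_{\cN*}\rho) \le \log C_{\operatorname{cb}}(E_\cN)\cdot \|E_{i*}\rho - E_{\cN*}\rho\|_{L_2}^2$, where the $L_2$-norm is the weighted one attached to the invariant state. Second, since the $E_i$ are orthogonal projections in the GNS inner product, a direct Hilbert-space computation gives
\begin{equation*}
\|E_{1*}\rho - E_{\cN*}\rho\|_{L_2}^2 + \|E_{2*}\rho - E_{\cN*}\rho\|_{L_2}^2 = 2\langle x, \Phi x\rangle_{L_2}, \qquad x := \rho - E_{\cN*}\rho,
\end{equation*}
where $\Phi := \tfrac12(E_1+E_2)$. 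The compatibility assumption forces $\Phi E_\cN = E_\cN \Phi = E_\cN$, so in the non-degenerate situation where $\cN$ coincides with the common fixed-point subalgebra of $E_1,E_2$ in $L_2$ (the only setting in which the claimed inequality can hold non-trivially), one has the spectral bound $\lambda := \|\Phi - E_\cN : L_2\to L_2\| < 1$ and hence $\langle x,\Phi x\rangle_{L_2} \le \lambda\, \chi^2(\rho\|E_{\cN*}\rho)$. A matching reverse comparison between $\chi^2(\rho\|E_{\cN*}\rho)$ and $D(\rho\|E_{\cN*}\rho)$, again delivered by the cb-index machinery of Section \ref{sec:keylemma}, closes the loop and yields the residual contraction with an explicit $k$.

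The main obstacle is precisely this ``two-way'' comparison between entropy and $L_2$: the key lemma provides the upper bound of entropy by $L_2$ deviation, but to convert the spectral gap of $\Phi$ back into an entropic statement one also needs a quantitatively matched lower bound on $D(\rho\|E_{\cN*}\rho)$ in terms of the same $L_2$ quantity, without losing cb-stability. In the non-tracial GNS-symmetric setting this matching is the delicate step and is the whole reason that the cb-version of the Pimsner-Popa index has to be used throughout. Once both directions are aligned, the amplification $\rho \in \cM\otimes\cB(\CC^n)$ is automatic, since $C_{\operatorname{cb}}(E_\cN\otimes\id_n) = C_{\operatorname{cb}}(E_\cN)$ and $\|\Phi\otimes\id_n - E_\cN\otimes\id_n : L_2\to L_2\| = \lambda$.
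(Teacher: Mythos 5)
Your chain-rule reduction is exactly the paper's own equivalence between \eqref{eq:AT} and the ``entropic uncertainty'' form \eqref{eq:UCR}, and the Hilbert-space identity $\norm{E_{1*}x}{\rho_\cN^{-1}}^2+\norm{E_{2*}x}{\rho_\cN^{-1}}^2=2\lan x,\Phi_* x\ran$ for $x=\rho-\rho_\cN$ is correct. The problem is the step you yourself flag as ``delicate'' and then wave through: to close the residual contraction $D(\rho_1\|\rho_\cN)+D(\rho_2\|\rho_\cN)\le k\,D(\rho\|\rho_\cN)$ you must convert $\norm{x}{\rho_\cN^{-1}}^2$ back into $D(\rho\|\rho_\cN)$, and the only lower bound available (Lemma \ref{lemma:keylemma}) is $k(C)\norm{x}{\rho_\cN^{-1}}^2\le D(\rho\|\rho_\cN)$ with $k(C)=\frac{C\ln C-C+1}{(C-1)^2}$, $C=C_{\operatorname{cb}}(E_\cN)$. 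Your argument therefore yields $k=2\la/k(C)$, which is strictly less than $2$ only when $\la<k(C)$. Since $k(C)\le 1/2$ always and $k(C)\sim (\ln C)/C\to 0$ for large index, this is a severe smallness assumption on $\la$ rather than the generic condition $\la<1$ under which the theorem is claimed (and under which Theorem \ref{thm:AT} is proved). There is no way to avoid this loss on your route: the residual contraction with $k<2$ is \emph{equivalent} to approximate tensorization, so any proof of it must produce a $k$ that can be arbitrarily close to $2$, and a two-sided entropy--$\chi^2$ sandwich whose constants differ by a factor $k(C)^{-1}$ cannot do that. (A minor additional slip: the upper bound $D(\rho_i\|\rho_\cN)\le\norm{E_{i*}x}{\rho_\cN^{-1}}^2$ carries no $\ln C_{\operatorname{cb}}$ prefactor.)

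The paper's proof of Theorem \ref{thm:AT} sidesteps this by using the entropy--$\chi^2$ comparison in one direction only. One bounds $D(\rho\|\rho_\cN)\le\norm{x}{\rho_\cN^{-1}}^2$ (no constant), uses the orthogonal-projection geometry in $L_2(\rho_\cN^{-1})$ together with the clustering bound to get $\norm{x}{\rho_\cN^{-1}}^2\le\frac{1}{(1-\la)^2}\big(\norm{\rho-\rho_1}{\rho_\cN^{-1}}^2+\norm{\rho-\rho_2}{\rho_\cN^{-1}}^2\big)$, and only then trades the weight $\rho_\cN^{-1}$ for $\rho_i(t)^{-1}$ via Lemma \ref{lemma:compare} (cost $C_\tau(\cM:\cN)$) and integrates the identity $D(\rho\|\rho_i)=\int_0^1\!\int_0^s\norm{\rho-\rho_i}{\rho_i(t)^{-1}}^2\,dt\,ds$ (cost $2$). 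The index enters only through the harmless comparison $\norm{\rho-\rho_i}{\rho_\cN^{-1}}^2\le 2C\,D(\rho\|\rho_i)$, never through a lower bound on $D(\rho\|\rho_\cN)$, which is why the final constant $\frac{2C}{(1-\la)^2}$ is finite for every $\la<1$. Your skeleton can be repaired by replacing the residual-contraction step with this direct estimate: from $\norm{x}{}^2=\norm{E_{i*}x}{}^2+\norm{(\id-E_{i*})x}{}^2$ summed over $i=1,2$ and your identity, one gets $2(1-\la)\norm{x}{\rho_\cN^{-1}}^2\le\sum_i\norm{\rho-\rho_i}{\rho_\cN^{-1}}^2$, and the rest proceeds as in the paper.
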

We refer to Theorems \ref{thm:AT} \& \ref{theo:lambdaindexcontrolAT} and Corollary  \ref{Coro:sharperAT} for concrete estimates on the constant $c_{\operatorname{cb}}$. All of the three results above rely on a common conceptually simple tool, namely a two-sided estimate of the relative entropy via the so-called \textit{Bogoliubov-Kubo-Mori Fisher information} (see Lemma \ref{lemma:keylemma} in Section \ref{sec:keylemma}  for more details).
The Bogoliubov-Kubo-Mori Fisher information is closely related to a special case of monotone Riemannian metric on state space studied in \cite{petz1996monotone,lesniewski1999monotone} and quantum $\chi_2$-divergence in \cite{wolf2012quantum}.
It allows us to approach each of the three above entropy inequalities via corresponding spectral gap conditions. Given the simplicity of our approach, we believe it will also prove useful in the study of other entropic inequalities. \\

{\bf \noindent Applications and Examples.}
In the second part of this article, namely \Cref{sec:symmetric,sec:localsemigroups}, we exploit the  approximate tensorization estimate from Theorem \ref{thm:3} to get tighter bounds on the optimal CMLSI constant for quantum Markov semigroups (QMS) relevant to the communities of mathematical physics and quantum information theory. For a QMS $(\cP_t=e^{t\cL})_{t\ge 0}$ with the generator $\cL$,
we denote by $\alpha_{\operatorname{CMLSI}}(\cL)$ the largest constant $\alpha$ satsisfying \eqref{MLSI1} in Theorem \ref{thm:1}. In \Cref{sec:symmetric}, we restrict our analysis to the class of symmetric QMS, that is QMS symmetric to the trace inner product or equivalently the maximally mixed state. The generators of these semigroups admit a simple form as a sum of double commutators with self-adjoint operators $\{a_k\}$:
\begin{align}
    \cL(\rho)=-\sum_{k=1}^l[a_k,[a_k,\rho]]\,.
\end{align}
Using approximate tensorization, we obtain the following improved CMLSI constant for
symmetric QMS.
\begin{theorem}[(c.f. Corollary \ref{cor:symmetric2})]\label{thm:4}
For a symmetric generator $\cL$ given as above,
\[\alpha_{\operatorname{CMLSI}}(\cL)\ge \Omega\big(\la m^{-2}\operatorname{polylog}(d_{\cH})^{-1} \big)\]
where $d_\cH$ is the dimension of the underlying Hilbert space, $m$ denotes the maximal number of $a_k$ that do not commute with any single one of them, and $\la:=\min_k \la(\cL_{a_k})$ is the minimum spectral gap of any of the generators $\cL_{a_k}(\rho)=[a_k,[a_k,\rho]]$.
\end{theorem}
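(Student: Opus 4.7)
The plan is to decompose $\cL$ according to a proper coloring of the non-commutation graph of the $a_k$'s, apply Theorem~\ref{thm:1} on each color class, and recombine the resulting CMLSI estimates via the iterated form of Theorem~\ref{thm:3}. Form the graph $G=([\ell],E)$ whose edges are the pairs $\{k,k'\}$ with $[a_k,a_{k'}]\neq 0$. Since $\Delta(G)\le m$ by hypothesis, a greedy proper coloring produces a partition $[\ell]=I_1\sqcup\cdots\sqcup I_\chi$ with $\chi\le m+1$ and the property that, for every $c$, the operators $\{a_k\}_{k\in I_c}$ mutually commute. Set
\begin{equation*}
\cL_c(\rho)\coloneqq -\sum_{k\in I_c}[a_k,[a_k,\rho]],
\end{equation*}
so that $\cL=\sum_{c=1}^{\chi}\cL_c$, each $\cL_c$ is symmetric, and its summands pairwise commute.

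Next, applying Theorem~\ref{thm:1} to each $\cL_c$ yields $\alpha_{\operatorname{CMLSI}}(\cL_c)\ge \la(\cL_c)/C_{\operatorname{cb}}(E_c)$, where $E_c$ is the trace-preserving conditional expectation onto the fixed-point subalgebra $\cN_c$ of $\cL_c$. Because the commuting nonnegative summands making up $\cL_c$ all have spectral gap at least $\la$ on the orthocomplement of their kernel, the spectral mapping of commuting positive operators gives $\la(\cL_c)\ge \la$. The subalgebra $\cN_c$ is the joint commutant of $\{a_k\}_{k\in I_c}$, whose Pimsner--Popa index can be controlled by a polynomial in $d_\cH$, so $C_{\operatorname{cb}}(E_c)=O(\operatorname{polylog}(d_\cH))$ after the complete version is taken into account.

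To combine the per-color CMLSI, observe that the global fixed-point algebra $\cN\coloneqq\bigcap_c\cN_c$ (the joint commutant of all $a_k$) is contained in every $\cN_c$, so $E_\cN\circ E_c=E_\cN$ for each $c$, which is the ``commuting square'' compatibility required by Theorem~\ref{thm:3}. A telescoping application of Theorem~\ref{thm:3} across $c=1,\ldots,\chi$ (or more sharply, an invocation of Corollary~\ref{Coro:sharperAT}) produces a constant $c_{\operatorname{cb}}^{\operatorname{AT}}=O(m)$ such that
\begin{equation*}
D\bigl(\rho\,\|\,(E_{\cN,*}\otimes\id_n)(\rho)\bigr)\ \le\ c_{\operatorname{cb}}^{\operatorname{AT}}\sum_{c=1}^{\chi} D\bigl(\rho\,\|\,(E_{c,*}\otimes\id_n)(\rho)\bigr)
\end{equation*}
for every $n\in\mathbb{N}$ and every state $\rho$ on $\cB(\cH\otimes\mathbb{C}^n)$. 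Bounding each term on the right by $\operatorname{EP}_{\cL_c}(\rho)/\alpha_{\operatorname{CMLSI}}(\cL_c)$ and summing via the additivity $\operatorname{EP}_\cL=\sum_c\operatorname{EP}_{\cL_c}$ delivers CMLSI for $\cL$ with the announced rate $\Omega(\la m^{-2}\operatorname{polylog}(d_\cH)^{-1})$: one factor of $m^{-1}$ comes from the iterated approximate tensorization constant $c_{\operatorname{cb}}^{\operatorname{AT}}$, and the second from the polynomial dependence of its sharp form on the number of subalgebras.

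The main obstacle is the simultaneous control of the $m$-dependence in the iterated approximate tensorization constant and the $\operatorname{polylog}(d_\cH)$ bound on the index $C_{\operatorname{cb}}(E_c)$. The former requires Corollary~\ref{Coro:sharperAT} so that combining $\chi=O(m)$ subalgebras costs only polynomially, not exponentially, in $\chi$; the latter relies on the structural form of the joint commutant of a commuting self-adjoint family to avoid an exponential-in-$d_\cH$ contribution. Once these two bookkeeping issues are resolved, the multiplicative structure of the estimate---per-color $\la/\operatorname{polylog}(d_\cH)$ divided by $m$ (colors) and by $m$ (tensorization)---directly yields the claimed $m^{-2}$ scaling.
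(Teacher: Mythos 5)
Your proposal has two genuine gaps, and both occur at the places where the actual quantitative content of the theorem has to be produced. First, the claim that $C_{\operatorname{cb}}(E_c)=O(\operatorname{polylog}(d_\cH))$ for the conditional expectation onto the joint commutant of a commuting family $\{a_k\}_{k\in I_c}$ is false in general. As computed in Section 6.1 of the paper, already for a single self-adjoint $a$ with $n$ distinct eigenvalues the fixed-point algebra is $\cN_a=\oplus_i\cB(P_i\cH)$ and $C_{\operatorname{cb}}(\cB(\cH):\cN_a)=n$, which can be as large as $d_\cH$; for a maximal abelian family the commutant is the diagonal algebra with index $d_\cH$, and for an irreducible family it is $\mathbb{C}\Id$ with index $d_\cH^2$. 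So routing the local estimate through Theorem \ref{thm:1} destroys the $\operatorname{polylog}(d_\cH)$ scaling. The paper avoids this by never invoking the generic index bound locally: it uses instead the special CMLSI estimate \eqref{schur} for Schur multiplier semigroups from \cite{brannan2020complete}, $\al_{\operatorname{CMLSI}}(\cL_{a_k})\ge \la(\cL_{a_k})/(2\ln(2d))$, whose logarithmic dimension dependence is a structural fact about Schur multipliers, not a consequence of the index. (Your color classes do generate Schur multiplier semigroups, so this substitution would repair this step, but you did not make it.)

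Second, the approximate tensorization step is unsupported. Every approximate tensorization result in the paper (Theorem \ref{thm:AT}, Theorem \ref{theo:lambdaindexcontrolAT}, Corollary \ref{Coro:sharperAT}) requires as input an $L_2$-clustering estimate $\la=\|\Phi-E_\cN:L_2\to L_2\|<1$ for a suitable average or product of the conditional expectations, and the resulting constant degrades like $(1-\la)^{-2}$ or $\ln(\la^{-1})^{-1}$; there is no statement giving a constant ``$O(m)$'' merely from the number of subalgebras being $\chi=O(m)$. You never establish such a clustering bound for your color-class expectations $E_c$, so the claimed $c_{\operatorname{cb}}^{\operatorname{AT}}=O(m)$ has no source, and the heuristic ``one $m^{-1}$ from iteration, one from the number of subalgebras'' does not correspond to any mechanism in the paper. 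In the paper's proof the $m^{-2}$ arises from an entirely different place: the detectability lemma (Theorem \ref{thm:detectability}) applied to the product $E_1E_2\cdots E_l$ of the \emph{individual} expectations gives $\|E_1\cdots E_l-E_\cN\|^2\le (m^{-2}\la_E+1)^{-1}$ precisely because each $E_k$ fails to commute with at most $m$ others; feeding this into Corollary \ref{Coro:sharperAT} yields an AT constant $O(m^2\ln(d)/\la_E)$. No graph coloring is needed, and the coloring does not by itself produce the clustering input that the tensorization machinery requires. To salvage your route you would still have to prove an $L_2$ bound on $\prod_c E_c-E_\cN$ or $\frac{1}{\chi}\sum_c E_c-E_\cN$, which brings you back to a detectability-type argument.
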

Note that the above bound is asymptotically better than Theorem \ref{thm:1} because the index is $C_{\operatorname{cb}}(E_*)=d^2$ for primitive semigroups.
\begin{example}{
Consider the quantum Markov semigroups induced by sub-Laplacians of the special unitary group $\operatorname{SU}(2)$ on its irreducible representations: \begin{align*}
    \mathcal{L}^{H}_m(\rho):=-[X_m,[X_m,\rho]]-[Y_m,[Y_m,\rho]]
    \end{align*}
where $X_m$ (resp. $Y_m$) is the spin-$\frac{m-1}{2}$ representation of the Pauli $X$ matrix (resp. $Y$-matrix). In contrast to the induced semigroup of the standard Laplace-Beltrami operator $\Delta=X^2+Y^2+Z^2$ the CMLSI constant of $\cL^H_m$ is not accessible from the corresponding classical Markov semigroup due to the lack of curvature lower bound in the sub-Riemannian setting. With help of numerics, we obtain that $$\al_{\operatorname{CMLSI}}(\mathcal{L}^{H}_m)>0.18$$ uniformly for all $m\ge 2$. 
We note that the existence of such dimension independent CMLSI constant for general quantum Markov semigroups induced by sub-Lalpacian were independently obtained by the first author, Junge and Li \cite{GJL21} using a completely different method.}
\end{example}

In \Cref{sec:localsemigroups}, we focus on symmetric semigroups which bare a locality structure inherited from a graph. More precisely, given a finite graph $G=(V,E)$, we consider the $n$-fold tensor product $\cH_V:=\bigotimes_{v\in V}\cH_v$ of a finite dimensional local Hilbert space $\cH$, namely, a $n$-qudit system for $d=\dim(\cH)$. The Lindblad operators are supported on the edges $e\in E$ of the graph:
\begin{align}\label{eq:subsystem}
    \mathcal{L}_{G}:=\sum_{e\in E}\,\mathcal{L}_e\,,\qquad\text{ where }\qquad \mathcal{L}_{e}(\rho):=\sum_{j\in J^{(e)}} L^{(e)}_j \rho L^{(e)}_j -\frac{1}{2}\{ L^{(e)}_j L^{(e)}_j,\,\rho\}\,,
\end{align}
where for any edge $e\in (v,w)\in E$ and any $j\in J^{(e)}$, the local Lindblad operator $L^{(e)}_j$ acts trivially on subsystems other than $\cH_{v}\otimes \cH_{w}$. We call \eqref{eq:subsystem} a \textit{subsystem Lindbladian}, which means that the global dynamics consists of local interaction on subsystems of adjacent vertices. This gives a general model of 2-local interacting quantum lattice spin systems.
Using approximate tensorization again, we provide a lower bounds on the CMLSI constant for the global Lindbladian $\cL_G$ based on the local Lindbladians $\cL_e$.
\begin{theorem}[(c.f. \Cref{thm:graphs})]\label{thm:5}
Let $G=(V,E)$ be a finite, connected graph of maximum degree $\gamma$ and let $\mathcal{L}_G$ be a symmetric subsystem Lindbladian of the form \eqref{eq:subsystem}. Denote by $E_e$ the projection onto the kernel of the local Lindbladian $\cL_e$. Then
\begin{align*}
\al_{\operatorname{CMLSI}}(\cL_e)\ge \Omega\left(\frac{\ln\big(\frac{\lambda(\widetilde{\cL}_G)}{4(\gamma-1)^2}+1\big)}{\ln(C)+1}\right) \min_{e\in E}\al_{\operatorname{CMLSI}}(\cL_e) \,
 \end{align*}
where $\al_{\operatorname{CMLSI}}(\cL_e)$ is the $\operatorname{CMLSI}$ constant of $\cL_e$, and
$\lambda(\widetilde{\cL}_G)$ is the spectral gap of the generator $\widetilde{\cL}_G:=\sum_{e\in E}E_e-\id$.
\end{theorem}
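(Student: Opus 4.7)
The plan is to combine local $\operatorname{CMLSI}$ estimates on each edge with the approximate tensorization of relative entropy from Theorem~\ref{thm:3}. The starting observation is that since $\mathcal{L}_G = \sum_{e \in E} \mathcal{L}_e$ and each local generator is symmetric and non-positive, the global entropy production decomposes as a sum of local entropy productions. Moreover, if $X$ lies in $\ker(\cL_G)$ then $\langle X,\cL_G(X)\rangle = \sum_e \langle X,\cL_e(X)\rangle = 0$, and each summand being non-positive forces $X \in \ker(\cL_e)$. Hence $\ker(\cL_G) \subseteq \ker(\cL_e)$ for every edge, so in the symmetric setting $E_{G*} \circ E_{e*} = E_{e*} \circ E_{G*} = E_{G*}$.

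First I would fix an auxiliary system $\cB(\C^n)$ and a state $\rho \in \cB(\cH_V \otimes \C^n)$, and apply the local $\operatorname{CMLSI}$ for each $\cL_e \otimes \id_n$ separately. Summing over the edges and using the factorization $E_{G*} = E_{e*}\circ E_{G*}$ to replace $\ln(E_{G*}\otimes \id_n)(\rho)$ by $\ln(E_{e*}\otimes \id_n)(\rho)$ in the $e$-th term, one gets
\begin{align*}
-\tr\!\Bigl((\cL_G \otimes \id_n)(\rho)\bigl(\ln\rho - \ln(E_{G*}\!\otimes \id_n)(\rho)\bigr)\Bigr) \;\geq\; \min_{e \in E} \al_{\operatorname{CMLSI}}(\cL_e)\,\sum_{e \in E} D\bigl(\rho \,\|\, (E_{e*}\!\otimes \id_n)(\rho)\bigr).
\end{align*}
Thus it suffices to establish a multi-edge approximate tensorization estimate of the form
\[
\sum_{e \in E} D\bigl(\rho \,\|\, (E_{e*}\!\otimes\id_n)(\rho)\bigr) \;\geq\; c\, D\bigl(\rho \,\|\, (E_{G*}\!\otimes \id_n)(\rho)\bigr)
\]
with $c$ of the order claimed in the theorem.

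To derive this multi-edge bound, the idea is to iterate the two-subalgebra statement of Theorem~\ref{thm:3} through a recursive bipartition of $E$, e.g.\ by dyadic grouping. At each round one splits the current collection of edges into two groups $A$ and $B$, applies Theorem~\ref{thm:3} to the associated conditional expectations $E_A, E_B$ (whose common fixed point is $E_{G*}$ on the connected component), and recurses inside each group. The violation of the commuting-square condition between $A$ and $B$ is controlled through the $L_2$-distance $\|E_A E_B - E_{G*}\|$; since any edge in one group shares a vertex with at most $\gamma - 1$ edges in the other group, this distance is bounded by a quantity proportional to $(\gamma-1)^2/\lambda(\widetilde{\cL}_G)$, with the spectral gap of the discretized generator $\widetilde{\cL}_G = \sum_e E_e - \id$ measuring how well the family $\{E_e\}_{e\in E}$ averages to the global projection $E_{G*}$.

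The hard part is controlling the accumulation of constants across this recursion, because a naive iteration of Theorem~\ref{thm:3} would blow up multiplicatively in $|E|$. The quotient-of-logarithms form $\ln\bigl(\lambda(\widetilde{\cL}_G)/(4(\gamma-1)^2)+1\bigr)/(\ln C+1)$ in the bound reflects the fact that each doubling step shrinks the remaining commutation defect by a factor driven by $\lambda(\widetilde{\cL}_G)/(\gamma-1)^2$, producing a geometric series whose total contribution is logarithmic in the defect. The tensor-stable (complete) nature of Theorem~\ref{thm:3} is essential here, since at every level of the recursion the inequality must be applied inside the ambient algebra $\cM \otimes \cB(\C^n)$ with the remaining factors treated as part of the environment; without completeness, the iteration could not be carried out uniformly in $n$.
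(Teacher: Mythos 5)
Your first reduction --- summing the local entropy-production inequalities over the edges and using $E_{G*}=E_{e*}\circ E_{G*}$ to reduce the theorem to a multi-edge approximate tensorization estimate $\sum_e D(\rho\|E_{e*}(\rho))\ge c\,D(\rho\|E_{G*}(\rho))$ --- is exactly the reduction the paper makes. The divergence, and the gap, lies in how you propose to prove that estimate. The paper does not iterate the two-subalgebra result: it applies Theorem \ref{theorem:meta}, via Corollary \ref{Coro:sharperAT}, to the single ordered product channel $\Phi^*=\prod_{e\in E}E_e$, converts the $L_2$-estimate $\la=\|\prod_e E_e-E_G:L_2\to L_2\|$ into the completely positive order condition $(1-\eps)E_G\le_{\operatorname{cp}}(\Phi^*)^k\le_{\operatorname{cp}}(1+\eps)E_G$ at the cost of a single factor of the index $C$ (Lemmas \ref{lemmageneral} and \ref{lemma:brandao}), and bounds $\la$ by the detectability lemma (Theorem \ref{thm:detectability}), which is where $\lambda(\widetilde{\cL}_G)/4(\gamma-1)^2$ enters: each $E_e$ commutes with all but at most $2(\gamma-1)$ of the others. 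This produces the multiplicative constant $4\lceil(\ln C+1)/\ln(\la^{-1})\rceil$ in one shot, with no recursion.

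Your recursive dyadic bipartition rests on two unproven claims that are precisely the hard part. First, the $L_2$-defect $\|E_AE_B-E_G\|$ between the conditional expectations onto $\cap_{e\in A}\ker\cL_e$ and $\cap_{e\in B}\ker\cL_e$ is not controlled by the detectability lemma: that lemma bounds the product of the \emph{individual} edge projections in terms of the gap of $\sum_e E_e-\id$, not the product of two coarse-grained projections $E_A$, $E_B$ (for the latter one would need a gap for $E_A+E_B-2\id$, a different and a priori unknown quantity), and your degree-counting heuristic does not yield such a bound --- note also that $E_A$ is in general not equal to $\prod_{e\in A}E_e$, since the edge projections within a group need not commute. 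Second, each application of Theorem \ref{thm:AT} or \ref{theo:lambdaindexcontrolAT} carries an index factor $C_\tau(\cM:\cN)$ or $C_{\max}$ attached to the algebras arising at that level of the recursion; over roughly $\log|E|$ levels these factors multiply, giving something like a power of $C$ rather than the single $\ln C$ appearing in the denominator of the claimed bound. The geometric-series argument you invoke to tame this accumulation is asserted rather than proved, and it is exactly the kind of per-level decay-of-correlations input that the classical recursive schemes require and that is unavailable here from the spectral gap of $\widetilde{\cL}_G$ alone. The product-channel/detectability route is what circumvents both obstacles.
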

Here the index $C$ can be chosen as either the complete Pimsner-Popa index \cite{popapimser} of the algebra $\cN$ of fixed points of the evolution, or the inverse  minimal eigenvalue of the Choi state of the projection map $\displaystyle E_G:=\lim_{t\to \infty}e^{t\cL_G}$. The index $C$ can be thought of as what replaces the size of the graph in the case of classical graph Laplacians. In particular, for expander graphs, our bound gives
 \begin{align*}
     \alpha_{\operatorname{CMLSI}}(\widetilde{\cL}_{\operatorname{G}})\ge \Omega\left(\frac{1}{\ln(C)} \right)\,.
 \end{align*}

We exemplify our bound on three classes of subsystem Lindbladians.
\begin{example}[(Random transposition)]{
Motivated by the classical random transposition model in \cite{bobkov2003modified,gao2003exponential},
 we introduce in \Cref{sec:randomperturb} the quantum nearest neighbor random transposition. More precisely, we consider the local Lindbladian on an edge $(i,j)\in E$ given by
\begin{align}
  \mathcal{L}_{(i,j)}(\rho):=\frac{1}{2}(S_{i,j}\rho S_{i,j}-\rho )\,, \,\, S_{i,j}(|\psi\rangle \otimes |\varphi\rangle)=|\varphi\rangle\otimes |\psi\rangle
\end{align}
where $S_{i,j}:{\cH_{i}\ten \cH_{j}}\to \cH_{i}\ten \cH_{j}$ is the swap unitary gate between vertex $i$ and $j$. Then the global
Lindbladian $\cL_G^{\operatorname{NNRT}}:= \sum_{e\in E}\cL_{e}$ is generated by local random swaps on $|V|=n$ qudits.
 In this case, we find 
\begin{align*}
    \alpha_{\operatorname{CMLSI}}(\cL_G^{\operatorname{NNRT}})\ge \la(\cL_G^{\operatorname{NNRT}}) \,\Omega((\ln n!)^{-1})\,,
\end{align*}
where $\la(\cL_G^{\operatorname{NNRT}})$ is the spectral gap and the factorial $n!$ is the size of the permutation group $\mathcal{S}_n$. This presents an exponential improvement over the bounds from Theorem \ref{thm:1}, where the constant was controlled by the inverse size of the group $(n!)^{-1}$. }\end{example}

\begin{example}
[(Approximate unitary design)]{ Another class of examples we are concerned with are continuous time approximate $k$-designs previously studied in the literature  \cite{brandao2016local,onorati2017mixing}. These are quantum Markov semigroups that locally converge to the Haar unitary $k$-design over the unitary group $\operatorname{U}(d^2)$.
Namely, for each vertex $i\in V$, $\cH_i=(\mathbb{C}^d)^{\ten k}$, we consider the local Lindbladian given by
\begin{align}
\cL_e^{(k)}(\rho)=\mathcal{D}^{(k)}_{\operatorname{Haar}}(\rho)-\rho\ , \ \ \mathcal{D}^{(k)}_{\operatorname{Haar}}(\rho):=\int_{\operatorname{U}(d^2)} U^{\otimes k}\rho (U^\dagger)^{\otimes k}\ d\mu_{\operatorname{Haar}}(U)
\end{align}
where $\mu_{\operatorname{Haar}}$ is the Haar measure.  Previous works \cite{brandao2016local,onorati2017mixing,hunter2019unitary} studied the spectral gap of  $\cL_G^{(k)}:=\sum_{e} \cL^{(k)}_{e}$ for a linear graph which, combined with an equivalence of norms, gave a convergence time of order $\mathcal{\widetilde{O}}(n^2k^{6+3.1/\ln(d)})$ as measured in diamond norm. Here in $\widetilde{O}$ we hide further dependence on the local dimension $d$ as well as sublinear factors.
In \Cref{sec:designs}, combining Theorem \ref{thm:5} with Pinsker's inequality, we find that
\begin{align}\label{mixingtime}
& \al_{\operatorname{CMLSI}}(\cL_G^{(k)})\ge \mathcal{\widetilde{O}}(nk^{6+3.1/\ln(d)})
\\
   & t^{\operatorname{linear}}_\eps:=\min\{t\ge 0:\,\|  e^{\frac{t}{n}\cL_G^{(k)}}(\rho)- E_*\|_\diamond\le \eps\}=\,\mathcal{\widetilde{O}}(n^2k^{6+3.1/\ln(d)})\,
\end{align}
for any moment $k$ and any local dimension $d$. Note that we renormalized the time parameter in \eqref{mixingtime} in order to compare our bound to the ones found for discrete time random circuits, since $\mathcal{O}(n)$ gates per time unit are effectively being implemented in continuous-time. This result also extends to
other physically motivated generators, whose local Landbladian satifies
\begin{align}\label{eq:asymptoticsdesignsintro}
    e^{t\cL_e^{(k)}}(\rho)\underset{t\to \infty}{\to} \mathcal{D}^{(k)}_{\operatorname{Haar}}(\rho)
\end{align}
We recall that a universal lower bound of $\widetilde{\Omega}(nk)$ (up to logarithmic factors) was found in (see \cite[Proposition 8]{brandao2016local}) for any $\eps$-approximate $k$-design.
}
\end{example}
\begin{example}[(Quantum Kac model)]{
Finally, in \Cref{sec:quantumkac}, we consider the recently introduced quantum extensions of the Kac generator which models the evolution of the  velocity distributions of $n$ particles undergoing elastic collisions \cite{carlen2019chaos}. More precisely, for each vertex $i\in V$, $\cH_i=\mathbb{C}^d$ and the local Lindbladian is
\[\cL_{e}(\rho)=\Phi_\mu(\rho)-\rho\ ,\ \Phi_\mu(\rho)=\int_{\operatorname{U}(d^2)} U \rho U^\dag d\mu(U)\ ,\]
where $\mu$ is some probability measure specifying the collision model. The global semigroup can be understood as a continuous time approximate $1$-design over the complete graph $K_n$. We obtain the following bound on the CMLSI constant for the quantum Kac model
\begin{align*}
\al_{\operatorname{CMLSI}}(\cL_{K_n})\ge  \mathcal{\widetilde{O}}\left(\frac{1}{\ln d}\right)\,  \al_{\operatorname{CMLSI}}(\cL_{e})\,.
\end{align*}
For more general $k$-designs as in the previous example, it was argued in \cite{onorati2017mixing} that the spectral gap of subsystem Lindbladian for the complete graph is always larger than that for nearest neighbour graphs, so that the bound $\la(\cL_{K_n}^{(k)})\ge \mathcal{\widetilde{O}}(n^2k^{6+3.1/\ln(d)})$ mentioned above still holds. Here, we refine their argument in two directions: first, we estimate the CMLSI constant instead of the spectral gap. Second, using graph theoretic arguments we improve the constant by a factor $n$. This results in an improvement of the mixing time in \eqref{mixingtime}
\begin{align*}
&\al_{\operatorname{CMLSI}}(\cL_{K_n}^{(k)})\ge  \mathcal{\widetilde{O}}\ (k^{6+3.1/\ln(d)}),\\
&t^{K_n}_\eps:=\min\{t\ge 0:\,\|  e^{\frac{ t}{n}\cL_{K_n}^{(k)}}(\rho)- E_*\|_\diamond\le \eps\}=\,\mathcal{\widetilde{O}}(nk^{6+3.1/\ln(d)})\,.
\end{align*}
We note that the $n$-independence of our CMLSI bound matches that of the classical setting proved by Villani in \cite{villani2003cercignani} using similar tensorization techniques. }\end{example}

The rest of the paper is organized as follows. In the
next section, we review some preliminary definitions and prove our key lemma. Section 3 is devoted to the proof of Theorem \ref{thm:1}, which is our first main result on the complete modified log-Sobolev inequality. In Section 4, we prove the complete strong data processing inequality of \Cref{thm:2}. The approximate tensorization results are discussed in Section 5. Section 6 provides the improved CMSLI constant of \Cref{thm:4} for symmetric quantum Markov semigroup. In Section 7, we discuss examples from subsystem Lindbladians. We end the paper with some discussion on questions that remain opens. We remark that although we restrict our discussion to finite dimensions, the general results in \cref{sec:CMLSI}, \cref{sec:CSDPI}, and \cref{sec:AT} can be extended to (trace) symmetric maps in the setting of finite von Neumann algebras, as long as the index $C_{\operatorname{cb}}(E_*)$ and corresponding spectral gap condition are satisfied.

\medskip

{\bf \noindent Acknowledgements.}
CR is supported by a Junior Researcher START Fellowship from the MCQST. CR is grateful to Daniel Stilck Fran\c{c}a, Angela Capel and Ivan Bardet for stimulating discussions. CR and LG particularly thank Daniel Stilck Fran\c{c}a for very useful comments on a preliminary version of the paper. LG thanks Marius Junge and Haojian Li for helpful discussions.

\section{Preliminaries}\label{sec:keylemma}
\subsection{Relative Entropy and Conditional expectation}
Throughout the paper, we will consider $\cH$ to be a finite dimensional Hilbert space, $\cB(\cH)$ to be the bounded operators, and $\cM\subset \cB(\cH)$ to be a von Neumann subalgebra. We write "$\tr$" for the standard matrix trace, $\lan \cdot,\cdot \ran_{\operatorname{HS}}$ for the trace inner product and $\norm{\cdot}{2}$ for the Hilbert-Schmidt norm. The corresponding Hilbert-Schmidt space (resp. trace class operators) is denoted by $\cT_2(\cH)$ (resp. $\cT_1(\cH)$). Operators will be denoted by capital letters, and sometimes also by lowercase letters, in order to emphasize their belonging to a subalgebra. We write $A^\dagger$ for the adjoint of an operator $A\in \cB(\cH)$, and  $\Phi^*$ (or $\Phi_*$) for the adjoint (or preadjoint) of a map $\Phi:\cB(\cH)\to \cB(\cH)$. The identity operator on $\cH$ is denoted as $\Id_{\cH}$ and the identity map on a von Neumann subalgebra $\cM\subseteq \cB(\cH)$ is $\id_{\cM}$. We also denote the dimension of $\cH$ by $d_\cH=\text{dim}(\cH)$. Given two maps $\Phi,\Psi:\cM\to \cM$ on a von Neumann subalgebra $\cM\subseteq \cB(\cH)$, we say that $\Phi\le_{\operatorname{cp}} \Psi$ if $\Psi-\Phi$ is completely positive.

We say that an operator $\rho$ is a state (or density operator) if $\rho\ge 0$ and $\tr(\rho)=1$. We denote by $\cD(\cH)$ the set of states on $\cH$. A quantum channel $\Phi: \cT_1(\cH)\to \cT_1(\cH)$ (or more generally, $\Phi:\cM_*\to \cM_*$) is a completely positive trace preserving map. With slight abuse of notation, we will often write $\Psi(\rho):=(\Psi\otimes  \id) (\rho)$ for a bipartite state $\rho\in\cD(\cH\otimes \mathbb{C}^n)$ and a quantum channel $\Psi:\cT_1(\cH)\to \cT_1(\cH)$. For two states $\rho$ and $\si$, their relative entropy is defined as \begin{align*}D(\rho\|\si)=\begin{cases}
                             \tr(\rho\ln\rho-\rho\ln\si), & \mbox{if } \supp(\rho)\subseteq \supp(\si) \\
                             +\infty, & \mbox{otherwise},
                           \end{cases}
\end{align*}
where $\supp(\rho)$ (resp. $\supp(\si)$) is the support projection of $\rho$ (resp. $\si$).

Let $\cN\subseteq \cM\subseteq  \cB(\cH)$ be two von Neumann subalgebras. Recall that a conditional expectation onto $\cN$ is a completely positive unital map $E_\cN:\cM\to \cN$ satisfying \begin{enumerate}
\item[i)]for all $a \in \cN$, $E_\cN(a)=a$
\item[ii)]for all  $a,b\in\cN,X\in \cB(\cH)$, $E_\cN(aXb)=aE_\cN(X)b$.
 \end{enumerate}
We denote by $E_{\cN*}$ its adjoint map with respect to the trace inner product, i.e.
\[\tr(E_{\cN*}(X)Y)=\tr(XE_\cN(Y))\pl. \]
For a state $\rho$, the relative entropy with respect to $\cN$ is defined as follows
\[D(\rho\|\mathcal{N}):=D(\rho\|E_{\cN*}(\rho))=\inf_{E_{\cN*}(\si)=\si} D(\rho\|\si)\pl,\]
where the infimum is always attained by $E_{\cN*}(\rho)$. Indeed, for any $\si$ satisfying $E_{\cN*}(\si)=\si$,  we have the chain rule (see \cite[Lemma 3.4]{junge2019stability})
\begin{align}\label{chain rule}D(\rho\|\si)=D(\rho\|E_{\cN*}(\rho))+D(E_{\cN*}(\rho)\|\si)\pl.\end{align}
Hence the infimum is attained if and only if $D(E_{\cN*}(\rho)\|\si)=0$. More explicitly, a finite dimensional von Neumann (sub)algebra can always be expressed as a direct sum of matrix algebras with multiplicity, i.e.
\[\cN=\bigoplus_{i=1}^n \cB(\cH_i)\ten \mathbb{C}\Id_{\cK_i}\pl, ~~~~~~~ \cH=\bigoplus_{i=1}^n \cH_i\ten \cK_i\pl.\]
Denote $P_i$ as the projection onto $\cH_i\ten \cK_i$. There exists a family of density operators $\tau_i\in \cD(\cK_i)$ such that
\begin{align}E_\cN(X)=\bigoplus_{i=1}^n \tr_{\cK_i}(P_iXP_i(\Id_{\cK_i}\ten \tau_i))\ten \Id_{\cK_i}\pl, ~~~~E_{\cN*}(\rho)=\bigoplus_{i=1}^n \tr_{\cK_i}(P_i\rho P_i)\ten \tau_i\pl, \label{cd}\end{align}
where $\tr_{\cK_i}$ is the partial trace with respect to $\cK_i$.
A state $\si$ satisfies $E_{\cN*}(\si)=\si$ if and only if
\[ \si=\bigoplus_{i=1}^n p_i\,\si_i\ten \tau_i\pl\]
for some density operators $\si_i\in \cD(\cH_i)$ and a probability distribution $\{p_i\}_{i=1}^n$. We denote $\cD(E_\cN):=\{\si\in \cD(\cH) | \si=E_{\cN*}(\si)\}$ as the subset of states that are invariant under $E_{\cN*}$. For any $\si\in \cD(E_\cN)$ and all $X\in\cM$,
\begin{align*}
E_{\cN*}(\si^{\frac12}X\si^{\frac12})= \si^{\frac12}E_{\cN}(X)\si^{\frac12}\pl. \end{align*}


\subsection{Subalgebra index and Max-relative entropy}
Let $\cM\subset \cB(\cH)$ be a finite dimensional von Neumann algebra and let $\cN\subset \cM$ be a subalgebra of $\cM$. The trace preserving conditional expectation $E_{\cN,\tr}:\cM\to\cN$ is defined so that for any $ X\in \cM$ and $ Y\in\cN$,
\[ \tr(XY)=\tr(E_{\cN,\tr}(X)Y)\pl.\]
$E_{\cN,\tr}$ is self-adjoint and corresponds to taking $\displaystyle \tau_i=d_{\cK_i}^{-1}\Id_{\cK_i}$ in \eqref{cd}. We recall the definition of the index associated to the algebra inclusion $\cN\subset\cM$,
\begin{align*}&C(\cM:\cN)=\inf\{ c>0\pl | \pl \rho\le  c\,E_{\cN,\tr}(\rho) \text{ for all states $\rho\in \M$} \}\pl, \\ &C_{\operatorname{cb}}(\cM:\cN)=\sup_{n\in\mathbb{N}}C(\cM\ten \mathbb{M}_n:\cN\ten \mathbb{M}_n)\pl,\end{align*}
where the supremum in $C_{\operatorname{cb}}(\cM:\cN)$ is taken over all finite dimensional matrix algebras $\mathbb{M}_n$.
The index $C(\cM:\cN)$ was first introduced by Pimsner and Popa in \cite{popapimser} for the connection to
subfactor index and Connes entropy, and the completely bounded version $C_{\operatorname{cb}}(\cM:\cN)$ was studied in \cite{gaoindex}.
These indices are closely related to the notion of maximal relative entropy. Recall that for two states $\rho,\omega$, their maximal relative entropy is \cite{datta2009min} \[D_{\max}(\rho\|\omega)=\ln \inf\{\pl c>0 \pl   | \pl \rho\le c \,\omega \pl \}\pl.\]
Indeed, $$\displaystyle\ln C(\cM:\cN)=\sup_{\rho\in \cD(E_{\cM,\tr})} D_{\max}(\rho\|E_{\cN,\tr}(\rho))\,.$$ For all finite dimensional inclusion $\cN\subset\cM$, the index $C(\cM:\cN)$ is explicitly calculated in \cite[Theorem 6.1]{popapimser} (hence also for $C_{\operatorname{cb}}(\cM:\cN)$). In particular, for $\cM=\cB(\cH)$ and $\cN=\bigoplus_{i=1}^n \cB(\cH_i)\ten \mathbb{C}\Id_{\cK_i}$,
\begin{align}
    C(\cB(\cH):\cN)=\sum_{i=1}^n \min\{d_{\cH_i},d_{\cK_i}\}\,d_{\cK_i}\,,~~~~~~ C_{\operatorname{cb}}(\cB(\cH):\cN)=\sum_{i=1}^n d_{\cK_i}^2\pl. \label{formula}\end{align}
For example, if we take $\cD\subset \cB(\cH)$ to be the subalgebra of diagonal matrices and $\mathbb{C}$ as the multiple of identity
\begin{align}&C(\cB(\cH):\cD)=C_{\operatorname{cb}}(\cB(\cH):\cD)=d_\cH\pl, \nonumber\\ &C(\cB(\cH):\mathbb{C})=d_\cH\,,~~~ C_{\operatorname{cb}}(\cB(\cH):\mathbb{C})=d_\cH^2\pl.\label{eq:indicesex}
\end{align}
In this paper, we will also consider the index for a general conditional expectation $E_{\cN}:\cM\to\cN$ (see e.g \cite{kosaki1998type} for more information). For a conditional expectation $E_{\cN}:\cM\to \cN$ onto $\cN$, we define
\begin{align}\label{eq:indexE}&C(E_{\cN})=\inf\{ c>0\pl | \pl \rho\le  c\,E_{\cN*}(\rho) \text{ for all states $\rho\in \M$} \}\pl, \\ &C_{\operatorname{cb}}(E_{\cN})=\sup_{n\in\mathbb{N}}C(E_{\cN}\ten \id_{\mathbb{M}_n})\pl \nonumber .\end{align}
Here, we recall that $\mathbb{M}_n$ is the $n$-dimensional matrix algebra and $E_{\cN}\ten \id_{\mathbb{M}_n}$ is a conditional expectation from $\cM\ten \mathbb{M}_n\to\cN\ten \mathbb{M}_n$.
Note that given the subalgebra $\cN$, $E_{\cN}$ and $E_{\cN*}$ are uniquely determined by any invariant state $\si\in \cD(E_{\cN})$, or equivalently the densities $\{\tau_i\}$ in \eqref{cd}.
Indeed, denoting \begin{align}\tau=\bigoplus_{i=1}^n \Id_{\cH_i}\ten \tau_i \,,\label{tau}\end{align}We have
\begin{align}\label{eqENENtr}
E_{\cN}(X)= E_{\cN,\tr}(\tau^{\frac12}X\tau^{\frac12})\,,\quad\quad E_{\cN*}(\rho)= \tau^{\frac12}E_{\cN,\tr}(\rho)\tau^{\frac12}.
\end{align}
In particular, $E_{\cN}$ is faithful if and only if $\tau$ is full-rank. By definition, the Pimsner-Popa index $C(\cM:\cN)$ is the special case for the trace perserving condition expectation $C(E_{\cN,\tr})$.
In the later discussion, we will often use the alternative notation
\[C_\tau(\cM:\cN):=C(E_{\cN})\pl,\quad\quad \pl C_{\tau,\operatorname{cb}}(\cM:\cN):=C_{\operatorname{cb}}(E_{\cN})\,. \]
Since $\tau$ commutes with $\cN$,
\begin{align}\label{fromtracialtonontracial} C_\tau(\cM:\cN)\le \mu_{\operatorname{min}}(\tau)^{-1}C(\cM:\cN)\,,\quad \quad C_{\tau,\operatorname{cb}}(\cM:\cN)\le \mu_{\operatorname{min}}(\tau)^{-1}C_{\operatorname{cb}}(\cM:\cN)\end{align}
where $\mu_{\operatorname{min}}(\tau)=\min_{i}\mu_{\min}(\tau_i)$ is the minimal eigenvalue of $\tau$.
Hence in finite dimensions, both $C(E_{\cN})$ and $C_{\operatorname{cb}}(E_{\cN})$ are finite if and only if $E_{\cN}$ is faithful. Moreover, for any invariant state $\sigma\in\cD(E_\cN)$, by the obvious bound $\sigma\le \tau$, we also have
\begin{align} C_\tau(\cM:\cN)\le \mu_{\operatorname{min}}(\sigma)^{-1}C(\cM:\cN)\,,\quad\quad C_{\tau,\operatorname{cb}}(\cM:\cN)\le \mu_{\operatorname{min}}(\sigma)^{-1}C_{\operatorname{cb}}(\cM:\cN)\,.\end{align}
\subsection{A key lemma}\label{subsec:key}
We shall now discuss the key lemma that will be repeatedly used in the later sections. Given a density operator $\rho\in \cD(\cH)$, we define the multiplication operator
\[\Gamma_{\rho}(X):=\int_{0}^1 \rho^{s} \,X\,\rho^{1-s} ds \pl.\]
$\Gamma_\rho$ is a positive operator on the Hilbert-Schmidt space $\cT_2(\cH):=L_2(\cB(\cH),\tr)$ and hence induces a weighted $L_2$-norm (semi-norm if $\rho$ is not full-rank) defined for $X\in \cB(\cH)$ as
\[ \norm{X}{\rho}^2:=\lan X,\Gamma_{\rho}(X)\ran_{\small{\operatorname{HS}}}= \int_{0}^1 \tr(X^\dagger\rho^{s} X\rho^{1-s}) \, ds\pl.\]
We denote by $L_2(\rho)$ the corresponding $L_2$-space. For a full-rank density $\rho$, the inverse operator of $\Gamma_\rho$ is given by
\[\Gamma_{\rho}^{-1}(X):=\int_{0}^\infty (\rho+r)^{-1} X(\rho+r)^{-1} \,dr\pl,\]
which is the double operator integral for the function $f(t)=\ln t$ and operator $\rho$ (see e.g. \cite{carlen2017gradient}). We denote by slight abuse of notations the corresponding weighted $L_2$-norm as
\[\norm{X}{\rho^{-1}}^2:=\lan X,\Gamma_{\rho}^{-1} (X)\ran_{\small{\operatorname{HS}}}= \int_{0}^\infty \tr(X^\dagger(\rho+r)^{-1} X(\rho+r)^{-1}) dr\pl. \]
and the corresponding $L_2$ space as $L_2(\rho^{-1})$. This is a special case of the quantum $\chi^2$-divergence introduced in \cite[Defnition 1]{TKRWV} for the logarithmic function.
It is easy to see that
\[ \norm{\Gamma_{\rho}(X)}{\rho^{-1}}=\norm{X}{\rho}\,,\quad\quad \norm{\Gamma_{\rho}^{-1}(X)}{\rho}=\norm{X}{\rho^{-1}}\,.\]

\begin{lemma}\label{lemma:compare}
If $\rho\le c\,\si$ for any two states $\rho,\sigma$ and some $c>0$, then for any $X\in\cB(\cH)$ and all $\mu_1,\mu_2>0$,
\begin{align*}
\int_{0}^\infty \operatorname{tr}(X^\dagger(\mu_1\,\sigma+r)^{-1} X(\mu_2\sigma+r)^{-1})\, dr\le c\,\int_{0}^\infty \operatorname{tr}(X^\dagger(\mu_1 \rho+r)^{-1} X(\mu_2\rho+r)^{-1})\, dr\, .
\end{align*}
In particular, $\norm{X}{\si^{-1}}\le c\norm{X}{\rho^{-1}}$.
\end{lemma}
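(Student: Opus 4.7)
The plan is to combine two classical facts: operator monotonicity of inversion on positive operators, together with a simple rescaling change of variables in the $r$-integral. Concretely, the hypothesis $\rho\le c\,\sigma$ rearranges to $\sigma\ge c^{-1}\rho$, so for every $\mu>0$ and every $r>0$ we have $\mu\,\sigma+r \ge \mu\,c^{-1}\rho + r$, and hence, by operator antitonicity of $A\mapsto A^{-1}$ on positive definite operators,
\begin{equation*}
(\mu\,\sigma+r)^{-1} \le (\mu\,c^{-1}\rho + r)^{-1}.
\end{equation*}
This pointwise (in $r$) bound, applied with $\mu=\mu_1$ and $\mu=\mu_2$, is the only analytic input needed.

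The second step is to insert these two bounds into the integrand. Since for any positive operators $A,B\ge 0$ the map $(A,B)\mapsto \operatorname{tr}(X^\dagger A X B) = \operatorname{tr}\bigl((XB^{1/2})^\dagger A (XB^{1/2})\bigr)$ is monotone in $A$ (with $B$ fixed) and, by symmetry, monotone in $B$ (with $A$ fixed), one obtains
\begin{equation*}
\operatorname{tr}\bigl(X^\dagger(\mu_1\sigma+r)^{-1} X (\mu_2\sigma+r)^{-1}\bigr) \le \operatorname{tr}\bigl(X^\dagger(\mu_1 c^{-1}\rho+r)^{-1} X (\mu_2 c^{-1}\rho+r)^{-1}\bigr).
\end{equation*}
Integrating in $r$ and performing the substitution $s=cr$, which turns $\mu_i c^{-1}\rho+r$ into $c^{-1}(\mu_i\rho+s)$ and contributes a Jacobian $dr = c^{-1}ds$, the two inverses contribute a total factor $c^2$, of which one power is absorbed by the Jacobian, leaving the claimed overall factor $c$ in front of the $\rho$-integral. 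The ``in particular'' statement about $\|X\|_{\rho^{-1}}$ follows by specializing to $\mu_1=\mu_2=1$.

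I do not foresee a substantive obstacle. The only point that deserves care is the joint monotonicity of the bilinear trace functional in its two positive arguments, which is easiest to justify by the factorization $\operatorname{tr}(X^\dagger A X B)=\operatorname{tr}(Y^\dagger A Y)\ge 0$ with $Y:=X B^{1/2}$, so that monotonicity in $A$ is immediate for $A\ge 0$ and one then iterates the argument in $B$. The remaining bookkeeping — the rescaling and the resulting factor of $c$ rather than $c^2$ — is routine, and the proof is essentially three lines once the monotonicity and substitution are in place.
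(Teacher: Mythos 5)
Your proof is correct and is essentially the paper's argument run in the opposite direction: the paper starts from the $\rho$-integral and lower-bounds it using $(\mu_i\rho+r)^{-1}\ge(c\mu_i\sigma+r)^{-1}$ followed by the substitution $r\to r/c$, whereas you start from the $\sigma$-integral and upper-bound it via $(\mu_i\sigma+r)^{-1}\le(\mu_i c^{-1}\rho+r)^{-1}$ and the substitution $s=cr$; the ingredients (operator antitonicity of inversion, joint monotonicity of $\operatorname{tr}(X^\dagger A X B)$ via cyclicity, and the rescaling that converts $c^2$ into $c$) are identical.
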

\begin{proof}
This is a standard comparison. Using cyclicity of the trace and the fact that $t\mapsto  t^{-1}$ is operator anti-monotone,
\begin{align*}\int_{0}^\infty \tr(X^\dagger(\mu_1\rho+r)^{-1} X(\mu_2\rho+r)^{-1}) dr
&\ge\int_{0}^\infty \tr(X^\dagger(c\mu_1\si+r)^{-1} X(\mu_2\rho+r)^{-1}) dr
\\ &\ge\int_{0}^\infty \tr(X^\dagger(c\mu_1 \si+r)^{-1}X(c\mu_2 \si+r)^{-1}) dr
\\ &=\int_{0}^\infty \frac{1}{c^2}\,\tr( X^\dagger(\mu_1 \si+\frac{r}{c})^{-1}X(\mu_2 \si+\frac{r}{c})^{-1}) dr
\\ &=\frac{1}{c}\int_{0}^\infty \,\tr( X^\dagger(\mu_1 \si+r)^{-1}X(\mu_2 \si+r)^{-1}) dr\,.
\end{align*}
In the last equality, we used the change of variable $r\to \frac{r}{c}$.
\end{proof}
Our key lemma is a two-sided estimate of $D(\rho\|\si)$ via the inverse weighted norm.
\begin{lemma}\label{lemma:keylemma}Let $\rho$ and $\si$ be two full-rank density operators and suppose $\rho\le c\,\si$ for some $c>0$.
Then
\begin{align}k(c)\norm{\rho-\si}{\si^{-1}}^2\,\le D(\rho\| \si)\,\le\, \norm{\rho-\si}{\si^{-1}}^2\label{keyinequality}\end{align}
 where $\displaystyle k(c)=\frac{c\ln c-c+1}{(c-1)^2}$. Note that $k(c)\le 1/2$ for $c\ge 1$.
\end{lemma}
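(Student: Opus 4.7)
My plan is to derive a second-order Taylor formula for the relative entropy along the straight-line interpolation $\rho_t := (1-t)\sigma + t\rho$, $t\in[0,1]$, and then bracket its integrand $\norm{\rho-\sigma}{\rho_t^{-1}}^2$ between two scaled versions of $\norm{\rho-\sigma}{\sigma^{-1}}^2$ using the operator inequality \cref{lemma:compare} that precedes the statement.

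\textbf{Taylor formula.} Set $X := \rho-\sigma$ and $f(t):=D(\rho_t\|\sigma)$, so that $f(0)=0$. The chain rule identity $\tfrac{d}{dt}\tr(\rho_t\ln\rho_t)=\tr((\ln\rho_t+1)\dot\rho_t)=\tr(X\ln\rho_t)$ (using $\tr X=0$) immediately yields $f'(t)=\tr\bigl(X(\ln\rho_t-\ln\sigma)\bigr)$ and hence $f'(0)=0$. Differentiating once more via the double-operator-integral representation $\tfrac{d}{dt}\ln\rho_t=\int_0^\infty(\rho_t+r)^{-1}X(\rho_t+r)^{-1}\,dr$ gives $f''(t)=\norm{X}{\rho_t^{-1}}^2$. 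Taylor's theorem with integral remainder then produces
\begin{equation*}
D(\rho\|\sigma) \;=\; \int_0^1 (1-t)\,\norm{X}{\rho_t^{-1}}^2\,dt.
\end{equation*}

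\textbf{Upper bound.} Since $\rho\ge 0$, we have $\rho_t\ge(1-t)\sigma$, i.e.\ $\sigma\le (1-t)^{-1}\rho_t$. Applying \cref{lemma:compare} to this pair (with $\mu_1=\mu_2=1$) yields $\norm{X}{\rho_t^{-1}}^2\le (1-t)^{-1}\norm{X}{\sigma^{-1}}^2$. Inserting into the Taylor formula, the weight $(1-t)$ cancels and integration over $[0,1]$ gives $D(\rho\|\sigma)\le \norm{X}{\sigma^{-1}}^2$.

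\textbf{Lower bound.} The hypothesis $\rho\le c\sigma$ forces $c\ge 1$ because $\tr\rho=\tr\sigma=1$, and implies $\rho_t\le (1+t(c-1))\sigma$. A second application of \cref{lemma:compare} gives $\norm{X}{\rho_t^{-1}}^2\ge (1+t(c-1))^{-1}\norm{X}{\sigma^{-1}}^2$. Substituting and computing the elementary integral via the change of variables $u=1+t(c-1)$,
\begin{equation*}
\int_0^1 \frac{1-t}{1+t(c-1)}\,dt \;=\; \frac{c\ln c - c + 1}{(c-1)^2} \;=\; k(c),
\end{equation*}
delivers $D(\rho\|\sigma)\ge k(c)\,\norm{X}{\sigma^{-1}}^2$. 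The assertion $k(c)\le \tfrac12$ for $c\ge 1$ follows from $k(1)=\tfrac12$ (by L'H\^opital) together with the fact that $k$ is decreasing on $[1,\infty)$.

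The only non-routine ingredient is the Taylor formula in the first step, which requires the standard but careful identities $\tr(\rho_t\,(\ln\rho_t)')=\tr\dot\rho_t$ (to kill the boundary-type term in $f'$) and the explicit integral expression for $(\ln\rho_t)'$. Once this formula is established, both bounds are essentially one-liners obtained by plugging a single operator inequality into \cref{lemma:compare} and evaluating a one-variable integral; this is precisely what makes the statement a convenient ``key lemma'' for the main theorems that follow.
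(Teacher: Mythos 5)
Your proposal is correct and follows essentially the same route as the paper: the same linear interpolation $\rho_t=(1-t)\sigma+t\rho$, the identity $f''(t)=\norm{\rho-\si}{\rho_t^{-1}}^2$, and the two applications of Lemma \ref{lemma:compare} via $\rho_t\ge(1-t)\si$ and $\rho_t\le(1+t(c-1))\si$; the only cosmetic difference is that you write the remainder as $\int_0^1(1-t)f''(t)\,dt$ where the paper uses the equivalent iterated form $\int_0^1\int_0^s f''(t)\,dt\,ds$.
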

\begin{proof} For the lower bound, we consider $\rho_t:=(1-t)\si+t\rho, t\in [0,1]$ and the function $f(t)=D(\rho_t\|\si)$. We have $f(0)=0$, $f(1)=D(\rho\|\si)$ and the derivatives
\begin{align*}
&f'(t)=\tr((\rho-\si)\ln \rho_t-(\rho-\si)\ln\si)\pl, \\  &f''(t)=\int_{0}^\infty \tr\Big((\rho-\si)\frac{1}{\rho_t+r}(\rho-\si)\frac{1}{\rho_t+r}\Big)\,dr=\norm{\rho-\si}{\rho_t^{-1}}^2\pl.
\end{align*}
Note that $f'(0)= 0$ and $\rho_t\le (ct+(1-t))\si$. We have for the lower bound
\begin{align*}
 D(\rho\|\si)=&\int_0^1\Big(\int_0^s f''(t)dt\Big)ds\\
 = &\int_0^1\int_0^s \norm{\rho-\si}{\rho_t^{-1}}^2\,dt ds\\
 \ge &\int_0^1\int_0^s \frac{1}{1+(c-1)t}\,dt ds \norm{\rho-\si}{\si^{-1}}^2\\
 \ge & \ k(c) \norm{\rho-\si}{\si^{-1}}^2\pl,
\end{align*}
where we used Lemma \ref{lemma:compare} and
\[k(c)=\int_0^1\int_0^s \frac{1}{1+(c-1)t}\,dtds=\frac{c\ln c-c+1}{(c-1)^2}\pl.\]
The upper bound is a special case of \cite[Proposition 6]{TKRWV}. Here we present a different proof using a method similar to our lower bound. Note that $\rho_t=(1-t)\si+t\rho\ge (1-t)\si$. Then,
\begin{align*}
 D(\rho\|\si)
= &\int_0^1\int_0^s \norm{\rho-\si}{\rho_t^{-1}}^2\,dt ds\\
 \le &\int_0^1\int_0^s \frac{1}{1-t}\norm{\rho-\si}{\si^{-1}}^2 \,dtds\\
 = &\int_0^1\int_0^s \frac{1}{1-t}\,dtds \norm{\rho-\si}{\si^{-1}}^2=\norm{\rho-\si}{\si^{-1}}^2\pl.
\end{align*}
\end{proof}
\begin{rem}
Note that the upper bound does not require the assumption $\rho\le c\,\si$.
\end{rem}
Now given a conditional expectation $E_\cN:{\cM}\to\cN$, it follows immediately from the above that
 for any state $\rho$ and $\rho_{\cN}=E_{\cN*}(\rho)$,
\begin{align} k(C(E_{\cN}))\norm{\rho-\rho_{\cN}}{\rho_{\cN}^{-1}}^2\,\le D(\rho\| \rho_{\cN})\le\,  \norm{\rho-\rho_{\cN}}{\rho_{\cN}^{-1}}^2\pl,\label{keyinequality2}\end{align}
where $C(E_{\cN})$ is the index defined in \eqref{eq:indexE}. We also have an variant of the lower bound with another weighting state.
\begin{lemma}\label{lemma:keylemma2}Let $\rho$, $\si$ and $\omega$ be three full-rank density operators and suppose $\rho,\si\le c\,\omega$ for some $c>0$.
Then
\begin{align}\norm{\rho-\si}{\omega^{-1}}^2\le 2c\,D(\rho\| \si)\,. \end{align}
\end{lemma}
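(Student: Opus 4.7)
The plan is to mimic the argument used in the proof of Lemma \ref{lemma:keylemma}, but swap the role of the weight at the final comparison step so that the third density $\omega$ appears. Concretely, I would set $\rho_t := (1-t)\sigma + t\rho$ for $t \in [0,1]$ and consider the scalar function $f(t) := D(\rho_t \| \sigma)$. As computed in the proof of Lemma \ref{lemma:keylemma}, one has $f(0) = 0$, $f'(0) = 0$, $f(1) = D(\rho\|\sigma)$, and
\[
f''(t) = \|\rho - \sigma\|_{\rho_t^{-1}}^2 = \int_0^\infty \operatorname{tr}\bigl((\rho-\sigma)(\rho_t + r)^{-1}(\rho-\sigma)(\rho_t + r)^{-1}\bigr)\,dr.
\]

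The key observation is that the assumption $\rho, \sigma \le c\,\omega$ passes to the convex combination: $\rho_t \le c\,\omega$ for every $t \in [0,1]$. Hence Lemma \ref{lemma:compare}, applied with $(\rho,\sigma)$ there being $(\rho_t,\omega)$ here and $\mu_1 = \mu_2 = 1$, yields
\[
\|\rho-\sigma\|_{\omega^{-1}}^2 \,\le\, c\,\|\rho-\sigma\|_{\rho_t^{-1}}^2 \,=\, c\,f''(t)
\]
uniformly in $t \in [0,1]$.

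Finally, using $f(0)=f'(0)=0$ and integrating the above inequality twice gives
\[
D(\rho\|\sigma) \,=\, f(1) \,=\, \int_0^1 \int_0^s f''(t)\,dt\,ds \,\ge\, \frac{1}{c}\,\|\rho-\sigma\|_{\omega^{-1}}^2 \int_0^1 \int_0^s dt\,ds \,=\, \frac{1}{2c}\,\|\rho-\sigma\|_{\omega^{-1}}^2,
\]
which rearranges to the claimed bound $\|\rho-\sigma\|_{\omega^{-1}}^2 \le 2c\,D(\rho\|\sigma)$.

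There is essentially no obstacle here: all the non-trivial analytic work (the second-derivative identity for $f$ and the monotonicity of the inverse weighted norm under domination of states) has already been done in Lemma \ref{lemma:compare} and in the proof of Lemma \ref{lemma:keylemma}. The only new input is the uniform bound $\rho_t \le c\,\omega$, which follows immediately from convexity of the assumption on $\rho$ and $\sigma$. Note that the resulting prefactor $2c$ matches what one would expect: for $\omega = \sigma$ (so $c \ge 1$ necessarily) we recover a lower bound of the form $\tfrac{1}{2c}\|\rho-\sigma\|_{\sigma^{-1}}^2 \le D(\rho\|\sigma)$, which is comparable to (though weaker than) the sharper constant $k(c)$ obtained in Lemma \ref{lemma:keylemma}.
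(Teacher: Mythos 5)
Your proof is correct and is essentially identical to the paper's own argument: the paper also takes $\rho_t=(1-t)\sigma+t\rho$, notes $\rho_t\le c\,\omega$, applies Lemma \ref{lemma:compare} to get $\norm{\rho-\sigma}{\omega^{-1}}^2\le c\,\norm{\rho-\sigma}{\rho_t^{-1}}^2$ uniformly in $t$, and integrates the second-derivative formula twice. No differences worth noting.
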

\begin{proof}Take $\rho_t=(1-t)\si+t\rho, t\in [0,1]$. By the assumption and Lemma \ref{lemma:compare}, we have $\rho_t\le c\,\omega$ and hence
\[ c\norm{\rho-\si}{\rho_t^{-1}}^2\,\ge\, \norm{\rho-\si}{\omega^{-1}}^2\]
for each $t$. Therefore,
\begin{align*}
 D(\rho\|\si)=
  &\int_0^1\int_0^s \norm{\rho-\si}{\rho_t^{-1}}^2\,dt ds\\
 \ge &\int_0^1\int_0^s \frac{1}{c}\, \norm{\rho-\si}{\omega^{-1}}^2\,dt ds\\
 \ge &\  \frac{1}{2c} \norm{\rho-\si}{\omega^{-1}}^2\pl.
\end{align*}
\end{proof}

\subsection{Detailed balance}
We shall now discuss the detailed balance condition and its connection to the spectral gap.
Given a full-rank state $\si$ and $0\le s\le 1$, we define the multiplication operator
\[\Gamma_{\si,s}(X)=\si^{1-s}X\si^s\pl.\]
$\Gamma_{\si,s}$ is a positive operator on the Hilbert-Schmidt space and induces
the following weighted inner product
\begin{align*}
\lan X,Y \ran_{\si,s}:=\tr(X^\dagger\si^{1-s}Y\si^s)\,,\qquad\quad \norm{X}{\si,s}^2=\lan X, X\ran_{\si,s}\,.
\end{align*}
We denote by $L_2(\sigma,s)$ the corresponding $L_2$ space. A map $\Phi^*:\cM\to \cM$ is self-adjoint with respect to $\lan \cdot,\cdot\ran_{\si,s}$ if
\[\Phi\circ \Gamma_{\si,s} =\Gamma_{\si,s}\circ \Phi^* \pl,\]
where $\Phi$ is the adjoint of $\Phi^*$ for the trace inner product. Denote
\[H=-\ln\si,\,\qquad \Delta_\si(X)=\si X\si^{-1},\, \qquad \al_t(X)=e^{itH}Xe^{-itH}, \quad t\in \mathbb{C}\]
as the modular generator, modular operator, and modular automorphism group of $\si$ respectively. It was proved in \cite[Theorem 2.9]{carlen2017gradient} that under the assumption $\Phi^*(a^\dagger)=(\Phi^*(a))^\dagger$, $\Phi^*$ is self-adjoint with respect to $\lan \cdot,\cdot\ran_{\si,s}$ for some $s\neq 1/2$ if and only if $\Phi^*$ commutes with $\Delta_\si$ and is self-adjoint for $s=1/2$, and hence $\Phi^*$ is self-adjoint with respect to $\lan \cdot,\cdot\ran_{\si,s}$ for all $s\in [0,1]$. We say that a map $\Phi^*$ satisfies  $\si$-DBC (detailed balance condition) if $\Phi^*$ is self-adjoint with respect to $\lan \cdot,\cdot\ran_{\si,1}$. Note that
$$\Gamma_{\si}=\displaystyle \int_{0}^1 \Gamma_{\si,s} \,ds\,.$$
Thus, we also have $\Gamma_{\si}\circ \Phi^*=\Phi\circ \Gamma_{\si} $ and hence
$\Gamma_{\si}^{-1}\circ \Phi=\Phi^{*}\circ \Gamma_{\si}^{-1} $ if $\Phi^*$ satisfies the $\si$-DBC.

Let $E_{\cN}:\cM\to \cN$ be a conditional expectation. It can be readily seen that $E_{\cN}$ satisfies the $\si$-DBC condition for all $\si\in \cD(E_\cN)$ (invariant state satisfying $\si=E_{\cN*}(\si)$). Hence
\[\forall s\in [0,1],\,\Gamma_{\si,s}\circ E_{\cN}=E_{\cN*} \circ \Gamma_{\si,s}\pl \text{ and } \pl \Gamma_{\si}\circ E_{\cN}=E_{\cN*} \circ \Gamma_{\si}\pl.\]
In particular, $E_{\cN}$ is the projection onto $\cN$ for the $L_2$-norms $\norm{\cdot}{\si,s}$ for any $s\in [0,1]$ and $\norm{\cdot}{\si}$, for all $\si\in \cD(E_\cN)$. Indeed, for any $X\in \cM$,
\begin{align*}\lan E_\cN(X), X-E_\cN(X) \ran_{\si,s}=&\lan \Gamma_{\si,s}\circ E_\cN(X), X-E_{\cN}(X)\ran_{\small{\operatorname{HS}}}
\\= &\lan  E_{\cN*}\circ \Gamma_{\si,s} (X), X-E_{\cN}(X)\ran_{\small{\operatorname{HS}}}\\ =&\lan  \Gamma_{\si,s} (X), E_{\cN}(X-E_{\cN}(X))\ran_{\small{\operatorname{HS}}}=0\,.
\end{align*}
Now, let $\Phi:\cM_*\to\cM_*$ be a quantum channel and $\cN$ be the multiplicative domain of $\Phi^*$:
\[\cN:=\{a\in \cM\pl | \pl \Phi^{*}(aa^{\dagger})=\Phi^{*}(a)\Phi^{*}(a^{\dagger})\pl,\Phi^*(a^\dagger a)=\Phi^*(a^\dagger)\Phi^*(a) \}\pl.\]
There always exists an invariant state $\si$ such that $\Phi(\si)=\si$.  The next lemma shows that if $\Phi^*$ satisfy $\si$-DBC, then $\Phi^*$ restricted to $\cN$ is a $*$-involution.

\begin{lemma}\label{md}
Let $\Phi:\cM_*\to \cM_*$ be a quantum channel and let $\cN$ be the multiplicative domain of $\Phi^*$. Then,
\begin{enumerate}
\item[i)]There exists an invariant state $\si$ such that $\Phi(\si)=\si$
\end{enumerate}
If in addition $\sigma$ is full-rank and $\Phi^*$ satisfies $\si$-$\operatorname{DBC}$,
\begin{enumerate}
\item[ii)] $\Phi^*$ is a contraction on $L_2(\si,s)$ for any $s\in[0,1]$ and $L_2(\si)$. $\Phi^*$
restricted to $\cN$ is a $*$-isomorphism and an $L_2$-isometry on $L_2(\si,s)$ for all $s\in[0,1]$, as well as on $L_2(\si)$.
\item[iii)] Let $E_\cN:\cM\to\cN$ be the conditional expectation such that $E_{\cN*}(\si)=\si$. Then
\[\Phi^*\circ E_{\cN}= E_{\cN} \circ\Phi^*\pl, (\Phi^*)^2\circ E_{\cN}=E_{\cN}\circ(\Phi^*)^2= E_{\cN} \,.\]
\end{enumerate}
\end{lemma}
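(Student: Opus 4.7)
The plan is to treat the three items separately, using a fixed-point argument for (i) and leveraging the Kadison--Schwarz inequality together with the spectral structure forced by $\sigma$-DBC for (ii) and (iii).

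For part (i), I would observe that $\cD(\cH)$ is a nonempty, compact, convex subset of the finite-dimensional real vector space of self-adjoint trace-class operators, and $\Phi$ maps it continuously into itself. Brouwer's fixed point theorem (applied to the restriction of $\Phi$ to $\cD(\cH)$) immediately produces an invariant state. Alternatively one can take any Ces\`{a}ro limit $\lim_N \frac1N\sum_{k=0}^{N-1}\Phi^k(\omega)$ for an arbitrary starting state $\omega$, which exists in finite dimensions and is a fixed point.

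For part (ii), the case $s=1$ is the classical application of the Kadison--Schwarz inequality for the $2$-positive unital map $\Phi^*$: using $\Phi^*(X)^\dagger\Phi^*(X)\le \Phi^*(X^\dagger X)$ and $\Phi(\sigma)=\sigma$,
\begin{align*}
\|\Phi^*(X)\|_{\sigma,1}^2=\tr\bigl(\Phi^*(X)^\dagger\Phi^*(X)\sigma\bigr)\le \tr\bigl(\Phi^*(X^\dagger X)\sigma\bigr)=\tr(X^\dagger X\sigma)=\|X\|_{\sigma,1}^2.
\end{align*}
To pass to general $s$ I would invoke the Carlen--Maas theorem recalled in the excerpt: $\sigma$-DBC forces $\Phi^*$ to commute with $\Delta_\sigma$. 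Decomposing $\cB(\cH)$ into joint eigenspaces of $\Delta_\sigma$ (indexed by ratios $\lambda_i/\lambda_j$ of eigenvalues of $\sigma$), the multiplication operators $\Gamma_{\sigma,s}$ act diagonally with positive weights depending analytically on $s$, and $\Phi^*$ preserves this decomposition. On each joint eigenspace, comparing the $\sigma$-DBC relation $\Phi\circ \Gamma_{\sigma,s}=\Gamma_{\sigma,s}\circ \Phi^*$ with the $s=1$ contractivity gives $L_2(\sigma,s)$-contractivity at every $s\in[0,1]$ by rescaling. Contractivity on $L_2(\sigma)$ then follows by integration from $\Gamma_\sigma=\int_0^1\Gamma_{\sigma,s}\,ds$. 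For the multiplicative-domain statement, the equality case of Kadison--Schwarz shows that the chain above is an equality whenever $X\in\cN$, so $\Phi^*|_\cN$ is an $L_2(\sigma,1)$-isometry; combined with the commutation with $\Delta_\sigma$ (which implies $\Delta_\sigma$ preserves $\cN$, since the two defining relations of $\cN$ are preserved by the modular automorphism group), this upgrades to isometry in every $L_2(\sigma,s)$ and in $L_2(\sigma)$. Because $\Phi^*$ is self-adjoint and contractive on the finite-dimensional Hilbert space $L_2(\sigma,1)$, its isometry space is the span of its $\pm 1$-eigenspaces, so $\Phi^*(\cN)=\cN$ and $(\Phi^*|_\cN)^2=\id_\cN$; combined with the multiplicativity on $\cN$ and the standing assumption $\Phi^*(a^\dagger)=\Phi^*(a)^\dagger$, this yields the $*$-isomorphism claim.

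For part (iii), I would use that, as explained in the excerpt just before the lemma, $E_\cN$ is the orthogonal projection onto $\cN$ in $L_2(\sigma,1)$. Since $\Phi^*$ is self-adjoint on $L_2(\sigma,1)$ and leaves $\cN$ invariant (by part (ii)), it also leaves $\cN^\perp$ invariant; decomposing any $X=E_\cN(X)+(X-E_\cN(X))$ gives at once $\Phi^*\circ E_\cN=E_\cN\circ\Phi^*$. Applying $\Phi^*$ once more and using $(\Phi^*|_\cN)^2=\id_\cN$ yields $(\Phi^*)^2\circ E_\cN=E_\cN\circ(\Phi^*)^2=E_\cN$. The main obstacle I foresee is the justification of $L_2(\sigma,s)$-contractivity and isometry for $s\ne 1$; the cleanest route, which I would write out carefully, is the joint spectral decomposition of $\Phi^*$ and $\Delta_\sigma$ afforded by DBC, which reduces everything to scalar estimates on finite-dimensional eigenspaces. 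A secondary delicate point is checking that the modular automorphism group preserves $\cN$, but this is automatic from the commutation $\Phi^*\circ\Delta_\sigma^{it}=\Delta_\sigma^{it}\circ\Phi^*$ applied to each of the two defining identities of the multiplicative domain.
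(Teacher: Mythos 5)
Your proposal is correct and, for parts ii) and iii), follows essentially the same route as the paper: reduce general $s$ to $s=1$ via the commutation of $\Phi^*$ with the modular structure, apply Kadison--Schwarz and its equality case on the multiplicative domain, and then deduce iii) from self-adjointness of $\Phi^*$ on $L_2(\si,1)$ together with invariance of $\cN$. For part i) you take a genuinely different (and equally standard) route: the paper instead notes that $\Phi^*(\Id)=\Id$ forces $1\in\operatorname{spec}(\Phi)$ with a self-adjoint eigenvector $a$, and that positivity plus trace preservation force $\Phi(a_\pm)=a_\pm$, yielding an invariant state after normalization; your Brouwer/Ces\`{a}ro argument buys nothing extra here but is perfectly valid.

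Two steps deserve more care in the write-up. First, in the reduction to $s=1$: the weight of $\Gamma_{\si,s}$ on the block $P_i\cB(\cH)P_j$ is $\lambda_i^{1-s}\lambda_j^{s}$, which is \emph{not} constant on an eigenspace of $\Delta_\si$ (those are indexed only by the ratio $\lambda_i/\lambda_j$), so ``rescaling on each joint eigenspace'' does not literally close the argument; the clean statement is $\Gamma_{\si,s}=\Delta_\si^{1-s}\circ\Gamma_{\si,1}$, whence $\|X\|_{\si,s}=\|\Delta_\si^{(1-s)/2}X\|_{\si,1}$ and commutation with $\Delta_\si^{(1-s)/2}$ (this is exactly the paper's use of $\al_{i\frac{1-s}{2}}$). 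Second, the assertion $\Phi^*(\cN)=\cN$ does not follow merely from $\cN\subseteq E_{+1}\oplus E_{-1}$ and $(\Phi^*|_\cN)^2=\id_\cN$, since a priori $\cN$ could be a proper $\Phi^*$-non-invariant subspace of that eigenspace sum. You should either verify directly, as the paper does, that $\Phi^*(X)$ satisfies the multiplicative-domain identities (via $\Phi^*(\Phi^*(X^\dagger)\Phi^*(X))=\Phi^*(\Phi^*(X^\dagger X))=X^\dagger X=(\Phi^*)^2(X^\dagger)(\Phi^*)^2(X)$), or upgrade your isometry argument to the full characterization $\cN=E_{+1}\oplus E_{-1}$ using the equality case of Kadison--Schwarz with the faithful state $\si$ (namely, $\|\Phi^*(a)\|_{\si,1}=\|a\|_{\si,1}$ together with the same for $a^\dagger$ is \emph{equivalent} to $a\in\cN$), after which invariance is immediate.
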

\begin{proof}
i) Viewing $\Phi$ as a linear map, $\Phi$ has eigenvalue $1$ because $\Phi^*(\Id)=\Id$. Since $\Phi$ preserves self-adjointness, we have an operator $a=a^\dagger$ such that $\Phi(a)=a$. Let $a_+$ (resp. $a_-$) be the positive (resp. negative) part of $a$. We have $\Phi(a)=\Phi(a_+)-\Phi(a_-)=a$. Because $\Phi$ is positive and trace preserving, $\Phi(a_+)$ and $\Phi(a_-)$ are positive and
\[ \tr(\Phi(a_+))+\tr(\Phi(a_-))=\tr(a_+)+\tr(a_-)=\norm{a}{1}\pl.\]
This implies $\Phi(a_+)=a_+$ and $\Phi(a_-)=a_-$, which proves i). For any $X\in\cM$,
\begin{align*}
\norm{\Phi^*(X)}{\si,s}^2=&\tr\Big(\Phi^*(X^\dagger)\si^{1-s}\Phi^*(X)\si^{s}\Big)
\\=&\tr\Big(\Phi^*(\al_{i\frac{1-s}{2}}(X)^\dagger)\Phi^*(\al_{i\frac{1-s}{2}}(X)) \si\Big)
\\ \le&\tr\Big(\Phi^*(\al_{i\frac{1-s}{2}}(X)^\dagger\al_{i\frac{1-s}{2}}(X) ) \si\Big)
\\ =&\tr\Big(\al_{i\frac{1-s}{2}}(X)^\dagger\al_{i\frac{1-s}{2}}(X) \si\Big)
\\ =&\norm{X}{\si,s}^2
\end{align*}
In the above inequality, we used the Kadison-Schwarz inequality and the second to last equality follows from $\Phi(\si)=\si$. Note that $\al_{s}(\cN)=\cN$ for any $s\in \mathbb{C}$. Then for any $X\in \cN$,
$\Phi^*(\al_{i\frac{1-s}{2}}(X)^\dagger)\Phi^*(\al_{i\frac{1-s}{2}}(X))=\Phi^*(\al_{i\frac{1-s}{2}}(X)^\dagger\al_{i\frac{1-s}{2}}(X))$ and the above inequality becomes an equality. This proves ii) for $L_2(\si,s)$ for all $s\in[0,1]$. The assertion for $L_2(\si)$ follows by integration. For iii),
we first note that for any $X\in \cN$, $(\Phi^*)^2(X)=X$. Indeed,
\[ \lan (\Phi^*)^2(X),X\ran_{\si,s}=\lan \Phi^*(X), \Phi^*(X)\ran_{\si,s}=\norm{X}{\si,s}^2\pl.\]
This further implies $\Phi^*(X)\in \cN$ is in the multiplicative domain because \[\Phi^*(\Phi^*(X^\dagger)\Phi^*(X))= \Phi^*(\Phi^*(X^\dagger X))=X^\dagger X=(\Phi^*)^2(X^\dagger)(\Phi^*)^2(X^\dagger)\pl.\]
Also, $\Phi^*$ is invariant on the orthogonal complement of $\cN$ because for any $Y\in\cM$,
\[ \lan X,\Phi^*\circ (\id-E_\cN)(Y)\ran_{\si,s}=\lan \Phi^*(X), (\id-E_\cN)(Y)\ran_{\si,s}=0\pl.\]
That completes the proof.
\end{proof}

We see from the above lemma that under $\si$-DBC, $\Phi^*$ is a self-adjoint contraction on $L_2(\si,s)$ (also $L_2(\si)$), and $\cN$ is the union of the eigenspace of $\Phi^*$ for eigenvalue $1$ and $-1$. The eigenspace for eigenvalue $1$ is the fixed point space of $\Phi^*$, which is a subalgebra $\cF\subset\cN$. For each invariant state $\si=\Phi(\si)$, we have $\si=E_{\cF*}(\si)$. In finite dimensions, there always exists $0<\epsilon<1$ such that
\[ \norm{\Phi^*(\id-E_{\cN}):L_2(\si,s)\to L_2(\si,s)}{}\le (1-\epsilon)\pl,\]
which is a spectral gap condition. The next lemma shows that this spectral gap condition is independent of $s\in[0,1]$ and of the choice of invariant state $\si$.

\begin{lemma}\label{lemma:independence}
Let $\Phi:\cM_*\to \cM_*$ be a quantum channel and $\Phi^*$ be its adjoint. Suppose $\Phi^*$ satisfy $\si$-$\operatorname{DBC}$ for some full-rank invariant state $\si$ such that $\Phi(\si)=\si$. 
Then,
\begin{enumerate}
\item[i)] $(\Phi^*)^2$ satisfies $\rho$-$\operatorname{DBC}$ for all states $\rho\in \cD(E_\cN)$ and $\Phi^*$ satisfies $\rho$-$\operatorname{DBC}$ for all invariant states $\rho$.
\item[ii)]For each full-rank state $\rho\in \cD(E_\cN)$, denote $\la(\rho,s)=\norm{\Phi^*(\id-E_{\cN}):L_2(\rho,s)\to L_2(\Phi(\rho),s)}{}^2$. Then for all $s\in [0,1]$
    \[\la(\rho,s)=\la(\si,1)\pl. \]
\item[iii)] For each full-rank state $\rho\in \cD(E_\cN)$,  denote $\la(\rho):=\norm{\Phi^*(\id-E_{\cN}):L_2(\rho)\to L_2(\Phi(\rho))}{}^2$. Then \begin{align*}\la(\rho):=&\norm{\Phi(\id-E_{\cN*}):L_2(\rho^{-1})\to L_2(\Phi(\rho)^{-1})}{}\\ =& \norm{\Phi^*(\id-E_{\cN}):L_2(\rho)\to L_2(\Phi(\rho))}{}= \la(\si,1)=\la(\si)\pl.\end{align*}
\end{enumerate}

\end{lemma}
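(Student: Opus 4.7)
The plan is to exploit a consequence of Lemma~\ref{md}: $\Phi^*$ restricted to the multiplicative domain $\cN$ is an involutive $*$-isomorphism, so $(\Phi^*)^2$ is the identity on $\cN$ and strictly contracts its orthogonal complement. For any $\rho\in\cD(E_\cN)$, the decomposition~\eqref{cd} forces $\rho$ and $\sigma$ to share the same fibers $\{\tau_i\}$, so the Connes cocycle $U_t:=\rho^{it}\sigma^{-it}$ has the block form $\bigoplus_i(p_i/q_i)^{it}\rho_i^{it}\sigma_i^{-it}\otimes\Id_{\cK_i}$ and lies in $\cN$ for every $t\in\mathbb{R}$. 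This single containment is what will let me transfer the $\sigma$-DBC of $\Phi^*$ to a $\rho$-DBC of $(\Phi^*)^2$ for arbitrary $\rho\in\cD(E_\cN)$.

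For part i), I will write $\rho^{it}=U_t\sigma^{it}$ and decompose the modular automorphism of $\rho$ as $\alpha_t^\rho=\mathrm{Ad}_{U_t}\circ\alpha_t$. Combining the multiplicative-domain identity $(\Phi^*)^2(aXb)=(\Phi^*)^2(a)\,(\Phi^*)^2(X)\,(\Phi^*)^2(b)$ for $a,b\in\cN$ with $(\Phi^*)^2(U_t)=U_t$ and $(\Phi^*)^2\circ\alpha_t=\alpha_t\circ(\Phi^*)^2$ (obtained by squaring $\sigma$-DBC), a short computation then yields $(\Phi^*)^2\circ\alpha_t^\rho=\alpha_t^\rho\circ(\Phi^*)^2$. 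By the Carlen--Maas theorem cited in the preceding subsection, this commutation is equivalent to $\rho$-DBC for $(\Phi^*)^2$. When $\rho$ is $\Phi$-invariant, both $\rho$ and $\sigma$ lie in $\cD(E_\cF)$ for the fixed-point algebra $\cF\subset\cN$, so the corresponding $U_t$ lies in $\cF$ and $\Phi^*(U_t)=U_t$; running the same computation with $\Phi^*$ in place of $(\Phi^*)^2$ then yields $\rho$-DBC for $\Phi^*$ itself.

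For parts ii) and iii), part i) guarantees that $(\Phi^*)^2$ is self-adjoint on every $L_2(\rho,s)$ (and, by integration, on $L_2(\rho)$) for $\rho\in\cD(E_\cN)$ and $s\in[0,1]$; dualizing $(\Phi^*)^2\circ E_\cN=E_\cN$ moreover gives $\Phi^2(\rho)=\rho$, so $(\Phi^*)^2$ is an endomorphism of these Hilbert spaces. Its operator norm on $\cN^\perp$ is then its spectral radius, an intrinsic property of the linear map $(\Phi^*)^2:\cM\to\cM$ equal to $\la(\sigma,1)^2$. To convert this into a statement about $\la(\rho,s)=\|\Phi^*(\id-E_\cN):L_2(\rho,s)\to L_2(\Phi(\rho),s)\|^2$, I will invoke the intertwining identity $\Phi^*(\rho^{it}\sigma^{-it})=\Phi(\rho)^{it}\sigma^{-it}$: its $s=1$ specialization $\Phi^*(\rho\sigma^{-1})=\Phi(\rho)\sigma^{-1}$ is immediate from $\sigma$-DBC (which at $s=1$ reads $\Phi(Y\sigma)=\Phi^*(Y)\sigma$, applied to $Y=\rho\sigma^{-1}$), and it extends to all real $t$ by analyticity in finite dimensions. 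Analytic continuation to $t=-i\alpha$ gives $\Phi^*(\rho^\alpha\sigma^{-\alpha})=\Phi(\rho)^\alpha\sigma^{-\alpha}$ for real $\alpha$, which in turn shows that the Hilbert-space adjoint of $T:=\Phi^*:L_2(\rho,s)\to L_2(\Phi(\rho),s)$ coincides with $\Phi^*:L_2(\Phi(\rho),s)\to L_2(\rho,s)$; hence $T^*T=(\Phi^*)^2$ as endomorphisms of $L_2(\rho,s)$, and $\la(\rho,s)=\|T\|^2=\|T^*T\|=\la(\sigma,1)$. The statement for $L_2(\rho)$ follows by averaging in $s$, and the dual formulation in iii) via $L_2(\rho^{-1})$ will use $\Gamma_\rho^{-1}\circ\Phi=\Phi^*\circ\Gamma_\rho^{-1}$. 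The hard part will be justifying the intertwining identity $\Phi^*(\rho^{it}\sigma^{-it})=\Phi(\rho)^{it}\sigma^{-it}$ for all real $t$; in finite dimensions it should reduce to a routine block-matrix computation leveraging the multiplicative-domain structure of Lemma~\ref{md}.
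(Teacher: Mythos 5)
Your overall strategy rests on the same two pillars as the paper's argument: (a) the observation that $\rho^{-s}\si^{s}$ (equivalently the cocycle $\rho^{it}\si^{-it}$) lies in $\cN$ for any two states in $\cD(E_\cN)$, combined with the $\cN$-bimodule property of $(\Phi^*)^2$ from Lemma \ref{md}; and (b) the intertwining identity $\Phi^*(\rho^{-s}\si^{s})=\Phi(\rho)^{-s}\si^{s}$, which is exactly the paper's \eqref{eqPhipassing}. Where you genuinely differ is in the packaging. The paper proves i) by directly verifying the conjugation identity $\Gamma_{\rho,s}\circ(\Phi^*)^2\circ\Gamma_{\rho,s}^{-1}=\Gamma_{\si,s}\circ(\Phi^*)^2\circ\Gamma_{\si,s}^{-1}=\Phi^2$ for every $s$, and proves ii) by an explicit isometric change of weight $Y\mapsto\Gamma_{\si,s}^{-1/2}\Gamma_{\rho,s}^{1/2}(Y)$ followed by a modular-automorphism conjugation to remove the $s$-dependence. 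Your route through $T^\star T=(\Phi^*)^2$ and the invariance of the spectral radius of a self-adjoint operator is a legitimate and arguably cleaner alternative for ii) and iii); in particular it treats the integrated norm on $L_2(\rho)$ in one stroke, whereas ``averaging in $s$'' by itself only yields $\la(\rho)\le\la(\si,1)$ (operator norms do not average), and the paper has to supply the reverse inequality by a separate sandwich argument involving $(\Phi^2)^*$.

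Two steps would fail as written. First, in part i) you assert that commutation of $(\Phi^*)^2$ with the modular group $\al_t^\rho$ is \emph{equivalent} to $\rho$-DBC by Carlen--Maas. It is not: \cite[Theorem 2.9]{carlen2017gradient}, as quoted in Section 2.4, says that self-adjointness for some $s\ne 1/2$ is equivalent to commutation with $\Delta_\rho$ \emph{together with} self-adjointness for the KMS inner product $s=1/2$ (for instance $X\mapsto UXU^\dagger$ with $U$ unitary, commuting with $\rho$ and with $U^2$ non-central, commutes with $\Delta_\rho$ but is not GNS-symmetric). You must additionally establish $\Gamma_{\rho,1/2}\circ(\Phi^*)^2\circ\Gamma_{\rho,1/2}^{-1}=\Phi^2$; this follows from the very same bimodule computation, at which point it is simpler to prove the conjugation identity for all $s$ at once, as the paper does. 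Second, your proposed justification of the intertwining identity --- deriving the single value $\alpha=1$ from $\si$-DBC and then ``extending by analyticity'' --- is not a valid argument: two entire functions agreeing at one point need not coincide, and $\rho^\alpha\si^{-\alpha}\ne(\rho\si^{-1})^\alpha$ in general, so you cannot bootstrap to a set with an accumulation point. The identity genuinely requires the block-decomposition computation you mention as a fallback (this is how the paper justifies \eqref{eqPhipassing}, via the direct-sum structure \eqref{cd} and the fact that $\Phi^*|_\cN$ is a $*$-isomorphism carrying $\rho|_\cN$ to $\Phi(\rho)|_\cN$). Once those two points are repaired, the rest of your argument goes through.
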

\begin{proof}
By Lemma \ref{md}, $(\Phi^*)^2|_{\cN}$ is the identity map and we have the module property
\[ (\Phi^*)^2(aXb)=a (\Phi^*)^2(X)b \pl,\pl \forall a,b\in \cN\pl.\]
Note that for any two states $\rho,\si\in \cD(E_\cN)$, $\rho^{-s}\si^{s}\in\cN$ for any $s\in \mathbb{C}$.
Therefore, we have for all $s\in[0,1]$,
\[ \Gamma_{\rho,s}\circ(\Phi^*)^2\circ \Gamma_{\rho,s}^{-1}= \Gamma_{\si,s}\circ (\Phi^*)^2\circ \Gamma_{\si,s}^{-1}=\Phi^2\pl.\]
This shows $(\Phi^*)^2$ satisfies $\rho$-DBC. Now consider a state $\rho$ such that $\Phi(\rho)=\rho$. Because both $\rho,\si\in \cD(E_\cF)$, we have $\rho^{-s}\si^{s}\in \cF$ for any $s\in \mathbb{C}$. Then it follows from the same argument above that $\Phi^*$ satisfies $\rho$-DBC.
For ii), we denote $\iota=\Phi^*|_{\cN}$ to be the involution $\Phi^*$ restricted to $\cN$. Note that for any $s\in \mathbb{C}$, it can be verified by the finite dimensional direct sum structure in \eqref{cd} that
\begin{align}\label{eqPhipassing}
\iota(\rho^{-s}\si^{s})=\Phi(\rho)^{-s}\si^{s}
\end{align}
where $\rho\circ\iota=\Phi(\rho)$. For a mean zero element $Y=X-E_\cN(X)$,
\begin{align*}\norm{Y}{\rho,s}^2=&\norm{\Gamma_{\rho,s}^{1/2}(Y)}{2}^2
\\=&\norm{\Gamma_{\si,s}^{1/2}\Gamma_{\si,s}^{-1/2}\Gamma_{\rho,s}^{1/2}(Y)}{2}^2
\\=&\norm{\Gamma_{\si,s}^{1/2}(Y_0)}{2}^2
\\=&\norm{Y_0}{\si,s}^2
\end{align*}
where $Y_0= \Gamma_{\si,s}^{-1/2}\Gamma_{\rho,s}^{1/2}(Y)$ is also a mean zero element in $\cN^\perp$.
Moreover, \begin{align*}\norm{\Phi^*(Y_0)}{\si,s}^2=&\norm{\Gamma_{\si,s}^{1/2}\Gamma_{\si,s}^{-1/2}\Gamma_{\Phi(\rho),s}^{1/2}\Phi^*(Y_0)}{2}^2
\\=&\norm{\Gamma_{\Phi(\rho),s}^{1/2}\Phi^*(Y_0)}{2}^2
\\=&\norm{\Phi^*(Y_0)}{\Phi(\rho),s}^2
\end{align*}
where we used \eqref{eqPhipassing} in the first line. This proves $\la(\rho,s)=\la(\si,s)$ for each $s$. For the independence of $s$, we have for $r\in[0,1]$:
\begin{align*}\norm{\Phi^*(Y)}{\si,s}^2=\tr\big[\Phi^*(Y)^\dagger\si^{1-s}\Phi^*(Y)\si^{s}\big]
=&\tr\big[ \Phi^*(\al_{i\frac{r-s}{2}}(Y))^\dagger\si^{1-r}\Phi^*(\al_{i\frac{r-s}{2}}(Y))\si^{r}\big]\\
=&\norm{\Phi^*\big(\al_{i\frac{r-s}{2}}(Y)\big)}{\si,r}^2
\end{align*}
where $\al_{i\frac{r-s}{2}}(Y)=\al_{i\frac{r-s}{2}}(X-E_{\cN}(X))=\al_{i\frac{r-s}{2}}(X)-E_{\cN}(\al_{i\frac{r-s}{2}}(X))$ is also in $\cN^\perp$. Moreover,
\begin{align*}\norm{Y}{\si,s}^2=\norm{\al_{i\frac{r-s}{2}}(Y)}{\si,r}^2\pl.
\end{align*}
For iii), the inequality $\la(\rho)\le \la(\si,1)$ follows from integrating the $\lan \cdot,\cdot \ran_{\rho,s}$ inner product to obtain $\lan \cdot,\cdot \ran_{\rho}$. The equality $\la(\si,1)=\la(\si)$ follows from the fact that the map $\Phi^*(\id-E_{\cN})$ is self-adjoint with respect to both $\lan \cdot,\cdot\ran_\si$ and $\lan \cdot,\cdot\ran_{\si,s}$ for any $s\in[0,1]$. Then the quantity $\norm{\Phi^*(\id-E_{\cN})}{}$, which is equal to the maximal eigenvalue of $\Phi^*(\id-E_{\cN})$, is independent of the choice of Hilbert space norm $\|\cdot \|$. We note that by \eqref{eqPhipassing}
\[ \Gamma_{\Phi(\rho),s}\circ \Phi^*\circ\Gamma_{\rho,s}^{-1}=\Gamma_{\Phi(\rho),s}\circ \Phi^*\circ\Gamma_{\rho,s}^{-1}\Gamma_{\si,s}\Gamma_{\si,s}^{-1}=\Gamma_{\Phi(\rho),s} \Gamma_{\Phi(\rho),s}^{-1}\Gamma_{\si,s} \circ\Phi^*\circ\Gamma_{\si,s}^{-1}=\Phi\]
and
\[ \Gamma_{\Phi(\rho)}\circ \Phi^*\circ \Gamma_{\rho}^{-1}=\Phi\pl, \]
This implies $\Gamma_{\Phi(\rho)}\circ\Phi^*(\id-E_{\cN})\circ \Gamma_{\rho}^{-1}=\Phi(\id-E_{\cN*})
$ and hence
\begin{align*}
\norm{\Phi(\id-E_{\cN*}):L_2(\rho^{-1})\to L_2(\Phi(\rho)^{-1})}{} =& \norm{\Gamma_{\Phi(\rho)}\circ\Phi^*(\id-E_{\cN})\circ \Gamma_{\rho}^{-1}:L_2(\rho^{-1})\to L_2(\Phi(\rho)^{-1})}{}
\\
=& \norm{\Phi^*(\id-E_{\cN}):L_2(\rho)\to L_2(\Phi(\rho))}{} \,.
\end{align*}
Moreover, since both $\si$ and $\rho$ are invariant to $\Phi^2$, we have by ii)
\begin{align*}
\la(\si)=&\norm{(\Phi^2)^*(\id-E_{\cN}):L_2(\si)\to L_2(\si)}{}=\norm{(\Phi^2)^*(\id-E_{\cN}):L_2(\rho)\to L_2(\rho)}{}\\ \le&
\norm{\Phi^*(\id-E_{\cN}):L_2(\rho)\to L_2(\Phi(\rho))}{}\norm{\Phi^*(\id-E_{\cN}):L_2(\Phi(\rho))\to L_2(\rho)}{} \\ \le& \norm{\Phi^*(\id-E_{\cN}):L_2(\si)\to L_2(\si)}{}^2=\la(\si).
\end{align*}
That verifies iii).
\end{proof}

\section{Modified logarithmic Sobolev inequalities}\label{sec:CMLSI}
In this section, we prove the complete modified logarithmic Sobolev inequality (CMLSI) for quantum Markov semigroups on finite dimensional matrix algebras. The argument is a simple application of the key estimates in Subsection \ref{subsec:key}. Let $\cM\subset \cB(\cH)$ be a finite dimensional von Neumann algebra.
A quantum Markov semigroup (QMS) $(\cP_t)_{t\ge0}:\cM\to \cM$ is a continuous parameter semigroup of completely positive, unital maps such that $\cP_0=\id_\cM$ and $\cP_s\circ\cP_t =\cP_{s+t}$ for all $s,t\ge 0$.
Such a semigroup is characterised by its generator, called the Lindbladian $\cL$, which is defined as $$\cL(X)={\lim}_{t\to 0}\,\frac{1}{t}\,(\cP_t(X)-X)\pl,\quad  \pl \forall \pl X\in\cM,$$ so that $\cP_t=\operatorname{e}^{t\cL}$ for all $t\ge 0$. A QMS is said to be \textit{primitive} if it admits a unique full-rank invariant state $\sigma$. In this section, we assume $\cP_t:\cM\to \cM$ admits an extension to a semigroup on $\cB(\cH)$ and exclusively study QMS that satisfy the following \textit{detailed balance condition} with respect to some (possibly non-unique) full-rank invariant state $\sigma$: if for any $X,Y\in\cM$ and any $t\ge 0$:
\begin{align}\label{eq:DBC}\tag{$\sigma$-DBC}
\tr(\sigma\,X^\dagger\cP_t(Y))=\tr(\sigma\,\cP_t(X)^\dagger Y)\,.
\end{align}
We say a semigroup $\cP_t$ is GNS-symmetric if $\cP_t$ satisfies $\sigma$-DBC for a full-rank invariant state $\sigma$. Under this condition,  the generator $\cL$ can be written as \cite{carlen2017gradient}
\begin{align}\label{eqlindbladsigma}
\cL(X)=\sum_{j}\,\Big(\operatorname{e}^{-\omega_j/2}\,A_j^\dagger [X,A_j]+\,\operatorname{e}^{\omega_j/2}[A_j,X]A_j^\dagger \Big)\,,
\end{align}
Here $A_j\in \cB(\cH)$ and $\omega_j$ are some real parameters such that for any invariant state $\sigma$, $\Delta_{\sigma}(A_j):=\sigma A_j\,\sigma^{-1}=\operatorname{e}^{-\omega_j}A_j$.
Moreover, there exists a conditional expectation ${E}_{\cN}:\cM\to\cF$ onto the fixed point algebra $\cF=\{X\in \cM  \pl |\pl  [A_{j},X]=0\pl \forall \pl j\pl \}$ such that \cite{frigerio1982long}.
$$
    \operatorname{e}^{t\cL}\underset{t\to\infty}{\to} {E}_\cF\,.
$$
We are interested in the exponential convergence to this limit in terms of relative entropy. Recall that the \textit{entropy production} (sometimes also referred as \textit{Fisher information}) for a state $\rho\in\cD(\cM)$ is defined as
\begin{align*}
\operatorname{EP}_{\cL}(\rho):=-\left.\frac{d}{dt}\right|_{t=0}\,D(\cP_{t*}(\rho)\|E_{\cF*}(\rho))=-\tr(\cL_*(\rho)(\ln\rho-\ln E_{\cF*}(\rho)))\,,
\end{align*}
which is the opposite of the derivative of the relative entropy with respect to the equilibrium state.
Here and in the following $\cL_*$ (resp. $\cP_{t*}$ and $E_{\cF*}$) denote the adjoint maps of the generator $\cL$ (resp. semigroup map $\cP_{t}$ and conditional expectation $E_{\cF}$). We say a QMS $\cP_{t}:\cM\to \cM$
satisfies the \textit{modified logarithmic Sobolev inequality} (MLSI) with $\al>0$ if for any $\rho\in\cD(\cM)$,
\begin{align}\label{MLSI}\tag{MLSI}
\alpha \,D(\rho\|E_{\cF*}(\rho))\le \operatorname{EP}_{\cL}(\rho)\,.
\end{align}
The best constant $\alpha$ achieving this bound is called the modified logarithmic Sobolev constant of the semigroup, and is denoted by $\alpha_{\operatorname{MLSI}}(\cL)$. It turns out that
this inequality is equivalent to the following exponential decay of relative entropy
\begin{align*}
D(\cP_{t*}(\rho)\|E_{\cF *}(\rho))\le \operatorname{e}^{-\alpha t}D(\rho\|E_{\cF *}(\rho))\,.
\end{align*}
 We also consider the \textit{complete modified logarithmic Sobolev inequality} (CMLSI) which requires
\begin{align}\label{CLSI}\tag{CMLSI}
\alpha\,D(\rho\|(E_{\cF *}\otimes \id) (\rho))\le \operatorname{EP}_{(\cL\otimes \id)}(\rho)
\end{align}
to hold for all states $\rho$ on $\cM\otimes \cB(\cH)$ and any finite dimensional Hilbert space $\cH$
as a reference system (or even $\mathcal{B}(\cH)$ replaced by a finite von Neumann algebra). We denote
the best constant $\alpha$ achieving \eqref{MLSI} as $\alpha_{\operatorname{CMLSI}}(\cL)$. In \cite{junge2019stability}, it was shown that the proof of the positivity of $\alpha_{\operatorname{CMLSI}}$ for all GNS-symmetric quantum Markov semigroups can be reduced to that for (trace) symmetric quantum Markov semigroups, that is to those for which $\cL=\cL_*$. However, the problem of the positivity of the CMLSI constant for symmetric QMS has been left open despite considerable work delved on that topic in the recent years (see e.g. \cite{gao2020fisher,brannan2020complete,brannan2020complete2,wirth2020complete}). Here, we provide a positive answer to the question via a simple application of our key estimates from Subsection \ref{subsec:key}.


First, we recall that the Dirichlet form associated to $\cL$ takes the following simple form \cite[Section 5]{carlen2017gradient}: for any invariant state $\sigma=E_{\cF*}(\sigma)$,
\begin{align}\label{Dirichlet}
    \cE_{\sigma}(X):=-\langle X,\,\cL(X)\rangle_{\sigma}=\sum_{j}\,\int_0^1 \operatorname{e}^{(\frac{1}{2}-s)\,\omega_j}\langle \partial_j(X),\,\partial_j(X)\rangle_{{\sigma,s}}\,ds\,,
\end{align}
where $\partial_j(X):=[A_j,X]$. We denote
\begin{align}\label{Dirichletcompact}
    \|X\|_{\sigma,\omega_j}:=\int_0^1 \operatorname{e}^{(\frac{1}{2}-s)\,\omega_j}\langle \partial_j(X),\,\partial_j(X)\rangle_{{\sigma,s}}\,ds\quad \Rightarrow \quad \cE_\sigma(X)=\sum_j\,\|\partial_j(X)\|_{\sigma,\omega_j}^2\,.
\end{align}
Then the entropy production associated to $\cL$ can be written as (see \cite[Lemma 2.3]{junge2019stability}):
\begin{align}\label{eq:EP}
    \operatorname{EP}_{\cL}(\rho)&=\sum_{j}\,\|\Gamma_{\sigma,\frac{1}{2}}\circ \partial_j\circ \Gamma_{\sigma,\frac{1}{2}}^{-1}(\rho)\|_{\rho^{-1},\omega_j}^2\,,
    \end{align}
    where for any $X\in\cM$:
    \begin{align*}
  \|X\|_{{\rho}^{-1},\omega_j}^2=\int_{0}^\infty\tr\,\Big[X^\dagger \,(\operatorname{e}^{-\frac{\omega_j}{2}}\rho+u)^{-1}X(\operatorname{e}^{\frac{\omega_j}{2}}\rho+u)^{-1}\Big]\,du\,.
\end{align*}
We denote the kernels corresponding to the inner products $\|.\|_{\sigma,\omega_j}$ and $\|.\|_{\sigma^{-1},\omega_j}$ by $\Gamma_{\sigma,\omega_j}$ and $\Gamma_{\sigma^{-1},\omega_j}$ respectively.
\begin{lemma}
 The following relation holds for any full-rank state $\sigma$:
\begin{align}\label{GammaGammainverse}
    \Gamma_{\sigma,\omega_j}^{-1}=\Gamma_{\sigma^{-1},\omega_j}\,.
    \end{align}
    Moreover, whenever $\sigma=E_{\cF*}(\sigma)$:
    \begin{align}\label{Gammahalftoomega}
        \Gamma_{\sigma,\frac{1}{2}}\circ\partial_j\circ\Gamma_{\sigma,\frac{1}{2}}^{-1}=\Gamma_{\sigma,\omega_j}\circ\partial_j\circ\Gamma_{\sigma}^{-1}\,.
    \end{align}
\end{lemma}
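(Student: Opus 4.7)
The plan is to establish both identities by diagonalizing in an eigenbasis of $\sigma$. Write $\sigma=\sum_i\mu_i|i\rangle\langle i|$; then every operator appearing in the statement, namely $\Gamma_{\sigma,s}$, $\Gamma_{\sigma,\omega_j}$, $\Gamma_\sigma$, $\Gamma_{\sigma^{-1},\omega_j}$ and $\Gamma_\sigma^{-1}$, leaves each matrix unit $|i\rangle\langle j|$ invariant up to a scalar, so both claimed identities reduce to families of scalar identities indexed by pairs $(i,j)$.

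For \eqref{GammaGammainverse}, I would compute the eigenvalue of $\Gamma_{\sigma,\omega_j}$ on $|i\rangle\langle j|$ by directly evaluating
\begin{equation*}
\int_0^1 e^{(\frac{1}{2}-s)\omega_j}\mu_i^{1-s}\mu_j^{s}\,ds,
\end{equation*}
which after a change of variable produces a logarithmic mean $(A-B)/(\log A-\log B)$ with $A,B$ built from $\mu_i,\mu_j$ and $e^{\pm\omega_j/2}$. The eigenvalue of $\Gamma_{\sigma^{-1},\omega_j}$ on $|i\rangle\langle j|$ is in turn a two-pole integral $\int_0^\infty du/[(C+u)(D+u)]$, which evaluates by partial fractions to $(\log D-\log C)/(D-C)$. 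Matching the pairs $(A,B)$ and $(C,D)$ reduces the claim to the scalar tautology $\mathrm{logmean}(a,b)\cdot\mathrm{logmean}(a,b)^{-1}=1$.

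For \eqref{Gammahalftoomega}, the essential algebraic input is the modular intertwining $\sigma^t A_j\sigma^{-t}=e^{-t\omega_j}A_j$, which is the analytic continuation of the eigen-relation $\Delta_\sigma A_j=e^{-\omega_j}A_j$ that defines $\omega_j$. The assumption $\sigma=E_{\cF*}(\sigma)$ is what guarantees that $\sigma$ shares its modular decomposition with the $A_j$, so that expanding $A_j=\sum_{a,b}c_{ab}|a\rangle\langle b|$ with $\mu_a=e^{-\omega_j}\mu_b$, the commutator $\partial_j(|m\rangle\langle n|)$ splits as a sum of terms $|a\rangle\langle n|$ with $\mu_a=e^{-\omega_j}\mu_m$ minus terms $|m\rangle\langle b|$ with $\mu_b=e^{\omega_j}\mu_n$. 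Evaluating both sides of \eqref{Gammahalftoomega} on $|m\rangle\langle n|$ term by term, the scalar prefactor on the LHS coming from conjugation by $\Gamma_{\sigma,1/2}^{\pm 1}$ telescopes to $e^{\mp\omega_j/2}$, while on the RHS the factor $1/\mathrm{logmean}(\mu_m,\mu_n)$ produced by $\Gamma_\sigma^{-1}$ exactly cancels the factor $e^{\mp\omega_j/2}\mathrm{logmean}(\mu_m,\mu_n)$ produced by $\Gamma_{\sigma,\omega_j}$ on the shifted eigenvalues, again leaving $e^{\mp\omega_j/2}$. Equality of both sides follows.

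The only real obstacle is bookkeeping: both statements are morally one-variable scalar facts once the joint spectral decomposition of the commuting left- and right-multiplications by $\sigma$ is invoked, and the proof is a careful tracking of $e^{\pm\omega_j/2}$ factors through elementary integrals. A more invariant presentation would be to realize $\Gamma_{\sigma,\omega_j}$ as a functional calculus $F_{\omega_j}(L_\sigma,R_\sigma)$ of the two commuting multiplications for an explicit scalar function $F_{\omega_j}$, and similarly $\Gamma_{\sigma^{-1},\omega_j}=(1/F_{\omega_j})(L_\sigma,R_\sigma)$ via its own integral representation, after which \eqref{GammaGammainverse} is immediate; but the direct eigenvalue computation is the shortest route.
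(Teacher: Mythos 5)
Your proposal is correct and is essentially the paper's own argument: the paper disposes of \eqref{GammaGammainverse} by citing Lemma 5.8 of \cite{carlen2017gradient}, whose proof is exactly your joint spectral decomposition of left/right multiplication reducing both kernels to reciprocal logarithmic means, and it proves \eqref{Gammahalftoomega} by the same ``direct computation using $\sigma A_j=\operatorname{e}^{-\omega_j}A_j\sigma$'' that you carry out term by term. The only point deserving care in writing it up is the bookkeeping of which of $\operatorname{e}^{\pm\omega_j/2}$ weights the left versus the right factor in the two integral representations, so that the pairs $(A,B)$ and $(C,D)$ genuinely coincide under the paper's conventions.
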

\begin{proof}
The first identity follows from Lemma 5.8 in \cite{carlen2017gradient}. The proof of the second identity follows by direct computation using the commutation relation $\sigma A_j=\operatorname{e}^{-\omega_j}A_j\sigma$.
\end{proof}
We recall that the spectral gap $\lambda(\cL)$ of the Lindbladian $\cL$ is characterized as
\begin{align}\label{spectralgap}
    \lambda(\cL):=\inf_{X}\frac{\mathcal{E}_\sigma(X)}{\|X-E_\cF(X)\|^2_{\sigma}}\,,
\end{align}
for a given full-rank invariant state $\sigma$.
\begin{lemma}Suppose $\cP_t$ is $\operatorname{GNS}$-symmetric to a full-rank invariant state $\sigma=E_{\cF*}(\sigma)$.
Then the infimum in \eqref{spectralgap} is independent of the choice of the full-rank invariant state $\sigma$.
\end{lemma}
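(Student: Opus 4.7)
The plan is to identify the infimum in \eqref{spectralgap} with the smallest non-zero eigenvalue of $-\cL$ viewed as an abstract linear operator on the finite-dimensional algebra $\cM$, a spectral quantity that is manifestly independent of any inner product and hence of the choice of full-rank invariant state $\sigma$.

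First, GNS-symmetry of $(\cP_t)_{t\ge 0}$ entails that $\cL$ is self-adjoint on $L_2(\sigma)$ for every full-rank invariant state $\sigma$. Consequently $\cL$ is diagonalisable on $\cM$ with real spectrum, and its kernel coincides with the fixed-point algebra $\cF=\{X\in\cM:[A_j,X]=0 \text{ for all } j\}$, which does not depend on $\sigma$. Moreover, from the detailed balance discussion in the preliminaries, the conditional expectation $E_\cF$ attached to $\sigma$ is precisely the $L_2(\sigma)$-orthogonal projector onto $\cF$, via the intertwining $\Gamma_\sigma\circ E_\cF=E_{\cF*}\circ \Gamma_\sigma$.

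Next, writing $X=E_\cF(X)+Y$ with $Y\in\cF^{\perp_\sigma}$, the self-adjointness of $\cL$ together with $\cL E_\cF(X)=0$ yields $\mathcal{E}_\sigma(X)=-\langle Y,\cL Y\rangle_\sigma=\mathcal{E}_\sigma(Y)$ and $\|X-E_\cF(X)\|_\sigma^2=\|Y\|_\sigma^2$. Hence the Rayleigh quotient in \eqref{spectralgap} reduces to $\inf\{-\langle Y,\cL Y\rangle_\sigma/\|Y\|_\sigma^2 : Y\in\cF^{\perp_\sigma},\ Y\neq 0\}$. By Rayleigh--Ritz this infimum equals the smallest eigenvalue of $-\cL|_{\cF^{\perp_\sigma}}$; since $-\cL$ is positive and self-adjoint on $L_2(\sigma)$ with kernel $\cF$, this restricted spectrum is exactly the non-zero part of $\spec(-\cL)$ viewed on all of $\cM$.

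Finally, the non-zero spectrum of a linear operator on a finite-dimensional vector space is an intrinsic invariant, so the number obtained from the formula does not depend on the particular full-rank invariant state used to define the inner product. The only genuinely non-trivial ingredient is the identification of the algebraic conditional expectation $E_\cF$ with the $L_2(\sigma)$-orthogonal projector onto $\cF$, which is ensured by $\sigma$-DBC through the intertwining recalled above; once this is in place, the remainder is a direct application of Rayleigh--Ritz and the basis-independence of eigenvalues.
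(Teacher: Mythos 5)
Your proof is correct and follows essentially the same route as the paper's: both arguments observe that GNS-symmetry gives self-adjointness of $\cL$ with respect to $\langle\cdot,\cdot\rangle_\sigma$ for every full-rank invariant state, identify the Rayleigh quotient in \eqref{spectralgap} with the smallest non-zero eigenvalue of $-\cL$, and conclude from the inner-product independence of the spectrum. Your version is somewhat more explicit (in particular in justifying that $E_\cF$ is the $L_2(\sigma)$-orthogonal projector, so that the quotient restricts to $\cF^{\perp_\sigma}$), but the substance is identical.
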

\begin{proof}
By assumption the generator $\mathcal{L}$ is symmetric with respect to the GNS inner product \eqref{eq:DBC}, which also implies self-adjointness with respect to the inner products $\langle.,.\rangle_\sigma$ (cf. \cite[Theorem 2.9]{carlen2017gradient}). Moreover, self-adjointness with respect to the GNS inner product is independent of the invariant state chosen. Therefore, $\mathcal{L}$ is self-adjoint with respect to $\langle.,.\rangle_\sigma$ for any full-rank invariant state $\sigma$. Now, the spectral gap \eqref{spectralgap} is the difference between the smallest eigenvalue (here, $0$) and the second smallest eigenvalue of $-\mathcal{L}$, hence a quantity independent of the inner product with respect to which $\mathcal{L}$ is self-adjoint, which allows us to conclude.
\end{proof}
We are now ready to prove Theorem \ref{thm:1}, which is the main theorem of this section.

\begin{theorem}\label{theo:CLSI}
Any $\operatorname{GNS}$-symmetric quantum Markov semigroup on a finite dimensional von Neumann algebra $\cM\subseteq \cB(\cH)$ which admits an extension to a semigroup on $\cB(\cH)$ satisfies the complete modified logarithmic Sobolev inequality. More precisely, given such a $\operatorname{QMS}$ $(\cP_t=\operatorname{e}^{t\cL}:\cM\to \cM)_{t\ge 0}$ with fixed point algebra $\cF$, the following bound holds true:
\begin{align}\label{CMLSIconstant}
 \frac{\lambda(\cL)}{C_{\tau,\operatorname{cb}}(\cM:\cF)}\le     \alpha_{\operatorname{CMLSI}}(\cL)\le 2\lambda(\cL)\,.
\end{align}
Similarly, the modified logarithmic Sobolev inequality constant is controlled by
\begin{align}\label{MLSIconstant}
   \frac{\lambda(\cL)}{C_\tau(\cM:\cF)}\,\le \alpha_{\operatorname{MLSI}}(\cL)\le 2\lambda(\cL)\,.
\end{align}
\end{theorem}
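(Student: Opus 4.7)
The plan is to combine the upper bound of Lemma \ref{lemma:keylemma} with the Poincar\'e (spectral gap) inequality for $\cL$ after a change of variables by the modular operator $\Gamma_\sigma$. The loss between the entropy and the Dirichlet-form sides of the resulting chain is then absorbed into the complete Pimsner-Popa index via Lemma \ref{lemma:compare}.

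Concretely, I would fix $n$, a state $\rho\in\cD(\cM\otimes \cB(\mathbb{C}^n))$, and set $\sigma:=(E_{\cF*}\otimes\id_n)(\rho)$ and $X:=\Gamma_\sigma^{-1}(\rho)$. Since $E_\cF$ commutes with $\Gamma_\sigma^{\pm 1}$ by $\sigma$-DBC, one has $(E_\cF\otimes \id_n)(X)=\Id$ and
\[
D(\rho\|\sigma)\,\le\, \|\rho-\sigma\|_{\sigma^{-1}}^2 \,=\, \|X-\Id\|_\sigma^2
\]
by the upper estimate of Lemma \ref{lemma:keylemma}. Since $\cL$ is self-adjoint on $L_2(\sigma)$ and the spectral gap of $\cL\otimes \id_n$ equals $\lambda(\cL)$ (a standard fact that also follows from Lemma \ref{lemma:independence}), the Poincar\'e inequality gives $\|X-\Id\|_\sigma^2\le \lambda(\cL)^{-1}\mathcal{E}_\sigma(X)$. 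The core step is then to compare $\mathcal{E}_\sigma(X)$ to $\operatorname{EP}_{\cL\otimes \id_n}(\rho)$: using the identity \eqref{Gammahalftoomega} in the formula \eqref{eq:EP},
\[
\operatorname{EP}_{\cL\otimes \id_n}(\rho)\,=\,\sum_j \,\bigl\|\Gamma_{\sigma,\omega_j}\partial_j(X)\bigr\|_{\rho^{-1},\omega_j}^2.
\]
By definition of the index, $\rho \le C_{\tau,\operatorname{cb}}(\cM:\cF)\,\sigma$, so applying Lemma \ref{lemma:compare} to each summand with $\mu_1=\operatorname{e}^{-\omega_j/2}$, $\mu_2=\operatorname{e}^{\omega_j/2}$, followed by \eqref{GammaGammainverse}, yields
\[
\bigl\|\Gamma_{\sigma,\omega_j}\partial_j(X)\bigr\|_{\rho^{-1},\omega_j}^2 \,\ge\, \frac{1}{C_{\tau,\operatorname{cb}}(\cM:\cF)}\,\bigl\|\Gamma_{\sigma,\omega_j}\partial_j(X)\bigr\|_{\sigma^{-1},\omega_j}^2 \,=\, \frac{1}{C_{\tau,\operatorname{cb}}(\cM:\cF)}\,\|\partial_j X\|_{\sigma,\omega_j}^2.
\]
Summing over $j$ and chaining the three inequalities produces the lower bound in \eqref{CMLSIconstant}. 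The non-tensorized bound \eqref{MLSIconstant} is the special case $n=1$, which needs only $C_\tau(\cM:\cF)$.

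For the upper bound, I would linearize at an invariant state: for $\rho_\eps=\sigma+\eps Y$ with $Y^\dagger=Y$, $\tr(Y)=0$, $E_{\cF*}(Y)=0$, a Taylor expansion gives $D(\rho_\eps\|\sigma)=\tfrac{\eps^2}{2}\|Y\|_{\sigma^{-1}}^2+O(\eps^3)$ while $\operatorname{EP}_\cL(\rho_\eps)=\eps^2\,\langle Y,\Gamma_\sigma^{-1}(-\cL)\Gamma_\sigma^{-1}(Y)\rangle_\sigma+O(\eps^3)$; sending $\eps\to 0$ in \eqref{MLSI} reduces it to a Poincar\'e inequality with constant $\alpha/2$, so $\alpha_{\operatorname{CMLSI}}(\cL)\le \alpha_{\operatorname{MLSI}}(\cL)\le 2\lambda(\cL)$. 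The main obstacle in this program is really the comparison step relating the $\rho^{-1}$-weighted entropy production to the $\sigma^{-1}$-weighted Dirichlet form with constant $C_{\tau,\operatorname{cb}}(\cM:\cF)$; the remaining ingredients, including the independence of the spectral gap from the reference invariant state and the DBC-induced commutation of $E_\cF$ with $\Gamma_\sigma$, are bookkeeping supplied by the preliminary section.
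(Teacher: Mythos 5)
Your proposal is correct and follows essentially the same route as the paper: bound $D(\rho\|E_{\cF*}(\rho))$ by $\|X-\Id\|_\sigma^2$ with $X=\Gamma_\sigma^{-1}(\rho)$ via the upper estimate of Lemma \ref{lemma:keylemma}, apply the Poincar\'e inequality, and then convert the $\sigma$-weighted Dirichlet form into the entropy production using \eqref{GammaGammainverse}, \eqref{Gammahalftoomega} and Lemma \ref{lemma:compare} with the weights $\operatorname{e}^{\mp\omega_j/2}$, paying the index $C_{\tau,\operatorname{cb}}(\cM:\cF)$. The only difference is that you spell out the linearization argument for the upper bound $\alpha_{\operatorname{MLSI}}\le 2\lambda$, which the paper simply cites as standard.
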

\begin{proof}
The proof of the upper bounds is standard and can be found in \cite{bardet2017estimating,kastoryano2013quantum}, so we focus on the lower bounds. We first provide a bound on the MLSI constant. For this we use the upper bound in Lemma \ref{lemma:keylemma} that for $X:=\Gamma_{E_{\cF*}(\rho)}^{-1}(\rho)$,
\begin{align*}
    D(\rho\|E_{\cF *}(\rho))\le \|\rho-E_{\cF *}(\rho)\|_{E_{\cF *}(\rho)^{-1}}^2=\|X-\Id\|_{E_{\cF *}(\rho)}^2\le \lambda(\cL)^{-1}\,\cE_{E_{\cF*}(\rho)}(X)\,,
    \end{align*}
where $\lambda(\cL)$ is the spectral gap of $\cL$. Next, we have by \eqref{Dirichletcompact} that
\begin{align*}
\cE_{E_{\cF*}(\rho)}(X)&=\sum_j\|\partial_j(X)\|_{E_{\cF *}(\rho),\omega_j}^2\\
&\overset{(1)}{=}\sum_j\|\Gamma_{E_{\cF*}(\rho),\omega_j}\circ \partial_j\circ \Gamma_{E_{\cF*}(\rho)^{-1}}(\rho)\|_{E_{\cF*}(\rho)^{-1},\ \omega_j}^2\\
&\overset{(2)}{=}\sum_j\|\Gamma_{E_{\cF*}(\rho),\frac{1}{2}}\circ \partial_j\circ \Gamma_{E_{\cF*}(\rho)^{-1},\frac{1}{2}}(\rho)\|_{E_{\cF*}(\rho)^{-1},\ \omega_j}^2\\
  &\overset{(3)}{\le}\,C_{\tau}(\cM:\cF)\,\sum_j\,\|\Gamma_{E_{\cF*}(\rho),\frac{1}{2}}\circ\partial_j\circ \Gamma_{E_{\cF*}(\rho)^{-1},\frac{1}{2}}(\rho)\|_{\rho^{-1},\ \omega_j}\\
 & \overset{(4)}{=}C_{\tau}(\cM:\cF)\,\operatorname{EP}_{\cL}(\rho)\,.
\end{align*}
For the above equality (1), we used the inverse relation \eqref{GammaGammainverse}; in (2) we used the relation \eqref{Gammahalftoomega}; (3) is an application of Lemma \ref{lemma:compare} with the weights $\mu_1:=\operatorname{exp}({-\frac{\omega_j}{2}})$ and $\mu_2:=\operatorname{exp}({\frac{\omega_j}{2}})$; finally (4) follows from \eqref{eq:EP}. The proof of CMLSI \eqref{CMLSIconstant} follows the exact same steps, up to replacing the constant $C_{\tau}(\cM:\cF)$ by its completely bounded version $C_{\tau,\operatorname{cb}}(\cM:\cF)$.
\end{proof}

\begin{rem}
The above theorem applies for the derivation triples introduced in Carlen-Maas's work \cite{carlen2020non} as well as the symmetric quantum Markov semigroup on finite von Neumann algebra considered in \cite{davies1992non,cipriani1997dirichlet,brannan2020complete} whenever the index $C_{\operatorname{cb}}(\cM:\cN)$ is finite. Here the assumption of (trace) symmetry for the latter is to ensure the existence of a derivation $\delta$ such that the Lindbladian $\cL=-\delta^\dag\delta$ (see e.g. \cite[Theorem 2.1]{brannan2020complete}).
\end{rem}

\begin{rem}
When $\mathcal{M}:=\mathcal{B}(\cH)$ and the semigroup is primitive, comparison to the \textit{logarithmic Sobolev constant} $\alpha_{\operatorname{LSI}}$ combined with standard interpolation inequalities provide the following bounds for $\alpha_{\operatorname{MLSI}}$ \cite{olkiewicz1999hypercontractivity,kastoryano2013quantum,carbone}:
\begin{align}\label{eq:MLSIbound}
  \frac{\lambda(\cL)}{\ln(\mu_{\min}(\sigma)^{-1})+2} \le  \alpha_{\operatorname{LSI}}(\cL)\le \frac{\alpha_{\operatorname{MLSI}}(\cL)}{2}\le \lambda(\cL)\,.
\end{align}
 The lower bound can be compared with the one provided in \eqref{MLSIconstant} together with \eqref{fromtracialtonontracial} and \eqref{eq:indicesex}:
\begin{align}\label{eq:CMLSIbound}
 \frac{\mu_{\min}(\sigma)\lambda(\cL)}{d_\cH}\le     \alpha_{\operatorname{MLSI}}(\cL)\,,\qquad \frac{\mu_{\min}(\sigma)\lambda(\cL)}{d_\cH^2}\le     \alpha_{\operatorname{CMLSI}}(\cL)\,.
\end{align}
 Clearly, the lower bounds in \eqref{eq:MLSIbound} are asymptotically tighter. However, we emphasis that our bounds \eqref{eq:CMLSIbound} are the first generic non-trivial lower bounds for non-primitive QMS, and the CMLSI bound are independent of the size of the environment and hence stable under tensorization, which is even new for primitive semigroup. For classical Markov semigroups (equivalently, graph Laplacians of a weighted graph), \eqref{eq:CMLSIbound} gives an alternative CMLSI bounds to the one proved in \cite{li2020graph}.
 In Sections \ref{sec:symmetric} and \ref{sec:localsemigroups}, we will use the approximate tensorization bounds, which is the subject of Section \ref{sec:AT}, to derive bounds on the CMLSI constant that are sharper than \eqref{eq:CMLSIbound} above. As we will see, in some cases, the CMLSI lower bounds can scale similarly to the LSI bounds in the primitive setting. It remain opens whether the CMLSI constant admits asymptotic bounds better than $O(d_\cH^{-2})\lambda(\cL)$ in general.
\end{rem}

\section{Strong data processing inequalities}\label{sec:CSDPI}
In this section, we study the complete strong data processing inequality for a quantum channel, which
is a discrete time analog of CMLSI.
We recall the definition of the weighted $L_2$-norm corresponding to a full-rank state $\omega$:
\[\norm{X}{\omega^{-1}}^2=\int_{0}^\infty \tr\Big(X^\dagger \frac{1}{\omega+s}\,X\, \frac{1}{\omega+s}\Big) \,ds\pl,\,\quad  X\in \cM_*\pl.\]
If $X=\rho-\omega$ for some other state $\rho$,
\[ \chi_2(\rho,\omega):=\norm{\rho-\omega}{\omega^{-1}}^2\pl\]
is a special case of the quantum $\chi_2$-divergence studied in \cite{wolf2012quantum}. It is known that $\chi_2$ also satisfies the data processing inequality: for a quantum channel $\Phi$,
\begin{align}\label{DPICHI}
 \chi_2(\Phi(\rho),\Phi(\omega))\le \chi_2(\rho,\omega)\pl.
\end{align}Indeed, the data processing inequality of relative entropy follows from \eqref{DPICHI} and the argument used in Lemma \ref{lemma:keylemma}.
We shall now discuss how to control relative entropy contraction coefficients by their $\chi_2$ analogues.

Let $\Phi:\cM_*\to \cM_*$ be a quantum channel and $\Phi^*$ be the adjoint map of $\Phi$.
We denote by $\cN$ the multiplicative domain of $\Phi^*$. Then $\Phi^*$ restricted to $\cN$ is a $*$-isomorphism. 
Suppose $\Phi$ admits a full-rank invariant state $\si$. Denote by $E:\cM\to \cN$ the $\si$-preserving condition expectation and by $E_*$ its pre-adjoint on $\cM_*$.
For a full-rank state $\omega$, we have discussed the following $L_2$-contraction constant in  \cref{lemma:independence}
\[ \la(\omega):=\norm{ \Phi(\id-E_*):L_2(\omega^{-1})\to L_2(\Phi(\omega)^{-1})}{}^2\pl .\]
Equivalently, $\la(\omega)$ gives the contraction coefficient of $\chi_2$:
 \[\la(\omega)=\sup_{E_*(\rho)=E_*(\omega),\rho\neq \omega}\frac{\chi_2(\Phi(\rho),\Phi(\omega))} {\chi_2(\rho,\omega)}\pl,\]
where the supremum is over all state $\rho\neq \omega$ with $E_*(\rho)=E_*(\omega)$. Here we restrict our optimization to states $\rho$ and $\omega$ with the same ``mean'' (also called decoherence free part) given by the map $E_*$. This is because if $\cN\neq \mathbb{C}\Id$ is not trivial, then for any two invariant states $\si,\si'\in \cD(\cN)$,
\[ \chi_2(\Phi(\si'),\Phi(\si))= \chi_2(\si',\si)\pl,~~~ D(\Phi(\si')\|\Phi(\si))= D(\si'\|\si)\pl, \]
and hence $\la(\si)=1$ for any invariant state $\Phi(\si)=\si$.





The next theorem is a quantum analog of \cite[Theorem 3.4]{Raginsky16} which shows that the $\chi_2$ contraction coefficient implies local strong data processing inequality.
\begin{theorem}\label{thm:local}
Let $\Phi:\cM_*\to\cM_*$ be a quantum channel that  admits some full-rank invariant state $\si=\Phi(\si)$. Let $\omega$ be a full-rank state and denote $\la(\omega):=\norm{ \Phi(\id-E_*):L_2(\omega^{-1})\to L_2(\Phi(\omega)^{-1})}{}^2$. Then for any state $\rho$ with $E_*(\omega)=E_*(\rho)$,
\begin{align} D(\Phi(\rho)\|\Phi(\omega))\le c\, D(\rho\|\omega)\label{eq:local}\end{align}
where $c$ is a constant such that
\begin{align} \la(\omega)\le c \le c\,(C(\rho:\omega),\lambda(\omega))\,, \label{eq:localc}\end{align}
Here $ C(\rho:\omega):=\inf \{C | \rho\le C \,\omega\}$ and $c(C,\lambda)$ is an explicit function such that $c(C,\lambda)<1$ whenever $\lambda<1$. In particular, for any state $\rho$, $c\,(C(\rho:\omega),\lambda(\omega))\le c\,(\mu_{\min}(\omega)^{-1},\lambda(\omega))$ where $\mu_{\min}(\omega)$ is the minimum eigenvalue of $\omega$.
\end{theorem}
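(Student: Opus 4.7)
The plan is to sandwich $D(\Phi(\rho)\|\Phi(\omega))$ between weighted $\chi^2$-type quantities using Lemma \ref{lemma:keylemma} and then invoke the hypothesis on $\lambda(\omega)$ to produce contraction. Concretely, apply the upper half of Lemma \ref{lemma:keylemma} to the output pair to get $D(\Phi(\rho)\|\Phi(\omega)) \le \|\Phi(\rho)-\Phi(\omega)\|_{\Phi(\omega)^{-1}}^2$. The hypothesis $E_*(\rho)=E_*(\omega)$ means $\rho-\omega = (\id - E_*)(\rho-\omega)$, so by the very definition of $\lambda(\omega)$,
\[
\|\Phi(\rho)-\Phi(\omega)\|_{\Phi(\omega)^{-1}}^2 \;\le\; \lambda(\omega)\,\|\rho-\omega\|_{\omega^{-1}}^2.
\]
Converting the right-hand side back to relative entropy via the lower half of Lemma \ref{lemma:keylemma} applied with constant $C(\rho:\omega)$ yields a first candidate $c \le \lambda(\omega)/k(C(\rho:\omega))$, which at least shows that the ratio is controlled by an explicit function of $C(\rho:\omega)$ and $\lambda(\omega)$.

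For the lower bound $\lambda(\omega) \le c$, I would run an infinitesimal perturbation argument: set $\rho = \omega + \varepsilon X$ with $X^\dagger = X$, $\tr(X)=0$ and $E_*(X)=0$, so that $C(\rho:\omega)\to 1$ as $\varepsilon\to 0$ and the side condition $E_*(\rho)=E_*(\omega)$ is preserved. Using the integral representation from the proof of Lemma \ref{lemma:keylemma} one gets $D(\rho\|\omega) = \tfrac{\varepsilon^2}{2}\|X\|_{\omega^{-1}}^2 + o(\varepsilon^2)$ and, identically for the pushforward, $D(\Phi(\rho)\|\Phi(\omega)) = \tfrac{\varepsilon^2}{2}\|\Phi(X)\|_{\Phi(\omega)^{-1}}^2 + o(\varepsilon^2)$. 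Taking the ratio, sending $\varepsilon\to 0$, and then supremising over admissible $X$ recovers exactly $\lambda(\omega)$, forcing $c\ge\lambda(\omega)$.

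The main difficulty lies in sharpening the upper bound so that the resulting $c(C,\lambda)$ is strictly less than $1$ whenever $\lambda<1$: the plain combination above only yields $c \le \lambda(\omega)/k(C(\rho:\omega)) \le 2\lambda(\omega)$, because $k(C)\le 1/2$ for $C\ge 1$, and this may well exceed $1$. To fix this, my plan is to replace the crude lower half of Lemma \ref{lemma:keylemma} by the finer integral representation
\[
D(\rho\|\omega) \;=\; \int_0^1\!\int_0^s \|\rho-\omega\|_{\rho_t^{-1}}^2\, dt\, ds,\qquad \rho_t := (1-t)\omega + t\rho,
\]
together with the analogous identity for $D(\Phi(\rho)\|\Phi(\omega))$ in which $\rho_t$ is replaced by $\Phi(\rho_t)$, and then to bound the ratio of integrands slice-wise. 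Since $E_*(\rho_t)=E_*(\omega)$ for every $t$, the $L_2$-contraction $\lambda(\omega)$ can still be invoked after shifting between the weighting states via Lemma \ref{lemma:compare} and the sandwich $(1-t)\omega \le \rho_t \le (1+t(C-1))\omega$. The key technical obstacle is that the comparison at the right endpoint develops a $1/(1-t)$ singularity, so to produce a constant strictly below $1$ one must balance this blow-up against the genuine contraction supplied by $\lambda(\omega)$, for instance by splitting the $t$-integral into a small-$t$ regime where $\lambda(\omega)$ is exploited and a boundary regime controlled by the trivial $\chi^2$ data-processing inequality $\|\Phi(X)\|_{\Phi(\rho_t)^{-1}}^2\le\|X\|_{\rho_t^{-1}}^2$, and then optimising the cutoff as a function of $C$ and $\lambda(\omega)$.
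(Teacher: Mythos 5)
Your proposal is correct and follows essentially the same route as the paper's proof: the lower bound is the same second-order Taylor expansion of $t\mapsto D(\omega_t\|\omega)$ forcing $\|\Phi(\rho-\omega)\|_{\Phi(\omega)^{-1}}^2\le c\,\|\rho-\omega\|_{\omega^{-1}}^2$, and your refinement for the upper bound---the double-integral representation of $D(\rho_t\|\omega)-D(\Phi(\rho_t)\|\Phi(\omega))$, the slice-wise comparison $\tfrac{1}{1+(C-1)t}$ versus $\tfrac{\lambda}{1-t}$ via Lemma \ref{lemma:compare}, the cutoff $t_0$ where these balance, and the trivial $\chi^2$ data processing inequality beyond $t_0$---is precisely how the paper obtains $c=1-(1-t_0)a(t_0)-b(t_0)<1$ with $t_0=\frac{1-\lambda^2}{1+\lambda^2(C-1)}$. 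The only remaining work in your plan is the explicit evaluation of the two iterated integrals, which is routine calculus.
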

\begin{proof} We first show the lower bound. Write $\la\equiv \la(\omega)$.
Let $\rho$ be a state with $E_*(\rho)=E_*(\omega)$. Take the linear interpolation of states $\omega_t:=(1-t)\,\omega+t\,\rho,  t\in [0,1]$.
Now assume $\Phi$ satisfies \eqref{eq:local} for $c>0$. We have
\[ D(\Phi(\omega_t)\|\Phi(\omega))\le c\, D(\omega_t\|\omega)\,,\]
since $E_*(\omega_t)=E_*(\omega)$. Consider the function $f(t)= c\, D(\omega_t\|\omega)-D(\Phi(\omega_t)\|\Phi(\omega))$. Taking derivatives, we have $f(0)=f'(0)=0$, and \cite{lesniewski1999monotone}
\begin{align*}
 f''(0)=c\norm{\rho-\omega}{\omega^{-1}}^2-\norm{\Phi(\rho)-\Phi(\omega)}{\Phi(\omega)^{-1}}^2\pl.
\end{align*}
Note that $f''(0)\ge 0$ because $f(t)\ge 0$ for $t\in [0,\epsilon]$. Therefore,
\[\norm{\Phi(\rho-\omega)}{\Phi(\omega)^{-1}}^2\le c\norm{\rho-\omega}{\omega^{-1}}^2\pl. \]
 This proves the lower bound \[\la(\omega)\le c \ . \]
For the upper bound, denote
 $\rho_t=t\rho+(1-t)\,\omega$ and $g(t)=D(\rho_t\|\omega)-D(\Phi(\rho_t)\|\Phi(\omega))$. We have $g(0)=g'(0)=0$, and
\begin{align*}
 g''(t)=\norm{\rho-\omega}{\rho_t^{-1}}^2-\norm{\Phi(\rho-\omega)}{\Phi(\rho)_t^{-1}}^2\pl.
\end{align*}
It follows from \eqref{DPICHI} (see also \cite[Example 2]{lesniewski1999monotone}) that
$g''(t)\ge 0$.
Using Lemma  \ref{lemma:compare} and the definition of $\la(\omega)$, we also have
\begin{align*} g''(t)&=\norm{\rho-\omega}{\rho_t^{-1}}^2-\norm{\Phi(\rho-\omega)}{\Phi(\rho)_t^{-1}}^2\\ &\ge  \, \frac{1}{1+(C-1)t}\norm{\rho-\omega}{\omega^{-1}}^2-\frac{1}{1-t}\norm{\Phi(\rho-\omega)}{\Phi(\omega)^{-1}}^2\\ &\ge\,  \Big(\frac{1}{1+(C-1)t}-\frac{\la^2}{1-t}\Big)\norm{\rho-\omega}{\omega^{-1}}^2\end{align*}
where $C=\inf\{C \pl |\pl\rho\le C\, \omega\}$. Thus we have for $t_0:=\frac{1-\la^2}{1+\la^2(C-1)}$,
\begin{align*}
   g''(t)\ge
\begin{cases}
\Big(\frac{1}{1+(C-1)t}-\frac{\la^2}{1-t}\Big)\norm{\rho-\omega}{\omega^{-1}}^2\,,\pl \pl  t\le t_0\\
0\,,  ~~~~~~~~~~~~~~~~~~~~~~~~~ t> t_0
\end{cases}\,.
\end{align*}
Denote $\displaystyle a(s):=\int_0^s \frac{1}{1+(C-1)t}-\frac{\la^2}{1-t}\, dt= \frac{\ln (1+(C-1)s)}{C-1}+\la^2\ln(1-s)$. Since $g'(0)=0$, we have $g'(s)\ge a(s)\norm{\rho-\omega}{\omega^{-1}}^2$ if $s\le t_0$ and $g'(s)\ge a(t_0)\norm{\rho-\omega}{\omega^{-1}}^2$ if $s\ge t_0$. Denote \[b(t):=\int_{0}^ta'(s)ds=\frac{(1+(C-1)t)\ln (1+(C-1)t)-(C-1)t}{(C-1)^2}-\la^2((1-t)\ln (1-t)+t)\pl .\]
We have,
\begin{align*}
 D(\rho\|\omega)-D(\Phi(\rho)\|\Phi(\omega))=&g(1)-g(0)
 =\int_0^1g'(s)ds
 \\
 \ge& \big((1-t_0)a(t_0)+b(t_0)\big)\norm{\rho-\omega}{\omega^{-1}}^2
 \\
 \ge &  \big((1-t_0)a(t_0)+b(t_0)\big)\,D(\rho\|\omega)\,,
\end{align*}
where the last inequality follows from Lemma \ref{lemma:keylemma}. The SDPI constant is then upper bounded by
\[c= 1-(1-t_0)a(t_0)-b(t_0)<1\pl.\]
It is clear from the derivation that $c$ as a function depending on $C$ and $\la$ satisfies
\[ c(C,\la)\ge c(C',\la)\pl, \pl C'\ge C\ge 1 \pl.\]
Then the last assertion follows from $\rho\le \Id\le  \mu_{\min}(\omega)^{-1}\omega$.
\end{proof}

Next, we consider strong data processing inequality for a quantum channel $\Phi:\cM_*\to\cM_*$ with respect to its decoherence free states $\cD(E_\cN)$. We say $\Phi$ satisfies a $\al$-strong data processing inequality ($\al$-SDPI) for some $0<\al<1$ if for any state $\rho\in
 \cD(\cH)$,
\begin{align}\label{SDPI1}
D(\Phi(\rho)\|\Phi\circ E_{\cN*}(\rho))\le \al\,D(\rho\|E_{\cN*}(\rho))\,.
\end{align}
We say $\Phi$ satisfies the $\al$-complete strong data processing inequality ($\al$-CSDPI) for some $0<\al<1$ if for any $n\in\mathbb{N}$ and all bipartite states $\rho\in \cD(\mathbb{M}_n(\cM))$:
 \begin{align}\label{CSDPI2}
D((\Phi\otimes\id_n)(\rho)\|(\Phi\circ E_{\cN*}\otimes \id_n)(\rho))\le \al\,D(\rho\|(E_{\cN*}\otimes\id_n)(\rho))\,,
\end{align}
where $\id_n$ denotes the identity channel on the matrix algebra $\mathbb{M}_n$. We denote the best (smallest) constant achieving SDPI \eqref{SDPI1} (resp. CSDPI \eqref{SDPI1}) as $\al_{\operatorname{SDPI}}(\Phi)$ (resp. as $\al_{\operatorname{CSDPI}}(\Phi)$).
The advantage of the CSDPI constant is that it is stable under tensorization.

\begin{proposition}\label{prop:CSDPItensor}
Let $\Phi_1:\cM_{1*}\to\cM_{1*}$ and $\Phi_2:\cM_{2*}\to\cM_{2*}$ be two quantum channel. Denote $E_j:\cM_j\to \cN_j, j=1,2$ as the condition expectation onto the multiplicative domain of $\Phi_j^*$ respectively. Then
\[\al_{\operatorname{CSDPI}}(\Phi_1\ten \Phi_2)\le \max\{\al_{\operatorname{CSDPI}}(\Phi_1), \al_{\operatorname{CSDPI}}(\Phi_2)\}\pl. \]
Namely, for any $n\ge 1$ and states $\rho\in \cD(\cM_1\ten \cM_2\ten \mathbb{M}_n)$
\begin{align}\label{eq:tensorCSDPI}
 D(\Phi_1\ten \Phi_2\ten \id_{\mathbb{M}_n}(\rho)&\| (\Phi_1\circ E_{1*})\ten (\Phi_2\circ E_{2*})\ten \id_{\mathbb{M}_n}(\rho))\\& \le
\max\{\al_{\operatorname{CSDPI}}(\Phi_1), \al_{\operatorname{CSDPI}}(\Phi_2)\}D(\rho\| E_{1*}\ten E_{2*}\ten \id_{\mathbb{M}_n}(\rho))\,.\nonumber
\end{align}
\end{proposition}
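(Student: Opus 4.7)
The plan is to use the chain rule for relative entropy (equation \eqref{chain rule}) together with the CSDPI of each factor, carried out in an asymmetric way so that the two cross-terms consolidate into a true $\max$. Introduce the shorthands $A_{*}:=E_{1*}\otimes\id_{\cM_2}\otimes\id_{\mathbb{M}_n}$, $B_{*}:=\id_{\cM_1}\otimes E_{2*}\otimes\id_{\mathbb{M}_n}$, $C_{*}:=E_{1*}\otimes E_{2*}\otimes\id_{\mathbb{M}_n}$, and $\Omega:=\Phi_1\otimes\Phi_2\otimes\id_{\mathbb{M}_n}$. By Lemma~\ref{md}~iii) one has $\Phi_i\circ E_{i*}=E_{i*}\circ\Phi_i$, so
$$\omega_2:=((\Phi_1\circ E_{1*})\otimes(\Phi_2\circ E_{2*})\otimes\id_{\mathbb{M}_n})(\rho)\,=\,\Omega(C_{*}(\rho)), \qquad A_{*}(\omega_2)=\omega_2.$$
Setting $\rho_2:=\Omega(\rho)$ and $\al_i:=\al_{\operatorname{CSDPI}}(\Phi_i)$, the $A_{*}$-invariance of $\omega_2$ lets the chain rule \eqref{chain rule} split
$$D(\rho_2\|\omega_2)\,=\,D(\rho_2\|A_{*}(\rho_2))+D(A_{*}(\rho_2)\|\omega_2).$$

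For the first summand, $A_{*}(\rho_2)=((\Phi_1\circ E_{1*})\otimes\Phi_2\otimes\id)(\rho)$. I would factor out $\id\otimes\Phi_2\otimes\id$ from both arguments and apply CSDPI of $\Phi_1$ (with the enlarged ancilla $\cM_2\otimes\mathbb{M}_n$) to the state $(\id\otimes\Phi_2\otimes\id)(\rho)$, followed by data processing for the channel $\id\otimes\Phi_2\otimes\id$, yielding $D(\rho_2\|A_{*}(\rho_2))\le \al_1\,D(\rho\|A_{*}(\rho))$. For the second summand, both $A_{*}(\rho_2)$ and $\omega_2$ share the factor $\Phi_1\circ E_{1*}$ on system $1$. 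Setting $\xi:=((\Phi_1\circ E_{1*})\otimes\id\otimes\id)(\rho)=(\Phi_1\otimes\id\otimes\id)(A_{*}(\rho))$ and peeling off the common factor reduces the second summand to $D((\id\otimes\Phi_2\otimes\id)(\xi)\|(\id\otimes(\Phi_2\circ E_{2*})\otimes\id)(\xi))$. CSDPI of $\Phi_2$ bounds this by $\al_2\,D(\xi\|B_{*}(\xi))$; using $B_{*}(\xi)=(\Phi_1\otimes\id\otimes\id)(C_{*}(\rho))$ and data processing for $\Phi_1\otimes\id\otimes\id$ then gives $\al_2\,D(A_{*}(\rho)\|C_{*}(\rho))$. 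One more application of the chain rule, now to $D(\rho\|C_{*}(\rho))$ split through $A_{*}$, provides the key identity $D(A_{*}(\rho)\|C_{*}(\rho))=D(\rho\|C_{*}(\rho))-D(\rho\|A_{*}(\rho))$.

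Combining the two bounds:
$$D(\rho_2\|\omega_2)\,\le\,\al_1 D(\rho\|A_{*}(\rho))+\al_2\big(D(\rho\|C_{*}(\rho))-D(\rho\|A_{*}(\rho))\big)\,=\,\al_2 D(\rho\|C_{*}(\rho))+(\al_1-\al_2)D(\rho\|A_{*}(\rho)).$$
If $\al_1\le\al_2$, the cross term is non-positive and drops. If $\al_1>\al_2$, the inequality $D(\rho\|A_{*}(\rho))\le D(\rho\|C_{*}(\rho))$ (itself an immediate consequence of one application of \eqref{chain rule}, whose right-hand side is non-negative) upgrades the bound to $\al_1\,D(\rho\|C_{*}(\rho))$. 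Either way one obtains $D(\rho_2\|\omega_2)\le \max(\al_1,\al_2)\,D(\rho\|C_{*}(\rho))$, which is exactly \eqref{eq:tensorCSDPI}.

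The main subtlety, and what I expect to be the conceptual heart of the argument, lies in this asymmetric chain-rule decomposition: the naive symmetric split (bounding both halves of $D(\rho_2\|\omega_2)$ by CSDPI of the corresponding factor) only produces the additive bound $\al_1 D(\rho\|A_{*}(\rho))+\al_2 D(\rho\|B_{*}(\rho))$, whose right-hand side can exceed $D(\rho\|C_{*}(\rho))$ and therefore yields at best $(\al_1+\al_2)\,D(\rho\|C_{*}(\rho))$ rather than $\max(\al_1,\al_2)\,D(\rho\|C_{*}(\rho))$. Recognising that CSDPI for $\Phi_2$ should be applied to the state after $E_{1*}$-projection—so that the chain-rule identity $D(A_{*}(\rho)\|C_{*}(\rho))=D(\rho\|C_{*}(\rho))-D(\rho\|A_{*}(\rho))$ can cancel the spurious $D(\rho\|A_{*}(\rho))$ contribution—is precisely what collapses the additive form into a true $\max$.
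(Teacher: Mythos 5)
Your proposal is correct and follows essentially the same route as the paper: the same asymmetric chain-rule split of $D(\rho_2\|\omega_2)$ through $E_{1*}\otimes\id$, the same application of CSDPI for $\Phi_1$ followed by data processing of $\id\otimes\Phi_2$, the same application of CSDPI for $\Phi_2$ to the $E_{1*}$-projected state followed by data processing of $\Phi_1\otimes\id$, and the same final chain-rule recombination into a $\max$. The only cosmetic difference is that you spell out the case analysis $\al_1\lessgtr\al_2$ where the paper simply bounds both coefficients by $\max\{\al_1,\al_2\}$ and sums the two chain-rule pieces.
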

\begin{proof}
The proof is a natural application of the data processing inequality. For ease of notations, we argue for $n=1$ as the case for general $n\ge 1$ follows the same argument. Note that for $j=1,2$,  $\Phi_j\circ E_{j*}=  E_{j*}\circ \Phi_j$. Write $\al_1:=\al_{\operatorname{CSDPI}}(\Phi_1)$ and $\al_2:=\al_{\operatorname{CSDPI}}(\Phi_2)$.  We have
\begin{align*}
D\Big(\Phi_1\ten \Phi_2(\rho)&\| (\Phi_1\circ E_{1*})\ten (\Phi_2\circ E_{2*})(\rho)\Big)
\\ = &D\Big(\Phi_1\ten \Phi_2(\rho)\| E_{1*}\otimes  E_{2*}\big(\Phi_1\ten \Phi_2(\rho)\big)\Big)
\\ = &D\Big(\Phi_1\ten \Phi_2(\rho)\|  E_{1*}\Phi_1\ten \Phi_2(\rho)\Big)+ D\Big(E_{1*}\Phi_1\ten \Phi_2(\rho)\| E_{1*}\otimes  E_{2*}\big(\Phi_1\ten \Phi_2(\rho)\big)\Big)
\\ \le & \al_1 D\Big(\id\ten \Phi_2(\rho)\|  E_{1*}\ten \Phi_2(\rho)\Big)
+\al_2 D\Big((E_{1*}\circ\Phi_1)\ten \id (\rho)\| (E_{1*}\circ\Phi_1)\ten E_{2*} (\rho)\Big)
 \\ \le & \al_1 D\Big(\rho\|  E_{1*}\ten \id(\rho)\Big)
 +\al_2 D\Big(E_{1*}\ten \id (\rho)\| E_{1*}\ten E_{2*} (\rho)\Big)
\\ \le & \max\{\al_1, \al_2 \}D\Big(\rho\| E_{1*}\ten E_{2*} (\rho)\Big)
\end{align*}
where in the second equality and the last inequality, we used the chain rule \eqref{chain rule} and the second last inequality uses data processing inequality for the map $\id\ten \Phi_2$ and $\Phi_1\ten \id$ respectively.
\end{proof}

As an application of Theorem \eqref{thm:local}, we have $\al_{\operatorname{SDPI}}(\Phi)$ and $\al_{\operatorname{CSDPI}}(\Phi)$ are two-sided bounded by the spectral gap in finite dimensions.

\begin{corollary}\label{cor:SDPI}
Let $\Phi:\cM_*\to\cM_*$ be a quantum channel and $\cN$ be the multiplicative domain of $\Phi^*$. Assume that $\Phi^*$ satisfies the $\si$-$\operatorname{DBC}$ for some full-rank invariant state $\si=\Phi(\si)$. Denote the spectral gap $\la(\Phi):=\norm{\Phi^*(\id-E_{\cN}):L_2(\si)\to L_2(\si)}{}^2<1$. There exists an explicit constant $ c\,(C_{\tau,\operatorname{cb}}(\cM:\cN),\lambda)<1$ such that
\begin{align} \la(\Phi)\le \al_{\operatorname{CSDPI}}(\Phi) \le c\,(C_{\tau,\operatorname{cb}}(\cM:\cN),\lambda(\Phi))\,. \label{eq:CSDPI}\end{align}
The same estimate holds for $\al_{\operatorname{SDPI}}(\Phi)$ simply replacing $C_{\tau,\operatorname{cb}}(\cM:\cN)$ by $C_{\tau}(\cM:\cN)$.
\end{corollary}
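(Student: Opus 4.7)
The plan is to derive the corollary as a direct application of Theorem \ref{thm:local} to amplifications of $\Phi$, together with the spectral-gap identifications provided by Lemma \ref{lemma:independence} and the definition of the complete Pimsner--Popa index.

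I will first handle the upper bound $\alpha_{\operatorname{CSDPI}}(\Phi)\le c(C_{\tau,\operatorname{cb}}(\cM:\cN),\lambda(\Phi))$. Fix $n\in\mathbb{N}$ and an arbitrary state $\rho\in\cD(\cM\otimes\mathbb{M}_n)$. Set $\omega:=(E_{\cN*}\otimes\id_n)(\rho)$; then trivially $(E_{\cN*}\otimes\id_n)(\omega)=\omega=(E_{\cN*}\otimes\id_n)(\rho)$, so $\rho$ and $\omega$ have equal ``decoherence-free parts'' with respect to the amplified channel $\Phi\otimes\id_n$, whose multiplicative domain is $\cN\otimes\mathbb{M}_n$. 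By the very definition of the completely bounded index we have
\begin{equation*}
\rho\,\le\,C_{\tau,\operatorname{cb}}(\cM:\cN)\,\omega,
\end{equation*}
so $C(\rho:\omega)\le C_{\tau,\operatorname{cb}}(\cM:\cN)$. Applying Theorem \ref{thm:local} to the channel $\Phi\otimes\id_n$ at the pair $(\rho,\omega)$, and using that $(\Phi\otimes\id_n)(\omega)=(\Phi\circ E_{\cN*}\otimes\id_n)(\rho)$, this yields
\begin{equation*}
D((\Phi\otimes\id_n)(\rho)\,\|\,(\Phi\circ E_{\cN*}\otimes\id_n)(\rho))\,\le\,c\big(C_{\tau,\operatorname{cb}}(\cM:\cN),\,\lambda(\omega)\big)\,D(\rho\,\|\,(E_{\cN*}\otimes\id_n)(\rho)).
\end{equation*}
It remains to identify $\lambda(\omega)$ with $\lambda(\Phi)$. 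Since $\Phi\otimes\id_n$ still satisfies $\sigma\otimes\tau$-DBC for any full-rank $\tau\in\cD(\mathbb{M}_n)$, Lemma \ref{lemma:independence}(iii) applies and gives $\lambda(\omega)=\lambda(\sigma\otimes\tau,1)$; the spectral gap of $\Phi^*\otimes\id_n$ on $(\cN\otimes\mathbb{M}_n)^\perp$ equals that of $\Phi^*$ on $\cN^\perp$ because tensoring with the identity on a reference system does not change the $L_2$-norm of the restriction. Hence $\lambda(\omega)=\lambda(\Phi)$, and since the function $c(\cdot,\lambda)$ is non-decreasing in its first argument we obtain a uniform (in $n$ and $\rho$) upper bound $c(C_{\tau,\operatorname{cb}}(\cM:\cN),\lambda(\Phi))<1$.

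For the lower bound $\lambda(\Phi)\le\alpha_{\operatorname{CSDPI}}(\Phi)$, I will run the infinitesimal argument that already appeared in the first half of Theorem \ref{thm:local}. Fix any full-rank $\omega\in\cD(E_\cN)$ and any $X=X^\dagger$ with $E_\cN(X)=0$ (appropriately amplified by tensoring with a reference $\mathbb{M}_n$); then $\omega_t:=\omega+t\,\Gamma_\omega(X)$ remains a state for small $t$ and satisfies $E_{\cN*}(\omega_t)=\omega$. Writing the CSDPI hypothesis for $\omega_t$ against $\omega$ and taking the second derivative at $t=0$ (using $f(0)=f'(0)=0$) reduces the inequality to a $\chi_2$-contraction
\begin{equation*}
\|(\Phi\otimes\id_n)(\omega_t-\omega)\|_{\Phi(\omega)^{-1}}^2\,\le\,\alpha_{\operatorname{CSDPI}}(\Phi)\,\|\omega_t-\omega\|_{\omega^{-1}}^2,
\end{equation*}
which, after identifying the left-hand side with $\lambda(\omega)\,\|\omega_t-\omega\|_{\omega^{-1}}^2$ via Lemma \ref{lemma:independence}(iii) and optimizing over $X$, gives $\lambda(\Phi)\le\alpha_{\operatorname{CSDPI}}(\Phi)$. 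The SDPI version is obtained in the same way, except that no amplification is allowed and the index $C_{\tau,\operatorname{cb}}(\cM:\cN)$ is replaced by its non-cb counterpart $C_{\tau}(\cM:\cN)$.

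The only mildly delicate point is ensuring that the spectral gap is preserved under amplification and under the change of reference state from $\sigma$ to an arbitrary $\omega\in\cD(E_\cN)$; both facts are already contained in Lemma \ref{lemma:independence}, so the proof is essentially a bookkeeping exercise combining that lemma, Theorem \ref{thm:local}, and the defining property of the (completely bounded) Pimsner--Popa index.
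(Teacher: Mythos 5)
Your proposal is correct and follows essentially the same route as the paper: the upper bound is Theorem \ref{thm:local} applied to $\Phi\otimes\id_n$ with $\omega=(E_{\cN*}\otimes\id_n)(\rho)$, using $\rho\le C_{\tau,\operatorname{cb}}(\cM:\cN)\,\omega$ and Lemma \ref{lemma:independence} to identify all the gaps with $\lambda(\Phi)$, while the lower bound is the infinitesimal perturbation argument already contained in the first half of Theorem \ref{thm:local}. The paper's own proof is just a terser version of exactly this bookkeeping.
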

\begin{proof} We have shown in Lemma \ref{lemma:independence} that $\la(\Phi)=\la(\si)$ and $\la(\si)\ge \la(\rho)$ for all decoherence free state $\omega\in \cD(E_{\cN*})$. Then \eqref{eq:CSDPI} follows from Theorem \ref{thm:local} and the fact that $\rho\le C_{\tau,\operatorname{cb}}(\cM:\cN) E_{\cN*}(\rho)$ for any $\rho\in \cD(\cH\otimes \mathbb{C}^n)$.
\end{proof}

\begin{rem}
For a primitive unital quantum channel $\Phi:\cB(\cH)\to \cB(\cH)$, it was proved in \cite{muller2016entropy} that
\[\al_{\operatorname{SPDI}}(\Phi)\le 1-\al_{\operatorname{LSI}}(\Phi^*\Phi-\id)\le 1-\frac{\la(\Phi)}{\ln d+2} \pl, \]
where $\al_{\operatorname{LSI}}(\Phi^*\Phi-\id)$ is the log-Sobolev constant of the map $\Phi^*\Phi-\id$ seen as the generator of a quantum Markov semigroup.
This is generically better than the bounds found in Corollary \ref{cor:SDPI}. Nevertheless, our results give explicit SDPI constants for general non-egordic GNS-symmetric quantum channels, independently of the size of the environment. Moreover, the CSDPI constant satisfies the tensorization property.
\end{rem}

\section{Approximate tensorization}\label{sec:AT}
In this section, we consider the approximate tensorization of the relative entropy in a general setting. Let $\cM$ be a finite dimensional von Neumann algebra equipped with a faithful trace $\tr$. Let $\cN_1,\cN_2\subset \cM$ be two subalgebras of $\cM$ and $\cN=\cN_1 \cap \cN_2$. Let $E_\cN:\cM\to \cN$ and $E_i:\cM\to \cN_i,i=1,2$, be conditional expectations such that $E_{\cN}\circ E_i=E_{\cN}$. If $\rho$ is a state that satisfies $E_{\cN*}(\rho)=\rho$, then
\[ \rho=E_{\cN*}(\rho)=E_{i*}\circ E_{\cN*}(\rho) =E_{i*}(\rho) \pl, i=1,2\pl.\]
Namely, every $E_{\cN}$ invariant state is both $E_1$ and $E_2$ invariant. Denote $\rho_\cN=E_{\cN*}(\rho)$ and $\rho_i=E_{i*}(\rho), i=1,2$. We are interested in the following \textit{approximate tensorization} property:
\begin{align}
    D(\rho\|\rho_\cN)\le c\,(D(\rho\|\rho_1)+D(\rho\|\rho_2))\pl, \pl \forall \rho\in \cD(E_\cM).  \label{eq:AT}
\end{align}
It was proved in \cite[Corollary 2.3]{gao2017unifying} that the constant $c$ equals to 1 if and only if $E_1$ and $E_2$ form a commuting square, i.e. $E_1\circ E_2=E_2\circ E_1=E_{\cN}$.
Using the chain rule $D(\rho\|\rho_\cN)=D(\rho\|\rho_i)+D(\rho_i\|\rho_\cN)$, the inequality \eqref{eq:AT} is equivalent to
the following entropic uncertainty relation
\begin{align}
    D(\rho\|\rho_\cN)\ge \alpha (D(\rho_1\|\rho_\cN)+D(\rho_2\|\rho_\cN))\pl, \pl \forall \rho\in \cD(E_\cM). \label{eq:UCR}
\end{align}
where $\alpha= \displaystyle \frac{c}{2c-1}>1/2$.
Take $\rho(t)=t\rho+(1-t)\rho_{\cN}$ and the function
\[ f(t)=D(\rho(t)\|\rho_\cN)- \alpha\big(\,D(\rho_1(t)\|\rho_\cN)+D(\rho_2(t)\|\rho_\cN)\,\big)\,.\]
Then we have $f(0)=f'(0)=0$ and
\begin{align*}
f''(0)=\norm{\rho-\rho_\cN}{\rho_{\cN}^{-1}}^2-\alpha\big(\norm{\rho_1-\rho_\cN}{\rho_{\cN}^{-1}}^2+\norm{\rho_2-\rho_\cN}{\rho_{\cN}^{-1}}^2\big)\pl.
\end{align*}
So a necessary condition for \eqref{eq:UCR} and equivalently \eqref{eq:AT} is that for any state $\rho$,
\[ \norm{\rho-\rho_\cN}{\rho_{\cN}^{-1}}^2\ge \alpha\big( \norm{\rho_1-\rho_\cN}{\rho_{\cN}^{-1}}^2+\norm{\rho_2-\rho_\cN}{\rho_{\cN}^{-1}}^2\big)\pl.\]
In particular, if we choose $\rho=\rho_1=E_{1*}(\rho)$, we have
\[  \frac{(1-\alpha)}{\alpha}\norm{\rho_1-\rho_\cN}{\rho_{\cN}^{-1}}^2\ge \norm{E_{2*}(\rho_1)-\rho_\cN}{\rho_{\cN}^{-1}}^2  \pl.\]
Because $1/2< \alpha \le 1$, for $\la=\frac{1-\al}{\al}$
this can be reformulated as the following $L_2$-clustering condition
\[ \norm{E_{2*}\circ E_{1*} -E_{\cN*}:L_2(\rho_{\cN}^{-1})\to L_2(\rho_{\cN}^{-1})}{}=\norm{E_{1}\circ E_{2} -E_{\cN}:L_2(\rho_{\cN})\to L_2(\rho_{\cN})}{}:=\la <1\pl.\]
Since $E_{2}\circ E_{1}$ is the identity on $\cN$ and satisfies the $\rho_{\cN}$-DBC condition, the above definition is independent of the choice of invariant state $\rho_\cN$ (see Lemma \ref{lemma:independence}, also \cite[Theorem 2]{bardet2020approximate}). Note that in finite dimensions, the constant $\la$ is always strictly less than $1$: otherwise there would exist a nonzero $X\notin \cN$ such that $E_{1}(X)=X,E_2(X)=X$ and hence $X\in \cN$, which leads to a contradiction. We now show that the $L_2$-clustering condition is also a sufficient condition for \eqref{eq:AT}.
\begin{theorem}
\label{thm:AT}
Let $\si\in \cD(E_\cN)$.
Denote $\norm{E_{1}\circ E_{2} -E_{\cN}:L_2(\si)\to L_2(\si)}{}=\la<1$ as the $L_2$-clustering constant. Then for any state $\rho$,
\begin{align}\label{eq:ATq} D(\rho\|\rho_\cN)\le c\big(D(\rho\|\rho_1)+D(\rho\|\rho_2)\big)\pl,
\end{align}
where the constant $c$ satisfies
\begin{align}\label{noncompleteATq}
\frac{1}{1-\lambda^2}\le c\le \frac{2\,C_{\tau}(\cM:\cN)}{(1-\la)^2}\,.
\end{align}
Similarly, for any $n\in\mathbb{N}$ and all states $\rho\in\cD(\cM\otimes \mathbb{M}_n)$, we have
\begin{align}\label{completeATq}
    D(\rho\|(E_{\cN*}\otimes\id)(\rho))\le \,c_{\operatorname{cb}}\big(D(\rho\|(E_{1*}\otimes \id)(\rho))+D(\rho\|(E_{2*}\otimes \id)(\rho))\big)
\end{align}
where $c_{\operatorname{cb}}$ satisfies \eqref{noncompleteATq} after replacing $C_{\tau}(\cM:\cN)$ by $C_{\tau,\operatorname{cb}}(\cM:\cN)$.
\end{theorem}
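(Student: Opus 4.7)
The strategy is to sandwich both sides of the desired entropy inequality by the quadratic form $\|\cdot\|^2_{\rho_\cN^{-1}}$ and then extract the approximate tensorization at the Hilbert space level. Applying the right-hand inequality of Lemma \ref{lemma:keylemma} to $D(\rho\|\rho_\cN)$ gives the unconditional upper bound $D(\rho\|\rho_\cN)\le \|\rho - \rho_\cN\|^2_{\rho_\cN^{-1}}$. Conversely, Lemma \ref{lemma:keylemma2} applied with $\omega = \rho_\cN$ yields $\|\rho - \rho_i\|^2_{\rho_\cN^{-1}} \le 2\,C_\tau(\cM:\cN)\, D(\rho\|\rho_i)$, because by the very definition of the index, both $\rho$ and $\rho_i = E_{i*}(\rho)$ are dominated by $C_\tau(\cM:\cN)\,\rho_\cN$; for $\rho_i$ this uses $E_{\cN*}(\rho_i) = E_{\cN*}\circ E_{i*}(\rho) = \rho_\cN$ and the index bound applied to the state $\rho_i$.

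The missing bridge is a purely Hilbert-space inequality. Since $E_i$ satisfies $\rho_\cN$-DBC (because $\rho_\cN \in \cD(E_\cN)\subseteq \cD(E_i)$), $E_{i*}$ is an orthogonal projection on $L_2(\rho_\cN^{-1})$, conjugate via $\Gamma_{\rho_\cN}$ to the self-adjoint $L_2(\rho_\cN)$-projection $E_i$; the same holds for $E_{\cN*}$, and the assumption $\cN = \cN_1\cap \cN_2$ immediately yields $\mathrm{Range}(E_{\cN*}) = \mathrm{Range}(E_{1*})\cap \mathrm{Range}(E_{2*})$. I would then prove the following elementary projection lemma, which is the quantitative content of the $L_2$-clustering hypothesis: for orthogonal projections $P_1, P_2, P$ on any Hilbert space with $P = P_1 \wedge P_2$ and $\lambda := \|P_1 P_2 - P\| < 1$,
\[
\|x - Px\|^2 \;\le\; \frac{1}{(1-\lambda)^2}\bigl(\|x - P_1 x\|^2 + \|x - P_2 x\|^2\bigr).
\]
The proof has two steps. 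First, $(P_1 P_2 - P)P = 0$ gives $\|P_1 P_2 x - Px\| = \|(P_1 P_2 - P)(x - Px)\| \le \lambda \|x - Px\|$, so by the triangle inequality $(1-\lambda)\|x - Px\| \le \|x - P_1 P_2 x\|$. Second, decomposing $x - P_1 P_2 x = (I - P_1) x + P_1 (I - P_2) x$ and noting that the two summands are orthogonal (they lie in $\mathrm{Range}(P_1)^\perp$ and $\mathrm{Range}(P_1)$ respectively) yields $\|x - P_1 P_2 x\|^2 = \|(I - P_1)x\|^2 + \|P_1 (I - P_2) x\|^2 \le \|x - P_1 x\|^2 + \|x - P_2 x\|^2$.

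Chaining the three ingredients delivers exactly $c \le \frac{2 C_\tau(\cM:\cN)}{(1-\lambda)^2}$, matching the stated bound in \eqref{noncompleteATq}; the lower bound $c \ge (1-\lambda^2)^{-1}$ is the second-derivative condition already derived in the excerpt preceding the theorem. For the complete inequality \eqref{completeATq}, I would amplify by $\id_{\mathbb{M}_n}$: the $L_2$-clustering constant $\lambda$ is unchanged, since the operator norm $\|E_1 E_2 - E_\cN\|_{L_2 \to L_2}$ is stable under tensoring with the identity on the ancilla, while Lemma \ref{lemma:keylemma2} applied to the amplified conditional expectations replaces $C_\tau(\cM:\cN)$ by $C_{\tau,\operatorname{cb}}(\cM:\cN)$, which is precisely the supremum defining the completely bounded index. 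The main obstacle I anticipate is less a matter of the proof than of carefully verifying the structural claim $\mathrm{Range}(E_{\cN*}) = \mathrm{Range}(E_{1*})\cap \mathrm{Range}(E_{2*})$ and the automatic $\rho_\cN$-DBC for all three conditional expectations; once these are in place, the orthogonal decomposition $(I - P_1)x \perp P_1(I - P_2)x$ — which saves the factor of $2$ that a naive triangle-inequality estimate would cost — is what makes the constant align precisely with the statement.
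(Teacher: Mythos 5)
Your proposal is correct and follows essentially the same route as the paper's proof: the sandwich $D(\rho\|\rho_\cN)\le\norm{\rho-\rho_\cN}{\rho_\cN^{-1}}^2$ and $\norm{\rho-\rho_i}{\rho_\cN^{-1}}^2\le 2C_\tau(\cM:\cN)D(\rho\|\rho_i)$ (Lemmas \ref{lemma:keylemma} and \ref{lemma:keylemma2}, the latter of which the paper rederives by hand via the integral representation of $D(\rho\|\rho_i)$), combined with exactly the two Hilbert-space steps you isolate — the triangle-inequality bound $(1-\la)\norm{\rho-\rho_\cN}{\rho_\cN^{-1}}\le\norm{\rho-\rho_{21}}{\rho_\cN^{-1}}$ and the Pythagorean decomposition that avoids the extra factor of $2$. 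Your packaging of the latter as an abstract lemma for orthogonal projections is a clean presentational choice, but the content is identical to the paper's chain of inequalities.
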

\begin{proof}
 The lower bound was proven at the beginning of the section, so we focus on the upper bound. Note that $E_1,E_2$ and $E_\cN$ are all projections on $L_2(\rho_{\cN})$. For a state $\rho$, we write $\rho_{12}=E_{1*} E_{2*}(\rho)$ and $\rho_{12}=E_{2*} E_{1*}(\rho)$. By the $L_2$-clustering condition
\begin{align}\nonumber
\norm{\rho-\rho_{21}}{\rho_\cN^{-1}}&\ge \norm{\rho-\rho_\cN}{\rho_{\cN}^{-1}}-\norm{\rho_\cN-\rho_{21}}{\rho_{\cN}^{-1}}\\
&\ge (1-\la)\norm{\rho-\rho_\cN}{\rho_{\cN}^{-1}}\pl\,.\label{eqgapE2E1}
\end{align}
Moreover, since $E_{1*},E_{2*}$ and $E_{\cN*}$ are projections on $L_2(\rho_\cN^{-1})$,
\begin{align*}
\norm{\rho-\rho_{\cN}}{\rho_{\cN}^{-1}}^2-\norm{\rho&-\rho_{1}}{\rho_{\cN}^{-1}}^2
-\norm{\rho-\rho_{2}}{\rho_{\cN}^{-1}}^2
\\ \le & \norm{\rho-\rho_{\cN}}{\rho_{\cN}^{-1}}^2-\norm{\rho-\rho_{2}}{\rho_{\cN}^{-1}}^2
-\norm{\rho_2-E_{2*}(\rho_{1})}{\rho_{\cN}^{-1}}^2
  \\ \le&\norm{\rho-\rho_{\cN}}{\rho_{\cN}^{-1}}^2-\norm{\rho-E_{2*}(\rho_1)}{\rho_{\cN}^{-1}}^2
\\ = & \norm{\rho-\rho_{\cN}}{\rho_{\cN}^{-1}}^2-\norm{\rho-\rho_{21}}{\rho_{\cN}^{-1}}^2
\\ \le & (1-(1-\la)^2)\norm{\rho-\rho_\cN}{\rho_{\cN}^{-1}}^2\,,
\end{align*}
where the last line follows from \eqref{eqgapE2E1}. Namely, we have
\[ \norm{\rho-\rho_{\cN}}{\rho_{\cN}^{-1}}^2\le \frac{1}{(1-\la)^2}\big(\norm{\rho-\rho_{1}}{\rho_{\cN}^{-1}}^2
+\norm{\rho-\rho_{2}}{\rho_{\cN}^{-1}}^2\big) \pl.\]
Now using Lemma \ref{lemma:keylemma},
\begin{align*}
D(\rho\|\rho_\cN)\le &\norm{\rho-\rho_\cN}{\rho_{\cN}^{-1}}^2
\\ \le & \, \frac{1}{(1-\la)^2}\big(\norm{\rho-\rho_{1}}{\rho_{\cN}^{-1}}^2
+\norm{\rho-\rho_{2}}{\rho_{\cN}^{-1}}^2\big)
\\ \le & \, \frac{C_\tau(\cM:\cN)}{(1-\la)^2} \big(\norm{\rho-\rho_{1}}{\rho_1(t)^{-1}}^2
+\norm{\rho-\rho_{2}}{\rho_2(t)^{-1}}^2\big)\,,
\end{align*}
where $\rho_1(t)=t\rho+(1-t)\rho_1$ and $\rho_2(t)=t\rho+(1-t)\rho_2$.
As in Lemma \ref{lemma:keylemma}, for $i=1,2$
$\displaystyle D(\rho\|\rho_i)=\int_0^1 \int_{0}^s \norm{\rho-\rho_i}{\rho_i(t)}^2dtds$. Then integrating the above inequality we have
\[ D(\rho\|\rho_\cN)\le \frac{2C_\tau(\cM:\cN)}{(1-\la)^2} \big(D(\rho\|\rho_1)+D(\rho\|\rho_2)\big)\pl.\]
That completes the proof of \eqref{noncompleteATq}. The proof of \eqref{completeATq} follows the exact same lines after replacing $C_{\tau}(\cM:\cN)$ by $C_{\tau,\operatorname{cb}}(\cM:\cN)$\,.
\end{proof}

\begin{rem}{\rm By using $\rho_{i}(t)=t\rho+(1-t)\rho_i\le \big(tC(\cM:\cN)+(1-t)C_\tau(\cN_i:\cN)\big)\rho_\cN$,
the constant $c$ in the above theorem can be improved to
\[ c=\frac{K(C_\tau(\cM:\cN),\max\{C_\tau(\cN_1:\cN),C_\tau(\cN_2:\cN)\})}{(1-\la)^2}\pl,\]
where $\displaystyle K(c_1,c_2):=\frac{c_1\ln c_1-c_1+c_2}{(c_1-c_2)^2}$\,. The same remark holds for $c_{\operatorname{cb}}$.
}
\end{rem}

Although the above theorem gives the equivalence of $L_2$-clustering condition and complete approximate tensorization, it does not recover the optimal constant $c=1$ in the case of commuting square ($\lambda=0$). The next theorem gives a refinement in this direction.

\begin{theorem}\label{theo:lambdaindexcontrolAT}
Let $\si\in \cD(E_\cN)$ and denote
$\norm{E_{1}\circ E_{2} -E_{\cN}:L_2(\si)\to L_2(\si)}{}=\la$. Suppose $\la<\frac{1}{\sqrt{2}}$.
Then the (complete) approximate tensorization \eqref{eq:ATq} and \eqref{completeATq} are satisfied with the constants
 \begin{align}
 &c\le 1+\Big(\frac{\la}{1-\la}+\frac{\la^2}{1-2\la^2}\Big)\,C_{\max}\pl, \label{eq:lasmall}\\
 &c_{\operatorname{cb}}\le 1+\Big(\frac{\la}{1-\la}+\frac{\la^2}{1-2\la^2}\Big)\,C_{\max,\operatorname{cb}}\pl,\nonumber
 \end{align}
 where \begin{align*}&C_{\max}:=\max\{ C_\tau(\cN_1:\cN)\pl, C_\tau(\cN_2:\cN),C_{\tau}(\cM:\cN_1),C_{\tau}(\cM:\cN_2)\}\ , \\ &C_{\max,\operatorname{cb}}:=\max\{ C_{\tau,\operatorname{cb}}(\cN_1:\cN)\pl, C_{\tau,\operatorname{cb}}(\cN_2:\cN),C_{\tau,\operatorname{cb}}(\cM:\cN_1),C_{\tau,\operatorname{cb}}(\cM:\cN_2)\}\,.
 \end{align*}
 In particular, the above bound converges to $1$ in the limit of commuting squares, i.e. when $\la\to 0$.
\end{theorem}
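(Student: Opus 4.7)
The strategy is to extract the commuting-square contribution (which yields the coefficient $1$) directly via the chain rule and data processing, and to control the remaining residual using the $L_2$-clustering constant $\la$. The index $C_{\max}$ enters only once in the final bound, because it is invoked only through the $L_2$-to-entropy conversion of the $O(\la)$-residual.

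First I symmetrize using the chain rule \eqref{chain rule} (applicable because $E_{1*}\rho_\cN=E_{2*}\rho_\cN=\rho_\cN$), obtaining
\begin{align*}
D(\rho\|\rho_\cN)=\tfrac12\big(D(\rho\|\rho_1)+D(\rho\|\rho_2)\big)+\tfrac12\big(D(\rho_1\|\rho_\cN)+D(\rho_2\|\rho_\cN)\big).
\end{align*}
The first summand already matches the leading constant $1$, so it remains to bound $\tfrac12(D(\rho_1\|\rho_\cN)+D(\rho_2\|\rho_\cN))$ by $K\cdot(D(\rho\|\rho_1)+D(\rho\|\rho_2))$ with $K$ of order $\la\, C_{\max}$.

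Second, I apply the chain rule once more with respect to the complementary algebra:
\begin{align*}
D(\rho_1\|\rho_\cN)=D(\rho_1\|E_{2*}\rho_1)+D(E_{2*}\rho_1\|\rho_\cN).
\end{align*}
For the second piece, the upper bound in Lemma \ref{lemma:keylemma} combined with $E_{2*}\rho_1-\rho_\cN=(E_{2*}E_{1*}-E_{\cN*})(\rho-\rho_\cN)$ and the $L_2$-clustering condition gives $D(E_{2*}\rho_1\|\rho_\cN)\le\la^2\|\rho-\rho_\cN\|_{\rho_\cN^{-1}}^2$. For the first piece I combine (i) DPI under $E_{1*}$, which gives $D(\rho_1\|E_{1*}\rho_2)\le D(\rho\|\rho_2)$; (ii) the identity $D(\rho_1\|E_{2*}\rho_1)-D(\rho_1\|E_{1*}\rho_2)=\operatorname{tr}(\rho_1[\ln(E_{1*}\rho_2)-\ln(E_{2*}\rho_1)])$; and (iii) the integral representation $\ln A-\ln B=\int_0^\infty (B+s)^{-1}(A-B)(A+s)^{-1}\,ds$ with Cauchy-Schwarz in $L_2(\rho_\cN^{-1})$. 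Because $\|E_{1*}\rho_2-E_{2*}\rho_1\|_{\rho_\cN^{-1}}\le 2\la\|\rho-\rho_\cN\|_{\rho_\cN^{-1}}$ (triangle inequality through $\rho_\cN$ and clustering), and $\rho_1\le C_{\max}\rho_\cN$ by the index bound, the residual is controlled by a multiple of $\la\,\|\rho-\rho_\cN\|_{\rho_\cN^{-1}}^2$.

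Third, I convert the residual $L_2$-norms back into relative entropies. Using the Pythagorean decomposition $\|\rho-\rho_\cN\|_{\rho_\cN^{-1}}^2=\|\rho-\rho_i\|_{\rho_\cN^{-1}}^2+\|\rho_i-\rho_\cN\|_{\rho_\cN^{-1}}^2$ (the $E_{i*}$ are orthogonal projections on $L_2(\rho_\cN^{-1})$), together with Lemma \ref{lemma:keylemma2} applied to the pair $(\rho,\rho_i)$ with weighting state $\rho_i$ and Lemma \ref{lemma:compare} to transfer the weight to $\rho_\cN$, one obtains $\|\rho-\rho_i\|_{\rho_\cN^{-1}}^2\le 2C_{\max}D(\rho\|\rho_i)$. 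The $\|\rho_i-\rho_\cN\|^2$ terms that reappear on the right can be absorbed on the left using the upper bound of Lemma \ref{lemma:keylemma}; solving the resulting self-referential inequality produces the geometric-series factors $(1-\la)^{-1}$ and $(1-2\la^2)^{-1}$, leading to $c\le 1+\big(\tfrac{\la}{1-\la}+\tfrac{\la^2}{1-2\la^2}\big)C_{\max}$. The complete version \eqref{completeATq} follows by executing the same argument on $\cM\otimes\mathbb{M}_n$: all the ingredients (chain rule, Lemmas \ref{lemma:keylemma}--\ref{lemma:keylemma2}, DPI, and the $L_2$-clustering via Lemma \ref{lemma:independence}) are stable under amplification, so each $C_{\max}$ gets replaced by $C_{\max,\operatorname{cb}}$.

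The main difficulty is keeping a single power of $C_{\max}$ in the final bound. A naive $L_2$-estimate in the spirit of Theorem \ref{thm:AT} produces $C_{\max}^2$, since it composes a Lemma \ref{lemma:keylemma} lower bound (factor $C$) with a Lemma \ref{lemma:compare} weight transfer (another factor). The improvement here rests on performing the entropy-to-$L_2$ conversion only on the residual that already carries a factor $\la$, while the leading commuting-square contribution is handled directly by the index-free DPI step; the cross term involving $[\ln(E_{1*}\rho_2),\ln(E_{2*}\rho_1)]$ is the technically most delicate point, as it requires controlling the non-commutation of $E_{1*}$ and $E_{2*}$ inside a logarithm without blowing up the index dependence.
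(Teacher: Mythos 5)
Your overall architecture --- peel off $D(\rho\|\rho_1)+D(\rho\|\rho_2)$ with coefficient $1$ via the chain rule, isolate an $O(\la)$ residual, and only then pay the index --- is the same as the paper's, and your first two steps are essentially sound. The gap is in step three, exactly at the point you yourself flag as the main difficulty. Your residual is expressed through the \emph{global} quantity $\norm{\rho-\rho_\cN}{\rho_\cN^{-1}}^2$ (you bound $D(\rho_{21}\|\rho_\cN)\le\la^2\norm{\rho-\rho_\cN}{\rho_\cN^{-1}}^2$ and the cross term by a multiple of $\la\norm{\rho-\rho_\cN}{\rho_\cN^{-1}}^2$), and this quantity cannot be converted back into $D(\rho\|\rho_1)+D(\rho\|\rho_2)$ at the cost of a single $C_{\max}$. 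Concretely, after the Pythagorean split $\norm{\rho-\rho_\cN}{\rho_\cN^{-1}}^2=\norm{\rho-\rho_i}{\rho_\cN^{-1}}^2+\norm{\rho_i-\rho_\cN}{\rho_\cN^{-1}}^2$, your recipe for the first piece (Lemma \ref{lemma:keylemma2} with weight $\rho_i$, then Lemma \ref{lemma:compare} to move the weight to $\rho_\cN$) yields $\norm{\rho-\rho_i}{\rho_\cN^{-1}}^2\le 2\,C_\tau(\cN_i:\cN)\,C_\tau(\cM:\cN_i)\,D(\rho\|\rho_i)$, i.e.\ a factor $2C_{\max}^2$, not $2C_{\max}$; the only single-index route is Lemma \ref{lemma:keylemma2} with weight $\rho_\cN$ directly, which costs $C_\tau(\cM:\cN)$ --- the large index deliberately excluded from $C_{\max}$ (it is precisely the constant of Theorem \ref{thm:AT} that the present theorem is designed to avoid). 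For the second piece, absorbing $\la\cdot\norm{\rho_i-\rho_\cN}{\rho_\cN^{-1}}^2$ into the left-hand side would require the \emph{lower} bound of Lemma \ref{lemma:keylemma} (the upper bound goes the wrong way for absorption), which again costs $C_\tau(\cM:\cN)$, or else a smallness condition of the form $\la\,C_{\max}<\mathrm{const}$ that the theorem does not assume.

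The missing idea --- and the actual content of the paper's proof --- is to arrange for every residual $L_2$ quantity to be a difference of two states lying in the \emph{same} small algebra with common $\cN$-marginal $\rho_\cN$: namely $\rho_1-\rho_{12}$, $\rho_2-\rho_{21}$, $\rho_{12}-\rho_\cN$, $\rho_{21}-\rho_\cN$, where $\rho_{12}=E_{1*}E_{2*}(\rho)\in\cD(\cN_1)$ is paired with $\rho_1$ (unlike your pairing of $\rho_1\in\cD(\cN_1)$ against $\rho_{21}=E_{2*}\rho_1\in\cD(\cN_2)$). The clustering condition is then used in the sharper form $\norm{\rho_{12}-\rho_\cN}{\rho_\cN^{-1}}\le\la\norm{\rho_2-\rho_\cN}{\rho_\cN^{-1}}\le\la\big(\norm{\rho_2-\rho_{21}}{\rho_\cN^{-1}}+\norm{\rho_{21}-\rho_\cN}{\rho_\cN^{-1}}\big)$ together with its mirror image, so that the self-referential system closes among these four within-algebra quantities alone and produces the factors $(1-\la)^{-1}$ and $(1-2\la^2)^{-1}$; Lemma \ref{lemma:keylemma2} then converts $\norm{\rho_i-\rho_{ij}}{\rho_\cN^{-1}}^2\le 2C_\tau(\cN_i:\cN)D(\rho_i\|\rho_{ij})$ with a single small index, and $D(\rho_i\|\rho_{ij})\le D(\rho\|\rho_j)$ by data processing at no index cost. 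Without rerouting your residual through these within-algebra differences, the argument as written yields at best a constant of order $1+O(\la)\,C_{\max}^2$, or one involving $C_\tau(\cM:\cN)$, rather than the stated bound.
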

\begin{proof}
Denote $\rho_1:=E_{1*}(\rho)$, $\rho_2:=E_{2*}(\rho)$, $\rho_{12}:=E_{1*}\circ E_{2*}(\rho)$ and $\rho_{21}:=E_{2*}\circ E_{1*}(\rho)$. First, by chain rule (\cite[Lemma 3.4]{junge2019stability}), we have
\begin{align*}
D(\rho\|\rho_\cN)&=D(\rho\|\rho_1)+D(\rho_1\|\rho_\cN)\\
&=D(\rho\|\rho_1)+D(\rho_1\|\rho_{12})+\tr(\rho_{1}(\ln\rho_{12}-\ln\rho_{\cN}))\,,
\end{align*}
and similarly
\begin{align*}
D(\rho\|\rho_\cN)&=D(\rho\|\rho_2)+D(\rho_2\|\rho_\cN)\\
&=D(\rho\|\rho_2)+D(\rho_2\|\rho_{21})+\tr(\rho_{2}(\ln\rho_{21}-\ln\rho_{\cN}))\,,
\end{align*}
It suffices to estimate the error term $ \tr(\rho_{1}(\ln\rho_{12}-\ln\rho_{\cN}))$ and $\tr(\rho_{2}(\ln\rho_{21}-\ln\rho_{\cN}))$.
Recall the integral identity that for positive $A,B> 0$
\[\ln A-\ln B=\int_{0}^\infty \frac{1}{A+s}(A-B)\frac{1}{B+s}\,ds\pl.\]
Thus by Cauchy-Schwarz inequality and Lemma \ref{lemma:keylemma}:
\begin{align*}
\tr(\rho_{1}(\ln\rho_{12}-\ln\rho_{\cN}))=&\tr((\rho_{1}-\rho_{12})(\ln\rho_{12}-\ln\rho_{\cN}))+\tr(\rho_{12}(\ln\rho_{12}-\ln\rho_{\cN}))
\\=&\int_{0}^\infty \tr((\rho_1-\rho_{12})\frac{1}{\rho_{12}+s}(\rho_{12}-\rho_\cN)\frac{1}{\rho_\cN+s})\, ds+D(\rho_{12}\|\rho_\cN)
\\ \le & \norm{\rho_1-\rho_{12}}{\rho_{12}^{-1}}\norm{\rho_{12}-\rho_{\cN}}{\rho_{\cN}^{-1}}+\norm{\rho_{12}-\rho_{\cN}}{\rho_{\cN}^{-1}}^2\,.
\end{align*}
Similarly
\[\tr(\rho_{2}(\ln\rho_{21}-\ln\rho_{\cN}))\le \, \norm{\rho_2-\rho_{21}}{\rho_{21}^{-1}}\norm{\rho_{21}-\rho_{\cN}}{\rho_{\cN}^{-1}}+\norm{\rho_{21}-\rho_{\cN}}{\rho_{\cN}^{-1}}^2\,.\]
Note that by the $L_2$-clustering condition
\begin{align*}&\norm{\rho_{12}-\rho_{\cN}}{\rho_{\cN}^{-1}}\le \la \norm{\rho_2-\rho_\cN}{\rho_{\cN}^{-1}}
\le \la(\norm{\rho_2-\rho_{21}}{\rho_{\cN}^{-1}}+\norm{\rho_{21}-\rho_\cN}{\rho_{\cN}^{-1}} )\,,
\\
&\norm{\rho_{21}-\rho_{\cN}}{\rho_{\cN}^{-1}}\le \la \norm{\rho_1-\rho_\cN}{\rho_{\cN}^{-1}}\le \la(\norm{\rho_1-\rho_{12}}{\rho_{\cN}^{-1}}+\norm{\rho_{12}-\rho_\cN}{\rho_{\cN}^{-1}})\,.
\end{align*}
Thus
\begin{align*} &\norm{\rho_{12}-\rho_{\cN}}{\rho_{\cN}^{-1}}^2\le 2\la^2(\norm{\rho_2-\rho_{21}}{\rho_{\cN}^{-1}}^2+\norm{\rho_{21}-\rho_\cN}{\rho_{\cN}^{-1}}^2)\pl,\\
&  \norm{\rho_{21}-\rho_{\cN}}{\rho_{\cN}^{-1}}^2\le 2\la^2(\norm{\rho_1-\rho_{12}}{\rho_{\cN}^{-1}}^2+\norm{\rho_{12}-\rho_\cN}{\rho_{\cN}^{-1}}^2)
\end{align*}
Therefore for $\la< \frac{1}{\sqrt{2}}$, by Lemma \ref{lemma:keylemma2}:
\begin{align*}
\norm{\rho_{12}-\rho_{\cN}}{\rho_{\cN}^{-1}}^2+\norm{\rho_{21}-\rho_{\cN}}{\rho_{\cN}^{-1}}^2\le & \frac{2\la^2}{1-2\la^2}\big( \norm{\rho_1-\rho_{12}}{\rho_{\cN}^{-1}}^2+\norm{\rho_2-\rho_{21}}{\rho_{\cN}^{-1}}^2\big)
\\  \le& \frac{2\la^2}{1-2\la^2}\big( 2C_1 D(\rho_1\|\rho_{12})+ 2C_2 D(\rho_2\|\rho_{21})\big)\,,
\end{align*}
where $C_1=C_\tau(\cN_1:\cN)$, and $C_2=C_\tau(\cN_2:\cN)$. On the other hand, denoting $$M:=\max\{\sqrt{2C_1 D(\rho_1\|\rho_{12})}, \sqrt{2C_2D(\rho_2\|\rho_{21})}\}\,,$$
we have
\begin{align*}
\norm{\rho_1-\rho_{12}}{\rho_{12}^{-1}}&\norm{\rho_{12}-\rho_{\cN}}{\rho_{\cN}^{-1}}+\norm{\rho_2-\rho_{21}}{\rho_{21}^{-1}}\norm{\rho_{21}-\rho_{\cN}}{\rho_{\cN}^{-1}}
\\ &\le M\,(\norm{\rho_{12}-\rho_{\cN}}{\rho_{\cN}^{-1}}+
\norm{\rho_{21}-\rho_{\cN}}{\rho_{\cN}^{-1}})
\\ &\le M\,\frac{\la}{1-\la}\,
(\norm{\rho_{2}-\rho_{21}}{\rho_{\cN}^{-1}}+
\norm{\rho_{1}-\rho_{12}}{\rho_{\cN}^{-1}})
\\ &\le M\,\frac{\la}{1-\la}\,
(\sqrt{2C_2D(\rho_2\|\rho_{21})}+
\sqrt{2C_1D(\rho_1\|\rho_{12})})
\\ &\le \frac{2\la}{1-\la} C_{\max}\,\big(D(\rho\|\rho_1)+D(\rho\|\rho_2)\big)\,,
\end{align*}
where $C_{\max}:=\max \{C_\tau(\cN_1:\cN),C_\tau(\cN_2:\cN),C_{\tau}(\cM:\cN_1),C_{\tau}(\cM:\cN_2)\}$.
Therefore, we have
\begin{align*}
2D(\rho\|\rho_\cN)-D(\rho\|\rho_1)&-D(\rho\|\rho_2)-
D(\rho_{1}\|\rho_{12})-D(\rho_2\|\rho_{21})
\\ &= \tr(\rho_{1}(\ln\rho_{12}-\ln\rho_{\cN}))+\tr(\rho_{2}(\ln\rho_{21}-\ln\rho_{\cN}))
\\& \le
\norm{\rho_1-\rho_{12}}{\rho_{12}^{-1}}\norm{\rho_{12}-\rho_{\cN}}{\rho_{\cN}^{-1}}+\norm{\rho_2-\rho_{21}}{\rho_{21}^{-1}}\norm{\rho_{21}-\rho_{\cN}}{\rho_{\cN}^{-1}}\\
& \ \ \ \ \
+\norm{\rho_{12}-\rho_{\cN}}{\rho_{\cN}^{-1}}^2+\norm{\rho_{21}-\rho_{\cN}}{\rho_{\cN}^{-1}}^2
\\& \le  \frac{2\la}{1-\la}C_{\max}\big( D(\rho\|\rho_{1})+D(\rho\|\rho_{2})\big)
+\frac{2\la^2}{1-2\la^2}\big( C_1 D(\rho_1\|\rho_{12})+ C_2 D(\rho_2\|\rho_{21})\big)\\
&\le \Big(\frac{2\la^2}{1-2\la^2}+ \frac{2\la}{1-\la}\Big)C_{\max} \big( D(\rho\|\rho_{1})+D(\rho\|\rho_{2})\big)\,.
\end{align*}
The result follows after rearranging the terms in the outer bounds and a last use of the data processing inequality. The proof for $c_{\operatorname{cb}}$ follows the same strategy after replacing $C_{\max}$ by $C_{\max,\operatorname{cb}}$.

\end{proof}

In the classical literature \cite{cesi2001quasi,dai2002entropy}, approximate tensorization constants were found under the strong condition of smallness of the norm $\norm{E_1\circ E_2-E_\cN:L_1\to L_\infty}{}$  instead of the $L_2$-condition $\norm{E_1\circ E_2-E_\cN:L_2\to L_2}{}$ that we use. In that setting, the approximate tensorization constant obtained in \Cref{theo:lambdaindexcontrolAT} is not tight because the Pimsner-Popa indices coincide with the dimension bounds for the $L_1\to L_\infty$ norm. Quantum extensions using $L_1\to L_\infty$ cluster condition were recently found in \cite{bardet2020approximate}, however they yield additive error terms in generic noncommutative situations, e.g. when the algebra $\cN$ is not trivial. This generalization however was found fruitful in deriving the positivity of the MLSI constant for some classes of Gibbs samplers in \cite{capel2020modified}, where the multiplicative constant could be related to the notion of clustering of correlations in the equilibrium Gibbs state. There, the analysis could be reduced to the case of states $\rho$ for which the additive error vanishes. However, the problem of the vanishing of the additive constant for general states remained open.

After the preprint submission of a preliminary version of the present paper, \cite{laracuente2019quasi} introduced a method based on our Lemma \ref{lemma:keylemma}
to find asymptotically tight approximate tensorization constants. 
\begin{theorem}\label{theorem:meta}
Let $\cN\subset \cM$ be a finite dimensional von Neumann subalgebra and $E_\cN:\cM\to \cN$ be a conditional expectation.
Let $\Phi:\cM_*\to \cM_*$ be a quantum channel such that $\Phi^*$ is $\operatorname{GNS}$-symmetric to a full-rank $E_\cN$-invariant states and satisfies
$\Phi^*\circ E_\cN=E_\cN\circ \Phi^*=E_\cN$. Suppose for some $0<\eps<\sqrt{2\ln 2-1}$,
\begin{align*}
(1-\eps)E_{\cN}\le_{\operatorname{cp}} \Phi^*\le_{\operatorname{cp}} (1+\eps) E_{\cN}\,,
\end{align*}
where the inequalities hold in completely positive order. Then, for all $n\in \mathbb{N}$ and states $\rho\in \cD(\cM\ten \mathbb{M}_n)$:
\begin{align}\label{eq:metatheorem}
    D(\rho\| E_{\cN*}(\rho))\le \frac{1}{1-\eps^2(2\ln 2-1)^{-1}}\,D(\rho\|\Phi^2(\rho))\ .
\end{align}
\end{theorem}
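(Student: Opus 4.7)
The plan is to apply Lemma \ref{lemma:keylemma} to sandwich both $D(\rho\|\omega)$ (with $\omega:=E_{\cN*}(\rho)$) and $D(\rho\|\Phi^2(\rho))$ between weighted $L_2$ quantities, and then to exploit the completely positive hypothesis to relate these quantities via an $L_2$ contraction bound on $T:=\Phi-E_{\cN*}$. The first step is to set up the structural identities: taking pre-adjoints of the commutation hypothesis yields $\Phi\circ E_{\cN*}=E_{\cN*}\circ\Phi=E_{\cN*}$, which gives $T\circ E_{\cN*}=E_{\cN*}\circ T=0$ and the key identity $\Phi^2=E_{\cN*}+T^2$, so that $\Phi^2(\rho)=\omega+T^2(\rho)$.

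Next, I would extract two consequences from the CP sandwich. On the one hand, writing $S:=\Phi^*-E_\cN$, the composition $(\eps E_\cN+S)(\eps E_\cN-S)=\eps^2 E_\cN-S^2$ is CP as a product of CP maps; this gives $T^2\le_{cp}\eps^2 E_{\cN*}$ (and similarly $-\eps^2 E_{\cN*}\le_{cp}T^2$), hence the operator inequalities $(1-\eps^2)\omega\le\Phi^2(\rho)\le(1+\eps^2)\omega$ after applying to a state $\rho$. On the other hand, each of the two CP maps $\eps E_\cN\pm S$ sends $\Id$ to $\eps\Id$ and $\sigma$ to $\eps\sigma$ (for any invariant $\sigma\in\cD(E_\cN)$); the scaled Kadison-Schwarz inequality $\Psi(X)^\dagger\Psi(X)\le\|\Psi(\Id)\|\,\Psi(X^\dagger X)$ then yields $\|\eps E_\cN\pm S\|_{L_2(\sigma)\to L_2(\sigma)}\le\eps$, and averaging the two bounds via the triangle inequality gives $\|S\|_{L_2(\sigma)\to L_2(\sigma)}\le\eps$. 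By Lemma \ref{lemma:independence} this translates to $\|T\|_{L_2(\omega^{-1})\to L_2(\omega^{-1})}\le\eps$ for every $E_\cN$-invariant full-rank $\omega$.

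The main entropy estimate starts from the identity
\begin{align*}
D(\rho\|\omega)-D(\rho\|\Phi^2(\rho)) = \tr\bigl[(\rho-\omega)(\ln\Phi^2(\rho)-\ln\omega)\bigr]-D(\omega\|\Phi^2(\rho))\,,
\end{align*}
which after dropping the non-positive term and inserting the double-operator-integral representation $\ln\Phi^2(\rho)-\ln\omega=\int_0^\infty(\omega+s)^{-1}T^2(\rho)(\Phi^2(\rho)+s)^{-1}ds$ becomes a mixed bilinear form in $\rho-\omega$ and $T^2(\rho)$. Using the operator bound $(\Phi^2(\rho)+s)^{-1}\le(1-\eps^2)^{-1}(\omega+s)^{-1}$ from Step 2 to control the asymmetry of the weights, a Cauchy-Schwarz in $L_2(\omega^{-1})$ reduces the right-hand side to (a factor tending to $1$ as $\eps\to0$ times) $\|T(\rho)\|_{\omega^{-1}}^2$, after using the self-adjointness of $T$ on $L_2(\omega^{-1})$ to absorb one factor of $T$ and the operator-norm bound $\|T\|_{L_2(\omega^{-1})}\le\eps$ to extract the remaining $\eps^2$.

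To close the loop, observe that $\Phi(\rho)\le(1+\eps)\omega\le 2\omega$ because $\eps<\sqrt{2\ln 2-1}<1$, so Lemma \ref{lemma:keylemma} applied to the pair $(\Phi(\rho),\omega)$ with $c=2$ gives
\begin{align*}
(2\ln 2-1)\,\|T(\rho)\|_{\omega^{-1}}^2 = (2\ln 2-1)\,\|\Phi(\rho)-\omega\|_{\omega^{-1}}^2\le D(\Phi(\rho)\|\omega)\le D(\rho\|\omega)\,,
\end{align*}
where the last step is data processing for the channel $\Phi$ (since $\Phi(\omega)=\omega$). Chaining this with the previous bound yields $D(\rho\|\omega)-D(\rho\|\Phi^2(\rho))\le\eps^2(2\ln 2-1)^{-1}D(\rho\|\omega)$, which rearranges to \eqref{eq:metatheorem}. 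The amplification to $\cM\otimes\mathbb{M}_n$ is automatic: the CP sandwich, GNS-symmetry, and commutation with $E_\cN$ all pass to the tensor product with $\id_n$, and every intervening constant is dimension free. The delicate point I expect to be the main obstacle is the Cauchy-Schwarz in the third paragraph: one must extract the correct power $\eps^2$ while absorbing the mismatch between the resolvents of $\omega$ and $\Phi^2(\rho)$, so that the final coefficient is exactly $\eps^2(2\ln 2-1)^{-1}$ and not a weaker $\eps\cdot k(2)^{-1}$ or a dimension-dependent remainder.
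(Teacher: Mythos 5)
Your setup (the identity $D(\rho\|\omega)-D(\rho\|\Phi^2(\rho))=\tr[(\rho-\omega)(\ln\Phi^2(\rho)-\ln\omega)]-D(\omega\|\Phi^2(\rho))$, the relations $\Phi^2=E_{\cN*}+T^2$, the operator bounds $(1\pm\eps^2)\omega$, and the $L_2$ contraction $\|T\|_{L_2(\omega^{-1})}\le\eps$) is all correct, and your first step is a legitimate, more elementary alternative to the paper's route via Peierls--Bogoliubov and Lieb's triple matrix inequality. But the final assembly has a genuine gap: the factor $\eps^2$ cannot be extracted the way you describe. If your Cauchy--Schwarz step lands on $\Delta\le C_\eps\|T(\rho)\|_{\omega^{-1}}^2$ with $C_\eps\to1$ (which is the best one can hope for, since with matching resolvents $\langle\rho-\omega,T^2(\rho)\rangle_{\omega^{-1}}=\|T(\rho)\|_{\omega^{-1}}^2$ \emph{exactly} by self-adjointness of $T$), then chaining with your fourth paragraph gives only $\Delta\le C_\eps\,(2\ln 2-1)^{-1}D(\rho\|\omega)$, with no $\eps^2$ and a constant larger than $1$ --- vacuous. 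If instead you first apply $\|T\|\le\eps$ twice to get $\Delta\le\eps^2\|\rho-\omega\|_{\omega^{-1}}^2$, you are stuck: the reverse bound $k(2)\|\rho-\omega\|_{\omega^{-1}}^2\le D(\rho\|\omega)$ requires $\rho\le 2\omega$, which fails for generic $\rho$ (e.g.\ $\rho$ pure and $\cN=\CC\Id$, where $\|\rho-\omega\|_{\omega^{-1}}^2\sim d$ but $D(\rho\|\omega)=\ln d$). Your fourth paragraph only controls $\|\Phi(\rho)-\omega\|^2$, not $\|\rho-\omega\|^2$, so the two halves do not connect.

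The missing idea is to factor the $\eps$ out of $T$ \emph{at the level of the map}: since $\Phi^*\ge_{\operatorname{cp}}(1-\eps)E_\cN$, one can write $\Phi=(1-\eps)E_{\cN*}+\eps\Psi$ for a quantum channel $\Psi$ with $E_{\cN*}\circ\Psi=E_{\cN*}$, so that $T(\rho)=\eps(\Psi(\rho)-\omega)$ and hence $\|T(\rho)\|_{\omega^{-1}}^2=\eps^2\|\Psi(\rho)-\omega\|_{\omega^{-1}}^2$. The upper CP bound then gives $\Psi(\rho)\le\eps^{-1}\big((1+\eps)-(1-\eps)\big)\omega=2\omega$, so Lemma \ref{lemma:keylemma} applies to the pair $(\Psi(\rho),\omega)$ with $c=2$, and data processing for $\Psi$ yields $\|\Psi(\rho)-\omega\|_{\omega^{-1}}^2\le k(2)^{-1}D(\Psi(\rho)\|\omega)\le k(2)^{-1}D(\rho\|\omega)$. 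This single decomposition delivers the $\eps^2$ prefactor and the entropic control simultaneously, which is exactly what your argument needs and cannot produce from the $L_2$ bound and the bound on $\Phi(\rho)$ alone. (A secondary caveat: in your third paragraph the mixed form $\int\tr\big(X(\omega+s)^{-1}Y(\Phi^2(\rho)+s)^{-1}\big)ds$ is not the $L_2(\omega^{-1})$ inner product, so self-adjointness of $T$ cannot be invoked there directly; you must first replace $(\Phi^2(\rho)+s)^{-1}$ by $(\omega+s)^{-1}$ at the cost of a factor such as $(1-\eps^2)^{-1}$, which would in any case spoil the exact constant claimed in \eqref{eq:metatheorem}.)
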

\begin{proof}
Let $\rho\in\cD(\cM\otimes \mathbb{M}_n)$ and $\rho_\cN:=E_{\cN*}(\rho)$. Then,
\begin{align*}
    D(\rho\|\rho_\cN)-D(\rho\| \Phi^2(\rho))=\tr\big[\rho\,(-\ln(\rho)+\ln A) \big]=-D\Big(\rho\Big\|\frac{A}{\tr(A)}\Big)+\ln\tr(A)\le \ln\tr(A)\,,
\end{align*}
for $A:=\exp(\ln \Phi^2(\rho)-\ln\rho_\cN+\ln\rho)$. Here the last inequality follows from the positivity of the relative entropy. Using Lieb's triple matrix inequality (see \cite[Theorem 7]{lieb1973convex}),
\begin{align*}
     \ln\tr(A)\le\ln\int_{0}^{\infty}\,\tr\Big( \Phi^2(\rho)\frac{1}{\rho_\cN+s}\,\rho\,\frac{1}{\rho_\cN+s} \Big)\,ds\ .
\end{align*}
Then by the GNS-symmetry of $\Phi^*$,
\begin{align*}
  D(\rho\|\rho_\cN)&\le D(\rho\| \Phi^2(\rho))+\ln\int_{0}^{\infty}\,\tr\Big( \Phi(\rho)\frac{1}{\rho_\cN+s}\,\Phi(\rho)\,\frac{1}{\rho_\cN+s} \Big)\,ds\\
  &\overset{(1)}{\le} D(\rho\| \Phi^2(\rho))+\int_{0}^{\infty}\,\tr\Big( \big(\Phi(\rho)-E_{\cN*}(\rho)\big)\frac{1}{\rho_\cN+s}\, \big(\Phi(\rho)-E_{\cN*}(\rho)\big)\,\frac{1}{\rho_\cN+s} \Big) \,ds\,,
\end{align*}
where $(1)$ arises from the basic inequality $\ln(x)\le x-1$ and the trace preserving property of $\Phi$ and $E_{\cN*}$. Now, since $\Phi^*\ge_{\operatorname{cp}} (1-\eps)E_{\cN}$, there exists a quantum channel $\Psi:\cM_*\to\cM_*$ such that $\Phi=(1-\eps)E_{\cN*}+\eps\Psi$.
Therefore,
\begin{align*}
    D(\rho\|\rho_\cN)&\le D(\rho\| \Phi^2(\rho))+\eps^2\,\int_{0}^{\infty}\,\tr\Big( (\Psi(\rho)-E_{\cN*}(\rho))\frac{1}{\rho_\cN+s}\, (\Psi(\rho)-E_{\cN*}(\rho))\,\frac{1}{\rho_\cN+s} \Big) \,ds\,\\
    &= D(\rho\| \Phi^2(\rho))+\eps^2\,\| (\Psi-E_{\cN*})(\rho)\|_{\rho_\cN^{-1}}^2\\
    &\overset{(2)}{\le}  D(\rho\| \Phi^2(\rho))+\eps^2\,k(2)^{-1}\,D(\Psi(\rho)\|\rho_\cN)\\
    &{\le}  D(\rho\| \Phi^2(\rho))+\eps^2\,k(2)^{-1}\,D(\rho\|\rho_\cN)\,,
\end{align*}
where $(2)$ comes from Lemma \ref{lemma:keylemma} and the fact that $\Phi^*\le (1+\eps)E_{\cN}$ so that $$\Psi(\rho)\le \eps^{-1}(1+\eps-(1-\eps))\rho_\cN=2\rho_\cN.$$ The result follows after rearranging the terms in the last inequality.
\end{proof}

The above theorem can be used to derive approximate tensorization bounds. For instance, two natural choices of the map $\Phi^*$ are either $\frac{1}{m}\sum_{i=1}^mE_i$ or $\frac{1}{2}\big(\prod_{i=1}^m E_i+\prod_{i=m}^1 E_i\big)$, for which we find the following:

\begin{corollary}\label{Coro:sharperAT}
Let $\cN_1,\ldots,\cN_m\subset \cM$ be finite dimensional von Neumann subalgebras of $\cB(\cH)$, and let $\cN=\cap_{i=1}^m \cN_i$. Let $E_\cN:\cM\to \cN$ and $E_i:\cM\to \cN_i$ be some corresponding conditional expectations. Suppose for some full-rank $E_\cN$-invariant state $\sigma=E_{\cN*}(\sigma)$, $\si=E_{i*}(\si)$ for each $i$. Then for $\Phi^*=\frac{1}{m}\sum_{i=1}^mE_i$ or $\Phi^*=\frac{1}{2}\big(\prod_{i=1}^m E_i+\prod_{i=m}^1 E_i\big)$, we have that for all $n\in\mathbb{N}$ and all states $\rho\in\cD(\cM\otimes \mathbb{M}_n)$,
\begin{align}\label{eq:corollaryAT}
    D(\rho\|E_{\cN*}(\rho))\le \frac{2k}{1-\eps^2(2\ln(2)-1)^{-1}}\,\sum_{i=1}^m\,D(\rho\|E_{i*}(\rho))\,,
    \end{align}
whenever $0<\eps< \sqrt{2\ln(2)-1}$ and $k$ satisfies
\begin{align}
    (1-\eps)E_\cN\le_{\operatorname{cp}} (\Phi^*)^k\le_{\operatorname{cp}} (1+\eps)E_\cN \, \label{eq:orderinequality}.
\end{align}
If additionally $E_i$ and $E_\cN$ are trace preserving conditional expectations, we have that for all $n\in\mathbb{N}$ and all state $\rho\in\cD(\cM\otimes \mathbb{M}_n)$,
\begin{align}\label{eq:gapAT}
   D(\rho\|E_{\cN*}(\rho))\le 4\left\lceil\frac{\ln C_{\operatorname{cb}}(\cM:\cN)+1}{\ln(\lambda^{-1})}\right\rceil\,\sum_{i=1}^m\,D(\rho\|E_{i*}(\rho))\,,
\end{align}
where $\lambda:=\|\Phi-E_\cN:L_2(\tr)\to L_2(\tr)\|$ and $\lceil s \rceil$ denotes the smallest integer greater than or equal to $s$.
\end{corollary}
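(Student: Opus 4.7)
The overall strategy is to reduce \eqref{eq:corollaryAT} to Theorem~\ref{theorem:meta} applied to the $k$-th power $(\Phi^*)^k$, and then to control the resulting iterate term $D(\rho\|\Phi_*^{2k}(\rho))$ by $\sum_i D(\rho\|E_{i*}(\rho))$ via a convexity--chain-rule--data-processing telescope.

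First I would verify that $(\Phi^*)^k$ satisfies the hypotheses of Theorem~\ref{theorem:meta}. Each $E_i$ is GNS-symmetric to $\sigma$ (this is automatic once $\sigma=E_{i*}(\sigma)$), so both candidate forms---the arithmetic mean $\tfrac1m\sum_iE_i$ and the symmetrized product $\tfrac12(E_1\cdots E_m+E_m\cdots E_1)$---and all their powers are GNS-symmetric to $\sigma$; the compatibility $(\Phi^*)^k\circ E_\cN=E_\cN\circ(\Phi^*)^k=E_\cN$ follows from $E_i\circ E_\cN=E_\cN\circ E_i=E_\cN$. Combined with the assumed CP-order bound \eqref{eq:orderinequality}, Theorem~\ref{theorem:meta} yields
\begin{equation*}
D(\rho\|E_{\cN*}(\rho))\;\le\;\frac{1}{1-\eps^2(2\ln 2-1)^{-1}}\;D(\rho\|\Phi_*^{2k}(\rho))\,.
\end{equation*}

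Second I would prove the purely entropic estimate $D(\rho\|\Phi_*^{2k}(\rho))\le 2k\sum_{i=1}^m D(\rho\|E_{i*}(\rho))$. For this, expand $\Phi_*^{2k}$ explicitly as a convex combination of words in the $E_{i*}$'s: in the arithmetic-mean case $\Phi_*^{2k}=m^{-2k}\sum_{\vec i\in[m]^{2k}}E_{i_1*}\cdots E_{i_{2k}*}$, and in the symmetrized-product case $\Phi_*^{2k}=2^{-2k}\sum_{\vec s\in\{+,-\}^{2k}}P_{s_1}\cdots P_{s_{2k}}$ with $P_+=E_{1*}\cdots E_{m*}$, $P_-=E_{m*}\cdots E_{1*}$. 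Joint convexity of relative entropy in its second argument reduces the bound to controlling each word $D(\rho\|E_{i_1*}\cdots E_{i_L*}(\rho))$. Since $E_{i_1*}$ is idempotent, the target state lies in its image, so the chain rule gives
\begin{equation*}
D(\rho\|E_{i_1*}\cdots E_{i_L*}(\rho))=D(\rho\|E_{i_1*}(\rho))+D\!\left(E_{i_1*}(\rho)\,\|\,E_{i_1*}(E_{i_2*}\cdots E_{i_L*}(\rho))\right),
\end{equation*}
and data processing for the channel $E_{i_1*}$ bounds the second summand by $D(\rho\|E_{i_2*}\cdots E_{i_L*}(\rho))$. Iterating yields $D(\rho\|E_{i_1*}\cdots E_{i_L*}(\rho))\le \sum_{j=1}^L D(\rho\|E_{i_j*}(\rho))$. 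Averaging over the words and counting multiplicities---each $i\in[m]$ appears in position $j$ of $m^{2k-1}$ length-$2k$ tuples (yielding in fact the tighter constant $2k/m$ in the arithmetic-mean case), and appears exactly $2k$ times per length-$2km$ word in the symmetrized-product case---produces the claimed bound. Combining the two steps proves \eqref{eq:corollaryAT}.

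For the sharper statement \eqref{eq:gapAT}, the remaining task is to convert the $L_2(\tr)$ spectral-gap hypothesis $\|\Phi-E_\cN\|_{L_2\to L_2}=\lambda$ into a CP-order statement of the shape required by \eqref{eq:orderinequality}. The map $T_k:=(\Phi^*)^k-E_\cN$ is GNS-symmetric, vanishes on $\cN$, and has $L_2(\tr)$-norm at most $\lambda^k$. A standard norm-comparison using the complete Pimsner--Popa index (cf.~\cite{gaoindex}) then supplies $\pm T_k\le_{\operatorname{cp}} C_{\operatorname{cb}}(\cM:\cN)\,\lambda^k\, E_\cN$. Choosing $k:=\lceil(\ln C_{\operatorname{cb}}(\cM:\cN)+1)/\ln(\lambda^{-1})\rceil$ makes $\eps:=C_{\operatorname{cb}}(\cM:\cN)\,\lambda^k\le e^{-1}<\sqrt{2\ln 2-1}$, and a direct calculation bounds the prefactor $2/(1-\eps^2(2\ln 2-1)^{-1})$ by $4$, giving the claimed factor $4k$ in \eqref{eq:gapAT}.

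\textbf{Main obstacle.} The only genuinely delicate step is the norm-comparison used to pass from the $L_2(\tr)$-spectral estimate to the completely positive order bound demanded by Theorem~\ref{theorem:meta}; the complete Pimsner--Popa index $C_{\operatorname{cb}}(\cM:\cN)$ is exactly the price one pays for this transition, and it is what forces the logarithmic dependence on $C_{\operatorname{cb}}(\cM:\cN)$ visible in the final prefactor of \eqref{eq:gapAT}.
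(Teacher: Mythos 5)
Your proposal is correct and follows essentially the same route as the paper: apply Theorem \ref{theorem:meta} to $(\Phi^*)^k$, bound $D(\rho\|\Phi^{2k}(\rho))$ by $2k\sum_i D(\rho\|E_{i*}(\rho))$ via the convexity/chain-rule/data-processing telescope (which the paper cites from LaRacuente's Lemma 3.2), and convert the $L_2(\tr)$ gap into the cp-order condition through the complete Pimsner--Popa index (the paper's Lemma \ref{lemmageneral} in Appendix B). Your numerical threshold $\eps\le e^{-1}$ differs slightly from the paper's $\eps\le\sqrt{\ln 2-\tfrac12}$ but yields the same prefactor $4k$, and your observation of the tighter constant $2k/m$ in the arithmetic-mean case is a valid minor refinement.
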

\begin{proof}
 \Cref{eq:corollaryAT} is a direct consequence of \Cref{theorem:meta} and successive applications of the data processing inequality and chain rule \cite[Lemma 3.2]{laracuente2019quasi},
\begin{align*}
    D(\rho\|\Phi^{2k}(\rho))&\le
 2 k\sum_{i=1}^m\,D(\rho\|E_{i*}(\rho))\,,
\end{align*}
for both choices of $\Phi^*$. In the case when all $E_i$ and $E$ are trace preserving, we prove in \Cref{sec:normstoorder} that $\eps$ can be chosen as (see Lemma \ref{lemmageneral})
\begin{align*}
    \eps= \lambda^k C_{\operatorname{cb}}(\cM:\cN)
\end{align*}
for $k$ large enough so that the condition \eqref{eq:orderinequality} is satisfied.
Therefore, we can choose $\epsilon \le \sqrt{\ln 2-\frac{1}{2}}$ and $1-\eps^2(2\ln(2)-1)^{-1}\ge 1/2$ by taking
\[
k=\left\lceil (-\ln \la)^{-1}\Big(\ln C-\ln\sqrt{\ln 2-\frac{1}{2}} \Big) \right\rceil\le \left\lceil\frac{\ln C+1}{-\ln \la}\right\rceil
,\qquad C:=C_{\operatorname{cb}}(\cM:\cN)\,.\]
Then \eqref{eq:gapAT} follows from \eqref{eq:corollaryAT}.
\end{proof}

\begin{rem}
Although the bound \eqref{eq:corollaryAT} does not recover the exact tensorization for commuting conditional expectations, it has the merit over our other bounds to be independent of the index $C_{\tau,\operatorname{cb}}(\cM:\cN)$. In Sections \ref{sec:symmetric} and \ref{sec:localsemigroups}, we use the bound \eqref{eq:gapAT} to derive sharper CMLSI constant than \Cref{theo:CLSI} for several classes of examples.
\end{rem}

\begin{rem}
For the second inequality \eqref{eq:gapAT}, the assumption on trace preserving conditional expectation is not really necessarily. Indeed, the content in Appendix B can be extended to state preserving conditional expectation, hence also for the approximate tensorization inequality \eqref{eq:gapAT}. Nevertheless, the current form of \eqref{eq:gapAT} is sufficient for all examples discussed in this paper.
\end{rem}

\noindent {\bf Entropic Uncertainty relations. }
In the rest of this section, we apply the approximate tensorization bounds to the field of entropic uncertainty relations. We refer to \cite{Coles17} for a recent review on that topic. We first consider uncertainty relations for two complementary measurements: recall that a family of positive operator $\{X_x\}\subset \cB(\cH)$ is called a POVM (positive operator valued measurement) if $\sum_x X_x=\Id_{\cH}$.
Given two POVMs $\mathbf{X}:= \{X_x\}_{x}$ and $\mathbf{Y}:=\{Y_y\}_{y}$ on a quantum system $A=\cB(\cH_A)$, we denote by $\Phi_{\mathbf{X}}$ and $\Phi_{\mathbf{Y}}$ the quantum-classical channels for the measurement in $\mathbf{X}$ and $\mathbf{Y}$ respectively:
\begin{align*}
\Phi_{\mathbf{X}}(\rho):=\sum_{x\in \mathcal{X}}\,\tr(\rho_A X_x)\,|x\rangle\langle x|_X\ ,\ \ \ \  \Phi_{\mathbf{Y}}(\rho):=\sum_{y\in\mathcal{Y}}\,\tr(\rho_A Y_y)\,|y\rangle\langle y|_Y\,.\end{align*}It was found by Berta \emph{et al} \cite{berta2010uncertainty} that (for the special case of projective measurements), the presence of side information $M$ can help to better predict the outcomes of $\mathbf{X}$ and $\mathbf{Y}$, compared to the Maasen-Uffink uncertainty relation \cite{maasen1988uncertainty} in the memoryless setting.
Later, Frank and Lieb \cite{frank2013extended} further obtained the following state-dependent entropic uncertainty relation for two POVMs: for any bipartite state $\rho_{AM}\in\cD(\cH_A\otimes \cH_M)$,
\begin{align}\label{uncert}
S(X|M)_{(\Phi_{\mathbf{X}}\otimes \id_M)(\rho)}+S(Y|M)_{(\Phi_{\mathbf{Y}}\otimes \id_M)(\rho)}\ge- \ln c+S(A|M)_\rho\,,
\end{align}
where $c=\max_{x,y}\tr(X_x\,Y_y)$ is the maximal overlap of the two measurements, whereas  $$S(A|B)_{\rho}=-\tr(\rho_{AB}\ln\rho_{AB})+\tr(\rho_B\ln\rho_B)$$ denotes the conditional entropy of a bipartite state $\rho_{AB}$ conditioned on system $B$. The above inequality has been recently extended to the setting where the POVMs are replaced by two arbitrary quantum channels in \cite{gao2018uncertainty}.

In this section, we restrict ourselves to the setting of projective measurement, so that the measurement channels are Pinching maps onto different orthonormal bases
\begin{align}\label{eq:pinch} E_\mathcal{X}(\rho):=\sum_{x\in \mathcal{X}}\,|e_x^{(\mathcal{X}
)}\rangle\langle e_x^{(\mathcal{X})}|\,\langle e_x^{(\mathcal{X})}|\rho| e_x^{(\mathcal{X})}\rangle\,, \quad E_\mathcal{Y}(\rho):=\sum_{x\in \mathcal{Y}}\,|e_y^{(\mathcal{Y}
)}\rangle\langle e_y^{(\mathcal{Y})}|\,\langle e_Y^{(\mathcal{Y})}|\rho| e_Y^{(\mathcal{Y})}\rangle\,.\end{align}
In this setting, when $\rho_{AM}$ is close to $\frac{\Id}{d_A}\ten \rho_M$, the additive constant term "$-\ln c$" is responsible for the untightness of \eqref{uncert}, because $S(X|M),S(Y|M)$ and $S(A|M)$ are all small. As was realized in \cite{bardet2020approximate}, approximate tensorization implies a tightening of the inequality. However, \cite{bardet2020approximate} only discussed the memoryless setting, due to the lack of complete approximate tensorization results at that time. Here, we generalize their results to the presence of memory by a direct application of
\Cref{thm:AT} (see also \cite{laracuente2019quasi}). In this section, given a finite set $\mathcal{X}$ and $p\ge 1$, we will denote by $l_p(\mathcal{X})$ the space of complex valued functions $f:\mathcal{X}\to \mathbb{C}$ provided with the norm $$\|f\|_{l_p}:=\left(\sum_{x\in\mathcal{X}}|f(x)|^p\right)^{\frac{1}{p}}\,.$$
\begin{corollary}\label{cor:twobasis}
In the notations introduced above, denote \begin{align*}&\lambda_1:=\max_{y\in \mathcal{Y},x\in\mathcal{X} }\Big| \  |\langle e_x^{(\mathcal{X})}| e_y^{(\mathcal{Y})}\rangle |^2-\frac{1}{d_A}\Big|\\
&\lambda_2:=\,\norm{O_{\mathcal{X},\mathcal{Y}}:l_2(\mathcal{Y})\to l_2(\mathcal{X})}{}
\,,\end{align*}
where the matrix $O_{\mathcal{X},\mathcal{Y}}$ is defined as
\begin{align}\label{eq:matrixOXY}
O_{\mathcal{X},\mathcal{Y}}=\Big(|\langle e_x^{(\mathcal{X})}| e_y^{(\mathcal{Y})}\rangle |^2-\frac{1}{d_A} \Big)_{x,y}\pl.
\end{align}
Then for all states $\rho_{AM}\in\cD(\cH_{AM})$,
\begin{align*}
    (2c_{\operatorname{cb}}-1)\,S(A|M)_{\rho_{AM}}\le c_{\operatorname{cb}}\,\Big(S(X|M)_{(E_{\mathcal{X}}\otimes \id_M)(\rho_{AM})}+S(Y|M)_{(E_{\mathcal{Y}}\otimes \id_M)(\rho_{AM})} \Big)-\ln d_A\,,
    \end{align*}
where the constant $c_{\operatorname{cb}}$ can be given by the following cases
\begin{enumerate}
\item[$\operatorname{i)}$] $c_{\operatorname{cb}}=\frac{2d_A}{(1-\la_2)^2}$
\item[$\operatorname{ii)}$] if $\la_2<\frac{1}{\sqrt{2}}$, $c_{\operatorname{cb}}=1+\Big(\frac{\la_2}{1-\la_2}+\frac{\la_2^2}{1-2\la^2}\Big)d_A$.
\item[$\operatorname{iii)}$] if $\la_1<\sqrt{2\ln 2-1}$, $c_{\operatorname{cb}}=2\Big(1-\la_1^2(2\ln 2-1)^{-1}\Big)^{-1}$.
\end{enumerate}
In particular, the inequality is tight for $\rho_{AM}:=\frac{\Id}{d_A}\ten \rho_M$.
\end{corollary}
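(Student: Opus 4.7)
The plan is to recognize Corollary \ref{cor:twobasis} as a direct application of the complete approximate tensorization of relative entropy developed in this section, specialised to the triple of von Neumann subalgebras arising from the two bases, and then to translate from relative entropies to conditional entropies via standard identities. The three cases (i), (ii), (iii) will arise by instantiating the three tensorization bounds (Theorems \ref{thm:AT}, \ref{theo:lambdaindexcontrolAT} and \ref{theorem:meta}) with the correct clustering constants and indices.

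First, I would set up the algebras: take $\cM=\cB(\cH_A)$, $\cN_j$ ($j=1,2$) equal to the abelian algebra of operators diagonal in the basis $\{e_x^{(\mathcal{X})}\}$, respectively $\{e_y^{(\mathcal{Y})}\}$, and $\cN=\cN_1\cap \cN_2=\mathbb{C}\Id_A$. The Pinching maps $E_\mathcal{X},E_\mathcal{Y}$ from \eqref{eq:pinch} are the trace-preserving conditional expectations onto $\cN_1,\cN_2$, and $E_\cN(X)=\tfrac{\tr X}{d_A}\Id_A$ is the projection onto $\cN$. The compatibility relation $E_\cN\circ E_j=E_\cN$ holds trivially, and the maximally mixed state $\sigma_0=\Id_A/d_A$ is a full-rank $E_\cN$-invariant state, so all hypotheses of Section \ref{sec:AT} are met.

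Next, I would translate between relative entropies and conditional entropies. Using $S(A|M)_\rho=-D(\rho_{AM}\|\Id_A\otimes \rho_M)$ together with the identity $\ln(\tfrac{\Id_A}{d_A}\otimes\rho_M)=\ln(\Id_A\otimes\rho_M)-\ln d_A\,\Id_{AM}$ gives
\[
D\bigl(\rho_{AM}\,\big\|\,(E_{\cN*}\otimes\id_M)(\rho_{AM})\bigr) = -S(A|M)_\rho +\ln d_A,
\]
and the chain rule \eqref{chain rule} applied to the inclusion $\cN\subset\cN_1$ (together with the $(E_{\mathcal{X}*}\otimes\id_M)$-invariance of $\sigma_0\otimes\rho_M$) yields
\[
D\bigl(\rho\,\big\|\,(E_{\mathcal{X}*}\otimes\id_M)(\rho)\bigr) = S(X|M)_{(E_\mathcal{X}\otimes\id_M)(\rho)} - S(A|M)_\rho,
\]
with the analogous identity for $\mathcal{Y}$. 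Substituting these into the complete approximate tensorization bound
\[
D\bigl(\rho\,\big\|\,(E_{\cN*}\otimes\id_M)(\rho)\bigr) \le c_{\operatorname{cb}}\Bigl(D\bigl(\rho\,\big\|\,(E_{\mathcal{X}*}\otimes\id_M)(\rho)\bigr) + D\bigl(\rho\,\big\|\,(E_{\mathcal{Y}*}\otimes\id_M)(\rho)\bigr)\Bigr)
\]
and rearranging will produce exactly $(2c_{\operatorname{cb}}-1)\,S(A|M)\le c_{\operatorname{cb}}\bigl(S(X|M)+S(Y|M)\bigr)-\ln d_A$.

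Finally, I would extract the three constants. A direct expansion gives
\[
(E_\mathcal{X}\circ E_\mathcal{Y}-E_\cN)(X)=\sum_{x,y}\Bigl(|\langle e_x^{(\mathcal{X})}|e_y^{(\mathcal{Y})}\rangle|^2-\tfrac{1}{d_A}\Bigr)\,|e_x^{(\mathcal{X})}\rangle\langle e_x^{(\mathcal{X})}|\,\langle e_y^{(\mathcal{Y})}|X|e_y^{(\mathcal{Y})}\rangle,
\]
so on $L_2(\sigma_0)$ the operator $E_\mathcal{X}\circ E_\mathcal{Y}-E_\cN$ is realised by the matrix $O_{\mathcal{X},\mathcal{Y}}$ of \eqref{eq:matrixOXY} between the isometrically embedded diagonals, with operator norm $\lambda_2$. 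Case (i) then follows from Theorem \ref{thm:AT} with the Pimsner-Popa index \eqref{formula} for $\mathbb{C}\Id_A\subset \cB(\cH_A)$, and case (ii) from Theorem \ref{theo:lambdaindexcontrolAT} under the additional assumption $\lambda_2<1/\sqrt{2}$. For case (iii), I would invoke Theorem \ref{theorem:meta} with $\Phi^{*}=E_\mathcal{X}\circ E_\mathcal{Y}$: its Choi matrix is diagonal in the product basis $\{|e_y^{(\mathcal{Y})}\rangle\otimes|e_x^{(\mathcal{X})}\rangle\}$ with eigenvalues $|\langle e_x^{(\mathcal{X})}|e_y^{(\mathcal{Y})}\rangle|^2$, so the completely positive ordering $(1-\eps)E_\cN\le_{\operatorname{cp}}\Phi^{*}\le_{\operatorname{cp}}(1+\eps)E_\cN$ reduces to an entrywise comparison controlling $\eps$ by $\lambda_1$. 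Combined with the single-step sub-additivity $D(\rho\|\Phi^{2}(\rho))\le 2(D(\rho\|E_\mathcal{X}(\rho))+D(\rho\|E_\mathcal{Y}(\rho)))$ used in Corollary \ref{Coro:sharperAT}, this yields the third bound. Tightness at $\rho_{AM}=\sigma_0\otimes\rho_M$ is immediate because all three conditional entropies collapse to $\ln d_A$ and both sides of the claimed inequality reduce to $(2c_{\operatorname{cb}}-1)\ln d_A$.

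The main delicate points are purely in the constant tracking. In case (i) the naive complete Pimsner-Popa index for $\mathbb{C}\Id_A\subset \cB(\cH_A)$ is $d_A^{2}$, and reducing it to the claimed factor $d_A$ requires the refined form noted in the remark following Theorem \ref{thm:AT}, exploiting that the two intermediate abelian inclusions have equal index $d_A$; and in case (iii) one must verify with care that the Choi-matrix comparison really produces $\eps$ of size $\lambda_1$ (and not a dimension-dependent multiple), for which the specific joint-diagonal structure of $J(E_\mathcal{X}\circ E_\mathcal{Y})$ relative to $J(E_\cN)=\Id/d_A$ is essential.
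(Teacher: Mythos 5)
Your setup of the algebras, the translation between relative and conditional entropies via the chain rule \eqref{chain rule}, the identification of $E_{\cX}\circ E_{\cY}-E_\cN$ with the matrix $O_{\mathcal{X},\mathcal{Y}}$, the derivation of cases i) and ii) from Theorems \ref{thm:AT} and \ref{theo:lambdaindexcontrolAT} with $C_{\max,\operatorname{cb}}=d_A$, and the tightness check at $\frac{\Id}{d_A}\ten\rho_M$ all coincide with the paper's proof; you are also right that the index bookkeeping in case i) (getting $d_A$ rather than the naive $C_{\operatorname{cb}}(\cB(\cH_A):\CC)=d_A^2$) is the delicate point there.

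The gap is in case iii). First, you apply Theorem \ref{theorem:meta} directly with $\Phi^*=E_{\cX}\circ E_{\cY}$, but that theorem assumes $\Phi^*$ is GNS-symmetric, and $(E_{\cX}E_{\cY})^\dagger=E_{\cY}E_{\cX}\neq E_{\cX}E_{\cY}$; the symmetry is used essentially in its proof, in the step passing from $\tr\big(\Phi^2(\rho)\tfrac{1}{\rho_\cN+s}\rho\tfrac{1}{\rho_\cN+s}\big)$ to $\tr\big(\Phi(\rho)\tfrac{1}{\rho_\cN+s}\Phi(\rho)\tfrac{1}{\rho_\cN+s}\big)$. The paper instead goes through Corollary \ref{Coro:sharperAT} with the symmetrized map $\Phi^*=\tfrac12(E_{\cX}E_{\cY}+E_{\cY}E_{\cX})$, which is self-adjoint and still satisfies $D(\rho\|\Phi^{2}(\rho))\le 2\big(D(\rho\|E_{\cX}(\rho))+D(\rho\|E_{\cY}(\rho))\big)$; you need the same symmetrization. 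Second, you correctly observe that everything hinges on whether the cp-order parameter $\eps$ in $(1-\eps)E_\cN\le_{\operatorname{cp}}\Phi^*\le_{\operatorname{cp}}(1+\eps)E_\cN$ can be taken equal to $\la_1$ rather than a dimension-dependent multiple, but you leave this unverified, and the entrywise Choi comparison you sketch in fact points the other way: $J_{E_{\cX}E_{\cY}}$ is diagonal with eigenvalues $d_A^{-1}|\lan e_x^{(\mathcal{X})}|e_y^{(\mathcal{Y})}\ran|^2$ while $J_{E_\cN}=d_A^{-2}\Id$, so positivity of $J_{\Phi}-(1-\eps)J_{E_\cN}$ forces $\eps\ge d_A\big(d_A^{-1}-|\lan e_x^{(\mathcal{X})}|e_y^{(\mathcal{Y})}\ran|^2\big)$, i.e.\ $\eps$ of order $d_A\la_1$ rather than $\la_1$. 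The paper's own proof instead invokes the module-Choi / $L_1\to L_\infty$ cb-norm equivalence from Lemma \ref{lemmageneral}, where the normalization of the reference trace is exactly what decides between $\la_1$ and $d_A\la_1$; whichever route you take, this normalization must be pinned down explicitly, since it determines both the admissible range of $\la_1$ and the constant $c_{\operatorname{cb}}$ in iii).
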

\begin{proof}
Let $\cM:=\cB(\cH_{A})$, and let $\cN_1$ ( resp. $\cN_2$) be the subalgebra of diagonal matrices in the orthonormal basis $\{|e^{(\mathcal{X})}_x\rangle\}_{x\in\mathcal{X}}$ (resp. $\{|e^{(\mathcal{Y})}_y\rangle\}_{y\in\mathcal{Y}}$). Then $\cN=\cN_1\cap \cN_2$ is the trivial algebra $\CC\Id_{A}$. For each alphabet $\mathcal{Z}\in\{\mathcal{X},\mathcal{Y}\}$, the conditional expectation $E_\mathcal{Z}$ is the Pinching map onto the diagonal $\operatorname{span}\{ |e_z^{(\mathcal{Z})}\rangle\langle e_z^{(\mathcal{Z})}|\}$ and $E_\cN(\rho)=\tr_A(\rho)d_A^{-1}{\Id}{}$. When  tensorizing with the identity map on $M$, we have for every bipartite  state $\rho_{AM} \in \mathcal{D}(\cH_{AM})$,
\begin{align*}
S(X|M)_{(E_{\mathcal{X}}\otimes \id_M)(\rho_{AM})} & = - D\Big( (E_{\mathcal{X}}\otimes \id_M)(\rho_{AM}) \Big\|  \frac{\Id_A}{d_A} \otimes \rho_M \Big) +\ln d_A \\
& = D(\rho_{AM} \| (E_{\mathcal{X}}\otimes \id_M)(\rho_{AM})) - D(\rho_{AM} \| (E_{\mathcal{N}}\otimes \id_M)(\rho_{AM})) +\ln d_A,
\end{align*}
where the last equality is derived from the chain rule identity \eqref{chain rule}. Similarly, we have
\begin{equation*}
D(\rho_{AM} \| (E_{\mathcal{N}}\otimes \id_M)(\rho_{AM})) = - S(A| M)_{\rho_{AM}} + \ln d_A .
\end{equation*}
Note that the map $E_\mathcal{X}E_\mathcal{Y}-E_{\cN}:\cB(\cH_A)\to \cB(\cH_A)$ from its support $\cN_1$ to the range $\cN_2$ is given by the matrix $O_{\mathcal{X},\mathcal{Y}}$. Therefore, we have that
\begin{align*}
&\|E_{\cX}E_{\cY}-E_\cN:L_1(\cB(\cH_A),d_A^{-1}\Id)\to \cB(\cH_A)\|_{\operatorname{cb}}
=\|O_{\mathcal{X},\mathcal{Y}}:l_1(\mathcal{Y})\to l_\infty(\mathcal{X})\|=\la_1
\\&\|E_{\cX}E_{\cY}-E_\cN:L_2(\cB(\cH_A),d_A^{-1}\Id)\to L_2(\cB(\cH_A),d_A^{-1}\Id)\|
=\|O_{\mathcal{X},\mathcal{Y}}:l_2(\mathcal{Y})\to l_2(\mathcal{X})\|=\la_2
\end{align*}
Then the assertions i) and ii) follow from \Cref{thm:AT} with the fact that $C_{\max,\operatorname{cb}}=d_A$ (cf. \eqref{eq:indicesex}).
The assertion iii) follows from Corollary \ref{Coro:sharperAT} and the equivalence (see the proof of Lemma \ref{lemmageneral} in \Cref{sec:normstoorder}) that \begin{align*}\|E_{\cX}E_{\cY}-E_\cN:L_1(\cB(\cH_A),d_A^{-1}\Id)&\to \cB(\cH_A)\|_{\operatorname{cb}}
\le \la_1<1  \\ \Longleftrightarrow\pl & (1-\la_1)E_{\cN} \le_{\operatorname{cp}} E_{\cX}E_{\cY}\le_{\operatorname{cp}} (1+\la_1)E_{\cN}  .\end{align*}
\end{proof}
As applications of Corollary \ref{Coro:sharperAT},
our analysis can be extended to uncertainty relations with quantum memory for multiple measurements. In the case of three measurements \cite{coles2011information,berta2019quantum}, we have
\begin{corollary}\label{cor:3basis}
Let $\mathcal{X}=\{\ket{e_{x}^{\mathcal{X}}}\}, \mathcal{Y}=\{\ket{e_{y}^{\mathcal{Y}}}\}$ and $\mathcal{Z}=\{\ket{e_{z}^{\mathcal{Z}}}\}$ be three orthonormal bases on $\cH_A$. Let $O_{\mathcal{Y},\mathcal{X}}$ and $O_{\mathcal{Z},\mathcal{Y}}$ be the matrices for the corresponding bases defined as in \eqref{eq:matrixOXY}. Denote
 \begin{align*}&\lambda_1:=\norm{O_{\mathcal{X},\mathcal{Y}}\,O_{\mathcal{Y},\mathcal{Z}}\,O_{\mathcal{Z},\mathcal{Y}}\,O_{\mathcal{Y},\mathcal{X}}:l_1(\mathcal{X})\to l_\infty(\mathcal{X})}{}\\
&\lambda_2:=\norm{O_{\mathcal{Z},\mathcal{Y}}\,O_{\mathcal{Y},\mathcal{X}}:l_2(\mathcal{X})\to l_2(\mathcal{Z})}{}
\,.\end{align*}
Then for all state $\rho_{AM}\in\cD(\cH_{AM})$,
\begin{align*}
    (3c_{\operatorname{cb}}-1)\,S(A|M)\le & \, c_{\operatorname{cb}}\,\Big(S(X|M)_{(E_{\mathcal{X}}\otimes \id_M)(\rho_{AM})}+S(Y|M)_{(E_{\mathcal{Y}}\otimes \id_M)(\rho_{AM})}\\ &+S(Z|M)_{(E_{\mathcal{Z}}\otimes \id_M)(\rho_{AM})} \Big)-\ln d_A\,,
    \end{align*}
where the constant $c_{\operatorname{cb}}$ is given by
\begin{align*}
c_{\operatorname{cb}}:=4(k+1)\Big(1-\eps^2(2\ln 2-1)^{-1}\Big)^{-1}
\end{align*}
for every $k\in \mathbb{N}$ such that $\epsilon:=\la_1\la_2^{2k}<2\ln 2-1$.
\end{corollary}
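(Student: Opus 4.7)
The plan is to extend the argument of \Cref{cor:twobasis} from two to three bases. As there, I set $\cM = \cB(\cH_A)$, let $\cN_\mathcal{X},\cN_\mathcal{Y},\cN_\mathcal{Z}$ be the subalgebras diagonal in the respective orthonormal bases, and observe that $\cN := \cN_\mathcal{X}\cap\cN_\mathcal{Y}\cap\cN_\mathcal{Z} = \CC\Id_A$ and $E_{\cN*}(\rho) = \tr(\rho)\Id_A/d_A$. The chain rule \eqref{chain rule}, combined with $D(\rho_{AM}\|(E_{\cN*}\otimes\id)\rho_{AM}) = \ln d_A - S(A|M)_{\rho_{AM}}$, gives for each $I\in\{\mathcal{X},\mathcal{Y},\mathcal{Z}\}$ the identity
\begin{equation*}
S(I|M)_{(E_I\otimes\id)(\rho_{AM})} = D(\rho_{AM}\|(E_{I*}\otimes\id)\rho_{AM}) - D(\rho_{AM}\|(E_{\cN*}\otimes\id)\rho_{AM}) + \ln d_A.
\end{equation*}
Summing over $I$ and rearranging reduces the stated uncertainty relation to the complete approximate tensorization $D(\rho\|(E_{\cN*}\otimes\id)\rho) \le c_{\operatorname{cb}}\sum_{I}D(\rho\|(E_{I*}\otimes\id)\rho)$, exactly as in the two-basis case.

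To produce such a bound with the claimed constant, I would invoke \Cref{Coro:sharperAT} with $m=3$ and the zig-zag choice $\Phi^* = \tfrac12(E_\mathcal{X} E_\mathcal{Y} E_\mathcal{Z} + E_\mathcal{Z} E_\mathcal{Y} E_\mathcal{X})$. This $\Phi^*$ is self-adjoint on $L_2(\tr)$, trace preserving, unital, and satisfies $\Phi^*E_\cN = E_\cN\Phi^* = E_\cN$, so $\tilde\Phi := \Phi^* - E_\cN$ obeys $(\Phi^*)^N - E_\cN = \tilde\Phi^N$ for every $N$. Matching the target coefficient $4(k+1)$ amounts to using the power $(\Phi^*)^{2(k+1)}$ as the input of \Cref{theorem:meta}: the meta-theorem doubles this power to $(\Phi^*)^{4(k+1)}$, and the data-processing and chain-rule argument underlying the proof of \Cref{Coro:sharperAT} then bounds $D(\rho\|(\Phi^*)^{4(k+1)}\rho)\le 4(k+1)\sum_I D(\rho\|E_{I*}\rho)$.

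The crucial technical step is to verify the completely positive order $(1-\eps)E_\cN \le_{\operatorname{cp}} (\Phi^*)^{2(k+1)} \le_{\operatorname{cp}} (1+\eps)E_\cN$ with $\eps = \lambda_1\lambda_2^{2k}$. By \Cref{lemmageneral} this is equivalent to the cb-estimate $\|\tilde\Phi^{2(k+1)}\|_{L_1\to L_\infty,\operatorname{cb}}\le\eps$. Exploiting that $\tilde\Phi$ is self-adjoint on $L_2(\tr)$ together with the standard factorization identity $\|A^*A\|_{L_1\to L_\infty,\operatorname{cb}} = \|A\|_{L_1\to L_2,\operatorname{cb}}^2$, one obtains
\begin{equation*}
\|\tilde\Phi^{2(k+1)}\|_{L_1\to L_\infty,\operatorname{cb}} = \|\tilde\Phi^{k+1}\|_{L_1\to L_2,\operatorname{cb}}^2 \le \|\tilde\Phi^k\|_{L_2\to L_2}^2\,\|\tilde\Phi^2\|_{L_1\to L_\infty,\operatorname{cb}} \le \lambda_2^{2k}\,\|\tilde\Phi^2\|_{L_1\to L_\infty,\operatorname{cb}}.
\end{equation*}
The $L_2$-estimate $\|\tilde\Phi\|_{L_2\to L_2}\le\lambda_2$ is immediate since both summands $\tilde E_\mathcal{X}\tilde E_\mathcal{Y}\tilde E_\mathcal{Z}$ and its reverse restrict on the classical subalgebras to the matrix products $O_{\mathcal{X},\mathcal{Y}}O_{\mathcal{Y},\mathcal{Z}}$ and $O_{\mathcal{Z},\mathcal{Y}}O_{\mathcal{Y},\mathcal{X}}$, whose spectral norms are both $\lambda_2$. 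The remaining bound $\|\tilde\Phi^2\|_{L_1\to L_\infty,\operatorname{cb}}\le\lambda_1$ is obtained by expanding $\tilde\Phi^2$ into four products of six $\tilde E_I$'s, collapsing consecutive identical projections via $\tilde E_I^2 = \tilde E_I$, and noting that each surviving product factors through classical subalgebras whose cb-norm from $L_1$ to $L_\infty$ is the maximum absolute entry of the associated product of overlap matrices; for the palindromic contributions this is exactly $\lambda_1 = \|O_{\mathcal{X},\mathcal{Y}}O_{\mathcal{Y},\mathcal{Z}}O_{\mathcal{Z},\mathcal{Y}}O_{\mathcal{Y},\mathcal{X}}\|_{l_1\to l_\infty}$, while the non-palindromic contributions are absorbed by one further application of the same $L_2$-factorization argument.

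Once $\eps = \lambda_1\lambda_2^{2k}$ is below the threshold of \Cref{theorem:meta}, \Cref{Coro:sharperAT} produces the approximate tensorization with the stated $c_{\operatorname{cb}}$; substituting back into the chain rule reduction and rearranging exactly as in the proof of \Cref{cor:twobasis} finishes the argument. I expect the main obstacle to be the last step of the cb-norm bookkeeping, namely controlling the non-palindromic six-fold products appearing in the expansion of $\tilde\Phi^2$ cleanly enough that they do not introduce a multiplicative constant beyond the single factor of $\lambda_1$ that enters the final bound.
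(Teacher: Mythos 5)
Your reduction via the chain rule, and your overall template (apply \Cref{theorem:meta} to a high power of a product of the three pinchings, with the power chosen so that the completely positive order condition holds with $\eps=\la_1\la_2^{2k}$, and count conditional expectations via data processing to get the factor $4(k+1)$) coincide with the paper's strategy, and your bookkeeping of the coefficient is correct. The gap is in your choice of $\Phi^*$ and the resulting cb-norm estimate, exactly at the point you flag as the "main obstacle" -- it is not a technicality that resolves itself. With $\Phi^*=\tfrac12(E_{\mathcal{X}}E_{\mathcal{Y}}E_{\mathcal{Z}}+E_{\mathcal{Z}}E_{\mathcal{Y}}E_{\mathcal{X}})$ and $A:=E_{\mathcal{X}}E_{\mathcal{Y}}E_{\mathcal{Z}}-E_\cN$ you have $\tilde\Phi=\tfrac12(A+A^*)$, hence
\begin{align*}
\tilde\Phi^2=\tfrac14\big(A^2+AA^*+A^*A+(A^*)^2\big)\,.
\end{align*}
Writing $M:=O_{\mathcal{Z},\mathcal{Y}}O_{\mathcal{Y},\mathcal{X}}$, the two palindromic terms correspond to $M^*M$ and $MM^*$; only one of these has $l_1\to l_\infty$ norm equal to $\la_1=\max_x\|Me_x\|_{l_2}^2$ (the maximal squared column norm of $M$), while the other equals the maximal squared \emph{row} norm of $M$, call it $\mu_1$, and these differ in general since $M$ need not be normal. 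The non-palindromic terms $A^2,(A^*)^2$ are bounded at best by $\|A\|_{L_2\to L_\infty}\|A\|_{L_1\to L_2}=\sqrt{\la_1\mu_1}$, which again involves $\mu_1$. So the best your decomposition yields is $\|\tilde\Phi^2\|_{L_1\to L_\infty,\operatorname{cb}}\le\tfrac14(\sqrt{\la_1}+\sqrt{\mu_1})^2$, not $\la_1$, and the stated constant does not follow with $\la_1$ as defined in the corollary.

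The paper sidesteps this entirely by taking $\Phi^*$ to be the single palindromic product $E_{\mathcal{X}}E_{\mathcal{Y}}E_{\mathcal{Z}}E_{\mathcal{Y}}E_{\mathcal{X}}=\Psi^*\Psi$ with $\Psi^*=E_{\mathcal{X}}E_{\mathcal{Y}}E_{\mathcal{Z}}$ (note this is not one of the two maps listed in \Cref{Coro:sharperAT}; one goes back to \Cref{theorem:meta} directly). Then $(\Phi^*)^{k+1}-E_\cN=(\Psi^*\Psi-E_\cN)^{k+1}$ and the $L_1\to L_\infty$ cb-norm factorizes with no cross terms as $\|M^*\|_{l_2\to l_\infty}\cdot\|(M^*M)^{k}\|_{l_2\to l_2}\cdot\|M\|_{l_1\to l_2}=\la_1^{1/2}\la_2^{2k}\la_1^{1/2}=\la_1\la_2^{2k}$. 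The chain-rule count is unaffected: each block of five pinchings contributes at most two copies of each $D(\rho\|E_{I*}(\rho))$, so $\Phi^2=(E_{\mathcal{X}}E_{\mathcal{Y}}E_{\mathcal{Z}}E_{\mathcal{Y}}E_{\mathcal{X}})^{2(k+1)}$ still yields the factor $4(k+1)$. To repair your version you would either have to switch to this palindromic map, or replace $\la_1$ in the statement by $\max\{\la_1,\mu_1\}$ (equivalently, take the better of the two orderings, as the remark following the corollary suggests).
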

\begin{proof}
 Let $E_\mathcal{X}, E_\mathcal{Y}$ and $E_\mathcal{Z}$ be the Pinching maps onto $\mathcal{X}$, $\mathcal{Y}$ and $\mathcal{Z}$ respectively. Moveover, take $E_\cN(\rho)=\tr(\rho)\frac{\Id}{d_A}$.
Note that $O_{\mathcal{X},\mathcal{Y}}\,O_{\mathcal{Y},\mathcal{Z}}=(O_{\mathcal{Z},\mathcal{Y}}\,O_{\mathcal{Y},\mathcal{X}})^*$. Therefore, we have
(see e.g. \cite[Lemma 3.1]{gao2020fisher})
\begin{align*}
\la_1=&\norm{O_{\mathcal{X},\mathcal{Y}}\,O_{\mathcal{Y},\mathcal{Z}}\,O_{\mathcal{Z},\mathcal{Y}}\,O_{\mathcal{Y},\mathcal{X}}:l_1(\mathcal{X})\to l_\infty(\mathcal{X})}{}\\=&\norm{O_{\mathcal{X},\mathcal{Y}}\,O_{\mathcal{Y},\mathcal{Z}}:l_2(\mathcal{X})\to l_\infty(\mathcal{X})}{}^2=\norm{O_{\mathcal{Z},\mathcal{Y}}\,O_{\mathcal{Y},\mathcal{X}}:l_1(\mathcal{X})\to l_2(\mathcal{Z})}{}^2\,.
\end{align*}
Using the $L_2$ condition
\begin{align*}
\norm{E_\mathcal{X}E_\mathcal{Y}E_\mathcal{Z}-E_\cN:L_2(d_A^{-1}{\Id})\to L_2(d_A^{-1}{\Id}) }{\operatorname{cb}}=\norm{O_{\mathcal{Z},\mathcal{Y}}\,O_{\mathcal{Y},\mathcal{X}}:l_2(\mathcal{X})\to l_2(\mathcal{Z})}{}:=\la_2
\end{align*}
we have for any integer $k\in \mathbb{N}$,
 \begin{align*}
 &\norm{(E_\mathcal{X}E_\mathcal{Y}E_\mathcal{Z}E_\mathcal{Y}E_\mathcal{X})^{(k+1)}-E_\cN:L_1(d_A^{-1}{\Id})\to \cB(\cH_A)}{\operatorname{cb}}
 \\ &\qquad=\,\norm{(O_{\mathcal{Z},\mathcal{Y}}\,O_{\mathcal{Y},\mathcal{X}}\,O_{\mathcal{X},\mathcal{Y}}\,O_{\mathcal{Y},\mathcal{Z}})^{(k+1)}:l_1(\mathcal{X})\to l_\infty(\mathcal{X}) }{}
 \\&\qquad=\,\norm{O_{\mathcal{X},\mathcal{Y}}\,O_{\mathcal{Y},\mathcal{Z}}:l_2(\mathcal{X})\to l_\infty(\mathcal{X}) }{}
\cdot \norm{(O_{\mathcal{Z},\mathcal{Y}}\,O_{\mathcal{Y},\mathcal{X}}\,O_{\mathcal{X},\mathcal{Y}}\,O_{\mathcal{Y},\mathcal{Z}})^k:l_2(\mathcal{X})\to l_2(\mathcal{X}) }{}\\
 &
\qquad\qquad\qquad\qquad\qquad\qquad\qquad\qquad\quad\cdot \norm{O_{\mathcal{Z},\mathcal{Y}}\,O_{\mathcal{Y},\mathcal{X}}:l_1(\mathcal{X})\to l_2(\mathcal{X}) }{}
 \\ &\qquad= \,\la_1^{1/2}\cdot \la_2^{2k}\cdot \la_1^{1/2}=\la_1 \la_2^{2k}\ .
\end{align*}When
$\eps= \la_{1}\la_2^{2k}<1$, this implies
\begin{align*}(1-\eps)E_\cN\le_{\operatorname{cp}} (E_\mathcal{X}E_\mathcal{Y}E_\mathcal{Z}E_\mathcal{Y}E_\mathcal{X})^{k+1}\le_{\operatorname{cp}} (1+\eps)E_\cN .\end{align*}
Therefore, when $\eps= \la_{1}\la_2^{2k}<\sqrt{2\ln 2-1} $, the assertion follows from Theorem \ref{theorem:meta} and chain rule, similarly to the proof of Corollary \ref{Coro:sharperAT}.
\end{proof}
\begin{rem}
{\rm The constant $\la_1$ in Corollary \ref{cor:3basis} can be explicitly written as
\[\la_1=\max_{x,x'}\sum_{y,y',z}\Big| \pl |\langle e_{x'}^{(\cX)} \ | e_{y'}^{(\cY)}\rangle\langle e_{y'}^{(\cY)} | e_{z}^{(\cZ)}\rangle \langle e_{z}^{(\cZ)} | e_{y}^{\cY}\rangle\langle e_{y}^{(\cY)} | e_{x}^{(\cX)}\rangle|^2-\frac{1}{d}\Big| \pl.\]
Actually, one can alternatively use the product of $E_\mathcal{X}$, $E_\mathcal{Y}$, $E_\mathcal{Z}$ in different orders in the definition of $\la_1$ and $\la_2$. Indeed, define $\Phi_\mathcal{W}=E_{\mathcal{W}_1}\,E_{\mathcal{W}_2}\,E_{\mathcal{W}_3}$ where $\mathcal{W}=(\mathcal{W}_1\,\mathcal{W}_2\,  \mathcal{W}_3)$ is some permutation of $(\mathcal{X}\mathcal{Y}\mathcal{Z})$.
Then the assertion in Corollary \ref{cor:3basis} remains valid for
\begin{align*}\lambda_1:=&\min_{\mathcal{W}}\norm{\Phi_\mathcal{W}^*\,\Phi_\mathcal{W}-E_{\cN}:L_1(d_A^{-1}{\Id})\to \cB(\cH_A)}{}
\\ =& \min_{\mathcal{W}}\norm{O_{\mathcal{W}_1,\mathcal{W}_2}\,O_{\mathcal{W}_2,\mathcal{W}_3}\,O_{\mathcal{W}_3,\mathcal{W}_2}\,O_{\mathcal{W}_2,\mathcal{W}_1}:
l_1(\mathcal{W}_1)\to l_\infty(\mathcal{W}_1)}{}\\
\lambda_2:=&\min_{\mathcal{W}}\norm{\Phi_\mathcal{W}-E_{\cN}:L_2(d_A^{-1}{\Id})\to L_2(d_A^{-1}{\Id})}{} \\
=&\min_{\mathcal{W}}\norm{O_{\mathcal{W}_3,\mathcal{W}_2}\,O_{\mathcal{W}_2,\mathcal{W}_1}:l_2(\mathcal{W}_1)\to l_2(\mathcal{W}_3)}{}
\,.\end{align*}
}
\end{rem}

\section{Symmetric semigroups}\label{sec:symmetric}
\subsection{Symmetric Lindbladians}In this section, we use the approximate tensorization results of Section \ref{sec:AT} to give tighter bounds on the CMLSI constant for  symmetric quantum Markov semigroup.
Let $\cH$ be a Hilbert space and $(\cP_t)_{t\ge 0}$ be a quantum Markov semigroup on the algebra $\cB(\cH)$. We say $\cP_t$ is symmetric if for each $t$,
\[ \tr(Y^\dagger\cP_t(X))=\tr(\cP_t(Y^\dagger)X)\ , \ \forall\ X,\,Y\in \cB(\cH) .\]
Namely, $\cP_t=\cP_t^*$ is GNS-symmetric with respect to the completely mixed state $\frac{\Id}{d_\cH}$. In this case, we do not distinguish $\cP_t$ and with it predual $\cP_{t*}$ because
the Schr\"{o}dinger picture is equivalent to the Heisenberg picture. Due to this symmetry, the Lindbladian \eqref{eqlindbladsigma} takes a simple form
\[ \cL(\rho)=-\sum_{k=1}^l[a_k,[a_k,\rho]]\]
where $a_1,\ldots,a_l\in \cB(\cH)$ form a family of self-adjoint operators. Let us first consider a single term in the generator
\[\cL_a(\rho)=-[a,[a,\rho]]\]
for a self-adjoint $a\in \cB(\cH)$. Let $a=\sum_{i=1}^n\kappa_i P_i$ be the spectral decomposition of $a$,
where $P_i$ is the spectral projection with respect to the eigenvalue $\kappa_i$. One calculates that
\[ [a,\rho]=\sum_{i,j}(\kappa_i-\kappa_j)P_i\rho P_j\ , \  \cL_a(\rho)=-[a,[a,\rho]]=-\sum_{i,j}(\kappa_i-\kappa_j)^2P_i\rho P_j\ .\]
Then $\cL_a$ generates the semigroup
\[ e^{\cL_at}(\rho)= \sum_{i,j}e^{-(\kappa_i-\kappa_j)^2t}P_i\rho P_j\ .\]
which is a Schur multiplier semigroup (also called generalized dephasing semigroup). The invariant subalgebra $\cN_a=\oplus_{i=1}^n \cB(P_i\cH)$ and the indices are
\[ C(\cB(\cH),\cN_a)=C_{\operatorname{cb}}(\cB(\cH),\cN_a)=n\le d_\cH. \]
Viewing $\cL_a$ as a self-adjoint operator on $L_2(\cB(\cH),\tr)$,
 $P_i\cB(\cH)P_j$ corresponds to the eigenspace associated to the eigenvalue $-(\kappa_i-\kappa_j)^2$.
Thus the norm and spectral gap of $\cL_a$ are
\[\norm{\cL_a:L_2(\cB(\cH),\tr)\to L_2(\cB(\cH),\tr)}{}=\max_{i,j}|\kappa_i-\kappa_j|^2\ ,\quad  \la(\cL_a)=\min_{i\neq j}|\kappa_i-\kappa_j|^2\ .\]
It was proved in \cite[Theorem 4.23]{brannan2020complete} that Schur multiplier semigroups admit the following estimates on their CMLSI constant
\begin{align} \frac{\la(\cL_a)}{2\ln(2n)}\le \al_{\operatorname{CMLSI}}(\cL_a)\le 2\la(\cL_a) \ \label{schur} .\end{align}
(Note that our normalization differs with \cite{brannan2020complete} by a factor of $2$). Moreover, for a commuting family $\{a_1,\ldots,a_l\}$,
the Lindbladian $\cL(\rho)=-\sum_{k=1}^l[a_k,[a_k,\rho]]$ also generates a Schur multiplier semigroup and the above estimate \eqref{schur} remains valid.

To extend the above estimate to general Lindbladians of the form $\cL=\sum_{k}\cL_{a_k}$ with not necessarily commuting operators $a_k$, we make use of the sharper approximate tensorization bounds derived in Corollary \ref{Coro:sharperAT}.


\begin{theorem}\label{thm:symmetric}
 Let $\cP_t=e^{\cL t}: \mathbb{M}_d\to \mathbb{M}_d$ be a symmetric quantum Markov semigroup, and let its generator be given by
 \[ \cL(X)=\sum_{k=1}^l\cL_{a_k}(X)=-\sum_{k=1}^l[a_k,[a_k,X]]\,.\]
 Denote $E_\cN$, resp. $E_k$, as the conditional expectation onto the kernel of $\cL$, resp. that of $\cL_{a_k}$. Denote $\la(\cL_{a_k})$ as the spectral gap of $\cL_{a_k}$ and $\la_E$ as the spectral gap of $\cL_E=\sum_{k}(E_k-\id)$. Then,
\[ \al_{\operatorname{CMLSI}}(\cL)\ge \left\lceil\frac{2\ln d+1}{\ln(\lambda^{-1})}\right\rceil^{-1}\, \frac{\min_{k}\la(\cL_{a_k})}{8
\ln(2d)}\,,\]
where $\la:=1-\frac{\la_E}{l}$.
\end{theorem}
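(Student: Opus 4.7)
The plan is to combine the known CMLSI bound for Schur multiplier semigroups with the complete approximate tensorization estimate \eqref{eq:gapAT}, applied to the family of conditional expectations $E_k$ onto $\ker(\cL_{a_k})$.

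\textbf{Step 1 (Local CMLSI).} For each self-adjoint $a_k$, the Lindbladian $\cL_{a_k}(\rho)=-[a_k,[a_k,\rho]]$ acts as a Schur multiplier in the eigenbasis of $a_k$, so \eqref{schur} yields
\[
\alpha_{\operatorname{CMLSI}}(\cL_{a_k})\ge \frac{\lambda(\cL_{a_k})}{2\ln(2n_k)}\ge \frac{\lambda(\cL_{a_k})}{2\ln(2d)}\,,
\]
where $n_k\le d$ is the number of distinct eigenvalues of $a_k$. Summing the local CMLSI inequalities and using the additive structure of the Dirichlet form/entropy production gives, for every state $\rho$ on $\mathbb{M}_d\otimes \mathbb{M}_n$,
\[
\sum_{k=1}^l D(\rho\|(E_{k*}\otimes\id)(\rho))\ \le\ \frac{1}{\min_k \alpha_{\operatorname{CMLSI}}(\cL_{a_k})}\,\operatorname{EP}_{\cL\otimes \id}(\rho)\,.
\]

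\textbf{Step 2 (Global entropy via approximate tensorization).} Apply Corollary \ref{Coro:sharperAT} with the averaging channel
\[
\Phi^*=\frac{1}{l}\sum_{k=1}^l E_k\,,
\]
which is trace-symmetric and satisfies $\Phi^*\circ E_\cN=E_\cN\circ \Phi^*=E_\cN$. A short spectral computation identifies
\[
\Phi^* - E_\cN \;=\; \id-E_\cN + \tfrac{1}{l}\,\cL_E\qquad \text{on } \cN^\perp\,,
\]
so that, since $\cL_E\le 0$ with spectral gap $\lambda_E$ on $\cN^\perp$,
\[
\bigl\|\Phi^*-E_\cN:L_2(\tr)\to L_2(\tr)\bigr\|\ =\ 1-\frac{\lambda_E}{l}\ =\ \lambda\,.
\]
Plugging $\lambda$ and the standard bound $C_{\operatorname{cb}}(\mathbb{M}_d:\cN)\le d^2$ (from \eqref{formula}) into \eqref{eq:gapAT} gives the complete approximate tensorization
\[
D(\rho\|(E_{\cN*}\otimes\id)(\rho))\ \le\ 4\left\lceil\frac{2\ln d+1}{\ln(\lambda^{-1})}\right\rceil\sum_{k=1}^l D(\rho\|(E_{k*}\otimes\id)(\rho))\,.
\]

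\textbf{Step 3 (Combine).} Chaining the bound from Step 2 with the estimate of Step 1 yields
\[
D(\rho\|(E_{\cN*}\otimes\id)(\rho))\ \le\ \frac{4\lceil (2\ln d+1)/\ln(\lambda^{-1})\rceil}{\min_k \alpha_{\operatorname{CMLSI}}(\cL_{a_k})}\,\operatorname{EP}_{\cL\otimes \id}(\rho)\,,
\]
and inserting $\alpha_{\operatorname{CMLSI}}(\cL_{a_k})\ge \lambda(\cL_{a_k})/(2\ln(2d))$ from Step 1 gives exactly the claimed lower bound on $\alpha_{\operatorname{CMLSI}}(\cL)$.

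The only nontrivial point is Step 2: one must verify that the $L_2\to L_2$ norm of $\Phi^*-E_\cN$ really equals $1-\lambda_E/l$ (rather than being controlled only by the operator norm $\|\cL_E\|\le l$, which is the other relevant endpoint of the spectrum). This follows because each $E_k-\id$ is a negative operator with $-\id\le E_k-\id\le 0$, so the spectrum of $\tfrac{1}{l}\cL_E$ on $\cN^\perp$ lies in $[-1,-\lambda_E/l]$ and the largest-in-modulus eigenvalue of $\id+\tfrac{1}{l}\cL_E$ on $\cN^\perp$ is $1-\lambda_E/l$; the remainder of the argument is a straightforward chain of inequalities via the entropy production decomposition $\operatorname{EP}_{\cL}(\rho)=\sum_k \operatorname{EP}_{\cL_{a_k}}(\rho)$, which holds because the Dirichlet forms \eqref{Dirichlet} are additive in the generator.
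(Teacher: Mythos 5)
Your proof is correct and follows essentially the same route as the paper: apply Corollary \ref{Coro:sharperAT} to $\Phi^*=\frac{1}{l}\sum_k E_k$, identify $\|\Phi^*-E_\cN\|_{L_2\to L_2}=1-\lambda_E/l$ via the identity $\Phi^*-E_\cN+\frac{1}{l}\sum_k(\id-E_k)=\id-E_\cN$, bound $C_{\operatorname{cb}}(\mathbb{M}_d:\cN)\le d^2$, and chain with the Schur-multiplier CMLSI bound \eqref{schur} together with the additivity of the entropy production. The only presentational difference is that you route Step 1 through the local CMLSI constants $\al_{\operatorname{CMLSI}}(\cL_{a_k})$ while the paper writes the entropy productions $-\tr(\cL_{a_k}(\rho)\ln\rho)$ directly; these are the same estimate.
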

\begin{proof}
 Consider
$\frac{1}{l}\sum_{k}E_k-E_\cN$ as a positive contraction supported on $L_2(\cN)^\perp \subset L_2(\cM)$. Since $$\frac{1}{l}\sum_{k}E_k-E_\cN+\frac{1}{l}\cL_E=\frac{1}{l}\sum_{k}E_k-E_\cN+\frac{1}{l}\sum_{k}\id-E_k=\id-E_\cN\ ,$$
we have
\[ \Big\|{\frac{1}{l}\sum_{k=1}^l E_k-E_\cN:L_2(\M)\to L_2(\M)}\Big\|{}=1-\frac{\la_E}{l}=:\la\ .\]
Using $\Psi=\frac{1}{l}\sum_{k}E_k$ for the map in Corollary \ref{Coro:sharperAT}, we have the approximate tensorization
\begin{align}
   D(\rho\|E_{\cN}(\rho))\le 4\left\lceil\frac{2\ln d+1}{\ln(\lambda^{-1})} \right\rceil\,\sum_{k=1}^l\,D(\rho\|E_{k}(\rho))\,,
\end{align}
where we used the fact  $C_{\operatorname{cb}}(\mathbb{M}_d:\cN)\le C_{\operatorname{cb}}(\mathbb{M}_d:\mathbb{C})=d^2$.
For each $k$, we use the estimate \eqref{schur}, so that
\[ \frac{\la(\cL_{a_k})}{2
\ln(2d)}D(\rho\|E_{k}(\rho))\le -\tr(\cL_{a_k}(\rho)\ln \rho)\ .\]
Therefore,
\begin{align*}
D(\rho\|E_{\cN}(\rho))\le& 4\left\lceil\frac{2\ln d+1}{\ln(\lambda^{-1})} \right\rceil\,\sum_{k=1}^l\,D(\rho\|E_{k}(\rho)) \\
\le& -4\left\lceil\frac{2\ln d+1}{\ln(\lambda^{-1})} \right\rceil\,\sum_{k=1}^l \frac{2
\ln(2d)}{\la(\cL_{a_k})}\tr(\cL_{a_k}(\rho)\ln \rho)\\
\le &- 4\left\lceil\frac{2\ln d+1}{\ln(\lambda^{-1})} \right\rceil\, \frac{2
\ln(2d)}{\min_{k}\la(\cL_{a_k})}
\sum_{k=1}^l \tr(\cL_{a_k}(\rho)\ln \rho)\\
=&- 4\left\lceil\frac{2\ln d+1}{\ln(\lambda^{-1})} \right\rceil\, \frac{2
\ln(2d)}{\min_{k}\la(\cL_{a_k})}
\tr(\cL(\rho)\ln \rho)\ .
\end{align*}
Since both Corollary \ref{Coro:sharperAT} and \eqref{schur} remain true in their complete version, the above estimates remain the same for any $n\in\mathbb{N}$ and all $\rho\in \cD(\mathbb{C}^d\ten \mathbb{C}^n)$, which implies \[ \al_{\operatorname{CMLSI}}(\cL)\ge \left\lceil\frac{2\ln d+1}{\ln(\lambda^{-1})} \right\rceil^{-1}\, \frac{\min_{k}\la(\cL_{a_k})}{8
\ln(2d)}\ .
\]
\end{proof}
Next, we present a variation of above theorem using the detectability lemma \ref{thm:detectability}. 
\begin{corollary}\label{cor:symmetric2}
In the notations of Theorem \ref{thm:symmetric},
suppose that for each $k$, $a_k$ commutes with all but at most $m$ other elements of $\{a_1,\ldots,a_l\}$. Then
\[\al_{\operatorname{CMLSI}}(\cL)\ge \left\lceil\frac{2\ln d+1}{\ln(\lambda^{-1})} \right\rceil^{-1}\, \frac{\min_{k}\la(\cL_{a_k})}{8
\ln(2d)}\ ,\]
where $\la=\frac{1}{m^{-2}\la_E+1}$.
\end{corollary}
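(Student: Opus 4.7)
The plan is to mirror the argument of Theorem \ref{thm:symmetric}, but to replace the uniform-average contraction estimate for $\frac{1}{l}\sum_k E_k - E_\cN$ by a sharper ordered-product bound obtained from the detectability lemma, which is precisely where the commutation hypothesis pays off.

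First, I would set $\Phi^* := \frac{1}{2}\bigl(\prod_{k=1}^{l} E_k + \prod_{k=l}^{1} E_k\bigr)$, which is the second natural choice of map in Corollary \ref{Coro:sharperAT}. The commutation assumption, namely that each $a_k$ (equivalently each $E_k$) commutes with all but at most $m$ other $E_j$'s, means the $E_k$'s obey a locality structure on a graph of bounded degree $m$. This is exactly the regime in which the detectability lemma applies: it bounds the $L_2$-norm of the ordered product by
\[
\Bigl\|\prod_{k=1}^{l} E_k - E_\cN : L_2(\cM)\to L_2(\cM)\Bigr\|^2 \le \frac{1}{m^{-2}\lambda_E + 1} = \lambda,
\]
where $\lambda_E$ is the spectral gap of $\cL_E=\sum_k (\id - E_k)$. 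By symmetry the same bound holds for the reversed product, and averaging gives $\|\Phi^*-E_\cN: L_2\to L_2\|\le \sqrt{\lambda}$.

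Once this spectral estimate is in hand, the rest of the proof proceeds exactly as in Theorem \ref{thm:symmetric}. Applying Corollary \ref{Coro:sharperAT} (in its completely bounded form, using $C_{\operatorname{cb}}(\mathbb{M}_d:\cN)\le d^2$), I obtain the approximate tensorization
\[
D(\rho\|E_\cN(\rho)) \le 4\left\lceil\frac{2\ln d + 1}{\ln(\lambda^{-1})}\right\rceil \sum_{k=1}^l D(\rho\|E_k(\rho)),
\]
valid for all $\rho\in\cD(\mathbb{M}_d\otimes \mathbb{M}_n)$. Each term $D(\rho\|E_k(\rho))$ is then controlled by the entropy production of $\cL_{a_k}$ using the Schur-multiplier CMLSI bound \eqref{schur}, i.e.\ $\frac{\lambda(\cL_{a_k})}{2\ln(2d)} D(\rho\|E_k(\rho)) \le -\tr(\cL_{a_k}(\rho)\ln\rho)$. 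Summing over $k$ and taking the worst constant $\min_k \lambda(\cL_{a_k})$ gives the announced bound on $\alpha_{\operatorname{CMLSI}}(\cL)$.

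The only genuinely new ingredient beyond Theorem \ref{thm:symmetric} is the detectability lemma estimate; the rest is a bookkeeping substitution. The delicate step will be checking that the hypotheses of the detectability lemma (as formulated in Lemma \ref{thm:detectability}) are satisfied in the present GNS-symmetric matrix-algebra setting, in particular that the commutation hypothesis on the $a_k$'s indeed translates to the required ``bounded-degree commuting'' structure on the conditional expectations $E_k$ onto the kernels of the $\cL_{a_k}$. Once that translation is in place, everything assembles as above.
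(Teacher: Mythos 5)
Your proposal follows essentially the same route as the paper's proof: take $\Phi^*=\frac{1}{2}\bigl(\prod_k E_k+\prod_{k}^{\mathrm{rev}} E_k\bigr)$, invoke the detectability lemma (via the observation that $[a_k,a_{k'}]=0$ implies $[E_k,E_{k'}]=0$, since the $E_k$ are pinchings onto commuting spectral decompositions) to get the $L_2$-clustering constant $\la=\frac{1}{m^{-2}\la_E+1}$, and then feed this into Corollary \ref{Coro:sharperAT} and the Schur-multiplier bound \eqref{schur} exactly as in Theorem \ref{thm:symmetric}. The only quibble is a factor-of-two bookkeeping issue between the squared norm in Lemma \ref{thm:detectability} and the unsquared $\la$ appearing in $\ln(\la^{-1})$ — a discrepancy the paper's own proof also elides — so your argument is correct and matches the paper.
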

\begin{proof}
Recall that $E_k$ is the condition expectation onto the kernel of $\cL_{a_k}$. We have $[E_k,E_{k'}]=0$ if $[a_k,a_{k'}]=0$, and $E_\cN$ is the conditional expectation onto the intersection of the ranges of each $E_k$. Then by the detectability lemma \ref{thm:detectability}, given an arbitrary ordering of the conditional expectations
\[\norm{E_{1}E_2\ldots E_l-E_\cN:L_2(\tr)\to L_2(\tr)}{}\le \frac{1}{m^{-2}\la_E+1}\ .\]
By symmetry, this implies
\[\norm{\Phi-E_\cN:L_2(\tr)\to L_2(\tr)}{}\le \frac{1}{m^{-2}\la_E+1}\ \]
for $\Phi:=\frac{1}{2}(E_1E_2\ldots E_l+ E_lE_{l-1}\ldots E_1)$. The rest of argument is identical to that of the proof of Theorem \ref{thm:symmetric}.
\end{proof}

\begin{rem}{\rm The main difference between Theorem \ref{thm:symmetric} and Corollary \ref{cor:symmetric2} is that they use different maps from Corollary \ref{Coro:sharperAT}. When $\cL_{a_k}$ are all mutually non-commuting, it is likely that the map $\Psi=\frac{1}{l}\sum_{k}E_k$ provides tighter bounds. In the case when $\cL_{a_k}$ commutes with most $\cL_{a_{k'}}$, Corollary \ref{cor:symmetric2} is better, as argued for examples in Section \ref{sec:localsemigroups}.
}
\end{rem}

\begin{rem}{\rm For the spectral gap, we have that for each $k$
\[\min_{k}\la(\cL_{a_k}) \le \la(\cL_{a_k})\le \la(\cL)\ .\]
Also, the spectral gap of $\cL_E=\sum_{i}E_i-\id$ satisfies
\[ \la(\cL_E)\le \norm{\cL:L_2(\tr)\to L_2(\tr)}{}^{-1} \la(\cL)\ ,\]
because $\id-E_k\ge -\norm{\cL_{a_k}:L_2\to L_2}{}^{-1}\cL_{a_k}$ and
$\norm{\cL_{a_k}:L_2\to L_2}{}\le \norm{\cL:L_2\to L_2}{}$, since each generator $\cL_{a_k}$ is a non-positive operator in the $L_2$ sense.
Moreover in general we have $m\le l\le d^2$, so that the constant obtained in Theorem \ref{thm:symmetric} can be further estimated by
\[\al_{\operatorname{CMLSI}}(\cL)\ge \Big\lceil\frac{2\ln d+1}{-\ln(1-\frac{\la(\cL)}{d^2})} \Big\rceil^{-1}\, \frac{\min_{k}\la(\cL_{a_k})}{8
\ln(2d)}=O\Big( \frac{\la(\cL)\min_{k}\la(\cL_{a_k})}{(d\ln d)^2}\Big)\]
Thus in general, the constants obtained in Theorem \ref{thm:symmetric} and Corollary \ref{cor:symmetric2} are not tighter than that of Theorem \ref{theo:CLSI}.}
\end{rem}

\subsection{Groups and transference}\label{sec:transference}
We now briefly review the technique of transference of semigroups introduced in \cite{gao2020fisher}, which will be used in the next example.
Let $G$ be a compact group and $\mu$ be the Haar measure on $G$. Let $\mathbb{M}_d$ be the $d\times d$ matrix algebra and $u:G\to \mathbb{M}_d$ be a (projective) unitary representation. This induces the
transference map
\begin{align}\label{transferrencemap}
\pi: \mathbb{M}_d\to  L_\infty(G,\mathbb{M}_d),\quad  \pi(X)(g)=u(g)^\dagger Xu(g)\pl,
\end{align}
which is a trace preserving $*$-homomorphism. We say that a symmetric quantum Markov semigroup
$(\cP_t: \mathbb{M}_d\to \mathbb{M}_d)_{t\ge 0}$ on $\mathbb{M}_d$ is a semigroup transferred from a classical Markov semigroup $S_t:L_\infty(G) \to L_\infty(G)$ if the following diagram commutes
\begin{align*}
 \begin{array}{ccc}  L_{\infty}(G, \mathbb{M}_d)\pl\pl &\overset{ S_t\ten \id_{\mathbb{M}_d}}{\longrightarrow} & L_{\infty}(G, \mathbb{M}_d) \\
                    \uparrow \pi    & & \uparrow \pi  \\
                    \mathbb{M}_d\pl\pl &\overset{\cP_t}{\longrightarrow} & \mathbb{M}_d
                     \end{array} \pl .
                     \end{align*}
Then $\cP_t\cong (S_t\ten \id_{\mathbb{M}_d})|_{\pi(\mathbb{M}_d)}$ via the embedding $\pi$. More explicitly, let $S_t:L_\infty(G,\mu)\to L_\infty(G,\mu)$ be a right invariant classical Markov semigroup given by
\[ S_t f(g)=\int k_t(gh^{-1})f(h)d\mu(h) \]
with some kernel function $k_t$. Write $L_h f(g)=f(hg)$ as the action of left multiplication. Then
\[S_tf(g)=\int k_t(h^{-1}) L_hf(g) d\mu(h)  \] and the corresponding transferred semigroup is defined as
\[\cP_t(X)=\int k_t(h^{-1})u(h)^\dagger Xu(h) d\mu(h)\,.\]
In particular, if $S_t$ is ergodic, i.e. $\displaystyle\lim_{t\to\infty }S_tf= \mathbb{E}_\mu[f] \,1$, the fixed point subalgebra of $\cP_t$ coincides with the commutant of the representation
$$\cN=u(G)':=\{X\in\mathbb{M}_d \ |\ Xu(g)=u(g)X\ ,\ \forall\  g\in G\ \}\ .$$
The following is taken from \cite[Proposition 4.7]{gao2020fisher} which follows from the observation that  (see \cite[Lemma 4.6]{gao2020fisher}):
\begin{align*}
    \pi\circ\cP_t=(S_t\otimes \id_{\mathbb{M}_d})\circ\pi\,.
\end{align*}
\begin{proposition}\label{prop:transference}
Suppose $\cP_t=e^{\cL t}:\mathbb{M}_d\to \mathbb{M}_d$ is a semigroup transferred from a classical Markov semigroup $S_t=e^{L_Gt}:L_\infty(G)\to L_\infty(G)$. Then
\[ \la(L_G)\le \la(\cL)\pl, \pl \quad \al_{\operatorname{CMLSI}}(L_G)\le \al_{\operatorname{CMLSI}}(\cL)\\ .\]
\end{proposition}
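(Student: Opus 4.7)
The plan is to exploit the commutation relation $\pi \circ \cP_t = (S_t \otimes \id_{\mathbb{M}_d}) \circ \pi$ together with the fact that the transference map $\pi$, defined by $\pi(X)(g) = u(g)^\dagger X u(g)$, is a trace-preserving unital $*$-homomorphism from $\mathbb{M}_d$ into $L_\infty(G) \otimes \mathbb{M}_d$. Being a trace-preserving $*$-homomorphism, $\pi$ (and more generally its amplification $\pi \otimes \id_n$ for any $n$) is an $L_2$-isometry, respects the continuous functional calculus, and hence preserves the relative entropy: $D(\pi(\rho)\|\pi(\sigma)) = D(\rho\|\sigma)$ for any pair of states. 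A further key preliminary is the intertwining $\pi \circ E_\cN = (E_\mu \otimes \id_{\mathbb{M}_d}) \circ \pi$ between the projection onto the fixed-point algebra $\cN = u(G)'$ of $\cP_t$ and the projection $E_\mu = \int_G (\cdot)\,d\mu(g)$ onto the constants (the fixed points of the ergodic $S_t$); this follows by passing to the limit $t \to \infty$ in the commutation relation.

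For the spectral gap, I would use the variational characterisation
\[
\la(\cL) = \inf\left\{ \frac{\langle X, -\cL X\rangle_{\tr}}{\|X - E_\cN(X)\|_{2}^2} \,:\, X \in \mathbb{M}_d,\ X \ne E_\cN(X) \right\},
\]
and transfer both numerator and denominator through $\pi$: differentiating the commutation relation at $t=0$ gives $\langle X, -\cL X\rangle_{\tr} = \langle \pi(X), -(L_G \otimes \id)\pi(X)\rangle$, while the intertwining combined with the isometry property gives $\|X - E_\cN(X)\|_2^2 = \|\pi(X) - (E_\mu \otimes \id)\pi(X)\|_2^2$. Restricting the analogous infimum for $L_G \otimes \id$ to the subspace $\pi(\mathbb{M}_d)$ therefore yields $\la(\cL) \ge \la(L_G \otimes \id_{\mathbb{M}_d}) = \la(L_G)$, where the last equality is standard since tensoring a classical generator with the identity preserves the spectral gap.

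For the CMLSI comparison, I would fix $n \in \mathbb{N}$ and a state $\rho$ on $\mathbb{M}_d \otimes \cB(\mathbb{C}^n)$, and set $\omega := (\pi \otimes \id_n)(\rho)$, which is a state on $L_\infty(G) \otimes \mathbb{M}_d \otimes \cB(\mathbb{C}^n)$. Applying the defining CMLSI inequality of $L_G$, taken with reference system $\mathbb{M}_d \otimes \cB(\mathbb{C}^n)$, to $\omega$ and using both $(\pi \otimes \id_n) \circ (\cP_t \otimes \id_n) = (S_t \otimes \id \otimes \id_n) \circ (\pi \otimes \id_n)$ and the corresponding intertwining of the conditional expectations, together with the entropy-preservation of $\pi \otimes \id_n$, yields
\[
D(\cP_t \otimes \id_n(\rho) \,\|\, E_\cN \otimes \id_n(\rho)) \le e^{-\al_{\operatorname{CMLSI}}(L_G)\, t}\, D(\rho \,\|\, E_\cN \otimes \id_n(\rho)),
\]
which is exactly the CMLSI inequality for $\cL$ with constant $\al_{\operatorname{CMLSI}}(L_G)$; taking the supremum over admissible constants gives $\al_{\operatorname{CMLSI}}(\cL) \ge \al_{\operatorname{CMLSI}}(L_G)$.

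The main structural point to verify carefully is the entropy-preservation identity together with the intertwining of conditional expectations for the amplified maps; both follow because $\pi$ is a trace-preserving $*$-homomorphism and therefore commutes with the Borel functional calculus. Once these facts are in place, the two asserted inequalities reduce to rewriting the quantum-side quantities as restrictions of classical-side quantities on the embedded subalgebra $\pi(\mathbb{M}_d) \subset L_\infty(G) \otimes \mathbb{M}_d$, making both comparisons immediate.
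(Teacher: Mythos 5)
Your proposal is correct and follows essentially the same route as the paper, which cites \cite[Proposition 4.7]{gao2020fisher} and notes that the result rests precisely on the intertwining $\pi\circ\cP_t=(S_t\otimes\id_{\mathbb{M}_d})\circ\pi$ together with $\pi$ being a trace-preserving $*$-homomorphism (hence an $L_2$-isometry preserving relative entropy and intertwining the fixed-point conditional expectations). Both inequalities then reduce, exactly as you describe, to restricting the classical variational/entropic quantities to the embedded subalgebra $\pi(\mathbb{M}_d)$.
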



\subsection{Semigroups transferred from the sub-Laplacian on $\operatorname{SU}(2)$}\label{SU2sec}
Let $\operatorname{SU}(2)$ be the special unitary group on $\mathbb{C}^2$. Denote
\[ X=\left[\begin{array}{cc} 0& 1\\ -1&0 \end{array}\right]\pl, \quad Y=\left[\begin{array}{cc} 0& i\\ i&0 \end{array}\right]\pl, \quad Z=\left[\begin{array}{cc} i& 0\\ 0&-i \end{array}\right]\pl .\]
as the anti-selfadjoint Pauli matrices. Then
\[\operatorname{SU}(2)=\{ aX+bY+cZ+d\Id\ |\  |a|^2+|b|^2+|c|^2+|d|^2=1\}\pl\]
which is isomorphic to the $3$-sphere $\mathbb{S}^3$. Its Lie algebra is the anti-selfadjoint matrix  space $$i(\mathbb{M}_2)_{\operatorname{sa}}:=\text{span}\{X,Y,Z\}$$ equipped with the Lie bracket relations
\begin{align}\label{eq:lie} [X,Y]=2Z,\quad  [Y,Z]=2X,\quad [Z,X]=2Y\,.\end{align}
The canonical bi-invariant Riemannian metric on $\operatorname{SU}(2)$ admits $\text{span}\{X,Y,Z\}$ as an orthonormal basis. The representation theory of $\operatorname{SU}(2)$ gives the well-known spin structure of quantum mechanics, where any irreducible representation of $\operatorname{SU}(2)$ is indexed by an integer $m\in \mathbb{N}^+$.
Let $\eta_m:\mathfrak{su}(2)\to i(\mathbb{M}_{m})_{\operatorname{sa}}$ be the Lie algebra homomorphism induced by the $m$-th irreducible representation, and let $\{\ket{j}| j=1,\cdots,m \}$
be the eigenbasis of $\eta_m(Z)$.  Denote $X_m:=\eta_m(X)$, and similarly for $Y_m$ and $Z_m$ as short notations. Under the normalization of \eqref{eq:lie},
\begin{align*}
&X_m\ket{j}=\sqrt{(j-1)(m-j+1)}\,\ket{j-1}-\sqrt{(j+1)(m-j-1)}\,\ket{j+1}\\
 &Y_m\ket{j}=i\sqrt{(j-1)(m-j+1)}\,\ket{j-1}+i\sqrt{(j+1)(m-j-1)}\,\ket{j+1}\\
 &Z_m\ket{j}=(m-2j+1)\,\ket{j}\,.
\end{align*}
For each irreducible representation, we consider quantum Markov semigroups transferred from two classical Markov semigroups on $\operatorname{SU}(2)$. The heat semigroup $(e^{\Delta t})_{t\ge 0}$ on $\operatorname{SU}(2)$ is given by the Casimir operator
\[\Delta=X^2+Y^2+Z^2.\]
It follows from the complete Barkry-Emery Theorem \cite[Theorem 4.3]{li2020graph} that the heat semigroup has  $\al_{\operatorname{CMLSI}}(\Delta)\ge 2$ because its Ricci curvature is $1$.
Therefore, the transferred Lindbladian on $\mathbb{M}_{m}$
\[\cL_{m}^{\Delta}(\rho)=-[X_m,[X_m,\rho]]-[Y_m,[Y_m,\rho]]-[Z_m,[Z_m,\rho]]\pl\]
admits a uniform CMLSI constant $\al_{\operatorname{CMLSI}}(\cL_m^\Delta)\ge 2$ by Proposition \ref{prop:transference}.

The canonical sub-Laplacian on $\operatorname{SU}(2)$ is
\[ \Delta_H=X^2+Y^2\]
It is known (see e.g. \cite{Baudoin09}) that $\Delta_H$ is hypoelliptic and generates a classical Markov semigroup $(e^{\Delta_H t})_{t\ge 0}$.
Its transferred Lindbladian on $\mathbb{M}_m$ is given by
\[\cL_{m}^{H}(\rho)=-[X_m,[X_m,\rho]]-[Y_m,[Y_m,\rho]]\pl.\]
Although the CMLSI constant for the sub-Laplacian $\Delta_H$ is currently still unknown, we can use approximate tensorization to obtain a lower bound on $\al_{\operatorname{CMLSI}}(\cL_{m}^{H})$ for each $m$.

\medskip

\textbf{Case $m=2$:} It was observed in \cite[Corollary 4.10]{li2020graph} that
\[ e^{\cL_{2}^H t}(aX+bY+cZ+dI)=e^{-t}aX+e^{-t}bY+e^{-2t}cZ+d\Id\pl.\]
is exactly the Fermionic Ornstein-Uhlenbeck semigroup \cite{gross1975hypercontractivity,Carlen1993,carlen2014analog,carlen2017gradient}. Hence $\al_{\operatorname{CMLSI}}(\cL_{2}^{H})=2$.

\medskip

\textbf{Case $m=3$:} The operators $X_3$ and $Y_3$ take the following form in the eigenbasis of $Z_3$:
\begin{align*}
 X_3=\left[\begin{array}{ccc} 0& \sqrt{2}& 0\\ -\sqrt{2}& 0& \sqrt{2}\\ 0& -\sqrt{2}& 0 \end{array}\right]\pl,\qquad  Y_3=i\left[\begin{array}{ccc} 0& \sqrt{2}& 0\\ \sqrt{2}& 0& \sqrt{2}\\ 0& \sqrt{2}& 0 \end{array}\right]\pl.
\end{align*}
The spectra of both $X_3$ and $Y_3$ coincide with $\{2i,0,-2i\}$ and their eigenvectors are
\begin{align*}
 &\ket{x_1}=\frac{1}{2}(1,\sqrt{2},1), ~~\ket{x_2}=\frac{1}{2}(1,-\sqrt{2},1),~~\ket{x_3}=\frac{1}{\sqrt{2}}(1,0,-1)\\
 &\ket{y_1}=\frac{1}{2}(-1,i\sqrt{2},1),~~ \ket{y_2}=\frac{1}{2}(-1,-i\sqrt{2},1),~~\ket{y_3}=\frac{1}{\sqrt{2}}(1,0,1)\,.
\end{align*}
Keeping the notations of \Cref{eq:pinch}, we denote by $E_\mathcal{X}$ and $E_\mathcal{Y}$ the conditional expectations onto the basis of $X_3$ and $Y_3$, and by $E_\cN:=\tr(.)3^{-1}\Id$. Therefore, we have
\[O_{\mathcal{X},\mathcal{Y}}=
\left[\begin{array}{ccc} -\frac{1}{12}& -\frac{1}{12}& \frac{1}{6}\\ -\frac{1}{12}& -\frac{1}{12}& \frac{1}{6}\\ \frac{1}{6}& \frac{1}{6}& -\frac{1}{3} \end{array}\right]\  ,\quad  \la_1= \frac{1}{3}\,, \quad\la_2=\frac{1}{2}\,
\]
where $\la_1$ and $\la_2$ are defined as in Corollary \ref{cor:twobasis}. Denote $c_1,c_2$ and $c_3$ as the complete approximate tensorization constants obtained respectively in i), ii) and iii) of Corollary \ref{cor:twobasis}. Since $\la_1=\frac{1}{3}<\sqrt{2\ln 2-1}$ and $\la_2=\frac{1}{2}<\frac{1}{\sqrt{2}}$, we have
\begin{align*}
c_1=24,\  \quad c_2=11/2,\ \quad  c_3=2(1-\frac{1}{9}(2\ln 2-1)^{-1})^{-1}\approx 2.808<3 \,.
\end{align*}
Therefore, the constant $c_3$ provides the best complete approximate tensorization constant for this example. Denote $\rho_\cN:= (E_\cN\ten \id)(\rho)$, $\rho_X=( E_\mathcal{X}\ten \id)(\rho)$ and $\rho_Y=( E_\mathcal{Y}\ten \id)(\rho)$.
We obtained from Corollary \ref{cor:twobasis} that
\begin{align*}
D(\rho\|\rho_\cN)\le c_3\left(D(\rho\|\rho_X) + D(\rho\|\rho_Y))\le 3(D(\rho\|\rho_X) + D(\rho\|\rho_Y)\right)\,.
\end{align*}
Denoting
\[ \mathcal{L}_{X_3}(\rho)=-[X_3,[X_3,\rho]]\pl,  \qquad \mathcal{L}_{Y_2}(\rho)=-[Y_3,[Y_3,\rho]],\]
both by \eqref{schur}, we get
\[\al_{\operatorname{CMLSI}}(\mathcal{L}_{X_3})=\al_{\operatorname{CMLSI}}(\mathcal{L}_{Y_3})\ge \frac{1}{\ln 6}\pl.\]
Combining these bounds, we get the following bound on the CMLSI constant of $\cL_3^H$
\begin{align*}D(\rho\|\rho_\cN)\le & 3 (D(\rho\|\rho_X) + D(\rho\|\rho_Y))
 \le
\Big(3\ln 6\Big) (\operatorname{EP}_{\cL_{X_3}\otimes \id}(\rho) + \operatorname{EP}_{\cL_{Y_3}\otimes \id}(\rho))\pl.
\end{align*}
Namely, $\al_{\operatorname{CMLSI}}(\cL_3^H)\ge \big(3\ln 6\big)^{-1}\approx 0.18$.

\medskip

\textbf{Case $m=4$:} Here, the operators $X_4$ and $Y_4$ take the following form in the eigenbasis of $Z_4$:

\begin{align*}
 X_4=\left[\begin{array}{cccc} 0& \sqrt{3}& 0&0\\ -\sqrt{3}& 0& 2&0\\ 0& -2& 0&\sqrt{3}\\ 0& 0& -\sqrt{3}&0 \end{array}\right]\ ,  \qquad Y_4=i\left[\begin{array}{cccc} 0& \sqrt{3}& 0&0\\ \sqrt{3}& 0& 2&0\\ 0& 2& 0&\sqrt{3}\\ 0& 0& \sqrt{3}&0 \end{array}\right]\pl .
\end{align*}
Once again, the spectra of $X_4$ and $Y_4$ coincide with $\{3i,i,-i,-3i\}$, and their eigenvectors are
\begin{align*}
&\ket{x_1}=\frac{1}{2\sqrt{2}}(1,\sqrt{3},\sqrt{3},1),\  &\ket{x_2}=\frac{1}{2\sqrt{2}}(-\sqrt{3},-1,1,\sqrt{3}),
\\ &\ket{x_3}=\frac{1}{2\sqrt{2}}(\sqrt{3},-1,-1,\sqrt{3}),\ &\ket{x_4}=\frac{1}{2\sqrt{2}}(-1,\sqrt{3},-\sqrt{3},1),
\\ &\ket{y_1}=\frac{1}{2\sqrt{2}}(i,-\sqrt{3},-i\sqrt{3},1),\  &\ket{y_2}=\frac{1}{2\sqrt{2}}(-i\sqrt{3},1,-i,\sqrt{3}),
\\ &\ket{y_3}=\frac{1}{2\sqrt{2}}(i\sqrt{3},1,i,\sqrt{3}),\ &\ket{y_4}=\frac{1}{2\sqrt{2}}(-i,-\sqrt{3},i\sqrt{3},1)\,.
\end{align*}
Now, the matrix $O_{\mathcal{X},\mathcal{Y}}$ is equal to
\[O_{\mathcal{X},\mathcal{Y}}=\frac{1}{8}
\left[\begin{array}{cccc} -1& 1& 1& -1 \\ 1& -1& -1& 1 \\
1& -1& -1& 1 \\
-1& 1& 1& -1 \end{array}\right]\  , \quad \la_1= \frac{1}{8}\,,\quad  \la_2=\frac{1}{2}\,,
\]
where $\la_1$ and $\la_2$ are defined as in Corollary \ref{cor:twobasis}. Then, the approximate tensorization constants in Corollary \ref{cor:twobasis} are equal to
\begin{align*}
c_1=32,\  c_2=7,\  c_3=2(1-\frac{1}{64}(2\ln 2-1)^{-1})^{-1}\approx 2.084<2.1 \,.
\end{align*}
Once again, the constant $c_3$ provides us with the tightest bound. Moreover, by \eqref{schur},
\[\al_{\operatorname{CMLSI}}(\mathcal{L}_{X_4})=\al_{\operatorname{CMLSI}}(\mathcal{L}_{Y_4})\ge \frac{1}{\ln 8}\pl.\]
Then we get the following bound on the CMLSI constant of $\cL_4^H$:
\begin{align*}D(\rho\|\rho_\cN)\le & 2.1 (D(\rho\|\rho_X) + D(\rho\|\rho_Y))
 \le
(2.1\ln 8) (\operatorname{EP}_{\cL_{X_4}}(\rho) + \operatorname{EP}_{\cL_{Y_4}}(\rho))\pl.
\end{align*}
Namely, $\al_{\operatorname{CMLSI}}(\cL_4^H)\ge (2.1\ln 8)^{-1}\approx 0.22$

\medskip

\textbf{General case $m>4$:} One can numerically see that $\la_1\le \frac{1}{5}$ and
\begin{align*}
 c_3=2(1-\frac{1}{25}(2\ln 2-1)^{-1})^{-1}\approx 2.231<\frac{9}{4}\,.
\end{align*}
Furthermore, since both $X_m$ and $Y_m$ have integer spectrum $\{m-1,m-3,m-5,\cdots,-(m-1)\}$, the generators $\cL_{X_m}$ and $\cL_{Y_m}$ can be transferred from the Laplace operator $\Delta_\mathbb{T}$ on the unit torus $\mathbb{T}$, whose CMLSI constant is estimated in \cite[Theorem 4.12]{brannan2020complete}:
\[\al_{\operatorname{CMLSI}}(\Delta_\mathbb{T})\ge (2\ln 3)^{-1}\]
(note that our normalization differs by a factor of $2$). This by transference implies
\[\al_{\operatorname{CMLSI}}(\mathcal{L}_{X_m})=\al_{\operatorname{CMLSI}}(\mathcal{L}_{Y_m})\ge \frac{1}{2\ln(3)}\pl,\]
which in particular behaves better than the Schur multiplier bound
\[ \frac{1}{2\ln 3}=\frac{1}{\ln 9}\ge \frac{1}{\ln 2m}\]
for $m>4$.
 Therefore we obtain the following dimension free numerical bound $$\al_{\operatorname{CMLSI}}(\cL_m^H)\ge \Big(\frac{9}{2}\ln(3)\Big)^{-1}\approx 0.2\,,$$
 which can be compared to the bound of \Cref{theo:CLSI} ($\cL_m^H$ has a uniform spectral gap $1$ for all $m$ by transference, see \cite[Proposition 3.1]{Baudoin09}).
It remains open  whether the sub-Laplacian $\Delta_H$ itself satisfies CMLSI.

\section{Local semigroups}\label{sec:localsemigroups}

In this section, we consider symmetric Markov semigroups whose Lindblad operators act on edges of a given graph. Such generators have been extensively studied from the point of view of functional inequalities in the classical setting. (see e.g. \cite{stroock1992equivalence} and the references therein). Here, we use (i) the CMLSI constants found in \Cref{theo:CLSI} to control the local interaction in combination with (ii) the sharpening of the approximate tensorization constant found in \Cref{sec:symmetric} to derive asymptotically tight lower bounds on the CMLSI constant for various models of relevance.

Let $G:=(V,E)$ be a finite, connected and undirected graph with vertex set $V$, of cardinality $|V|=n$, and edge set $E:=\{(v,w)\in V\times V:\,v\sim w\}$. We recall that the degree $\operatorname{deg}(v)$ of a vertex $v\in V$ is the number of edges that are incident to $v$. Moreover, $G$ is said to be $\gamma$-regular if all vertices $v\in V$ have same degree $\gamma$. Important examples include finite groups through the scope of their Cayley graphs. 
Given a graph $G:=(V,E)$, the graph Laplacian $\Delta_G$ acting on the function spaces $\{f| f:V\to \mathbb{C} \}$ is defined as
\begin{align}\label{Laplaceclass}
    \Delta_G f(v)=\sum_{w:(w,v)\in E}\,(f(w)-f(v))\, ,
\end{align}
$\Delta_G$ is a negative semi-definite matrix on $l_2(V)$, 
which generates the heat semigroup $T_t=e^{\Delta_G t}$ on $l_\infty(V)$. Note that here we choose $\Delta_G$ to be negative to match our convention for quantum Lindbladians.
The spectral gap is defined as the gap between the first and second eigenvalue of $\Delta_G$.

 A sequence of $d$-regular graphs $\{G_i=(V_i,E_i)\}_{i\in\mathbb{N}}$ of increasing size $\lim_{i} |V_i|=+\infty$ is called a \textit{family of expander graphs} if there exists $\lambda_0>0$ such that the spectral gaps $\lambda(\Delta_{G_i})\ge \lambda_0$ uniformly for all $i$ \cite{hoory2006expander}. Modified logarithmic Sobolev inequalities for such generators have been widely considered in the classical literature,  (see e.g. the survey \cite{bobkov2003modified}).

There are two natural ways of defining a quantum Markov semigroup that retains the locality structure of the graph.
The first approach considered in \cite{li2020graph} introduces a \textit{quantized graph Laplacian} on the $n\times n$ dimensional matrix space ($n=|V|$) as follows
\begin{align}\label{eq:quantizedgraph}
    \Delta_G^{\operatorname{q}}(\rho):=\sum_{e\in E}2X_e\rho X_e-\,\{X_e^2,\rho\}\,,
\end{align}
where we identify vertices in $V$ with a fixed orthonormal basis $\{|v\rangle\}_{v\in V}$ of $\mathbb{M}_{n}$, and $X_e:=|v\rangle\langle w|-|w\rangle\langle v|$ is the transition operator for $e=(v,w)$. In this approach, the CMLSI constant of $\Delta_G^q$ was proved to be controlled by the CMLSI constant of the classical graph Laplacian $\Delta_G$ in \cite[Theorem 7.7]{li2020graph}:
\begin{align}\label{graphtransfer}
\frac{ \alpha_{\operatorname{CMLSI}}(\Delta_G)}{1+5\pi^2 \alpha_{\operatorname{CMLSI}}(\Delta_G)}  \le   \alpha_{\operatorname{CMLSI}}(\Delta_G^q)\le \alpha_{\operatorname{CMLSI}}(\Delta_G)\,.
\end{align}

A second approach to defining a quantum Markov semigroup with the locality structure of $G=(V,E)$ consists in introducing a local evolution on the $n$-fold tensor product $\cH_V:=\bigotimes_{v\in V}\cH_v$ of a given finite dimensional local Hilbert space $\cH$, namely, a $n$-qudit system for $d=\dim(\cH)$. The Lindblad operators are supported on the edges $e\in E$ as follows,
\begin{align}\label{eq:subsystemLindbladian}
    \mathcal{L}_{G}:=\sum_{e\in E}\,\mathcal{L}_e\,,\qquad\text{ where }\qquad \mathcal{L}_{e}(\rho):=\sum_{j\in J^{(e)}} L^{(e)}_j \rho L^{(e)}_j -\frac{1}{2}\{ L^{(e)}_j L^{(e)}_j,\,\rho\}\,,
\end{align}
where for any edge $e=(v,w)$ and $\{L^{(e)}_j\}_{j\in J^{(e)}}$ are the family of local Lindblad operators that act trivially on subsystems other than
$\cH_{v}\otimes \cH_{w}$. We call \eqref{eq:subsystemLindbladian} a \textit{subsystem Lindbladian} in order to distinguish it from the quantized graph Laplacian introduced in \Cref{eq:quantizedgraph}. In what follows, we  denote by $\cM_V:=\cB(\cH_V)$ (resp. $\cM_e:=\cB(\cH_{v}\otimes \cH_w)$) the algebra of operators on which $\cL_G$ (resp. $\cL_e$ for $e=(v,w)\in E$) acts. We also denote by $E_e$ ( resp. $E_G$) the conditional expectation projecting onto the kernel of $\mathcal{L}_e$ for $e\in E$ (resp, of $\mathcal{L}_G$). Finally, we introduce the indices $$C_G:=C_{\operatorname{cb}}(\cM_V:E_G(\cM_V))\pl ,\pl \quad C_e:=C_{\operatorname{cb}}(\cM_e:E_e(\cM_e)),$$ and $c_G$ as the minimum non-zero eigenvalue of the Choi-Jamiolkowski state $J_{E_G}$ of $E_{G}$. It is also useful to introduce the generator
\begin{align*}
    \widetilde{\cL}_G:=\sum_{e\in E}E_e-\id\,.
\end{align*}

As previously discussed, the lower bounds on the complete modified logarithmic Sobolev constant derived in \Cref{theo:CLSI} are asymptotically not tight when the total dimension is large. For instance, in the case of a primitive semigroup on $n$-qudit systems, the completely bounded Pimsner-Popa index is equal to $d^{n}$, where $d$ is the dimension of the local Hilbert space $\cH$. This gives lower bounds on the CMLSI constant of a subsystem Lindbladian that are exponentially small in the number of vertices. In the next theorem, we essentially leverage the locality structure of $\cL_G$ to provide exponentially tighter bounds by combining \Cref{theo:CLSI}, Corollary \ref{Coro:sharperAT}, Lemma \ref{lemmageneral} and the detectability lemma \cite{aharonov2009detectability,aharonov2010quantum,anshu2016simple,kastoryano2016quantum}
 (see \Cref{sec:detectabilitylemma}).

\begin{theorem}\label{thm:graphs}
Let $G=(V,E)$ be a finite, connected, undirected graph of maximum degree $\gamma$, and let $\mathcal{L}_G$ be a symmetric subsystem Lindbladian of the form \eqref{eq:subsystemLindbladian}. Then for all $m\in\mathbb{N}$ and any state $\rho\in\cD(\cH_V\otimes  \mathbb{C}^{\otimes m})$,
   \begin{align}\label{eq:ATgraphs}
     D(\rho\|E_{G}(\rho)) \le 4\Big\lceil\frac{ (1+\ln(C))}{\ln\big(\frac{\lambda(\widetilde{\cL}_G)}{4(\gamma-1)^2}+1\big)}\Big\rceil\,\sum_{e\in E}\,D(\rho\|E_{e}(\rho))\,,
     \end{align}
 where $C:=\min\{C_G,c_G^{-1}\}$ and  $\lambda(\widetilde{\cL}_G)$ is the spectral gap of $\widetilde{\cL}_G$. Moreover, the $\operatorname{CMLSI}$ constant for the generator $\cL_G$ satisfies
\begin{align}
\label{eq:CMLSIgraphs}
&\frac{1}{4}\Big\lceil\frac{\ln(C)+1}{\ln\big(\frac{\lambda(\widetilde{\cL}_G)}{4(\gamma-1)^2}+1\big)}\Big\rceil^{-1}\min_{e\in E}\frac{\lambda(\cL_e)}{C_e}\le \frac{1}{4}\Big\lceil\frac{\ln(C)+1}{\ln\big(\frac{\lambda(\widetilde{\cL}_G)}{4(\gamma-1)^2}+1\big)}\Big\rceil^{-1}\min_{e\in E}\alpha_{\operatorname{CMLSI}}(\cL_e) \le    \alpha_{\operatorname{CMLSI}}(\mathcal{L}_G)\,.
\end{align}
As a consequence, whenever $\lambda(\widetilde{\cL}_G)$ is uniformly lower bounded by a constant independent of $|V|$, $\alpha_{\operatorname{CMLSI}}(\widetilde{\cL}_G)=\Omega\Big(\frac{1}{\ln(C)}\Big)$, hence recovering the asymptotics of classical expanders.
\end{theorem}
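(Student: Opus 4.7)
The plan is to combine three ingredients assembled earlier in the paper: (i) the sharpened approximate tensorization of Corollary \ref{Coro:sharperAT}, (ii) the detectability lemma referenced as Theorem \ref{thm:detectability} (and its graph-theoretic interpretation), and (iii) the local CMLSI lower bound from Theorem \ref{theo:CLSI}. The target is first the approximate tensorization statement \eqref{eq:ATgraphs}, and the CMLSI bound \eqref{eq:CMLSIgraphs} will follow by combining it with the local CMLSI estimates.

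First, I would apply Corollary \ref{Coro:sharperAT} to the family of trace-preserving conditional expectations $\{E_e\}_{e\in E}$ whose common range is $E_G(\cM_V)$. Picking an arbitrary ordering $e_1,\dots,e_{|E|}$ of the edges, set
\[
    \Phi^*:=\frac{1}{2}\Big(E_{e_1}E_{e_2}\cdots E_{e_{|E|}} + E_{e_{|E|}}E_{e_{|E|-1}}\cdots E_{e_1}\Big).
\]
Corollary \ref{Coro:sharperAT} then yields \eqref{eq:ATgraphs} with the constant $4\lceil (\ln C_{\operatorname{cb}}(\cM_V:E_G(\cM_V))+1)/\ln(\lambda^{-1})\rceil$, where $\lambda:=\|\Phi-E_G:L_2(\tr)\to L_2(\tr)\|$.

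Next, I would estimate $\lambda$ via the detectability lemma. The key combinatorial input is that two local conditional expectations $E_e$ and $E_{e'}$ commute whenever the edges $e$ and $e'$ share no vertex, and in a graph of maximum degree $\gamma$ each edge is incident to at most $2(\gamma-1)$ other edges. Consequently the detectability lemma (Theorem \ref{thm:detectability}) gives
\[
    \Big\|E_{e_1}E_{e_2}\cdots E_{e_{|E|}}-E_G : L_2(\tr)\to L_2(\tr)\Big\|^{2} \le \Big(1+\frac{\lambda(\widetilde{\cL}_G)}{4(\gamma-1)^{2}}\Big)^{-1},
\]
and the same bound for the reversed product. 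Applying the triangle inequality to the symmetrized map $\Phi$ yields the same upper bound for $\lambda$, which after substitution produces the logarithmic factor $\ln\bigl(\lambda(\widetilde{\cL}_G)/(4(\gamma-1)^{2})+1\bigr)$ in the denominator of \eqref{eq:ATgraphs}. The alternative index $c_G^{-1}$ arises from the elementary inequality $C_{\operatorname{cb}}(\cM_V:E_G(\cM_V))\le c_G^{-1}$ (minimum nonzero Choi eigenvalue bounds the complete Pimsner–Popa index), so that $C:=\min\{C_G,c_G^{-1}\}$ suffices. Since the approximate tensorization in Corollary \ref{Coro:sharperAT} is stated in its completely bounded version, \eqref{eq:ATgraphs} is automatically valid after tensoring with $\id_{\mathbb{C}^{m}}$.

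Finally, for \eqref{eq:CMLSIgraphs}, I would combine \eqref{eq:ATgraphs} with the local CMLSI estimate of Theorem \ref{theo:CLSI}, which for each edge gives $\alpha_{\operatorname{CMLSI}}(\cL_e)\ge \lambda(\cL_e)/C_e$. Denoting $\alpha_\star:=\min_{e\in E}\alpha_{\operatorname{CMLSI}}(\cL_e)$ and using $D(\rho\|E_e(\rho))\le \alpha_\star^{-1}\operatorname{EP}_{\cL_e}(\rho)$ together with the additivity $\sum_{e}\operatorname{EP}_{\cL_e}(\rho)=\operatorname{EP}_{\cL_G}(\rho)$ (valid because $E_G\le E_e$ in the order of conditional expectations, so each local entropy production is a contribution to the global one), one obtains
\[
    D(\rho\|E_G(\rho)) \le 4\Big\lceil\frac{\ln C+1}{\ln\bigl(\lambda(\widetilde{\cL}_G)/(4(\gamma-1)^{2})+1\bigr)}\Big\rceil\,\alpha_\star^{-1}\operatorname{EP}_{\cL_G}(\rho),
\]
which is precisely \eqref{eq:CMLSIgraphs} in its amplified form. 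The expander asymptotics $\alpha_{\operatorname{CMLSI}}(\widetilde{\cL}_G)=\Omega(1/\ln C)$ is immediate by specializing $\cL_e=E_e-\id$, so that $\lambda(\cL_e)=\alpha_{\operatorname{CMLSI}}(\cL_e)=1$ and $\lambda(\widetilde{\cL}_G)$ is independent of $|V|$. The main technical obstacle I expect is calibrating the detectability lemma correctly: in particular, ensuring that the correct graph constant $2(\gamma-1)$ (rather than $\gamma$) governs the pairwise non-commutativity of the local projections $E_e$, and that the version of the detectability lemma actually used here is stated for general conditional expectations onto arbitrary commuting subalgebras, not merely for commuting orthogonal projections.
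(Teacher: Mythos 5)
Your proposal is correct and follows essentially the same route as the paper: Corollary \ref{Coro:sharperAT} applied to the ordered product of the $\{E_e\}_{e\in E}$, the detectability lemma (Lemma \ref{thm:detectability}) with $g=2(\gamma-1)$ to bound the $L_2$-norm of that product, and Theorem \ref{theo:CLSI} for the local CMLSI constants, the local entropy productions summing to the global one because each $\cL_e$ annihilates both $\ln E_{e*}(\rho)$ and $\ln E_{G*}(\rho)$. Two small remarks: since the semigroup is trace-symmetric, each $E_e$ is a genuine orthogonal projection on $L_2(\tr)$, so the detectability lemma applies verbatim and the worry you raise at the end is moot; on the other hand that lemma bounds the \emph{squared} norm, so strictly one only gets $\lambda\le\big(1+\lambda(\widetilde{\cL}_G)/4(\gamma-1)^2\big)^{-1/2}$ --- the paper makes the same simplification, which costs at most a factor $2$ inside the ceiling. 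Finally, the paper introduces the alternative constant $c_G^{-1}$ via a separate Choi-matrix argument (Lemma \ref{lemma:brandao}) rather than via the comparison $C_{\operatorname{cb}}\le c_G^{-1}$ that you invoke; that comparison does hold for subalgebras of $\cB(\cH)$ with trace-preserving conditional expectations, but it is worth being aware that the two constants come from two independent lemmas, which is what legitimizes taking the minimum.
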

\begin{proof}
We first establish \eqref{eq:ATgraphs}: by Corollary \ref{Coro:sharperAT}, the approximate tensorization constant of the family $\{E_e\}_{e\in E}$ of conditional expectations can be upper bounded by the constant $4k$, where $k$ is the integer such that
\begin{align*}
  (1-\eps)\,E_G\le   \Big(\prod_{e\in E}\,E_e\Big)^k\le (1+\eps)E_G\,,
\end{align*}
for $\eps:= \sqrt{\ln(2)-\frac{1}{2}}$. Here the ordering in the product $\prod_{e\in E}\,E_e$ is arbitrary. Next, we have from Lemmas \ref{lemmageneral} and \ref{lemma:brandao} in \Cref{sec:normstoorder} that $k$ can be chosen as
 \begin{align*}
k= \left\lceil\frac{\ln\frac{1}{\eps}+\ln(C)}{\ln\frac{1}{\lambda}}\right\rceil, \quad\text{ where }\quad \lambda:=\Big\|\prod_{e\in E}E_e-E_G:L_2\to L_2\Big\|,\quad \,C:=\min\{C_G,c_G^{-1}\}\,.
\end{align*}
Finally, the $L_2$-constant $\lambda$ can be controlled by the gap of the generator $\widetilde{\cL}_G$ using Lemma \ref{thm:detectability} of \Cref{sec:detectabilitylemma}:
\begin{align*}
\lambda=\Big\|\prod_{e\in E}E_e-E_G:L_2\to L_2\Big\| \le\frac{1}{\lambda(\tilde{\mathcal{L}}_G)/4(\gamma-1)^2+1}\,,
\end{align*}
where $\gamma$ is the maximum degree of the graph.
Note that $\ln\eps^{-1}\le 1$ for $\eps= (\ln(2)-\frac{1}{2})^{\frac{1}{2}}$.
 To lower bound $\alpha_{\operatorname{CMLSI}}({\cL}_e)$ in terms of its spectral gap and corresponding complete Pimsner-Popa index, we use \Cref{theo:CLSI}.

\end{proof}

\begin{rem}In the limit of large expander graphs (i.e.~$|V|\to\infty$, $\lambda(\widetilde{\cL}_G)\ge \lambda_0>0$),
using Corollary \ref{Coro:sharperAT} for the average map $\Phi=\frac{1}{|E|}\sum_{e}E_e$ instead of resorting to the detectability lemma for the product map $\Phi=\prod_{e\in E}\,E_e$ would lead to an asymptotically weaker bound. This is due to the dependence on $|E|$ of the quantity
\begin{align}\label{eq:fromlambdatogap}
\lambda:= \Big\|\frac{1}{|E|}\sum_{e\in E} E_e-E_G:L_2(\sigma)\to L_2(\sigma)\Big\|^2\equiv 1-\frac{1}{|E|}\lambda(\widetilde{\cL}_G)\ .
\end{align}
\end{rem}



In the following subsections, we illustrate the bounds derived in \Cref{thm:graphs} on some well-known models.
\subsection{Random permutations}\label{sec:randomperturb}
We consider quantum Markov semigroups introduced in \cite[Section IV.D]{bardet2019group} which represent the action of a random transposition gate applied to two registers $i,j$ on the $n$ qudit system $\cH_V=\cH^{\otimes n}$ with $\dim(\cH)=d$. Let $G=(V,E)$ be a finite graph with $|V|=n$. Denote the swap gate $S_{i,j}$ acting on registers of vertex $i$ and $j$ as
\begin{align}\label{eq:swap}
    S_{i,j}(|\psi\rangle \otimes |\varphi\rangle)=|\varphi\rangle\otimes |\psi\rangle\,,
\end{align}
for any two $|\psi\rangle,|\varphi\rangle\in\cH$. The generator of the \textit{quantum nearest neighbour random transposition} model \cite[Section IV.D]{bardet2019group} is defined as
\begin{align*}
    \cL^{\operatorname{NNRT}}_G(\rho):=\frac{1}{2}\sum_{(i,j)\in E}(S_{i,j}\rho S_{i,j}-\rho)\,.
\end{align*}
The above generator can simply be understood as that of the natural action of the permutation group $\mathcal{S}_n$ on $\cH_V$, which allows  infinitesimal transitions between random adjacent registers connected by edges. In other words, $\cL^{\operatorname{NNRT}}_G$ is the subsystem Lindbladian of the graph $G=(V,E)$ with the local Lindbladian at edge $(i,j)\in E$ given by
$$\mathcal{L}_e(\rho)=S_{i,j}\rho S_{i,j}-\rho=S_{i,j}\rho S_{i,j}-\frac{1}{2}\{S_{i,j}^2,\rho\}\ .$$
This corresponds to a classical random transposition on the permutation group $\mathcal{S}_n$ on $[n]:=\{1,\cdots, n\}$. Denote $\si_{ij}\in \mathcal{S}_{n}$ as the $2$-permutation switching $i$ and $j$. We consider the following classical generator on $\mathcal{S}_n$
\[\Delta_{\mathcal{S}_n}^G f(\si)=\frac{1}{2}\sum_{(i,j)\in E} (f(\si_{ij}\si)-f(\si))\ ,\quad \si \in \mathcal{S}_n\ .\]
\begin{proposition}\label{bounds:CMLSIRT} In the above notations, $\cL^{\operatorname{NNRT}}_G$ generates a semigroup transferred from $\Delta_{\mathcal{S}_n}^G$ on $\cB(\cH_V)$.
\end{proposition}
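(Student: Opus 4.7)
The plan is to verify directly the commutation of the transference diagram from Section~\ref{sec:transference} at the level of generators, which for norm-continuous semigroups on finite dimensional algebras is equivalent to the commutation of the semigroups for every $t\ge 0$. Concretely, I would take the compact group to be $\mathcal{S}_n$ equipped with the uniform (Haar) measure, and the unitary representation $u:\mathcal{S}_n\to \mathcal{B}(\cH_V)$ to be the natural permutation action on tensor factors, characterised by
\[
u(\sigma)\,(|\psi_1\rangle\otimes\cdots\otimes|\psi_n\rangle)=|\psi_{\sigma^{-1}(1)}\rangle\otimes\cdots\otimes|\psi_{\sigma^{-1}(n)}\rangle\,.
\]
The key observation is that this representation sends every transposition $\sigma_{ij}\in\mathcal{S}_n$ to the swap gate: $u(\sigma_{ij})=S_{i,j}$, and that each $S_{i,j}$ is self-adjoint and involutive. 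This identifies the generators of the relevant semigroups as
\[
\Delta_{\mathcal{S}_n}^G=\frac{1}{2}\sum_{(i,j)\in E}(L_{\sigma_{ij}}-\id)\,,\qquad \cL^{\operatorname{NNRT}}_G(X)=\frac{1}{2}\sum_{(i,j)\in E}\bigl(u(\sigma_{ij})^\dagger X u(\sigma_{ij})-X\bigr)\,,
\]
with $L_\tau f(g)=f(\tau g)$ the left-multiplication action.

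Next I would verify the commutation $\pi\circ\cL^{\operatorname{NNRT}}_G=(\Delta_{\mathcal{S}_n}^G\otimes\id_{\mathcal{B}(\cH_V)})\circ\pi$ for the transference map $\pi(X)(g)=u(g)^\dagger X u(g)$. Applied to $X\in\mathcal{B}(\cH_V)$ and evaluated at $g\in\mathcal{S}_n$, the left-hand side is
\[
u(g)^\dagger\,\cL^{\operatorname{NNRT}}_G(X)\,u(g)=\frac{1}{2}\sum_{(i,j)\in E}\bigl(u(g)^\dagger u(\sigma_{ij})^\dagger X u(\sigma_{ij}) u(g)-u(g)^\dagger X u(g)\bigr)\,,
\]
while the right-hand side is
\[
\tfrac{1}{2}\sum_{(i,j)\in E}\bigl(u(\sigma_{ij}g)^\dagger X u(\sigma_{ij}g)-u(g)^\dagger X u(g)\bigr)\,,
\]
and equality follows term by term from the homomorphism property $u(\sigma_{ij}g)=u(\sigma_{ij})u(g)$. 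Since both $\Delta_{\mathcal{S}_n}^G$ and $\cL^{\operatorname{NNRT}}_G$ are bounded operators and $\pi$ is a trace-preserving $*$-homomorphism, exponentiating yields $\pi\circ e^{t\cL^{\operatorname{NNRT}}_G}=(e^{t\Delta_{\mathcal{S}_n}^G}\otimes\id)\circ\pi$ for all $t\ge 0$, which is exactly the commutation defining transference.

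There is no genuine obstacle here; the only point that requires a bit of care is to match the conventions for left versus right translation on $\mathcal{S}_n$ (the cited framework in Section~\ref{sec:transference} uses right-invariant semigroups written through left-multiplication kernels), and to check that the choice of $u(\sigma_{ij})=S_{i,j}$ is consistent with the convolution structure used in the transference formula. Once those bookkeeping conventions are fixed, the identification $\cP_t\cong (S_t\otimes\id)|_{\pi(\mathcal{B}(\cH_V))}$ is immediate, and as a direct corollary one obtains the comparison bounds $\lambda(\Delta_{\mathcal{S}_n}^G)\le\lambda(\cL^{\operatorname{NNRT}}_G)$ and $\alpha_{\operatorname{CMLSI}}(\Delta_{\mathcal{S}_n}^G)\le\alpha_{\operatorname{CMLSI}}(\cL^{\operatorname{NNRT}}_G)$ from Proposition~\ref{prop:transference}.
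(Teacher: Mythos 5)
Your proposal is correct and follows essentially the same route as the paper: the paper also verifies the transference diagram at the level of generators, using $u(\sigma_{ij})=S_{i,j}$ (extended multiplicatively to $S_\sigma$ for general permutations) and the identity $S_{\sigma_{ij}\sigma}=S_{i,j}S_\sigma$, which is exactly your homomorphism step. Your added remarks on exponentiating the bounded generators and on the left-multiplication convention are consistent with the framework of Section~\ref{sec:transference} and require no further justification.
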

\begin{proof}It suffices to verify the transference diagram for the generator
\begin{align*}
 \begin{array}{ccc}  L_{\infty}(\mathcal{S}_n, \cB(\cH) )\pl\pl &\overset{ \Delta_{\mathcal{S}_n}^G\ten \id_{\cB(\cH_V)}}{\longrightarrow} & L_{\infty}(\mathcal{S}_n, \cB(\cH) ) \\
                    \uparrow \pi    & & \uparrow \pi  \\
                    \cB(\cH)\pl\pl &\overset{\mathcal{L}^{\operatorname{NNRT}}_G}{\longrightarrow} & \cB(\cH)
                     \end{array} \pl .
                     \end{align*}
where
$\pi(\rho)(\si)=S_{\si}^\dag \rho S_{\si}$
is the transference map defined in \Cref{transferrencemap}. Here the permutation gate $S_\si$ is defined as the composition $S_\si=S_{\si_{i_1,j_1}}\cdots S_{\si_{i_k,j_k}}$ for $\si= \si_{i_1,j_1}\cdots \si_{i_k,j_k}$. Indeed,
\begin{align*}
\big(\Delta^G_{\mathcal{S}_n}\ten \id_{\cB(\cH_V)}\big)\circ \pi(x)(\si)=&
\frac{1}{2}\sum_{(i,j)\in E} \pi(x)(\si_{i,j}\si)-\pi(x)(\si)
\\ =& \frac{1}{2}\sum_{(i,j)\in E} S_{\si_{i,j}\si}^\dag x (S_{\si_{i,j}\si})-S_{\si}^\dag \rho S_{\si}
\\ =& \frac{1}{2}\sum_{(i,j)\in E} S_{\si}^\dag S_{i,j} x S_{i,j} S_{\si}-S_{\si}^\dag \rho S_{\si}
\\ =& S_{\si}^\dag\Big(\frac{1}{2}\sum_{(i,j)\in E} S_{i,j} x S_{i,j} - \rho \Big)S_{\si}=\pi(\mathcal{L}^{\operatorname{NNRT}}_G(x)) (\si) \ .
\end{align*}
\end{proof}

It follows from the transference principle of \Cref{sec:transference} that $\al_{\operatorname{CMLSI}}(\Delta_{\mathcal{S}_n}^G)\le \al_{\operatorname{CMLSI}}(\mathcal{L}^{\operatorname{NNRT}}_G)$ and similarly for spectral gap $\la(\Delta_{\mathcal{S}_n}^G)\le \la(\mathcal{L}^{\operatorname{NNRT}}_G)$. In particular, these  lower bounds on $\al_{\operatorname{CMLSI}}(\mathcal{L}^{\operatorname{NNRT}}_G)$ and $\la(\mathcal{L}^{\operatorname{NNRT}}_G)$ are independent of the local dimension $d=\dim(\cH)$. Whenever the CMLSI constant or the spectral gap of $\Delta_{\mathcal{S}_n}^G$ are known, the transference gives strong estimates.

\begin{example}{(Random transposition) }{\rm  Consider the (full)
\textit{quantum random transposition model} and
\begin{align*}
    \mathcal{L}^{\operatorname{RT}}(\rho):=\frac{1}{2}\sum_{i\ne j}\,(S_{i,j}\rho S_{i,j}-\rho)\,.
\end{align*}
This corresponds to $G=(V,E)$ being the complete graph. The corresponding classical random transposition model on $\mathcal{S}_n$
\[\Delta_{\mathcal{S}_n} f(\si)=\frac{1}{2}\sum_{i\neq j} (f(\si_{i,j}\si)-f(\si))\ .\]
was well-studied. $\Delta_{\mathcal{S}_n}$ was proven to have spectral gap $\lambda(\Delta_{\mathcal{S}_n})=n$ \cite{diaconis1981generating}
and MLSI constant \cite{gao2003exponential} (see also \cite{bobkov2003modified} where slightly worse bounds were derived using different techniques) \begin{align*}
\frac{n}{2}\le     \alpha_{\operatorname{MLSI}}(\Delta_{\mathcal{S}_n})\le 2n\,.
\end{align*}
using the martingale method of \cite{lee1998logarithmic}. Recently, the CMLSI constant of the generator $\Delta_{\mathcal{S}_n}$ was shown in \cite{li2020complete} to satisfy the same bounds
\begin{align*}
 \frac{n}{2} \le  \alpha_{\operatorname{CMLSI}}(\Delta_{\mathcal{S}_n})\le 2 n \,.
\end{align*}
By Proposition \ref{bounds:CMLSIRT}, we thus have
\begin{align*}
 \frac{n}{2}\le    \alpha_{\operatorname{CMLSI}}(\cL^{\operatorname{RT}})\le 2 n\,.
\end{align*}
The upper bounds follows from the spectral gap and the fact the representation $\mathcal{S}_n$ on $\cH_{V}=\cH^{\ten n}$ contains all irreducible component \cite{Audenaert06}.
For this example, the degree of a vertex is $\frac{n(n-1)}{2}$, which scales quadratically with $n$. As we will see below, \eqref{eq:CMLSIgraphs} would provide asymptotically worse bounds than the transference method of Proposition \ref{bounds:CMLSIRT}.
}
\end{example}

For a general graph $G=(V,E)$, $\Delta_{\mathcal{S}_n}^G$ is the graph Laplacian of the Cayley graph of $\mathcal{S}_n$ with generating set $\{\si_{ij}\  |  \ (i,j)\in E \}$.
It was proved in \cite{li2020graph} that
for a graph Laplacian $\Delta_G$,
\[\al_{\operatorname{CMLSI}}(\Delta_G)\ge \frac{2}{45\,dl^2}\]
where $\gamma$ is the maximum degree of $G$ and $l$ is the number of egdes of a minimum spanning tree, which is $l=|G|+1$. Here, we have $V=\mathcal{S}_n$ with $|\mathcal{S}_n|=n!$ growing exponentially.
This exponential growth also appears if we use Theorem \ref{theo:CLSI}
\[ \frac{\la(\Delta_{\mathcal{S}_n}^G)}{n!}\le \al_{\operatorname{CMLSI}}(\Delta_{\mathcal{S}_n}^G) \le 2\la(\Delta_{\mathcal{S}_n}^G)\,.\]
since $C_{\operatorname{cb}}(l_\infty(\mathcal{S}_n);\mathbb{C})=|\mathcal{S}_n|=n!$. We show in the following that Theorem \ref{thm:graphs} gives a lower bound on the CMLSI constant for $\mathcal{L}^{\operatorname{NNRT}}_G$ that has exponentially better dependence of $|G|=n$ (and is also independent of $d=\dim(\cH)$).

\begin{corollary}
Let $G=(V,E)$ be a connected finite graph and let $\mathcal{L}_{\operatorname{NNRT}}^G$ be the generator of the quantum nearest neighbour random transposition model defined as above. Then
\begin{align*}
    \alpha_{\operatorname{CMLSI}}(\cL^{\operatorname{NNRT}}_G)\ge \frac{1}{4}\Big\lceil\frac{(\ln(n!)+1)}{\ln\big(\frac{\lambda(\cL^{\operatorname{NNRT}}_G)}{4(\gamma-1)^2}+1\big)}\Big\rceil^{-1}\,,
\end{align*}
where $\gamma$ is the maximal degree of $G$ and $\la(\cL^{\operatorname{NNRT}}_G)$ is the spectral gap.
\end{corollary}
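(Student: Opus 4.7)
The plan is to apply Theorem~\ref{thm:graphs} directly to the subsystem Lindbladian $\cL_G^{\operatorname{NNRT}}$. Three ingredients are required: (i) an identification of $\widetilde{\cL}_G$ with $\cL_G^{\operatorname{NNRT}}$ so that the relevant spectral gap is literally $\lambda(\cL_G^{\operatorname{NNRT}})$, (ii) a uniform lower bound on the local CMLSI constants $\alpha_{\operatorname{CMLSI}}(\cL_e)$, and (iii) an upper bound on the completely bounded Pimsner--Popa index of the global fixed-point algebra.

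The first two ingredients follow from the single algebraic observation that each $S_{i,j}$ is a self-adjoint unitary with $S_{i,j}^2=\Id$. This implies that the conditional expectation onto $\ker \cL_e$ is the averaging map $E_e(\rho)=\tfrac{1}{2}(\rho+S_{i,j}\rho S_{i,j})$, and that $\cL_e=E_e-\id$. Summing this identity over edges yields $\widetilde{\cL}_G=\sum_{e\in E}(E_e-\id)=\cL_G^{\operatorname{NNRT}}$, so $\lambda(\widetilde{\cL}_G)=\lambda(\cL_G^{\operatorname{NNRT}})$. Moreover, because $\cL_e=E_e-\id$ with $E_e$ a conditional expectation, for any bipartite state $\rho$ the entropy production equals $D(\rho\|(E_e\otimes\id)(\rho))+D((E_e\otimes\id)(\rho)\|\rho)$, which is at least $D(\rho\|(E_e\otimes\id)(\rho))$; this gives $\alpha_{\operatorname{CMLSI}}(\cL_e)\ge 1$ uniformly in $e$.

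The substantive step is the index computation. Since $G$ is connected, the transpositions $\{\sigma_{i,j}:(i,j)\in E\}$ generate all of $\mathcal{S}_n$, so the fixed-point algebra of $\cL_G^{\operatorname{NNRT}}$ coincides with the commutant $\mathcal{S}_n'\subset \cB(\cH^{\otimes n})$ of the canonical permutation representation. Schur--Weyl duality decomposes this commutant as $\bigoplus_\lambda \cB(V_\lambda)\otimes \mathbb{C}\Id_{W_\lambda}$ with $\lambda$ indexing the irreducible $\mathcal{S}_n$-representations, so formula \eqref{formula} together with the Peter--Weyl identity $\sum_\lambda (\dim W_\lambda)^2=|\mathcal{S}_n|=n!$ gives $C_G=n!$. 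Plugging the three facts $\widetilde{\cL}_G=\cL_G^{\operatorname{NNRT}}$, $\min_{e\in E}\alpha_{\operatorname{CMLSI}}(\cL_e)\ge 1$, and $C\le C_G=n!$ into the lower bound \eqref{eq:CMLSIgraphs} of Theorem~\ref{thm:graphs} immediately produces the advertised inequality. The conceptual point is that the combinatorial factor entering the logarithm is the size of the symmetric group $n!$ rather than the ambient Hilbert-space dimension $d^n$, which is what delivers the exponential improvement over the generic bound of Theorem~\ref{thm:1}.
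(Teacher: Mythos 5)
Your proposal is correct and follows essentially the same route as the paper: identify $\cL_e=E_e-\id$ with $E_e(\rho)=\tfrac12(\rho+S_{i,j}\rho S_{i,j})$ so that $\widetilde{\cL}_G=\cL_G^{\operatorname{NNRT}}$ and the local CMLSI constants are at least $1$, compute $C_G\le n!$ via Schur--Weyl duality and formula \eqref{formula}, and plug into \eqref{eq:CMLSIgraphs}. The only difference is that you spell out the entropy-production identity $\operatorname{EP}_{\cL_e}(\rho)=D(\rho\|E_e(\rho))+D(E_e(\rho)\|\rho)$ behind the local bound $\alpha_{\operatorname{CMLSI}}(\cL_e)\ge 1$, which the paper leaves implicit.
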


\begin{proof}
We first note that for each edge, $$\mathcal{L}_e(\rho)=\frac{1}{2}(S_{i,j}\rho S_{i,j}-\rho)=E_{i,j}(\rho)-\rho,$$ where
$E_{i,j}(\rho)=\frac{1}{2}(S_{i,j}\rho S_{i,j}+\rho)$ is a conditional expectation onto the symmetric space on $\cH_{i}\ten \cH_j$. Then
given the second bound derived in \Cref{thm:graphs}, it suffices to calculate the index $C_G:=C_{\operatorname{cb}}(\cB(\cH_V):\cN)$, where $\cN$ is the fixed point subalgebra of $\cL^{\operatorname{NNRT}}_G$. Since $G$ is connected, then $\{\si_{i,j}\ |\ (i,j)\in E \}$ is a generating set
for $\mathcal{S}_n$. Thus $\cN$ is the commutant of the representation
$$\pi:\mathcal{S}_n\to\cB(\cH_V)\ ,\ \quad  \pi(\si_{i,j})=S_{i,j}.$$
As discussed in Example \ref{exam:urep} of \Cref{sec:normstoorder},
the index is $$C_{\operatorname{cb}}(\cB(\cH_V):\cN)=\sum_{\pi_i\in \operatorname{Irr}(\mathcal{S}_n), \pi_i\subset \pi}m_i^2,$$
where $m_i$ is the dimension of irreducible representation $\pi_i$ and the summation is over all irreducible representations in the decomposition of $\pi=\oplus_{i}\pi_i\ten \id_{{n_i}}$. By the expression provided in \eqref{formula}, we know all irreducible representations (up to unitary equivalent) are contained in $\pi$. Then by Schur-Weyl Theorem,
\[C_G=C_{\operatorname{cb}}(\cB(\cH_V):\cN)=\sum_{\pi_i\in \operatorname{Irr}(\mathcal{S}_n)}m_i^2=|\mathcal{S}_n|=n! \ .\]
\end{proof}

\subsection{Approximate unitary designs}\label{sec:designs}
In this subsection, we consider local quantum Markov semigroups converging to the Haar unitary $k$-designs over the unitary group $\operatorname{U}(d^n)$:
\begin{align}\label{eq:asymptoticsdesigns}
    e^{t\cL^{(k)}}(\rho)\underset{t\to \infty}{\to} \mathcal{D}^{(k)}_{\operatorname{Haar}}(\rho):=\mathbb{E}_{\operatorname{Haar}(d^n)}\big[ U^{\otimes k}\rho (U^\dagger)^{\otimes k} \big]\,.
\end{align}
A generator $\cL^{(k)}$ with the above asymptotic behavior can be understood as a continuous-time version of a local random quantum circuit \cite{brandao2016local}. Such circuits are of great importance in various areas of quantum information, quantum computing and physics (see e.g. \cite{brandao2016local,onorati2017mixing} and the references therein). Here, we mainly consider two classes of local quantum Markov semigroups with the property \eqref{eq:asymptoticsdesigns}. Before defining their generators in more detail, we
recall the notion of universal (gate) set and distributions.
\begin{Def}\label{def:universalmeasure}
$\operatorname{(i)}$ Let $\mathcal{U}=\{U_1,\cdots,U_m\}$ be a finite subset in the group $\operatorname{U}(N)$ (resp. $\operatorname{SU}(N)$) of $N\times N$ unitary (resp. special unitary) matrices. The set $\mathcal{U}$ is said to be universal if the subgroup generated by $\mathcal{U}$ is dense in $\operatorname{U}(N)$ (resp. $\operatorname{SU}(N)$).\\
$\operatorname{(ii)}$ Let $\mu$ be a Borel probability measure on $\operatorname{U}(N)$ (resp. $\operatorname{SU}(N)$). Then $\mu$ is said to be universal if for all $V\in \operatorname{U}(N)$ (resp. $\operatorname{SU}(N)$) and any $\delta>0$, there exists a positive integer $\ell$ such that
\begin{align*}
    \mathbb{P}_{\mu^{\otimes \ell}} (U_1\ldots U_\ell \in B_\delta(V))>0\,,
\end{align*}
where $B_\delta(V)$ is the open ball around $V$ of radius $\delta>0$ and the distance can for instance be chosen as the geodesic distance.
\end{Def}
\noindent The (ii) above is a probabilistic generalization of (i) (see also \cite{harrow2009random,onorati2017mixing}). Indeed, it follows from the definition that for a universal set $\mathcal{U}=\{U_1,\cdots,U_m\}$, the discrete probability measure
\[ \mu_{\mathcal{U}}=\frac{1}{m}\sum_{i=1}^m \delta_{U_i}\]
is a universal measure supported on $\mathcal{U}$.

Next, given a probability measure $\mu$ on $\operatorname{U}(N)$ and an integer $k$, we define the unital channel $\Phi^{(k)}_\mu$ on $\cB(\mathbb{C}^{N^k})$ as
\begin{align*}
    \Phi^{(k)}_\mu:\rho\mapsto \mathbb{E}_\mu\big[U^{\otimes k} \rho\,(U^\dagger)^{\otimes k} \big]\,.
\end{align*}
When the measure $\mu$ is universal in the sense of Definition \ref{def:universalmeasure}, it was shown in \cite[Lemma 3.7]{harrow2009random} that for each $k\in \mathbb{N}$, there exists $\eps(k)<1$ such that
$  \|\Phi_\mu^{(k)}-\cD_{\operatorname{Haar}}^{( k)}:L_2\to L_2\|\le\eps(k)<1$, which implies that $(\Phi_\mu^{(k)})^{\ell}\to \cD_{\operatorname{Haar}}^{(k)}$ as $\ell\to\infty$. This first convergence result opened the way to the study and refinements of the speed of convergence of such Markovian dynamics. Below,
we collect various notions of closeness to $\cD_{\operatorname{Haar}}^{(k)}$ considered in the literature \cite{low2010pseudo,brandao2016local} and introduce a new entropic variant. We recall that the diamond norm $\norm{\cdot}{\diamond}$ of a map $\Phi$ is defined as
    \[\norm{\Phi}{\diamond}\,:=\,\norm{\Phi\ten \id_\cH:\mathcal{T}_1(\cH\ten \cH)\to \mathcal{T}_1(\cH\ten \cH)}{}\pl.\]
    where $\mathcal{T}_1(\cH)$ is the space of trace class operators on a Hilbert space $\cH$.
\begin{Def}\label{def:designs}
Let $\mu$ be a probability measure on $\operatorname{U}(N)$. Then $\mu$ is
\begin{itemize}
    \item[(i)] an $(N,\eps,k)$-quantum tensor product expander, $\operatorname{TPE}$ in short, if
    \begin{align}
        \|\Phi_\mu^{(k)}-\cD_{\operatorname{Haar}}^{( k)}:L_2\to L_2\,\|\le \eps\,.
    \end{align}
    We denote by $c_2(\mu,k)$ the smallest constant $\eps$ achieving the above bound;
    \item[(ii)] an $\eps$-approximate unitary $k$-design in $\operatorname{cp}$-order (completely positive order) if
    \begin{align*}
      (1-\eps) \cD_{\operatorname{Haar}}^{(k)}  \le_{\operatorname{cp}}\Phi_\mu^{(k)}\le_{\operatorname{cp}}(1+\eps) \cD_{\operatorname{Haar}}^{(k)}\,.
    \end{align*}
    We denote by $c_{\operatorname{cp}}(\mu,k)$ the smallest constant $\eps$ achieving the above bound;
    \item[(iii)] an $\eps$-approximate unitary $k$-design in diamond norm if
    \begin{align*}
         \|\Phi_\mu^{(k)}-\cD_{\operatorname{Haar}}^{( k)}\|_{\diamond}\le \eps\,.
    \end{align*}
    We denote by $c_{\diamond}(\mu,k)$ the smallest constant $\eps$ achieving the above bound.
\item[(iv)] a complete $\eps$-approximate unitary $k$-design in relative entropy if for all $m\in \mathbb{N}$ and any state $\rho\in\cD(\mathbb{C}^{N^{km}})$,
\begin{align*}
    D((\Phi_\mu^{(k)}\otimes \id_{m})(\rho)\|(\cD_{\operatorname{Haar}}^{(k)}\otimes \id_{m})(\rho) )\le \eps \,D(\rho\|(\cD_{\operatorname{Haar}}^{(k)}\otimes \id_{m})(\rho))\,.
\end{align*}
We denote by $c_{\operatorname{Ent}}(\mu,k)$ the smallest constant $\eps$ achieving the above bound.
\end{itemize}
\end{Def}
\begin{lemma}\label{lemma:eps}
In the notations of Definition \ref{def:designs}, for $\eps<1$,
\begin{align}\label{eq:old}
\frac{c_{\operatorname{cp}}(\mu,k)}{N^{2k}}    \le c_{\diamond}(\mu,k)\le 2c_{\operatorname{cp}}(\mu,k)\,,\qquad\frac{c_2(\mu,k)}{2N^{k/2}}\le c_{\operatorname{cp}}(\mu,k)\le N^{2k}c_2(\mu,k)\,
\end{align}
\begin{align}\label{eq:new}
c_{2}(\mu,k)^2\le c_{\operatorname{Ent}}(\mu,k)\le c_{\operatorname{cp}}(\mu,k)\,, \qquad c_{\diamond}(\mu,k)^2\le 2c_{\operatorname{Ent}}(\mu,k)\,\ln\binom{k+2N-1}{k}\,.
\end{align}
\end{lemma}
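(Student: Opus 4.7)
The six inequalities split naturally into two groups. The four bounds in \eqref{eq:old} are standard norm-comparison relations between different matrix norms and operator orderings. The upper bound $c_\diamond \le 2 c_{\operatorname{cp}}$ I would prove by writing $\Phi_\mu^{(k)}-\cD_{\operatorname{Haar}}^{(k)} = \tfrac{\eps}{2}(T_+ - T_-)$ where $T_\pm := \cD\pm\eps^{-1}(\Phi-\cD)$ are both CPTP for $\eps=c_{\operatorname{cp}}$, so each has unit diamond norm, and the triangle inequality finishes the job. The reverse bound $c_{\operatorname{cp}} \le N^{2k} c_\diamond$ (and analogously $c_{\operatorname{cp}}\le N^{2k}c_2$) follows from the Pimsner--Popa index estimate $C_{\operatorname{cb}}(\cB(\mathbb{C}^{N^k}):\cN)=N^{2k}$ recorded in \eqref{eq:indicesex}, which allows one to upgrade a diamond (resp.\ $L_2$) bound to a completely-positive-order bound at the cost of a factor of the index. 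The remaining estimate $c_2\le 2N^{k/2}c_{\operatorname{cp}}$ is a straightforward matrix-norm comparison using $\|\cdot\|_{L_2}\le d^{1/2}\|\cdot\|_{\infty}$ on $\mathbb{M}_d$.

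For the upper bound $c_{\operatorname{Ent}}\le c_{\operatorname{cp}}$ in \eqref{eq:new}, set $\eps := c_{\operatorname{cp}}(\mu,k)$. The completely-positive inequalities yield the convex decomposition $\Phi_\mu^{(k)} = (1-\eps)\,\cD_{\operatorname{Haar}}^{(k)} + \eps\,\widetilde{\Psi}$ for some quantum channel $\widetilde{\Psi}$, and one checks from $\Phi_\mu^{(k)}\circ\cD_{\operatorname{Haar}}^{(k)}=\cD_{\operatorname{Haar}}^{(k)}$ (which in turn follows since $\cD^{(k)}_{\operatorname{Haar}}(\rho)$ is $U^{\otimes k}$-invariant) that $\widetilde{\Psi}\circ\cD_{\operatorname{Haar}}^{(k)} = \cD_{\operatorname{Haar}}^{(k)}$. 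By joint convexity of the relative entropy,
\[
    D((\Phi_\mu^{(k)}\otimes\id)(\rho)\|(\cD_{\operatorname{Haar}}^{(k)}\otimes\id)(\rho)) \le \eps\, D((\widetilde{\Psi}\otimes\id)(\rho)\|(\widetilde{\Psi}\otimes\id)(\cD_{\operatorname{Haar}}^{(k)}\otimes\id)(\rho)),
\]
and the data processing inequality for $\widetilde{\Psi}\otimes\id$ bounds the right-hand side by $\eps\, D(\rho\|(\cD_{\operatorname{Haar}}^{(k)}\otimes\id)(\rho))$. For the lower bound $c_2^2\le c_{\operatorname{Ent}}$, I would use the same second-derivative technique as in the proof of Theorem \ref{thm:local}: the function $f(t):=c_{\operatorname{Ent}}\,D(\rho_t\|\cD_{\operatorname{Haar}}^{(k)}(\rho_t))-D(\Phi_\mu^{(k)}(\rho_t)\|\cD_{\operatorname{Haar}}^{(k)}(\rho_t))$ with $\rho_t := \Id/N^k + t\eta$ vanishes to first order at $t=0$, so that $f''(0)\ge 0$; combined with the second-derivative formula from Lemma \ref{lemma:keylemma}, this rearranges to the $L_2$-contraction bound, and optimizing over $\eta$ (with ancilla for the complete version) gives $c_{\operatorname{Ent}}\ge c_2^2$.

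The most delicate step, and the one I expect to be the main obstacle, is the last bound $c_\diamond^2 \le 2\,c_{\operatorname{Ent}}\ln\binom{k+2N-1}{k}$. Combining the quantum Pinsker inequality $\|(\Phi-\cD)\otimes\id_n(\rho)\|_1^2 \le 2\,D((\Phi\otimes\id_n)(\rho)\|(\cD\otimes\id_n)(\rho))$ with the definition of $c_{\operatorname{Ent}}$ reduces the problem to the universal bound
\[
    \sup_{n\in\mathbb{N}}\ \sup_{\rho\in\cD(\mathbb{C}^{N^k}\otimes\mathbb{C}^n)}\, D\bigl(\rho\,\bigl\|\,(\cD_{\operatorname{Haar}}^{(k)}\otimes\id_n)(\rho)\bigr) \le \ln\binom{k+2N-1}{k}.
\]
This is the substantive step because the naive choice $\sigma := (\Id_A/N^k)\otimes\rho_B$ in the range of $\cD_{\operatorname{Haar}}^{(k)}\otimes\id_n$ only yields a bound of order $k\ln N$, which is strictly weaker than $\ln\binom{k+2N-1}{k}$ in the regime $k\gg N$. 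To get the sharper estimate I would identify $\binom{k+2N-1}{k}$ with the dimension of $\mathrm{Sym}^k(\mathbb{C}^{2N})$ and exploit the Schur--Weyl decomposition of the doubled action of $\mathrm{U}(N)\otimes\overline{\mathrm{U}(N)}$ on $(\mathbb{C}^N\otimes(\mathbb{C}^N)^*)^{\otimes k}$ to construct an explicit invariant reference state with eigenvalues bounded below by $\binom{k+2N-1}{k}^{-1}$, so that the max-relative-entropy bound $D\le D_{\max}\le \ln\binom{k+2N-1}{k}$ applies uniformly in the ancilla dimension.
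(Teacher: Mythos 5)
Your overall architecture coincides with the paper's. The paper disposes of the four inequalities in \eqref{eq:old} by citing Lemmas 3 and 4 of \cite{brandao2016local}, and your sketches for these are the standard arguments. Your proof of $c_{\operatorname{Ent}}\le c_{\operatorname{cp}}$ via the convex splitting $\Phi_\mu^{(k)}=(1-\eps)\cD_{\operatorname{Haar}}^{(k)}+\eps\widetilde{\Psi}$ together with joint convexity and data processing is exactly what the paper means by ``convexity and monotonicity of the relative entropy''; your second-derivative argument for $c_2^2\le c_{\operatorname{Ent}}$ is precisely the lower bound of Theorem \ref{thm:local}, which the paper invokes through Corollary \ref{cor:SDPI}. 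For the final inequality, both you and the paper reduce the problem, via Pinsker's inequality, to the ancilla-uniform entropy bound $D(\rho\|(\cD_{\operatorname{Haar}}^{(k)}\otimes\id_n)(\rho))\le\ln\binom{k+2N-1}{k}$.

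The gap is in how you propose to establish that last bound. A fixed invariant reference state $\si_0$ on $\cH$ with $\mu_{\min}(\si_0)\ge C^{-1}$ does not yield an estimate that is uniform in the ancilla: for a bipartite state $\rho_{AB}$ the quantity to control is $D(\rho_{AB}\|(\cD_{\operatorname{Haar}}^{(k)}\otimes\id)(\rho_{AB}))$, and passing to a product reference $\si_0\otimes\rho_B$ via $D\le D_{\max}$ requires the operator inequality $\rho_{AB}\le c\,\si_0\otimes\rho_B$, whose best generic constant is $c=d_A\,\mu_{\min}(\si_0)^{-1}$ (the factor $d_A=N^k$ is saturated by maximally entangled inputs). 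This reintroduces exactly the $k\ln N$ term you are trying to avoid in the regime $k\gg N$. What is needed instead is the inequality $\rho\le C\,(\cD_{\operatorname{Haar}}^{(k)}\otimes\id_n)(\rho)$ comparing $\rho$ to its \emph{own} conditional expectation, i.e.\ the completely bounded Pimsner--Popa index $C_{\operatorname{cb}}(\cB(\mathbb{C}^{N^k}):\cN)$, which by \eqref{formula} applied to the Schur--Weyl decomposition equals $\sum_i m_i^2$ with $m_i$ the dimensions of the $\operatorname{GL}(N)$-irreducibles contained in $U\mapsto U^{\otimes k}$; then $D\le D_{\max}\le \ln C_{\operatorname{cb}}$ holds uniformly in $n$ with no reference state at all. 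This is the route the paper takes (see Example \ref{exam:urep}). Relatedly, your representation-theoretic identification does not land on the stated constant: the symmetric subspace of the doubled space $(\mathbb{C}^N\otimes(\mathbb{C}^N)^*)^{\otimes k}$ is $\mathrm{Sym}^k(\mathbb{C}^{N^2})$, of dimension $\binom{k+N^2-1}{k}$, not $\mathrm{Sym}^k(\mathbb{C}^{2N})$, so the construction as described does not produce $\binom{k+2N-1}{k}$; the paper obtains that binomial coefficient from the combinatorial identities cited in \cite{Halverson2019set} and \cite{Audenaert06} rather than from this construction.
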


\begin{proof}
The two relations in \eqref{eq:old} can be found in Lemmas 3 and 4 of \cite{brandao2016local}. For the first bounds in
\eqref{eq:new}, note that $c_{\text{Ent}}(\mu,k)$ is the CSDPI constant for the channel $\Phi_\mu^{(k)}$. That
$ c_{\text{Ent}}\le c_{\operatorname{cp}}$ follows from convexity and monotonicity of the relative entropy, whereas $c_{2}^2\le c_{\text{Ent}} $ is the lower bound in Corollary \ref{cor:SDPI}. The second bound in \eqref{eq:new} follows from Pinsker inequality and the expression for the cb-index.
Indeed, by the discussion in Example \ref{exam:urep} of \Cref{sec:normstoorder},
\[C_{\operatorname{cb}}\big(\cB(\mathbb{C}^{N^k}):\cD_{\operatorname{Haar}}^{(k)}(\cB(\mathbb{C}^{N^k}))\big)=\sum_{i}m_i^2\,,\]
where $m_i$ is the dimension of an irreducible representation of $\operatorname{GL}(N)$ contained in the $k$-tensor power representation $\pi_k:U\to U^{\ten k}$ on $(\mathbb{C}^{N})^{\ten k}$. It follows from the combination of \cite[Equation 3.5]{Halverson2019set} and \cite[Example 5]{Audenaert06} that
\begin{align*}
   C_{\operatorname{cb}}\big(\cB(\mathbb{C}^{N^k}):\cD_{\operatorname{Haar}}^{(k)}(\cB(\mathbb{C}^{N^k}))\big)=\,\binom{k+2N-1}{k}\, .
\end{align*}
\end{proof}
From now on, we always assume that the measures $\mu$ considered are symmetric and supported on $\operatorname{U}(N)$, which implies the self-adjointness of the maps $\Phi_\mu^{(k)}$ with respect to the Hilbert-Schmidt inner product. We introduce several classes of subsystem Lindbladians which turn out to generate approximate unitary designs in relative entropy. Here, we fix a graph $G=(V,E)$ and its associated $n$ qudit system $\cH_V=\otimes_{j\in V}\cH_j$. We consider the total Hilbert space $\cH_V^{\ten k}=\otimes_{j\in V}\cH_j^{\ten k}$ which has dimension $d^{nk}$. The first class can be understood as a continuous-time extension of the maps considered in \cite{brandao2016local}: given a universal measure on $\operatorname{SU}(d^2)$:
\begin{align}\label{def:Lmu}
    \cL_\mu^{(k)}:=\sum_{e\in E}(\Phi_\mu^{(k)})_{e}-\id\,,
\end{align}
where for each edge $e=(i,j)\in E$, the map $(\Phi_\mu^{(k)})_{e}\equiv \Phi_\mu^{(k)}$ is chosen to act locally on the $2k$-qudit system $\cB(\cH_e^{\ten k})\cong\cB(\cH_i^{\ten k}\ten \cH_j^{\ten k})$.
By universality of $\mu$, we know from \cite[Lemma 3.7]{harrow2009random} that for each edge $e$,
\begin{align*}
(\Phi_\mu^{(k)})_e^{\ell}\underset{\ell\to\infty}{\to} E_e^{(k)}(\rho):=\cD_{\operatorname{Haar},e}^{(k)}(\rho)\equiv \mathbb{E}_{\operatorname{Haar}(d^2)}\big[ (U_e\otimes \Id_{e^c})^{\otimes k}\,\rho\, (U_e^\dagger \otimes \Id_{e^c})^{\otimes k}\big] \,.
\end{align*}
In the specific case when the local measure $\mu$ is taken to be the Haar measure, we denote the corresponding subsystem Lindbladian by
\begin{align}\label{def:Lhaar}
    \cL_{\operatorname{Haar}}^{(k)}:=\sum_{e\in E}E_e^{(k)}-\id\,.
    \end{align}
Since the special unitary group on $\cH_V$ is generated by unitaries acting on each edge system $\cH_e$, the fixed point space is equal to $$\cap_{e\in E}\{E_e^{(k)}(\cB(\cH_V^{\ten k}))\}=\cD_{\operatorname{Haar}}^{(k)}(\cB(\cH_V^{\ten k}))\,,$$ which implies the convergence \eqref{eq:asymptoticsdesigns} for the semigroups generated by $\cL_\mu^{(k)}$ and $\cL_{\operatorname{Haar}}^{(k)}$ by basic ergodic theory \cite{frigerio1982long}.
In the case of a nearest neighbour graph, the spectral gap of the latter was considered in \cite[Theorem 5]{brandao2016local} by mapping it to the frustration free, local Hamiltonian of a nearest neighbour spin chain, whose ground space is spanned by matrix-product states \cite{fannes1992finitely,perez2006matrix}. Such Hamiltonians were previously considered in \cite{nachtergaele1996spectral} where their gap was shown to be controlled by its value on local subregions. Combining this fact with a local control of the constant by means of a path coupling method \cite{bubley1997path} already used in \cite{Oliveira2009} to study the Wasserstein convergence of the Kac model on the unitary group, \cite{brandao2016local} found:
\begin{align}\label{haar}
    \lambda(\cL_{\operatorname{Haar}}^{(k)})\ge \Big(42500\left\lceil \frac{\ln(4k)}{\ln(d)}\right\rceil^2\,d^2\,k^5\,k^{\frac{3.1}{\ln(d)}} \Big)^{-1}\,.
\end{align}

For our third example, we take $\mu$ to be a discrete measure. For a universal gate set $\mathcal{U}=\{U_1,\cdots,U_m\} \subset \operatorname{SU}(N)$, we denote the unital channel for the uniform discrete measure on $\mathcal{SU}$ as
 \begin{align}
    \Phi^{(k)}_\mathcal{U}:\rho\mapsto \frac{1}{m}\sum_{i=1}^m U_i^{\otimes k} \rho\,(U_i^\dagger)^{\otimes k}  \,.\label{eq:discretedesign}
\end{align}
It was a beautiful result by Bourgain and Gamburd \cite{BG08,BG12} that for every symmetric universal set $\{U_1,\cdots,U_m\}\subset \operatorname{SU}(N)$ with each
$U_i$ composed of algebraic entries, the Hecke operator
\[ H_\mathcal{U}: L_2(\operatorname{SU}(N))\to L_2(\operatorname{SU}(N))\ ,\qquad H_\mathcal{U}f(g)=\frac{1}{m}\sum_{i=1}^mf(U_i\,g) \ ,\]
has a spectral gap. Namely, \begin{align}\label{eq:BGgap}\varepsilon(\mathcal{U}):=\norm{H_\mathcal{U}-\mathbb{E}_{\operatorname{Haar}(N)}:L_2(\operatorname{SU}(N))\to L_2(\operatorname{SU}(N)) }{}<1\end{align} where $\mathbb{E}_{\operatorname{Haar}(N)}$ denotes the mean corresponding to the Haar measure. It then follows from the transference method of Proposition \ref{prop:transference} that $$\|\Phi_{\mathcal{U}}^{(k)}-\cD_{\operatorname{Haar}}^{( k)}:L_2\to L_2\,\|\le\varepsilon <1$$ independently of $k$. This was already used in \cite{brandao2016local} to obtain estimates for approximate random unitary designs constructed from \eqref{eq:discretedesign}. 
For a universal set $\mathcal{U}\subset \operatorname{SU}(N)$,  we define the associated subsystem Lindbladian  as
 \begin{align}\label{def:LU}
    \cL_{\mathcal{U}}^{(k)}:=\sum_{e\in E}(\Phi_{\mathcal{U}}^{(k)})_e-\id\,.
    \end{align}
Thanks to Bourgain and Gamburd's spectral gap theorem, if $\mathcal{U}$ is symmetric and each
$U_i\in U$ contains only algebraic entries,
the local Lindbladians $(\Phi_{\mathcal{U}}^{(k)})_e-\id$ admit a uniform spectral gap independent of $k$, which we denote by $\la(\mathcal{U}):=1-\eps(\mathcal{U})$

Another family of a subsystem Lindbladians satisfying the asymptotic behavior \eqref{eq:asymptoticsdesigns} corresponds to local Brownian motions on the unitary group \cite{onorati2017mixing}:
\begin{align}\label{heatsemigroup}
    \cL^{(k)}_{\operatorname{Heat}}:=\,\sum_{e\in E} \,\cL_{e}^{(k)} \,,
\end{align}
where for each edge $e\equiv(i,j)\in E$, $\cL_{e}^{(k)}$ is the quantum Markov semigroup transferred from the heat semigroup on the unitary group $\operatorname{U}(d^2)$ over $2$ qudits via the $k$-th tensor power representation. More precisely, let $\{A^{(i)}\}_{i=1}^{d^4-1}$ be an orthonormal basis of the Lie algebra $\mathfrak{su}(d^2)$\footnote{The orthonormality of the basis $\{A^{(i)}\}_{i=1}^{d^4-1}$ is not necessary and was in fact not required in \cite{onorati2017mixing}. Here, we assume it for sake of conciseness of our exposition.} and let $\pi_{e,k},k:U\mapsto U^{\otimes k}$ be the representation of $\operatorname{U}(d^2)$ onto $\cH_{e}^{\otimes k}\cong (\cH_i\ten \cH_j)^{\otimes k}$,
\begin{align}\label{localbrownianmotiongap}
    \cL_{e}^{(k)}(\rho):=\cL_{\operatorname{Heat},e}^{(k)}=a\sum_{i=1}^{d^4-1}2\pi_{e,k}(A^{(i)})\,\rho\,\pi_{e,k}(A^{(i)})-\{\pi_{e,k}(A^{(i)})^2,\rho\}\,,
\end{align}
for some fixed parameter $a>0$. Here with a slight abuse of notations, we used $\pi_{e,k}$ to also denote the Lie algebra representation of $\mathfrak{su}(d^2)$ onto $\cB(\cH_{e}^{\otimes k})$. That is, for any $A\in\mathfrak{su}(d^2)$,
\begin{align*}
    \pi_{e,k}(A):=\sum_{j=1}^k A_j\otimes \Id_{j^c}\,,
\end{align*}
where $A_j\equiv A$ acts on $k$-th tensor copy of $\cH_e$. The gap $\lambda(\cL^{(k)}_{\operatorname{Heat}})$ was estimated in the proof of \cite[Theorem 9]{onorati2017mixing} in order to derive a lower bound on the time needed for the semigroup generated by $\cL_{\operatorname{Heat}}^{(k)}$ to become an $(N,\eps,k)$-TPE:
\begin{align}\label{eq:gapapproxunitarydesign}
   \lambda(\cL^{(k)}_{\operatorname{Heat}})\ge a\,\Big(42500\left\lceil \frac{\ln(4k)}{\ln(d)}\right\rceil^2\,d^2\,k^5\,k^{\frac{3,1}{\ln(d)}} \Big)^{-1}\,\,.
\end{align}
The proof of \eqref{eq:gapapproxunitarydesign} required a control of the gap in terms of (i) the global control of the gap of $\cL_{\operatorname{Haar}}^{(k)}$ found in \eqref{haar}, and (ii)
 a control of the gap of the local generator in \eqref{localbrownianmotiongap} using the representation theory of $\operatorname{SU}(d^2)$:
\begin{align}\label{localgap}
\lambda(\cL^{(k)}_{\operatorname{Heat},e})={a}\,.
\end{align}
In fact, \eqref{localgap} can be directly found by the transference principle of Proposition \ref{prop:transference}. An even stronger convergence for the local generator can be bound by transferring its Ricci curvature: up to normalization, we can assume that the Riemannian metric is given by the negative Killing form, and in this case the Ricci curvature of  $\operatorname{SU}(d^2)$ is $1/4$ (see \cite[Section 7.1]{Milnor76}). It then follows from the complete Bakry-\'{E}mery theorem \cite[Theorem 4.3]{li2020graph} and Proposition \ref{prop:transference} that
\begin{align}\label{eq:CMLSIfronricci}
2a\,\equiv 2\la(\cL_{\operatorname{Heat},e}^{(k)})\ge \al_{\text{CMLSI}}(\cL_{\operatorname{Heat},e}^{(k)})\ge a/2\,.
\end{align}
In particular, both the local CMLSI constant and the local spectral gap are independent of $k$.


We shall now discuss the convergence in terms of relative entropy for the four subsystem Lindbladians introduced above:
\begin{align}\label{generator:design}
 \mathcal{L}^{(k)}_\mu\ ,\  \mathcal{L}^{(k)}_{\operatorname{Haar}}\ , \ \mathcal{L}^{(k)}_{\mathcal{U}}\ , \ \mathcal{L}^{(k)}_{\operatorname{Heat}}\ .
 \end{align}
Note that any of these Lindbladians can be realized as a transferred semigroup on $\operatorname{U}(\cH_V)=\operatorname{U}(d^n)$. Nevertheless, due to the lack of information on either the spectral gap or the CMLSI constant of the corresponding classical Markov semigroup, the transference method does not give concrete estimates here. Instead, we use \Cref{thm:graphs} to derive asymptotically sharper bounds on their CMLSI constants.
\begin{theorem}\label{thm:designs}
Assume that the graph $G=(V,E)$ has a linear subgraph, i.e. a subgraph $G'=(V,E')\subseteq G$ whose vertices can be listed in the order
$v_1,v_2,\ldots,v_n$ such that the edges are $(v_i,v_{i+1})$ where
$i=1,\ldots,n-1$.
\begin{enumerate}
\item[$\operatorname{(i)}$] for the subsystem Lindbladian $\mathcal{L}_{\operatorname{Haar}}^{(k)}$ associated to the Haar measure,
   \begin{align}\label{bounddesignsbrownbetter}
 \frac{1}{4}\Big\lceil \frac{\,(2kn\ln(d)+1)}{\ln\Big(\Big(680000\left\lceil \frac{\ln(4k)}{\ln(d)}\right\rceil^2\,d^2\,k^5\,k^{\frac{3.1}{\ln(d)}} \Big)^{-1}+1\Big)}\Big\rceil^{-1}\le \alpha_{\operatorname{CMLSI}}(\mathcal{L}^{(k)}_{\operatorname{Haar}})\,.
  \end{align}
   \item[$\operatorname{(ii)}$] for the subsystem Lindbladian $\mathcal{L}_{\mu}^{(k)}$ associated to a symmetric universal measure $\mu$,
  \begin{align}\label{bounddesignshaarbetter}
\frac{1-\eps(k)}{4\binom{k+2d^2-1}{k}}\,\Big\lceil \frac{\,(2kn\ln(d)+1)}{\ln\Big(\Big(680000\left\lceil \frac{\ln(4k)}{\ln(d)}\right\rceil^2\,d^2\,k^5\,k^{\frac{3.1}{\ln(d)}} \Big)^{-1}+1\Big)}\Big\rceil^{-1}\le  \alpha_{\operatorname{CMLSI}}(\mathcal{L}^{(k)}_{{\mu}})\,,
\end{align}
where $\eps(k):=\|\Phi_\mu^{(k)}-\cD_{\operatorname{Haar}}^{( k)}:L_2\to L_2\|<1$.

\item[$\operatorname{(iii)}$] for the subsystem Lindbladian $\mathcal{L}_{\mathcal{U}}^{(k)}$ associated to a symmetric universal gate set $\mathcal{U}=\{U_1,\cdots, U_m\}$ with each
$U_i$ composed of algebraic entries,
 \begin{align}\label{bounddesignshaarbetter}
\frac{1-\eps}{4\binom{k+2d^2-1}{k}}\,\Big\lceil \frac{\,(2kn\ln(d)+1)}{\ln\Big(\Big(680000\left\lceil \frac{\ln(4k)}{\ln(d)}\right\rceil^2\,d^2\,k^5\,k^{\frac{3.1}{\ln(d)}} \Big)^{-1}+1\Big)}\Big\rceil^{-1}\le  \alpha_{\operatorname{CMLSI}}(\mathcal{L}^{(k)}_{\mathcal{U}})\,,
\end{align}
where $\varepsilon<1$ is defined in \eqref{eq:BGgap}.
\item[$\operatorname{(iv)}$] for the subsystem Lindbladian $\mathcal{L}_{\operatorname{Heat}}^{(k)}$ with local Brownian motion,
 \begin{align}\label{bounddesignshaarbetter}
\frac{a}{8}\Big\lceil \frac{\,(2kn\ln(d)+1)}{\ln\Big(\Big(680000\left\lceil \frac{\ln(4k)}{\ln(d)}\right\rceil^2\,d^2\,k^5\,k^{\frac{3.1}{\ln(d)}} \Big)^{-1}+1\Big)}\Big\rceil^{-1}\le  \alpha_{\operatorname{CMLSI}}(\mathcal{L}^{(k)}_{\operatorname{Heat}})\,.
\end{align}
\end{enumerate}
\end{theorem}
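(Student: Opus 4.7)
The common strategy is to apply Theorem \ref{thm:graphs} to each of the four subsystem Lindbladians in \eqref{generator:design}, which requires three ingredients: (a) a lower bound on the minimal local CMLSI constant $\min_{e\in E}\alpha_{\operatorname{CMLSI}}(\cL_e)$, (b) a lower bound on the spectral gap $\lambda(\widetilde{\cL}_G)$ of the associated projection-based generator $\widetilde{\cL}_G=\sum_{e\in E}E_e-\id$, and (c) an upper bound on the index $C$ appearing in the denominator. Ingredients (b) and (c) are common to all four cases, while ingredient (a) is treated case by case.

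For (c), since the total Hilbert space $\cH_V^{\otimes k}$ has dimension $d^{nk}$, we trivially have $C\le d^{2nk}$, which produces the $2kn\ln(d)+1$ factor in the numerator. For (b), I would exploit the linear subgraph $G'\subseteq G$: because $\widetilde{\cL}_G-\widetilde{\cL}_{G'}=\sum_{e\in E\setminus E'}E_e\ge 0$ and both operators share the same kernel (the common range of the $E_e^{(k)}$ for edges spanning $V$ coincides with the full Haar projection $\cD_{\operatorname{Haar}}^{(k)}$ by connectedness), a standard min-max argument gives $\lambda(\widetilde{\cL}_G)\ge \lambda(\widetilde{\cL}_{G'})$. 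But $\widetilde{\cL}_{G'}$ is exactly $\cL^{(k)}_{\operatorname{Haar}}$ on a linear graph, whose gap is controlled by the Brand\~ao-Harrow-Horodecki estimate \eqref{haar}. Plugging this bound into the detectability-lemma denominator $\lambda(\widetilde{\cL}_G)/(4(\gamma-1)^2)$ reproduces the denominator in all four bounds.

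The four cases diverge in how ingredient (a) is obtained. For (i), the local generator $E_e^{(k)}-\id$ is a projection-based semigroup, and the explicit identity $\operatorname{e}^{t(E-\id)}\rho=\operatorname{e}^{-t}\rho+(1-\operatorname{e}^{-t})E(\rho)$ combined with joint convexity of the relative entropy gives $\alpha_{\operatorname{CMLSI}}(E_e^{(k)}-\id)\ge 1$, so the local factor drops out. For (ii) and (iii), the local generators $(\Phi_\mu^{(k)})_e-\id$ and $(\Phi_{\mathcal{U}}^{(k)})_e-\id$ have spectral gaps $1-\eps(k)$ and $1-\varepsilon(\mathcal{U})$ respectively: the latter is Bourgain-Gamburd's $k$-independent gap \eqref{eq:BGgap} transferred through the $k$-fold tensor-power representation via Proposition \ref{prop:transference}. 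In both cases, the fixed-point algebra is the range of $\cD_{\operatorname{Haar},e}^{(k)}$, whose complete Pimsner-Popa index equals $\binom{k+2d^2-1}{k}$ by Schur-Weyl (as already computed in the proof of Lemma \ref{lemma:eps} via \cite[Example 5]{Audenaert06}). A direct application of Theorem \ref{theo:CLSI} then delivers $\alpha_{\operatorname{CMLSI}}(\cL_e)\ge (1-\eps(k))/\binom{k+2d^2-1}{k}$ and the analogous bound for $\mathcal{U}$. For (iv), the local generator is transferred from the heat semigroup on $\operatorname{SU}(d^2)$, whose normalized Ricci curvature is $1/4$; the complete Bakry-\'Emery bound of \cite[Theorem 4.3]{li2020graph} combined with Proposition \ref{prop:transference} yields $\alpha_{\operatorname{CMLSI}}(\cL_{\operatorname{Heat},e}^{(k)})\ge a/2$ uniformly in $k$, as already recorded in \eqref{eq:CMLSIfronricci}.

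The main technical obstacle will be justifying the two spectral comparison steps cleanly. First, for the monotonicity $\lambda(\widetilde{\cL}_G)\ge \lambda(\widetilde{\cL}_{G'})$ one must verify that the kernels of the two generators really coincide, i.e., that the intersection $\bigcap_{e\in E'}\operatorname{range}(E_e^{(k)})$ on a spanning linear subgraph already equals the global Haar projection; this rests on the fact that local Haar averages over a connected generating set of edges generate all of $\operatorname{U}(d^n)$. Second, transferring the Bourgain-Gamburd gap through the $k$-th tensor power in case (iii) must be done with care to retain a $k$-independent constant, which relies on the universality of $\mathcal{U}$ being preserved under any nontrivial finite-dimensional representation of $\operatorname{SU}(d^2)$. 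Once these two points are settled, all four bounds follow by direct substitution into Theorem \ref{thm:graphs}.
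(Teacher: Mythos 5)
Your proposal is correct and follows essentially the same route as the paper: restrict to the linear spanning subgraph, feed the Brand\~{a}o--Harrow--Horodecki gap bound \eqref{haar}, the index bound $\ln C\le 2kn\ln d$, and the case-specific local CMLSI estimates (convexity for the Haar projection, Theorem \ref{theo:CLSI} with the Schur--Weyl index $\binom{k+2d^2-1}{k}$ for (ii)--(iii), and the transferred Bakry--\'Emery bound \eqref{eq:CMLSIfronricci} for (iv)) into Theorem \ref{thm:graphs}. The only cosmetic difference is that you bound the index by the dimension estimate $C\le C_G\le d^{2nk}$ while the paper invokes the Choi-state eigenvalue bound $c_G\ge d^{-2kn}$ from \cite[Lemma 30]{brandao2016local}; both give the same numerator.
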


\begin{proof}
Without loss of generality, we can assume that $E=E'$ for the linear subgraph because adding edges to $G'$ results in a larger gap and a larger CMLSI constant. (i) follows from a direct application of \eqref{eq:CMLSIgraphs} for $C=c_G^{-1}$ together with $\alpha_{\operatorname{CMLSI}}(E_e^{(k)}-\id)\ge 1$. For (ii) and (iii), we recall that the index  $C_e$ is $C_{\operatorname{cb}}(\cB(\cH_e^{\ten k}):\text{ker}(\Phi_{\mu}^{(k)}-\id))$. By the discussion in Example \ref{exam:urep},
\[C_e=\sum_{i}m_i^2\]
where $m_i$ is the dimension of the irreducible representation of $\operatorname{GL}(d^2)$ contained in the $k$-tensor power representation $\pi_k:U\to U^{\ten k}$ on $(\mathbb{C}^{d^2})^{\ten k}$. It follows from the combination of \cite[Equation 3.5]{Halverson2019set} and \cite[Example 5]{Audenaert06} that
\begin{align*}
    C_e=\,\binom{k+2d^2-1}{k}\, .
\end{align*}
Then, it suffices to note that for a nearest neighbour graph, the maximal degree is $\gamma=2$ and it was proved in \cite[Lemma 30]{brandao2016local} that the constant $c_G \ge d^{-2kn}$. (ii) and (iii) similarly follow from the bound obtained in \eqref{eq:CMLSIgraphs}. For (ii), the local CMLSI constant of $\Phi_{\mu}^{(k)}-\id$  can be estimated by Theorem \ref{theo:CLSI} as
\[ \al_{\operatorname{CMLSI}}(\Phi_{\mu}^{(k)}-\id)\ge \frac{\la(\Phi_{\mu}^{(k)}-\id)}{C_e}\ .\]
where the spectral gap $\la(\Phi_{\mu}^{(k)}-\id)=1-\|\Phi_\mu^{(k)}-\cD_{\operatorname{Haar}}^{( k)}:L_2\to L_2\|\le 1-\eps(k)<0$ might still depend on $k$. (iii) is similar to (ii), where the only difference is that the local spectral gap is lower bounded by the classical Hecke operator in \eqref{eq:BGgap} by the transference method of Proposition \ref{prop:transference}.
(iv) is similar to (i) by noting that the CMLSI constant $\al_{\operatorname{CMLSI}}(\cL_{\operatorname{Heat},e}^{(k)})$ for the local Brownian motion is lower bounded by $a/2$ (cf \eqref{eq:CMLSIfronricci}).
\end{proof}
\begin{rem}{\rm In applying Theorem \ref{thm:graphs}, the alternative choice for $C$ is the index $$C_G:= C_{\operatorname{cb}}(\cB((\mathbb{C}^{d})^{\otimes nk}): \cD_{\operatorname{Haar}}^{(k)}(\cB((\mathbb{C}^{d})^{\otimes nk})) )\ .$$ 
More precisely,
$C_G$ is the sum of squares of the multiplicities of all the irreducible representations of $\mathcal{S}_k$ contained in its natural representation on $(\mathbb{C}^{d^n})^{\otimes k}$.
As argued before, by the combination of \cite[Equation 3.5]{Halverson2019set} and \cite[Example 5]{Audenaert06},
\begin{align*}
    C_G=\,\binom{k+2d^n-1}{k}\, .
\end{align*}



}
\end{rem}

\begin{rem}{One can also use the local gap $\lambda(\cL^{(k)}_{\operatorname{Heat},e})=a$ proved in \cite{onorati2017mixing} with our Theorem \ref{theo:CLSI} in order to derive a different lower bound than \eqref{bounddesignshaarbetter}, but this would result in a worsening of the dependence of the constant on $k$.}
\end{rem}

The above theorem can be used to derive new bounds on the time it takes for any of the system Lindbladians introduced above to become $\eps$-close to a $k$-unitary design. First, we introduce time continuous versions of the notions introduced in Definition \ref{def:designs}.

\begin{lemma}
Let $\cL^{(k)}$ be a system Lindbladian satisfying \eqref{eq:asymptoticsdesigns}, and let $0<\eps<1$. Then,
\begin{enumerate}
\item[$\operatorname{(i)}$]
For $t> \al_{\operatorname{CMLSI}}(\mathcal{L}^{(k)})^{-1}\ln \frac{1}{\eps}$, we have that for all $m\in \mathbb{N}$ and any state $\rho\in\cD(\mathbb{C}^{N^{km}})$,
\begin{align*}
    D(e^{t\mathcal{L}^{(k)}}\otimes \id_{m})(\rho)\|(\cD_{\operatorname{Haar}}^{(k)}\otimes \id_{m})(\rho) )\le \eps \,D(\rho\|(\cD_{\operatorname{Haar}}^{(k)}\otimes \id_{m})(\rho))\,.
    \end{align*}
\item[$\operatorname{(ii)}$] for $t> \lambda_{\operatorname{CMLSI}}(\mathcal{L}^{(k)})^{-1}\ln \frac{1}{\eps}$,
\begin{align*}
    \norm{e^{t\mathcal{L}^{(k)}}-\cD_{\operatorname{Haar}}^{(k)}:L_2\to L_2}{} \le \eps
  \end{align*}
\item[$\operatorname{(iii)}$] for $t> \al_{\operatorname{CMLSI}}(\mathcal{L}^{(k)})^{-1}\Big(\ln(4kn)+\ln\ln (d) +2\ln \frac{1}{\eps}\Big)$,
\begin{align*}
    \norm{e^{t\mathcal{L}^{(k)}}-\cD_{\operatorname{Haar}}^{(k)}}{\diamond} \le \eps
  \end{align*}
\end{enumerate}
\end{lemma}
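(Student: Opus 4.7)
The plan is to derive all three bounds from the CMLSI estimate for $\mathcal{L}^{(k)}$ already produced in \Cref{thm:designs}, by combining it with three standard ingredients: (a) the commutation of the semigroup with its fixed-point projection, (b) $L_2$-spectral-gap decay, and (c) Pinsker's inequality together with a max-entropy bound supplied by the cb-index.

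For (i), since $\mathcal{D}^{(k)}_{\operatorname{Haar}}=E_*$ is the conditional expectation onto the fixed-point algebra of $\mathcal{L}^{(k)}$, one has $(E_*\otimes\id_m)\circ(e^{t\mathcal{L}^{(k)}}\otimes\id_m)(\rho)=(E_*\otimes\id_m)(\rho)$, so the inequality \eqref{MLSI1} applied to the amplification $\mathcal{L}^{(k)}\otimes\id_m$ yields
\[
D\bigl((e^{t\mathcal{L}^{(k)}}\otimes\id_m)(\rho)\,\big\|\,(E_*\otimes\id_m)(\rho)\bigr)\,\le\,e^{-\alpha_{\operatorname{CMLSI}}(\mathcal{L}^{(k)})\,t}\,D\bigl(\rho\,\big\|\,(E_*\otimes\id_m)(\rho)\bigr),
\]
which is $\le\eps\,D(\rho\|(E_*\otimes\id_m)(\rho))$ for any $t$ above the stated threshold.

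For (ii), interpreting $\lambda_{\operatorname{CMLSI}}$ as the spectral gap $\lambda(\mathcal{L}^{(k)})$ (or invoking the upper bound $\alpha_{\operatorname{CMLSI}}\le 2\lambda$ from \eqref{CMLSIconstant}), this is the standard Hilbert-Schmidt decay argument: writing $e^{t\mathcal{L}^{(k)}}-E_*=e^{t\mathcal{L}^{(k)}}\circ(\id-E_*)$, using the self-adjointness of $\mathcal{L}^{(k)}$ and the fact that $\id-E_*$ is the $L_2$-orthogonal projection onto $(\ker\mathcal{L}^{(k)})^{\perp}$, the operator norm of the difference is at most $e^{-\lambda t}$.

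For (iii), the plan is Pinsker together with a cb-index bound on the initial entropy. By the variational definition of the diamond norm and Pinsker's inequality,
\[
\|e^{t\mathcal{L}^{(k)}}-E_*\|_\diamond^{\,2}\,\le\,\sup_{m,\rho}\,2\,D\bigl((e^{t\mathcal{L}^{(k)}}\otimes\id_m)(\rho)\,\big\|\,(E_*\otimes\id_m)(\rho)\bigr).
\]
Part (i) bounds the right-hand side by $2e^{-\alpha_{\operatorname{CMLSI}}t}\,D(\rho\|(E_*\otimes\id_m)(\rho))$, and the defining inequality $\rho\le C_{\operatorname{cb}}\,(E_*\otimes\id_m)(\rho)$ gives the max-entropy bound $D(\rho\|(E_*\otimes\id_m)(\rho))\le\ln C_{\operatorname{cb}}$. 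Plugging in the explicit cb-index $C_{\operatorname{cb}}=\binom{k+2d^n-1}{k}$ recalled in the proof of \Cref{thm:designs} and the elementary estimate $\ln\binom{k+2d^n-1}{k}=O(kn\ln d)$, requiring the whole bound to be $\le\eps^{2}$ and taking logs yields a time of the form $\alpha_{\operatorname{CMLSI}}^{-1}\bigl(\ln(kn\ln d)+2\ln(1/\eps)+O(1)\bigr)$, matching the stated $\ln(4kn)+\ln\ln d+2\ln(1/\eps)$ after absorbing constants. The only delicate point is tracking constants carefully enough to produce exactly the additive split $\ln(4kn)+\ln\ln d$; conceptually the whole statement is a direct consequence of the CMLSI decay together with the cb-index max-entropy bound.
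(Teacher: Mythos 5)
Your proposal is correct and follows essentially the same route as the paper: (i) is the definition of the CMLSI constant, (ii) is the definition of the spectral gap, and your Pinsker-plus-max-entropy argument for (iii) is exactly what the paper packages as the second inequality of \eqref{eq:new} in Lemma \ref{lemma:eps}. The constant tracking you flag as delicate is immediate from the paper's explicit bound $\binom{k+2d^n-1}{k}\le (2d^n)^k$, which gives $2\ln\binom{k+2d^n-1}{k}\le 4kn\ln d$ and hence the stated threshold $\ln(4kn)+\ln\ln(d)+2\ln\frac{1}{\eps}$.
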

\begin{proof}
(i) follows from the definition of the CMSLI constant, whereas (ii) is a direct consequence of the definition of the gap; (iii) follows from \eqref{eq:new} in Lemma \ref{lemma:eps} together with the bound:
\begin{align*}
    \binom{k+2d^n-1}{k}\le \frac{(k+2d^n-1)^k}{k!}\le (2d^n)^k\,.
\end{align*}
\end{proof}

Using the above estimates together with the bounds we found on the CMLSI constant in \Cref{thm:designs}, we get e.g. that the subsystem Lindbladian $\cL_{\operatorname{Heat}}^{(k)}$ with local Brownian motion generates an $\eps$-approximate unitary $k$-design in diamond norm for any
\begin{align*}
    t> \al_{\operatorname{CMLSI}}(\mathcal{L}^{(k)}_{\operatorname{Heat}})^{-1}\Big(\ln(4kn)+\ln\ln (d) +2\ln \frac{1}{\eps}\Big)= \frac{1}{a}\,\widetilde{\mathcal{O}}\big(n\operatorname{poly}(k) \big)\,.
\end{align*}

The bounds derived in \Cref{thm:designs} (as well as their analogues for the spectral gap as derived in \cite{onorati2017mixing}) can be thought of as worse case bounds, where we only considered the mixing arising from a unique nearest neighbour subgraph of $G$. However, the CMLSI constant as well as the gap should improve when increasing the connectivity of $G$. In the extreme case of a complete graph, in analogy with the classical Kac master equation \cite{carlen2011kinetic}, we would even expect the gap to increase linearly with the size $n$ of the graph, whereas $\alpha_{\operatorname{CMLSI}}$ should remain constant. We answer these questions in the next subsection.

\subsection{Quantum Kac model}\label{sec:quantumkac}
The celebrated Boltzmann equation describes the out-of-equilibrium dynamics of a gas of $n$ colliding particles and its evolution to equilibrium. In his attempt to derive the Boltzmann equation based on basic probabilistic assumptions, Kac \cite{Kac} introduced an equation, known today as the Kac master equation, whose simplest form can be stated as follows \cite{carlen2011kinetic}: consider $n$ particles moving on the line, and denote by $\mathbf{v}:=(v_1,\ldots, v_n)\in \mathbb{S}^{n-1}$ the velocity vector, where the coordinate $v_i$ is the velocity of the particle carrying label $i$. Define the operator on functions of the sphere $\mathbb{S}^{n-1}$
\begin{align*}
Q_n(f)(\mathbf{v}):=\frac{1}{\binom{n}{2}}\,\sum_{i<j}\,P_{i,j}(f)(\mathbf{v})
\,,\qquad \text{ where }\qquad \, P_{i,j}(f)(\mathbf{v}):=\int_{-\pi}^{\pi} f(R_{i,j}(\theta)\mathbf{v})\,\frac{d\theta}{2\pi}
\end{align*}
where the matrix $R_{i,j}(\theta)$ models a random change of velocities of particles $i$ and $j$ after an elastic collision occurred between them :
\begin{align*}
    (v_i,v_j)\mapsto (v_i^*(\theta),v_j^*(\theta))=(\cos(\theta)v_i-\sin(\theta)v_j,\sin(\theta)v_i+\cos(\theta)v_j)\,.
\end{align*}
In other words, the random interaction described above preserves the total kinetic energy $\sum_{i=1}^n v_i^2$. In the continuous time Markovian setting, the generator of the evolution is defined on functions of $\mathbb{S}^{n-1}$ as
\begin{align*}
    L_n^{\operatorname{Kac}}:=n\,(Q_n-\id)\,.
\end{align*}
Fundamental questions about the speed of convergence to the equilibrium $$e^{t L_n^{\operatorname{Kac}}}(f)\to \big(\int f d\sigma^{(n)}\big) \mathbb{1}_{\mathbb{S}^{n-1}}\, , \ \ \ \ t\to\infty $$ where $\sigma^{(n)}$ denotes the uniform probability measure on the sphere, were left open for decades after Kac's original paper (see also \cite{carlen2011kinetic} for a short review on the subject). In particular, \cite{Kac} had conjectured that the spectral gap satisfies
$\lambda(L_n^{\operatorname{Kac}})\ge c>0$ for a constant $c$ that is independent of $n$. This conjecture was proved by Janvresse \cite{janvresse2001spectral} who adapted Lu and Yau's martingale method \cite{lu1993spectral}. In \cite{carlen2000many}, the exact expression for the spectral gap was derived:
\begin{align*}
    \lambda(L_n^{\operatorname{Kac}})=\frac{1}{2}\,\frac{n+2}{n-1}\,.
\end{align*}
Later, \cite{villani2003cercignani} proved the following bound on the MLSI constant using the exact tensorization of the relative entropy with respect to the family $\{P_{i,j}\}_{i<j}$ of local conditional expectations:
\begin{align*}
    \alpha_{\operatorname{MLSI}}(L_n^{\operatorname{Kac}})\ge \frac{2}{n-1}\,.
\end{align*}

More recently, a quantum extension of the Kac model was proposed in \cite{carlen2019chaos}, a locally finite dimensional version of which can be described as follows: for a system of $n$-particle, the total Hilbert space is $\cH_V=\otimes_{v\in V} \cH_v$ and the local Hilbert spaces $\cH_v\equiv \cH$ are of dimension $\operatorname{dim}(\cH)=d$. Denoting by $h$ the one-particle Hamiltonian, the global free $n$-particle Hamiltonian of the system is
\begin{align*}
H_n:=\sum_{i=1}^n\,h_i\,\otimes\,\Id_{\{i\}^c}\,,
\end{align*}
This Hamiltonian plays the role of the kinetic energy $\sum_{i=1}^nv_i^2$ in the quantum setting. The collision between two particles is modeled as follows:
\begin{Def}
A collision specification $\mu$ is a probability measure on $\operatorname{U}(\cH^{\otimes 2})$, the unitary group acting the bipartite Hilbert space $\cH^{\otimes 2}$, such that
\begin{itemize}
    \item[(i)] For all $U\in \operatorname{supp}(\mu)$, $[U,H_2]=0$;
    \item[(ii)] 
    The map $U\mapsto U^\dagger$ leaves the measure $\mu$ invariant;
    \item[(iii)] Given the swap gate $S(|\psi\rangle \otimes |\varphi\rangle)=|\varphi\rangle\otimes|\psi\rangle$, 
    the map $U\mapsto SUS$ leaves the measure $\mu$ invariant.
\end{itemize}
Given a collision specification $\mu$, the (local) collision operator is defined as
\begin{align*}
    \Phi_\mu(\rho):=\mathbb{E}_\mu\big[U\,\rho \,U^\dagger\big]\,.
\end{align*}
\end{Def}
\begin{rem}
Property (i) ensures that the model is elastic: the energy is preserved under the action of any ``allowed'' unitary. Property (ii) is a time reversibility condition: it ensures that the map $\Phi_\mu$ is self-adjoint. Finally, property (iii) ensures that
the two particles enter the collision specification in a symmetric way.
\end{rem}

Like in the classical case, we allow any two particles to interact, which means the system is governed by a subsystem Lindbladian associated to the complete graph $K_n:=(V,E)$ over $n$ vertices. From now on, we fix the edge set to be $E=\{(i,j)\ | i\neq j\}$. We define the quantum Kac generator corresponding to a collision specification $\mu$ \cite{carlen2019chaos} as follows,
\begin{align*}
    \cL_n^{\mu}:=n\,(\mathcal{Q}_{n,\mu}-\id) \,,\qquad\text{ where }\qquad \mathcal{Q}_{n,\mu}:=\frac{1}{\binom{n}{2}}\sum_{e\in E} (\Phi_{\mu})_{e}\,.
\end{align*}
 The Kac semigroup satisfies $ e^{t\cL^\mu_n}{\to} E^\mu_{K_n}$ as ${t\to\infty}$, where $E^\mu_{K_n}$ is the conditional expectation that projects onto the subalgebra
\begin{align*}
    \mathcal{F}^\mu_n:=\{U_e\otimes \Id_{e^c}\ |\ \,U\in\operatorname{supp}(\mu)\ , \forall \ e\}'\,.
\end{align*}
In the general setting, the channel $\Phi_\mu$ may not be a conditional expectation. Therefore, for each edge $e\in E$, we denote by
$\cF_e^\mu$
the kernel of the generator $(\Phi_\mu)_e-\id$, and by $E_e^\mu$ the conditional expectation projecting onto $\cF_e^\mu$. As in \Cref{sec:designs}, we also consider the subsystem Lindbladian
\begin{align*}
    \widetilde{\cL}^\mu_n:=\frac{n-1}{2}\,\sum_{e\in E}(E_e^\mu-\id)\,.
\end{align*}

\begin{example}[(Quantum Villani's theorem)]
Consider the case of $\mu$ being the Haar measure. This corresponds to the case of a trivial Hamiltonian. Then, the (local) collision operator is a $1$-design, i.e.~it reduces to the replacement of the state on edge $e$ by the maximally mixed state: for any $\rho\in \cD(\cH_V)$,
\begin{align*}
    (\Phi_{\operatorname{Haar}})_{e}(\rho)=\rho_{e^c}\otimes \frac{\Id_{e}}{d^2}\,.
\end{align*}
Namely, $(\Phi_{\operatorname{Haar}})_{e}$ is the partial trace over $\cH_e$.
Then the semigroup $(e^{t\cL_n^{\operatorname{Haar}}})_{t\ge 0}$ on the total system converges to the maximally mixed state $d^{-n}\Id$. Moreover, since partial traces over different (possibly adjacent) edges commute $\prod_{e}\Phi_{\operatorname{Haar}}=E^{\operatorname{Haar}}_{K_n}$, it follows from \cref{theo:lambdaindexcontrolAT} that
\begin{align*}
    D\Big(\rho\,\Big\|\,\frac{\Id}{d^n}\Big)\le \sum_{e\in E}\,D(\rho\|(\Phi_{\operatorname{Haar}})_e(\rho))\,.
\end{align*}
Since $\alpha_{\operatorname{CMLSI}}(\Phi_{\operatorname{Haar}}-\id)\ge 1$, we conclude that $\alpha_{\operatorname{CMLSI}}(\cL_n^{\operatorname{Haar}})\ge \frac{2}{n-1}$. Note however that in this simple case the approximate tensorization constant can be easily tightened. For instance, in the case $|V|=4$ one can chose any edge $e$ uniformly at random and use the chain rule followed by the data processing inequality to get
\begin{align}
    D\Big(\rho\Big\|\frac{\Id}{d^4}\Big)\le D(\rho\|(\Phi_{\operatorname{Haar}})_e(\rho))+D(\rho\|(\Phi_{\operatorname{Haar}})_{e'}(\rho))
\end{align}
where $e'$ stands for the edge that is complementary to $e$. Then an averaging procedure leads to
\begin{align}
      D\Big(\rho\Big\|\frac{\Id}{d^4}\Big)\le \frac{1}{3}\sum_{e\in E}D(\rho\|(\Phi_{\operatorname{Haar}})_e(\rho))\,.
\end{align}
A similar argument for general even $n$ gives $(n-1)D\Big(\rho\,\Big\|\,\frac{\Id}{d^n}\Big)\le \sum_{e\in E}\,D(\rho\|(\Phi_{\operatorname{Haar}})_e(\rho))$.
\end{example}

In the next theorem, we prove a similar behavior in the case of a general collision specification $\mu$, by combining the techniques that we used in \Cref{thm:graphs}. We need the following lemma on dividing the complete graph into subgraphs.

\begin{lemma}\label{lemma:graph}
Let $K_n=(V,E)$ be the $n$-complete graph. There exists at least
$\lfloor n/4\rfloor$ connected subgraphs $G_l=(V,E_l)\subset K_n$ with common vertex set $V$ and maximal degree $3$ such that the edge sets $E_1,\cdots, E_{\lfloor n/4\rfloor}$ are mutually disjoint.
\end{lemma}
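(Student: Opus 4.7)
The plan is to invoke Walecki's classical decomposition theorem for the complete graph into Hamilton cycles. Recall that a Hamilton cycle in $K_n$ is a connected spanning subgraph which is $2$-regular, hence in particular has maximum degree at most $3$. Walecki's theorem asserts that when $n$ is odd, $K_n$ decomposes into $(n-1)/2$ edge-disjoint Hamilton cycles, while when $n$ is even, $K_n$ decomposes into $(n-2)/2$ edge-disjoint Hamilton cycles together with a perfect matching.

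In either case, discarding the perfect matching when present, one obtains at least $\lfloor (n-1)/2 \rfloor$ pairwise edge-disjoint connected spanning subgraphs of $K_n$, each with maximum degree $2 \le 3$. It then remains to verify the elementary inequality $\lfloor (n-1)/2 \rfloor \ge \lfloor n/4 \rfloor$ for every $n \ge 1$: for odd $n$ this reduces to $n \ge 2$; for even $n$ it reduces to $n \ge 4$; and the remaining cases $n \in \{1,2,3\}$ are vacuous since $\lfloor n/4 \rfloor = 0$.

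There is essentially no mathematical obstacle here beyond citing a classical result. Should one prefer a self-contained construction, an explicit round-robin scheduling provides the required cycles: label the vertices $0, 1, \ldots, n-1$, arrange them on a circle (with one vertex fixed in the even case), and for each index $\ell$ build a Hamilton cycle by a zigzag pattern rotated by $\ell$ positions. The standard round-robin argument then verifies edge-disjointness and exhausts all edges of $K_n$ (up to a perfect matching if $n$ is even), yielding the desired family $G_1, \ldots, G_{\lfloor n/4 \rfloor}$.
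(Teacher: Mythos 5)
Your proof is correct, but it takes a genuinely different route from the paper's. The paper gives a self-contained, number-theoretic construction: for a prime number $p$ of vertices it takes the edge classes $E_l=\{(i,j): (j-i)\bmod p = l\}$, each of which is a Hamiltonian cycle precisely because $l$ is coprime to $p$; for general $n$ it invokes Bertrand's postulate to find a prime $\lfloor n/2\rfloor < p < n$, builds the cycles on the first $p$ vertices, and then grafts the remaining $n-p$ vertices onto each subgraph via additional mutually disjoint edge sets --- this grafting is what pushes the maximal degree up to $3$ and why the count drops to $\lfloor p/2\rfloor\ge\lfloor n/4\rfloor$. Your appeal to Walecki's Hamiltonian decomposition avoids both complications: you get $\lfloor (n-1)/2\rfloor$ edge-disjoint spanning connected subgraphs, each $2$-regular, which is roughly twice as many subgraphs with a strictly smaller maximal degree. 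This is a genuine (if modest) quantitative improvement: in the application (Theorem \ref{thm:Kac}) the exponent $\lfloor n/4\rfloor$ could be replaced by $\lfloor (n-1)/2\rfloor$, and the degree bound entering the detectability lemma would improve from $\gamma=3$ to $\gamma=2$. The trade-off is that your argument rests on citing a classical theorem (standard, but external), whereas the paper's construction is elementary and verifiable in a few lines; your fallback round-robin construction would restore self-containedness if written out. Your verification of $\lfloor (n-1)/2\rfloor\ge\lfloor n/4\rfloor$, including the vacuous small cases where Hamilton cycles do not exist, is complete.
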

\begin{proof}We first argue the case when $n=p$ is a prime number. We label the vertices as $\{1,\ldots ,p\}$. We choose the edge sets as
\[ E_l=\{(i,j)\ |\ (j-i) \operatorname{mod } p = l \} \ ,\  1\le l\le \lfloor p/2\rfloor\ .\]
It is clear that $E_l$'s are mutually disjoint. Moreover, it follows from the fact that $p$ is a prime number that for any integer $1\le l< p$, $l$ is co-prime with $p$ and hence $$\{1,(1+l)\operatorname{mod } p, (1+2l)\operatorname{mod } p,\cdots, (1+(p-1)l)\operatorname{mod } p\}=\{1,\ldots ,p\}.$$ This means that
$E_l$ connects the vertex set $\{1,\ldots ,p\}$ as a length $p$ cyclic graph (except for $p=2$ which has only one edge). Note that we only have $\lfloor p/2\rfloor $ many such subgraphs because $E_l=E_{p-l}$ for $l> \lfloor p/2\rfloor$. This proves the case for prime numbers.

For general $n$, we recall by Bertrand's postulate that there exists at least one prime number $\lfloor n/2\rfloor<p< n$. Let $E_l$ be the edge sets defined above connecting the vertex set $\{1,\ldots ,p\}$. We now connect the first $p$ vertices with $\{p+1,\cdots, n\}$ by enlarging the edge sets as follows
\[E_l'=E_l\cup \tilde{E}_l\ ,\ \tilde{E}_l:=\{(i,j)\ |\ j-i=p-l+1\ ,  p+1\le j\le n \}\ . \]
One can see that the added edge sets $\tilde{E}_l$ connect each vertex in $\{p+1,\cdots, n\}$ to a vertex in $\{1,\cdots, p\}$ and are also mutually disjoint. Hence $E_l'$ are mutually disjoint and connect the total vertex sets $\{1,\cdots, n\}$. That completes proof.
\end{proof}

We say a collision specification $\mu$ is \textit{universal} if for each connected subgraph $G=(V,E')$ of $K_n$, $\cap_{e\in E'}\cF_e^\mu=\cF_n^\mu$. This is in analogy with the notion of universal measure in \Cref{sec:designs}. In fact, defining the pushforward $\tilde{\mu}$ of $\mu$ with respect to the map $U\mapsto U^{\otimes k}$, one can easily see that the universality of $\tilde{\mu}$ is a consequence of the universality of $\mu$.
We will also need the following condition of $L_2$ decay of correlations:
\begin{Def}Let $\mu$ be collision specification and $\{E_e^\mu\}_{e\in E}$ be the associated conditional expectations $\{E_e^\mu\}_{e\in E}$. For any subgraph $G':=(V',E')\subseteq K_n$, denote by $E^\mu_{G'}$ the conditional expectation projecting onto $\cap_{e\in E'}\cF_e^\mu$. Then, the specification $\mu$ is said to satisfy the decay of correlations property if there exists $n_0\in\mathbb{N}$ such that for any $n\ge n_0$, any connected linear subgraph $G'=(V',E')$ with $n\ge n_0$, and any partitioning of $G'=G_A\cup G_B\cup G_C$ into three consecutive subgraphs, where $G_B$ separates $G_A$ from $G_C$, $|G_B|=\ell$ and $|G_C|=1$,
\begin{align*}
    \|E_{BC}^\mu\circ E_{AB}^\mu-E_{ABC}^\mu:\,L_2\to L_2\|\le \frac{1}{\sqrt{\ell}}\,,
\end{align*}
 where by a slight abuse of notations we denote e.g.~by $E^\mu_{BC}$ the conditional expectation $E^\mu_{G_B\cup G_C}$.
\end{Def}

\begin{theorem}\label{thm:Kac}
Let $\mu$ be a universal specification satisfying the decay of correlations property defined as above. Then, in the notations of \Cref{thm:graphs}, for all $m\in\mathbb{N}$ and any $\rho\in\cD(\cH_V\otimes  \mathbb{C}^{ m})$,
   \begin{align}
     D(\rho\|E^\mu_{K_n}(\rho)) \le 4\Big\lceil\frac{ (1+\ln(c))}{\ln\Big({\lambda_0^{-\lfloor n/4\rfloor}}\Big)}\Big\rceil\,\sum_{e\in E}\,D(\rho\|E_{e}^\mu(\rho))\,,
     \end{align}
 where $c:=\min\{C_{K_n},c_{K_n}^{-1}\}$ and for some constant $\lambda_0$ that is independent of $|V|=n$. Moreover, the $\operatorname{CMLSI}$ constants for the subsystem generators $\cL_n^\mu$ and $\widetilde{\cL}_n^\mu$ satisfy
\begin{align}
\frac{\lambda(\Phi^\mu-\id)}{4(n-1)C_e}\Big\lceil\frac{ (1+\ln(c))}{\ln\Big({\lambda_0^{-\lfloor n/4\rfloor}}\Big)}\Big\rceil^{-1}
\le    \alpha_{\operatorname{CMLSI}}(\mathcal{L}_n^\mu)\,,\qquad \frac{1}{4(n-1)}\Big\lceil\frac{ (1+\ln(c))}{\ln\Big({\lambda_0^{-\lfloor n/4\rfloor}}\Big)}\Big\rceil^{-1}\le    \alpha_{\operatorname{CMLSI}}(\widetilde{\mathcal{L}}_n^\mu)\,,
\end{align}
where $C_e=C_{\operatorname{cb}}(\cB(\cH_e):\mathcal{F}_e^\mu)$.

\end{theorem}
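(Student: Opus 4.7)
The plan is to adapt the strategy of \Cref{thm:graphs} to the complete graph $K_n$, compensating for the fact that $K_n$ has maximum degree $n-1$ by breaking it into low-degree pieces via \Cref{lemma:graph}. First, I would decompose the edge set as $E = \bigsqcup_{l=1}^{N} E_l$ with $N := \lfloor n/4 \rfloor$, where each subgraph $G_l := (V, E_l)$ is connected, spans all of $V$, and has maximum degree at most $3$. The universality of $\mu$ then forces $\bigcap_{e \in E_l} \mathcal{F}_e^\mu = \mathcal{F}_n^\mu$ for every $l$, so in particular $\mathcal{F}_{K_n}^\mu \subseteq \mathcal{F}_e^\mu$ for every $e \in E$. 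This yields the absorption identities $E_{K_n}^\mu \circ E_e^\mu = E_e^\mu \circ E_{K_n}^\mu = E_{K_n}^\mu$, which will be used repeatedly in what follows.

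Next, I would fix an ordering of the edges in each $E_l$ (for instance along the Hamiltonian cycle produced by the construction in \Cref{lemma:graph}) and form the ordered product $\Phi_l := \prod_{e \in E_l} E_e^\mu$. Applying the decay-of-correlations hypothesis inductively along this ordering, combined with a detectability-type estimate in the spirit of \Cref{thm:detectability}, I would establish a uniform bound
\begin{align*}
\|\Phi_l - E_{K_n}^\mu : L_2 \to L_2\| \le \lambda_0
\end{align*}
for some constant $\lambda_0 < 1$ depending only on the fixed buffer size used in the decay-of-correlations property, and not on $n$. Setting $\Phi := \Phi_N \circ \cdots \circ \Phi_1$ and telescoping via the absorption identities gives $\Phi - E_{K_n}^\mu = \prod_{l=1}^{N} (\Phi_l - E_{K_n}^\mu)$, and hence
\begin{align*}
\|\Phi - E_{K_n}^\mu : L_2 \to L_2\| \le \lambda_0^{N} = \lambda_0^{\lfloor n/4 \rfloor}.
\end{align*}

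With this $L_2$-contraction in hand, \Cref{Coro:sharperAT} applied to the family $\{E_e^\mu\}_{e \in E}$ through the product map $\Phi$ yields the approximate tensorization inequality as stated, along with its complete version upon replacing the index $C_{K_n}$ by its cb-analogue (which, combined with the estimate $c_{K_n}^{-1}$ on the minimal eigenvalue of the Choi state of $E_{K_n}^\mu$, produces the constant $c = \min\{C_{K_n}, c_{K_n}^{-1}\}$). The two $\operatorname{CMLSI}$ lower bounds then follow exactly as in the proof of \Cref{thm:graphs}: the local generator $E_e^\mu - \id$ has $\operatorname{CMLSI}$ constant at least $1$, and by \Cref{theo:CLSI} the local generator $(\Phi_\mu)_e - \id$ satisfies $\alpha_{\operatorname{CMLSI}}((\Phi_\mu)_e - \id) \ge \lambda(\Phi^\mu - \id)/C_e$, so combining with the approximate tensorization and accounting for the scalar prefactors in the definitions of $\cL_n^\mu$ and $\widetilde{\cL}_n^\mu$ gives the claimed bounds.

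The principal obstacle lies in the second step: converting the $1/\sqrt{\ell}$ decay-of-correlations hypothesis, which controls the $L_2$-deviation of $E_{BC}^\mu \circ E_{AB}^\mu$ from $E_{ABC}^\mu$ only for three-region linear partitions with buffer $|B| = \ell$ and tail $|C| = 1$, into a uniform contraction bound for the entire product $\Phi_l$ on $G_l$. The natural route is a telescoping induction that successively peels off one endpoint vertex of the chosen ordering at a time, separated from the active region by a fixed-size buffer (e.g.\ $\ell$ large enough so that $1/\sqrt{\ell} < 1/2$), and accumulates the $L_2$-error geometrically. This is the quantum counterpart of the block-decoupling techniques used by Lu--Yau and Janvresse in the classical Kac model, and it is what ultimately forces the exponent $\lfloor n/4 \rfloor$ in the bound rather than a full $n$.
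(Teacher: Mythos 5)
Your overall architecture coincides with the paper's: decompose $K_n$ via \Cref{lemma:graph} into $\lfloor n/4\rfloor$ spanning connected subgraphs of maximum degree $3$ with disjoint edge sets, obtain a uniform $L_2$-contraction $\lambda_0<1$ for the product of conditional expectations over each subgraph, telescope using the absorption identities to get $\lambda_0^{\lfloor n/4\rfloor}$ for the full product, and then feed this into \Cref{Coro:sharperAT} and the local CMLSI estimates exactly as in \Cref{thm:graphs}. Two remarks on that skeleton. First, the subgraphs of \Cref{lemma:graph} do \emph{not} partition $E$ (they carry $O(n^2/8)$ edges while $|E|=\binom{n}{2}$), so your claimed decomposition $E=\bigsqcup_l E_l$ is false as stated; the leftover edges must be absorbed into an extra $L_2$-contraction $R:=\prod_{e\in E\setminus\cup_l E_l}E_e^\mu$ multiplying the product, which is harmless since $RE_{K_n}^\mu=E_{K_n}^\mu$ and $\|R\|\le 1$, but it does need to be said.

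The genuine gap is the step you yourself flag as the ``principal obstacle'': converting the $1/\sqrt{\ell}$ decay-of-correlations hypothesis into the uniform bound $\|\Phi_l-E^\mu_{K_n}:L_2\to L_2\|\le\lambda_0$ with $\lambda_0$ independent of $n$. Your proposed endpoint-peeling telescoping is a plausible sketch but is not carried out, and it is not obvious how to make the geometric accumulation of $L_2$-errors close without further structure. The paper resolves this in two clean, citable steps: (i) the decay-of-correlations condition implies, via the martingale method of Nachtergaele (Theorem 3 of \cite{nachtergaele1996spectral}), a uniform lower bound $\lambda'>0$ on the spectral gap of the linear-chain generator $\widetilde{\cL}_{G'}^\mu=\sum_{e\in E'}E_e^\mu-\id$, independent of $|V|$; (ii) the detectability lemma (\Cref{thm:detectability}, with non-commutation degree $g\le 4$ for a max-degree-$3$ subgraph) then converts this gap into $\|\prod_{e\in E_l}E_e^\mu-E^\mu_{G_l}:L_2\to L_2\|\le (\lambda'/16+1)^{-1}=:\lambda_0$, and universality of $\mu$ identifies $E^\mu_{G_l}$ with $E^\mu_{K_n}$. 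Without identifying input (i) — or supplying a complete substitute for it — your argument does not establish the existence of the $n$-independent $\lambda_0$, which is the entire content of the first display of the theorem.
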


\begin{proof}
 Under the condition of decay of correlations, for any connected linear subgraph $G'=(V,E')$, the spectral gap of the generator $\cL_{E'}^\mu:=\sum_{e\in E'}E^\mu_e-\id$ admits a uniform lower bound $\lambda'$  which is independent of the system size $|V|$ (see \cite[Theorem 3]{nachtergaele1996spectral}). Therefore, by Lemma \ref{thm:detectability}, we have that for any subgraphs with maximal degree $3$,
\begin{align*}
    \lambda_0:=\big\|\prod_{e\in E'}E^\mu_e-E^\mu_{K_n}:L_2\to L_2\big\|=\big\|\prod_{e\in E'}E_e^\mu-E^\mu_{G'}:L_2\to L_2\big\|\le \frac{1}{\lambda'/16+1}<1
\end{align*}
 where the first identity follows by the condition of universality of $\mu$ that $E^\mu_{G'}=E^\mu_{K_n}$. Then by Lemma \ref{lemma:graph}, the complete graph $K_n$ contains $\lfloor n/4\rfloor$ connected subgraphs $G_1 ,\cdots, G_{\lfloor n/4\rfloor}$ with common vertex set $V$ and maximal degree $3$ such that theirs edge sets are mutually commuting. 
 Then, for each $G_\ell:=(V,E_\ell)$, we denote by $\Pi_\ell$ the operator $\prod_{e\in E_\ell}E^\mu_e$ and $R:= \prod_{e\in E\backslash \cup_\ell E_\ell}E^\mu_e$, where the products are taken in an arbitrary order. Thus,
 \begin{align*}
     \big\| \prod_{e\in E}E_e^\mu-E^\mu_{K_n}:L_2\to L_2\big\|=\big\|R\, \prod_{\ell}\Pi_\ell(\id-E^\mu_{K_n}):L_2\to L_2\big\|\le \lambda_0^{\lfloor n/4\rfloor}\,.
 \end{align*}
  We conclude by using the same argument as in the proof of \Cref{thm:graphs}, with $\lambda$ being given by $\lambda_0^{\lfloor n/4\rfloor}$.
 \end{proof}

It turns out that $k$-designs on the complete graph also satisfy the conditions of \Cref{thm:Kac}, which results in a tightening of the bounds found in \Cref{thm:designs}. For instance:
\begin{corollary}
Let $\mu$ be a universal measure of $\operatorname{U}(d^2)$ in the sense of definition \ref{def:universalmeasure}.
For any $k\in \mathbb{N}$, the subsystem Lindbladians $\cL_{\operatorname{Haar}}^{(k)}$ defined in \Cref{def:Lhaar} in the case of the complete graph satisfies
\begin{align*}
  &\frac{1}{2}\Big\lceil\frac{({2kn}\ln(d)+1)}{{\lfloor n/4\rfloor}\ln\big((\lambda(k)/16+1)\big)} \Big\rceil^{-1}  \le \alpha_{\operatorname{CMLSI}}(\cL_{\operatorname{Haar}}^{(k)})\, ,
\end{align*}
where $\lambda(k):=\Big(42500\left\lceil \frac{\ln(4k)}{\ln(d)}\right\rceil^2\,d^2\,k^5\,k^{\frac{3,1}{\ln(d)}} \Big)^{-1}$.
In particular, $\alpha_{\operatorname{CMLSI}}(\cL_{\operatorname{Haar}}^{(k)})=\Omega(1)$ as a function of $n$.
\end{corollary}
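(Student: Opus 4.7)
The plan is to identify this corollary as a direct specialization of Theorem \ref{thm:Kac} to the collision specification $\mu$ obtained as the pushforward of the Haar measure on $\operatorname{U}(d^2)$ under $U\mapsto U^{\otimes k}$. Under this identification, the local collision operator $\Phi_\mu$ coincides with the conditional expectation $E_e^{(k)}=\cD^{(k)}_{\operatorname{Haar},e}$, and the Kac generator on the complete graph $K_n$ becomes (up to trivial rescaling) the subsystem Lindbladian $\cL_{\operatorname{Haar}}^{(k)}$. Thus my task is reduced to verifying the two hypotheses of Theorem \ref{thm:Kac} for this $\mu$ and then chasing the resulting constants.

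First I would dispatch the universality requirement $\bigcap_{e\in E'}\cF_e^{(k)}=\cF_{K_n}^{(k)}$ for every connected subgraph $G'=(V,E')$ of $K_n$. This is essentially already present in \Cref{sec:designs}: since the local unitaries $\{U_e\otimes \Id_{e^c}\}$ for $e\in E'$ generate $\operatorname{SU}(\cH_V)$ as soon as $G'$ is connected, the commutant condition defining $\cF_{G'}^{(k)}$ forces it to equal the Haar-invariant subalgebra on the whole system. Hence universality transfers for free from the connectivity of $G'$, independently of $k$.

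The main obstacle is verifying the $L_2$-decay of correlations, which requires a bound of the form $\|E^\mu_{BC}\circ E^\mu_{AB}-E^\mu_{ABC}:L_2\to L_2\|\le \ell^{-1/2}$ for consecutive subchains of appropriate sizes. The essential input is the Brandão-Harrow-Horodecki nearest-neighbor gap estimate \eqref{haar}, which provides a uniform lower bound $\lambda(k)$ on the spectral gap of $\cL^{(k)}_{\operatorname{Haar}}$ restricted to any \emph{linear} chain, independent of the chain length. This spectral gap, together with the frustration-free martingale argument of Nachtergaele \cite[Theorem 3]{nachtergaele1996spectral} (see also the discussion in the proof of Theorem \ref{thm:Kac}), delivers the required polynomial decay of two-region correlations, thereby establishing the hypothesis.

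Granted these two properties, the rest is bookkeeping following the proof of Theorem \ref{thm:Kac}. I would invoke Lemma \ref{lemma:graph} to decompose $E(K_n)$ into $\lfloor n/4\rfloor$ edge-disjoint connected subgraphs $G_\ell$ of maximal degree $3$, apply the detectability lemma (Lemma \ref{thm:detectability}) to each $G_\ell$ to get
\[\Big\|\prod_{e\in E_\ell}E_e^{(k)}-E_{K_n}^{(k)}:L_2\to L_2\Big\|\le \frac{1}{\lambda(k)/16+1},\]
and then compose these estimates across the $\lfloor n/4\rfloor$ subgraphs to obtain a global $L_2$-contraction of order $(\lambda(k)/16+1)^{-\lfloor n/4\rfloor}$. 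Feeding this into Corollary \ref{Coro:sharperAT} with $C\le d^{2kn}$ (via \cite[Lemma 30]{brandao2016local}) yields an approximate tensorization constant of order $\lceil(2kn\ln d+1)/\lfloor n/4\rfloor\ln(\lambda(k)/16+1)\rceil$, which is $\Theta(1)$ in $n$ for fixed $k$. Combining with $\alpha_{\operatorname{CMLSI}}(E_e^{(k)}-\id)\ge 1$ produces the claimed inequality, and in particular the $n$-independent asymptotics $\alpha_{\operatorname{CMLSI}}(\cL_{\operatorname{Haar}}^{(k)})=\Omega(1)$.
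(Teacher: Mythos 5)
Your proposal follows essentially the same route as the paper: specialize Theorem \ref{thm:Kac} to the pushforward of the Haar measure under $U\mapsto U^{\otimes k}$, check universality from connectivity, import the linear-chain gap \eqref{haar}, and then rerun the decomposition of Lemma \ref{lemma:graph}, the detectability lemma, and Corollary \ref{Coro:sharperAT} with $c_G\ge d^{-2kn}$. One logical slip worth correcting: you claim to \emph{derive} the $L_2$-decay of correlations from the gap \eqref{haar} via Nachtergaele's martingale argument, but that implication runs the wrong way --- Nachtergaele's theorem (and the proof of Theorem \ref{thm:Kac}) uses decay of correlations as the \emph{hypothesis} to conclude a uniform gap on linear subgraphs. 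The paper instead cites the decay of correlations directly from \cite[Lemma 18]{brandao2016local} (which is precisely how the bound \eqref{haar} was obtained there in the first place). Your overall argument survives because everything downstream of that step only uses the uniform gap on linear chains, which \eqref{haar} supplies directly, so you are in effect re-running the proof of Theorem \ref{thm:Kac} with $\lambda'$ replaced by $\lambda(k)$ --- exactly what the paper does --- rather than verifying its decay-of-correlations hypothesis as stated.
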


\begin{proof}
It suffices to show that the pushforward $\tilde{\mu}$ of $\mu$ with respect to the map $U\mapsto U^{\otimes k}$ satisfies the assumptions of \Cref{thm:Kac}. First, universality of $\tilde{\mu}$ is a consequence of the universality of $\mu$. Also, the decay of correlations for $\tilde{\mu}$ was proved in \cite[Lemma 18]{brandao2016local}, which resulted in the bound \eqref{haar} on the gap of $\cL_{\operatorname{Haar}}$. Therefore, we conclude by replacing $\lambda'$ in the proof of \Cref{thm:Kac} by \eqref{haar}, by using the bound of the constant $c_G \ge d^{-2kn}$ from \cite[Lemma 30]{brandao2016local}, and adapting the normalization of the generator to fit the definitions of \Cref{sec:designs}.
\end{proof}

\begin{rem}
By the same method as above, we also deduce e.g. that $\lambda(\cL_{\operatorname{Heat}}^{(k)}) =\Omega(n)$ on the complete graph using the reverse detectability lemma \cite{anshu2016simple}. This is an improvement over the result of \cite{onorati2017mixing}.
\end{rem}

\section{Discussion and open problems}
We end our paper by discussing some open problems. \Cref{theo:CLSI} proves that any GNS-symmetric quantum Markov semigroup $(\cP_{t}:\cB(\cH)\to \cB(\cH))_{t\ge 0}$ on a finite dimensional Hilbert space $\cH$ satisfies the complete modified log-Sobolev inequality (CMLSI) with constant
\begin{align} \label{eq:bound}\al_{\operatorname{CMLSI}}(\cL)\ge \frac{\la(\cL)}{C_{\operatorname{cb},\tau}(E_*)}\end{align}
where $\cL$ is the generator and $E_*=\lim_{t} \cP_t$. In the primitive case (unique invariant state), $C_{\operatorname{cb},\tau}(E_*)\sim d^2$ with $d=\dim(\cH)$. On the other hand, it was proven that for a primitive semigroup with invariant state $\si$,
\begin{align} \label{eq:bound2} \frac{\la(\cL)}{\ln(\mu_{\min}(\si)^{-1})+2}\le \frac{\al_{\operatorname{LSI}}(\cL)}{2}\le \al_{\operatorname{MLSI}}(\cL)\end{align}
where $\al_{\operatorname{MLSI}}$ is the optimal constant for the modified log-Sobolev inequality and $\al_{\operatorname{LSI}}$ is the optimal constant for the $L_2$-log-Sobolev inequality (LSI), which is known to be equivalent to hypercontractivity \cite{olkiewicz1999hypercontractivity}. Here $\mu_{\min}(\si)$ is the minimal eigenvalue of $\si$ and $\ln(\mu_{\min}(\si)^{-1})\sim \ln d$. Our lower bound on $\al_{\operatorname{CMLSI}}$ controls any amplification $\cP_{t}\ten \id_{\mathbb{M}_n}$, in contrast with $L_2$-log-Sobolev inequality/hypercontractivity bound which fails for $\cP_{t}\ten \id_{\mathbb{M}_n}$ for any $n>1$. It remains open whether the bound \ref{eq:bound} can be improved asymptotically.
\begin{problem}\label{prob:optimalasymptotic}
Does there exist a general lower bound on the $\operatorname{CMLSI}$ constant of the form $\al_{\operatorname{CMLSI}}(\cL)\ge \la(\cL) o(d^2)^{-1}$?
\end{problem}
\noindent Here $o(d^2)$ stands for any function that would be asymptotically smaller than the square function.

Our second question concerns the strong data processing inequality (SDPI). It was proven in \cite{muller2016entropy} that for a primitive unital quantum channel $\Phi$,
\begin{align}\label{eq:bound3}\al_{\operatorname{SDPI}}(\Phi)\ge 1-\frac{1}{2}\al_{\operatorname{LSI}}(\Phi^*\Phi-\id)\end{align}
where $\al_{\operatorname{LSI}}(T^*T-\id)$ is the LSI constant of the map $T^*T-\id$ seen as the generator of a QMS. This combined with \eqref{eq:bound} gives upper bounds on SDPI constant for primitive unital channel. Nevertheless, since LSI generally fails for non-primitive semigroups, this approach does not apply to CSDPI. In order to find better (C)SDPI constant, we propose the following question:
\begin{problem}\label{prob:MLSISDPI}
 Can we find a lower bound on $\al_{\operatorname{SDPI}}(\Phi)$ in terms of the modified log-Sobolev constant $\al_{\operatorname{MLSI}}(\Phi^*\Phi-\id)$ for any non-primitive quantum channel $\Phi$?
\end{problem}
\noindent Note that in general $2\al_{\operatorname{MLSI}}(\cL)\ge  \al_{\operatorname{LSI}}(\cL)$, so a positive answer to Problem \ref{prob:MLSISDPI} would be potentially even stronger than \eqref{eq:bound3}. Moreover, combined with our Theorem \ref{theo:CLSI}, such a positive solution would also give a lower estimate on the CSDPI constant in terms of the spectral gap and the index.

Recall that our SDPI constant $\al_{\operatorname{SDPI}}$ is defined as the optimal constant $0\le \al\le 1$ such that
\[ D(\Phi(\rho)\|\Phi\circ E_*(\rho))\le \al D(\rho\|E_*(\rho))\pl,\]
for any state $\rho$.
Here $E_*(\rho)$ is the decoherence free part of the state $\rho$ in the sense that for a GNS symmetric channel $\Phi$, $\Phi^2\circ E_*(\rho)=E_*(\rho)$ and $\displaystyle\lim_{n\to \infty}\norm{\Phi^n(\rho)-\Phi^n\circ E_*(\rho)}{}=0$. This is a natural choice analogous to MLSI. Nevertheless, the data processing inequality asserts that $D(\Phi(\rho)\|\Phi (\si))\le  D(\rho\|\si)$ for any two states $\rho$ and $\si$. Indeed, in Theorem \ref{thm:local}, we prove that for a state $\omega$, the best local constant $\al'(\si)$, which satisfies
\[ D(\Phi(\rho)\|\Phi(\si))\le \al'(\sigma) D(\rho\|\si),\]
for all state $\rho$ with $E_*(\rho)=E_*(\omega)$, can be two-sided controlled by the corresponding $\chi_2$ contraction coefficient. (Here the restriction $E_*(\rho)=E_*(\omega)$ is needed because without it the constant $\al(\omega)$ would often be trivially equal to $1$.) It is then natural to ask whether for a quantum channel $\Phi$, there is a non-trivial upper bound on $\al(\omega)$ uniformly in $\omega$. Namely,
\begin{problem}\label{prob:generalSDPI}
For a finite dimensional quantum channel $\Phi$, does there exist a constant $\al'(\Phi)<1$ such that
\[ D(\Phi(\rho)\|\Phi (\si))\le  \al'(\Phi) D(\rho\|\si)\pl,\]
for all states $\rho,\si$ with $E_*(\rho)=E_*(\omega)$?
\end{problem}
\noindent Such a constant $\al'(\Phi)$ leads to a stronger notion of contraction than our definition of $\al_{\operatorname{SDPI}}$. It is closer to the classical strong data processing inequality studied in \cite{polyanskiy2017strong,Raginsky16}, which was proven to be equivalent to the contraction coefficient of (classical) $\chi_2$-divergence. Note that by our Theorem \ref{thm:local}, it also suffices to show that there is a global contraction coefficient on the quantum $\chi_2$ divergence $$\chi_2(\rho,\si)=\norm{\rho-\si}{\si^{-1}}^2=\int_0^\infty \tr\Big((\rho-\si)\,\frac{1}{\si+s}\,(\rho-\si)\,\frac{1}{\si+s}\Big)\,ds$$ for all states $\sigma$.

\begin{appendix}
\section{Detectability lemma}\label{sec:detectabilitylemma}

Let $\mathcal{H}$ be a finite dimensional Hilbert space, and denote by $\langle .,.\rangle$ its inner product. Let $\{E_i\}_{i\in J }$ be a family of orthogonal projections on $\mathcal{H}$ and define
\begin{align*}
  &\Phi^*:= \prod_{i\in J}\,E_{|J|-i+1}\ , \ \mathcal{L}:=\sum_{i\in J}E_i-\id
\end{align*}
where the product is of arbitrary ordering. Finally, we denote by $E_J$ the orthogonal projection onto the intersection $\cap_{i\in J}\operatorname{Ran}(E_i)$ of the ranges of the projections $E_i$.

\begin{theorem}\label{thm:detectability}
 Suppose each $E_i$ commute with all but at most $g$ other $E_j$'s. Then
 \begin{align*}
 \|\Phi^*-E_J:\mathcal{H}\,\to\, \mathcal{H}\|^2\le\frac{1}{\lambda(\mathcal{L})/g^2+1}\,,
 \end{align*}
 where $\lambda(\mathcal{L})$ denotes the spectral gap of $\mathcal{L}$.
\end{theorem}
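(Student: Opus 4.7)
My approach follows the combinatorial proof of the detectability lemma due to Anshu--Arad--Vidick. First I would reduce to an $L_2$-contraction estimate on the orthogonal complement of the fixed-point subspace: since $\Phi^{*}$ acts as the identity on $\operatorname{Ran}(E_J)$ and each $E_i$ is a contraction, $\|\Phi^{*}-E_J\|^{2} = \sup\{\|\Phi^{*} v\|^{2}:\, v\perp\operatorname{Ran}(E_J),\,\|v\|=1\}$. Because $1-x\le 1/(1+x)$ for all $x\ge 0$, it suffices to establish the sharper linear estimate
\begin{equation*}
\|v\|^{2}-\|\Phi^{*} v\|^{2}\;\ge\;\frac{\lambda(\mathcal{L})}{g^{2}}\,\|v\|^{2}\qquad\text{for every } v\perp\operatorname{Ran}(E_J).
\end{equation*}

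Write $\Phi^{*}=E_{i_{|J|}}\cdots E_{i_{1}}$ in the given ordering and set $v_{0}:=v$, $v_{k}:=E_{i_{k}}v_{k-1}$, so $v_{|J|}=\Phi^{*} v$. Two observations bracket the proof. On one hand, because each $E_{i_{k}}$ is an orthogonal projection,
\begin{equation*}
\|v\|^{2}-\|\Phi^{*} v\|^{2}\;=\;\sum_{k=1}^{|J|}\|(\operatorname{id}-E_{i_{k}})v_{k-1}\|^{2}.
\end{equation*}
On the other, the operator $|J|\operatorname{id}-\sum_{k}E_{i_{k}}$ is positive semidefinite with kernel $\operatorname{Ran}(E_J)$, and its spectral gap equals $\lambda(\mathcal{L})$ (up to the sign convention of the appendix, since $\sum_{k}E_{i_{k}}=\mathcal{L}+\operatorname{id}$), so
\begin{equation*}
\sum_{k=1}^{|J|}\|(\operatorname{id}-E_{i_{k}})v\|^{2}\;=\;\Big\langle v,\bigl(|J|\operatorname{id}-\textstyle\sum_{k}E_{i_{k}}\bigr)v\Big\rangle\;\ge\;\lambda(\mathcal{L})\,\|v\|^{2}.
\end{equation*}
The heart of the argument is therefore the commutation estimate
\begin{equation*}
\sum_{k=1}^{|J|}\|(\operatorname{id}-E_{i_{k}})v\|^{2}\;\le\;g^{2}\sum_{k=1}^{|J|}\|(\operatorname{id}-E_{i_{k}})v_{k-1}\|^{2},
\end{equation*}
whose combination with the two displays above immediately yields the linear bound and hence the theorem.

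To establish this commutation estimate I would exploit the bounded-overlap hypothesis. For each $k$ let $A_{k}:=\{j<k:[E_{i_{k}},E_{i_{j}}]\neq 0\}$; by assumption $|A_{k}|\le g$, and the symmetry of non-commutativity implies that each index $j$ belongs to at most $g$ of the sets $A_{k}$. Expanding
\begin{equation*}
(\operatorname{id}-E_{i_{k}})v\;=\;(\operatorname{id}-E_{i_{k}})v_{k-1}\;+\;\sum_{j=1}^{k-1}(\operatorname{id}-E_{i_{k}})(v_{j-1}-v_{j}),
\end{equation*}
I would use $[E_{i_{k}},E_{i_{j}}]=0$ for every $j\notin A_{k}$ to push $\operatorname{id}-E_{i_{k}}$ past the intervening commuting factors; after telescoping, the only surviving contributions come from the $\le g$ indices $j\in A_{k}$, each of norm bounded by $\|(\operatorname{id}-E_{i_{j}})v_{j-1}\|$. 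A single Cauchy--Schwarz on the resulting sum of at most $g+1$ vectors, followed by summation over $k$ and the double-counting bound on the pairs $(k,j)$ with $j\in A_{k}$, produces the factor $g^{2}$.

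The main obstacle is this combinatorial bookkeeping: making precise what ``push $\operatorname{id}-E_{i_{k}}$ through commuting projections'' means along the sequence $E_{i_{k-1}},\ldots,E_{i_{1}}$, and then organising the surviving non-commuting corrections so that one factor of $g$ is lost in the inner Cauchy--Schwarz and the second factor of $g$ is lost in the outer double counting. Once this is in hand, the gap inequality together with the telescoping identity produces the linear bound $\|\Phi^{*} v\|^{2}\le(1-\lambda(\mathcal{L})/g^{2})\|v\|^{2}$, which in turn implies the stated estimate $\|\Phi^{*}-E_J\|^{2}\le 1/(\lambda(\mathcal{L})/g^{2}+1)$.
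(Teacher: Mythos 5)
Your overall architecture resembles the paper's (the telescoping identity $\|v\|^2-\|\Phi^*v\|^2=\sum_k\|(\operatorname{id}-E_{i_k})v_{k-1}\|^2$, a spectral-gap bound on a Dirichlet form, and a $g^2$-losing commutation argument), but your central ``commutation estimate'' is false, and the error lies in which vector you feed to the spectral gap. You apply the gap to the \emph{input} $v$ and then try to dominate $\sum_k\|(\operatorname{id}-E_{i_k})v\|^2$ by $g^2\sum_k\|(\operatorname{id}-E_{i_k})v_{k-1}\|^2$. This already fails for two rank-one projections in $\mathbb{C}^2$: take $E_1=|0\rangle\langle 0|$, $E_2=|+\rangle\langle +|$ (so $g=1$) and $v=|1\rangle$; then the left side equals $1+\tfrac12=\tfrac32$, while the right side equals $\|(\operatorname{id}-E_1)v\|^2+\|(\operatorname{id}-E_2)E_1v\|^2=1+0=1$. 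The structural reason is that your expansion $(\operatorname{id}-E_{i_k})v=(\operatorname{id}-E_{i_k})v_{k-1}+\sum_{j<k}(\operatorname{id}-E_{i_k})(\operatorname{id}-E_{i_j})v_{j-1}$ produces a correction for \emph{every} $j<k$: commutativity of $E_{i_k}$ with $E_{i_j}$ does not make $(\operatorname{id}-E_{i_k})(\operatorname{id}-E_{i_j})v_{j-1}$ vanish. On the contrary, when they commute the Pythagorean identity gives $\|(\operatorname{id}-E_{i_k})v_{j-1}\|^2=\|(\operatorname{id}-E_{i_k})v_{j}\|^2+\|(\operatorname{id}-E_{i_k})(\operatorname{id}-E_{i_j})v_{j-1}\|^2$, so the defect of the input dominates the defects of all later iterates and the inequality you need runs in the wrong direction.

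The fix --- and what the paper, following Anshu--Arad--Vidick, actually does --- is to apply the spectral gap to the \emph{output} $Y:=\Phi^*v$ (which still satisfies $E_JY=0$) and to prove $\sum_i\|(\operatorname{id}-E_i)Y\|^2\le g^2\sum_j\|(\operatorname{id}-E_j)v_{j-1}\|^2$. There the commutation step works because $(\operatorname{id}-E_i)$ is moved from outside the full product inward, and for a commuting $E_j$ one has $\|(\operatorname{id}-E_i)E_jw\|=\|E_j(\operatorname{id}-E_i)w\|\le\|(\operatorname{id}-E_i)w\|$: the contraction goes the right way, so only the at most $g$ non-commuting indices generate correction terms, each bounded by a local defect, and the procedure terminates when $(\operatorname{id}-E_i)$ annihilates against $E_i$. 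Combining $\lambda\|Y\|^2\le\sum_i\|(\operatorname{id}-E_i)Y\|^2\le g^2(\|v\|^2-\|Y\|^2)\le g^2(1-\|Y\|^2)$ and solving for $\|Y\|^2$ yields exactly $1/(\lambda/g^2+1)$. Note also that this is \emph{not} the linear contraction $\|\Phi^*v\|^2\le(1-\lambda/g^2)\|v\|^2$ you set out to prove; that is a strictly stronger claim (since $1-x\le 1/(1+x)$) which the argument does not deliver and which the theorem does not require.
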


\begin{proof}
We follow the proof provided in \cite{anshu2016simple}. Without loss of generality, we label the ordering of projections in $\Phi^*$ as $\Phi^*=E_{|J|}\ldots E_1$. Let $X\in \mathcal{H}$ with $\|X\|_{\mathcal{H}}\le 1$ and $E_J(X)=0$. For $Y:=\Phi^*(X)$, we have $$-\langle Y,\mathcal{L}(Y)\rangle\equiv \sum_{i\in J}\|(\id-E_i)(Y)\|_{\mathcal{H}}^2\,.$$
Denoting by $N_i$ the subset of indices of projectors that do not commute with $E_i$, we have by triangle inequality that for all $j\in N_i$
\begin{align*}
    \|(\id-E_i)E_j \ldots E_1(X)\|_{\mathcal{H}}\le \|(\id-E_i)E_{j-1}\ldots E_1(X)\|_{\mathcal{H}}+\|(\id-E_i)(\id-E_j)E_{j-1}\ldots E_1(X)\|_{\mathcal{H}}\,.
\end{align*}
Iterating this procedure while moving $\id-E_i$ to the right each time $E_i$ reaches a projector $E_j$ for $j\in N_i$ until $\id-E_i$ reaches $E_i$ and vanishes, we get
\begin{align*}
    \|(\id-E_i)(Y)\|_{\mathcal{H}}&\le \sum_{j\in N_i} \|(\id-E_j)E_{j-1}\ldots E_{1}(X)\|_{\mathcal{H}}\,,
\end{align*}
where we also used that $\|(\id-E_i)\|_{\mathcal{K}}\le 1$. Now, since $|N_i|\le g$ by assumption, we have by Jensen's inequality that
\begin{align*}
    \|(\id-E_i)(Y)\|_{\mathcal{H}}^2\le g\sum_{j\in N_i} \|(\id-E_j)E_{j-1}\ldots E_1(X)\|_{\mathcal{H}}^2 \,.
\end{align*}
Then we sum over $i\in\{1,
\ldots, |J|\}$ so that each of the terms in the above sum appears at most $g$ times,
\begin{align*}
    -\langle Y,\mathcal{L}(Y)\rangle& \le g^2\sum_{j=2}^{|J|}\,\|(\id-E_j)E_{j-1}\ldots E_1(X)\|_{\mathcal{K}}^2\\ &\overset{(1)}{=} g^2\sum_{j=2}^{|J|}\,\|E_{j-1}\ldots E_1(X)\|_{\mathcal{K}}^2-\|E_jE_{j-1}\ldots E_1(X)\|_{\mathcal{K}}^2\\
    &\overset{(2)}{=} g^2 \big(\|E_1(X)\|_{\mathcal{K}}^2-\|E_{|J|}\ldots E_1(X)\|_{\mathcal{K}}^2 \big)
    \le g^2\,(1-\|Y\|_{\mathcal{K}}^2)\,,
    \end{align*}
    where $(1)$ follows from the orthogonality of the projections $E_j$ and $\id-E_j$, whereas $(2)$ follows by a telescopic sum argument. Equivalently, we have found that
    \begin{align*}
   \|Y\|_{\mathcal{K}}^2\le \frac{1}{1-\frac{\langle Y,\mathcal{L}(Y)\rangle}{g^2\|Y\|_{\mathcal{K}}^2} }\le \frac{1}{1+\frac{\lambda(\mathcal{L})}{g^2}}\,.
    \end{align*}
    which completes the proof.
\end{proof}
\begin{rem}
We observe that the result of \cite{anshu2016simple} allow for an arbitrary ordering of the projectors in $\Phi^*$.
\end{rem}

\section{Completely positive order relations from norm estimates}\label{sec:normstoorder}

In this appendix, we provide two generic strategies to derive the order relations needed in \Cref{theorem:meta} in the case of symmetric generators. Our first method is a variant of \cite[Lemma 3.15]{gao2020fisher} (see also \cite[Corollary 1.8]{laracuente2019quasi}). Let $\cM$ be a finite dimensional von Neumann algebra equipped with trace $\tau$ and let $\cN\subset \cM$ be a von Neumann subalgebra. Note that we do not fix the normalization of $\tau$ at this point. For $p\ge 1$, the space of $p$-integrable operators in $\cM$ is denoted by $L_p(\cM,\tau)\equiv L_p(\tau)$, with associated norm
\begin{align*}
    \|x\|_{L_p(\cM)}\equiv \|x\|_p:=\tau\big(|x|^p\big)^{\frac{1}{p}}\,.
\end{align*}
We also need the notion of an amalgamated $L_p$ norm \cite{junge2010mixed}: for $1\le p\le \infty$ the $L_\infty^p(\cN\subset\cM)$ norm is given by
\[ \norm{X}{L_p^\infty(\cN\subset\cM)}:=\sup_{ X=aYb}\norm{aYb}{L_p(\cM)}\ ,\]
where the supremum is taken over $a,b\in \cN$ with $\norm{a}{2p}=\norm{b}{2p}=1$. Its operator space structure is given by (see \cite[Appendix]{gaoindex})
\[ \mathbb{M}_n(L_\infty^p(\cN\subset\cM))=L_\infty^p(\mathbb{M}_n(\cN)\subset \mathbb{M}_n(\cM))\ . \]
It was proved in
 \cite[Theorem 3.9]{gaoindex}
that
\begin{align}\label{eqidindex}
\norm{\id:L_\infty^2(\cN\subset\cM)\to \cM}{\operatorname{cb}}^2
=\norm{\id:L_\infty^1(\cN\subset\cM)\to \cM}{\operatorname{cb}}=C_{\operatorname{cb}}(\cM:\cN)\pl.
\end{align}
Let $E_\cN:\cM\to \cN$ be the trace preserving conditional expectation onto $\cN$. Recall that
there exists a module basis $\{\xi_i\}_{i=1}^n\in \cM$ satisfying (\cite[Theorem 3.15]{paschke1973inner}, see also \cite[Cons\'{e}quence 1.8]{baillet1988indice}):
\[E_{\cN}(\xi_i^\dagger\xi_j)=\delta_{ij}p_i,\]
where $p_i\in \cN$ are some projections. Also recall that $\Phi:\cM\to \cM$ is a $\cN$-bimodule map if
$\Phi(aXb)=a\Phi(X)b$ for all $a,b\in \cN$ and $X\in \cM$.
We have $\Phi\circ E_\cN=E_{\cN}$ if $\Phi$ is unital and $E_{\cN}\circ \Phi= E_{\cN}$ if $\Phi$ is trace preserving. For a $\cN$-bimodule map $\Phi$,
we have by \cite[Lemma 3.12]{gao2020fisher}
\begin{align}\label{eqgapcb}
\la=\norm{\Phi:L_2(\cM)\to L_2(\cM)}{}=\norm{\Phi:L_2(\cM)\to L_2(\cM)}{\operatorname{cb}} = \norm{\Phi:L_\infty^2(\cM)\to L_\infty^2(\cM)}{\operatorname{cb}} \,.
\end{align}
Next, we define the \textit{module Choi operator} of a bi-modular map $\Phi$ as
\[ \chi_\Phi=\sum_{i,j=1}^n\ket{i}\bra{j}\ten \Phi(\xi_i^\dagger\xi_j)\in \cB(l_2^n)\ten \cM\pl,\]
where $l_2^n$ denotes the space of $n$-dimensional vectors with associated norm
\begin{align*}
    \|u\|_{l_2^n}=\Big(\sum_{i}|u_i|^2\Big)^{\frac{1}{2}}
\end{align*}
and $\{\ket{i}\}_{i=1}^n$ is a fixed orthonormal basis in $l_2^n$. Thus $\Phi$ and $\chi_\Phi$ determine each other because for each $x\in \cM$, we have a unique decomposition $x=\sum_{i}\xi_i x_i $ with $x_i\in\cN$ satisfying $p_ix_i=x_i$. Indeed, we have $x_i=E_\cN(\xi_i^\dagger x)$.
Moreover, $\Phi$ is completely positive if and only if $\chi_\Phi$ is a positive operator in $\cB(l_2^n)\ten \cM$. Indeed, for any finite family $y_1,\cdots,y_m\in \cM$, we assume the decomposition $y_j=\sum_{l}\xi_l x_{jl}$ with $x_{jl}\in \cN$. Then
\begin{align*}
(\id\ten \Phi)\big(\sum_{i,j}\ket{i}\bra{j}\ten y_i^\dagger y_j\big)
=& \sum_{i,j}\ket{i}\bra{j}\ten \Phi(y_i^\dagger y_j)
\\
=& \sum_{i,j,k,l}\ket{i}\bra{j}\ten x_{ik}^\dagger \Phi(\xi_k^\dagger \xi_l)x_{jl}
\\
=& \big(\sum_{i,k}\ket{i}\bra{k}\ten x_{ik}^\dagger \big)\chi_\Phi
\big(\sum_{j,l}\ket{l}\bra{j}\ten x_{jl}\big)\ ,
\end{align*}
from which the equivalence claimed directly follows. We also recall \cite[Lemma 3.14]{gao2020fisher} that for a $\cN$-bimodule map $\Phi$,
\begin{align}\label{eq:L1inftytochi}
\norm{\chi_\Phi}{\cB(l_2^n)\ten \cM}=\norm{\Phi:L_\infty^1(\cN\subset\cM)\to \cM}{\operatorname{cb}}.
\end{align}

We are now ready to state and prove the first main Lemma of this section:

\begin{lemma}\label[Lemma]{lemmageneral}Let $\cN\subset \cM$ be a von Neumann subalgebra and let $E_\cN:\cM\to \cN$ be the corresponding trace preserving conditional expectation. Let $\Phi:\cM\to \cM$ be a $\cN$-bimodule map.
\begin{enumerate}
\item[$\operatorname{i)}$] if $\norm{\chi_\Phi-\chi_{E_\cN}}{\cB((l_2^n))\ten \M}{}\le \eps< 1$, then
\[ (1-\eps)E_{\cN}\le_{\operatorname{cp}} \Phi\le_{\operatorname{cp}} (1+\eps) E_{\cN} \ .\]
\item[$\operatorname{ii)}$] if $\Phi$ is a $\tau$-symmetric quantum channel and $\la:=\norm{\Phi-E_\cN:L_2(\cM,\tau)\to L_2(\cM,\tau)}{}<1$,
then for $k>\frac{\ln C_{\operatorname{cb}}(\cM:\cN)}{-\ln \lambda}$,
\[ (1-\epsilon)E_\cN\le_{\operatorname{cp}} \Phi^k\le_{\operatorname{cp}} (1+\epsilon)E_\cN  \]
for $\epsilon=\la^kC_{\operatorname{cb}}(\cM:\cN)<1 $.
\end{enumerate}
\end{lemma}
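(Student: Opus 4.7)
For part i), the plan is to use the characterization that a $\cN$-bimodule map is completely positive if and only if its module Choi operator is positive in $\cB(l_2^n)\otimes \cM$. First I would verify the key structural fact that $\chi_{E_\cN} = \sum_{i}\ket{i}\bra{i}\otimes p_i =: P$ is a projection in $\cB(l_2^n)\otimes \cM$ (using $E_\cN(\xi_i^\dagger\xi_j) = \delta_{ij}p_i$), and that the module Choi operator $\chi_\Phi$ is supported on $P$, i.e.~$P\chi_\Phi = \chi_\Phi P = \chi_\Phi$. This support property follows from the bimodule property of $\Phi$ combined with the module basis identity $\xi_i p_i = \xi_i$, since $P\chi_\Phi = \sum_{i,l}\ket{i}\bra{l}\otimes p_i\Phi(\xi_i^\dagger\xi_l)$ and $p_i\Phi(\xi_i^\dagger\xi_l) = \Phi(p_i\xi_i^\dagger\xi_l) = \Phi(\xi_i^\dagger\xi_l)$. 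Since $\chi_{\Phi-E_\cN}$ is self-adjoint (channels and conditional expectations preserve adjoints), the operator-norm hypothesis gives $-\eps\mathbb{1}\le \chi_\Phi-\chi_{E_\cN}\le\eps\mathbb{1}$. Compressing by $P$ and using $P\chi_\Phi P = \chi_\Phi$ yields $(1-\eps)\chi_{E_\cN}\le \chi_\Phi\le(1+\eps)\chi_{E_\cN}$ in $\cB(l_2^n)\otimes \cM$. Applying the equivalence for bimodule maps then gives the two cp-order inequalities.

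For part ii), the strategy is to reduce to part i) by bounding the operator norm of $\chi_{\Phi^k-E_\cN}$. First I would verify the preliminary identity $\Phi^k - E_\cN = (\Phi-E_\cN)^k =: T^k$: unitality plus the bimodule property force $\Phi|_\cN = \id_\cN$, hence $\Phi\circ E_\cN = E_\cN$, and $\tau$-symmetry together with $E_\cN$ being trace-preserving gives $E_\cN\circ \Phi = E_\cN$, so the cross terms in the binomial expansion of $T^k$ collapse. Then by \eqref{eq:L1inftytochi}, it suffices to bound $\norm{T^k:L_\infty^1(\cN\subset\cM)\to \cM}{\operatorname{cb}}$ by $C_{\operatorname{cb}}(\cM:\cN)\,\lambda^k$.

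The core of the argument is a factorization through the amalgamated $L_\infty^2$ space. For $k\ge 2$ I would split
\[
L_\infty^1(\cN\subset\cM)\;\xrightarrow{T}\;L_\infty^2(\cN\subset\cM)\;\xrightarrow{T^{k-2}}\;L_\infty^2(\cN\subset\cM)\;\xrightarrow{T}\;\cM
\]
and bound the three factors: \eqref{eqgapcb} gives $\norm{T^{k-2}:L_\infty^2\to L_\infty^2}{\operatorname{cb}}\le \lambda^{k-2}$; composing with the identity $L_\infty^2\to \cM$ and invoking \eqref{eqidindex} gives $\norm{T:L_\infty^2\to \cM}{\operatorname{cb}}\le \lambda\sqrt{C_{\operatorname{cb}}(\cM:\cN)}$; and a duality argument exploiting the $\tau$-symmetry $T=T^*$ gives $\norm{T:L_\infty^1\to L_\infty^2}{\operatorname{cb}}\le \lambda\sqrt{C_{\operatorname{cb}}(\cM:\cN)}$. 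Multiplying, $\norm{\chi_{T^k}}{}\le C_{\operatorname{cb}}(\cM:\cN)\,\lambda^k =: \eps$, which is $<1$ exactly under the hypothesis $k>\ln C_{\operatorname{cb}}(\cM:\cN)/(-\ln\lambda)$. Applying part i) with this $\eps$ then yields the claim. The case $k=1$ is handled by the same factorization $L_\infty^1\to L_\infty^2\to \cM$ without an intermediate $T^{k-2}$ block.

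The main obstacle is the duality step $\norm{T:L_\infty^1\to L_\infty^2}{\operatorname{cb}}=\norm{T:L_\infty^2\to \cM}{\operatorname{cb}}$ for the self-adjoint bimodule map $T$. This requires identifying the operator-space duals of the amalgamated spaces $L_\infty^1(\cN\subset\cM)$ and $L_\infty^2(\cN\subset\cM)$ and transferring cb-norms through $\tau$-symmetry; the needed identifications are available in the amalgamated $L_p$ theory of \cite{junge2010mixed}, and the argument parallels that of \cite[Lemma 3.15]{gao2020fisher} which treats the analogous situation.
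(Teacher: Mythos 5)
Your proposal is correct and follows essentially the same route as the paper: part i) rests on the same two facts (that $\chi_{E_\cN}$ is a projection supporting $\chi_\Phi$ and that the operator-norm bound transfers to an order bound $(1-\eps)\chi_{E_\cN}\le\chi_\Phi\le(1+\eps)\chi_{E_\cN}$, hence to cp order via the module Choi correspondence), with your compression by $P$ playing the role of the paper's sandwiching of $\chi_\Phi-\chi_{E_\cN}$ by column operators; part ii) uses the identical reduction $\Phi^k-E_\cN=(\Phi-E_\cN)^k$, the identity \eqref{eq:L1inftytochi}, and a factorization through $L_\infty^2(\cN\subset\cM)$ via \eqref{eqidindex} and \eqref{eqgapcb} yielding the same constant $\lambda^kC_{\operatorname{cb}}(\cM:\cN)$. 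The only substantive input you defer to duality, namely $\norm{T:L_\infty^1\to L_\infty^2}{\operatorname{cb}}\le\la\sqrt{C_{\operatorname{cb}}(\cM:\cN)}$, is the same amalgamated-$L_p$ fact the paper invokes implicitly as $\norm{\id:L_\infty^1\to L_\infty^2}{\operatorname{cb}}=C_{\operatorname{cb}}(\cM:\cN)^{1/2}$, so your organization of the three factors is equivalent to theirs.
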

\begin{proof} For i) we first observe that the Choi operator of $E_{\cN}$
\[\chi_{E_\cN}=\sum_{i,j=1}^n\ket{i}\bra{j}\ten E_{\cN}(\xi_i^\dagger \xi_j)=\sum_{i}\ket{i}\bra{i}\ten p_i\]
is a projection.
By the assumption $\norm{\chi_\Phi-\chi_{E_\cN}}{\cB((l_2^n))\ten \M}{}\le \eps<1$, we may write $\chi_{E_\cN}-\chi_\Phi =\alpha-\beta$ with $\displaystyle 0\le \alpha,\beta \le \eps $. 
Let $\{y_1,\cdots,y_m\}$ be a finite family of elements in $ \cM$ with decomposition $y_j=\sum_{l}\xi_l x_{jl}$ with $x_{jl}=p_jx_{jl}\in \cN$, and denote $Y=\sum_{j}\ket{1}\bra{j}\ten y_j$. Then,
\begin{align*}
(\id\ten \Phi)(Y^\dagger Y)-(\id\ten E_{\cN})(Y^\dagger Y)
=& \sum_{i,j,k,l}\ket{i}\bra{j}\ten x_{ik}^\dagger p_k(\Phi-E_{\cN})(\xi_k^\dagger \xi_l)p_jx_{jl}\\
=& \Big(\sum_{i,k}\ket{i}\bra{k}\ten x_{ik}^\dagger \Big)\Big(\chi_\Phi-\chi_{E_{\cN}}\Big)
\Big(\sum_{j,l}\ket{l}\bra{j}\ten x_{jl}\Big)
\\=& \Big(\sum_{i,k}\ket{i}\bra{k}\ten x_{ik}^\dagger p_k\Big)\Big(\al-\beta\Big)
\Big(\sum_{j,l}\ket{l}\bra{j}\ten p_jx_{jl}\Big)\,.\end{align*}
Since $0\le \al\le\eps$,
\begin{align*}\Big(\sum_{i,k}\ket{i}\bra{k}\ten x_{ik}^\dagger p_k\Big)\al\Big(\sum_{j,l}\ket{l}\bra{j}\ten p_jx_{jl}\Big) \le &\epsilon \, \Big(\sum_{i,k}\ket{i}\bra{k}\ten x_{ik}^\dagger p_k\Big)\Big(\sum_{j,l}\ket{l}\bra{j}\ten p_jx_{jl}\Big)\\ =& \,\eps\sum_{i,j,k}
\ket{i}\bra{j}\ten x_{ik}^\dagger p_kx_{jk}\\ =&\eps\sum_{i,j,l,k}
\ket{i}\bra{j}\ten x_{ik}^\dagger E_\cN(\xi_k\xi_l)x_{jl}
=\,\eps\,(\id \ten E_\cN)(Y^\dagger Y)
\end{align*}
and similarly for $\beta$.
Thus we showed that
\[-\eps \,(\id\ten E_{\cN})(Y^\dagger Y)\le (\id\ten \Phi)(Y^\dagger Y)-(\id\ten E_{\cN})(Y^\dagger Y)\le \eps \,(\id\ten E_{\cN})(Y^\dagger Y) \]
for any $Y=\sum_{j}\ket{1}\bra{j}\ten y_j$, which proves i) because any $X\ge 0$ in $\mathbb{M}_m\ten \cM$ can be written as a sum $X=\sum_{j}Y_j^\dagger Y_j$ of such $Y$'s.

For ii), we first observe that since $\Phi$ is trace perserving and unital, we have by bimodule property that $\Phi\circ E_\cN=E_\cN\circ \Phi= E_{\cN}$ and hence $(\Phi-E_{\cN})^k=\Phi^k-E_{\cN}$. Then, we can use identity \eqref{eq:L1inftytochi} between the $L_\infty$ norm of the module Choi operator of $\Phi^k$ and its $L_\infty^1$ norm, so that:
\begin{align*}
&\norm{\chi_{\Phi^k}-\chi_{E_{\cN}}}{\cB(l_2^n)\ten \cM}\\&~~~~~=\norm{\Phi^k-E_{\cN}:L_\infty^1(\cN\subset \cM)\to \cM }{\operatorname{cb}}\\
&~~~~~=\norm{(\Phi-E_{\cN})^k:L_\infty^1(\cN\subset \cM)\to \cM }{\operatorname{cb}}
\\
&~~~~~=\norm{\id:L_\infty^1(\cN\subset \cM)\to L_\infty^2(\cN\subset \cM)}{\operatorname{cb}}
\cdot
\norm{(\Phi-E_{\cN})^k:L_\infty^2(\cN\subset \cM)\to L_\infty^2(\cN\subset \cM)}{\operatorname{cb}}\\ &~~~~~\qquad\qquad\qquad\qquad\qquad\qquad\qquad \qquad\, \qquad\cdot
\norm{\id:L_\infty^2(\cN\subset \cM)\to  \cM}{\operatorname{cb}}
\\&~~~~~= C_{\operatorname{cb}}(\cM:\cN)^{1/2}\cdot\la^k\cdot  C_{\operatorname{cb}}(\cM:\cN)^{1/2}=C_{\operatorname{cb}}(\cM:\cN)\la^k\,,
\end{align*}
where we also used \eqref{eqidindex} and \eqref{eqgapcb}. Then, the assertion ii) follows from i).
\end{proof}

The next main result of the present appendix is an approach to obtain $\operatorname{cp}$ orders which follows the idea of \cite[Appendix A]{brandao2016local}. Here, we restrict ourselves to the algebra $\cM=\cB(\cH)$ and take $\tau:=\tr$ to be the standard matrix trace. Given a map $\Phi:\cB(\cH)\to \cB(\cH)$, the standard (normalized)
Choi-Jamiolkowski matrix $J_\Phi$ is
\[J_\Phi:=(\id\ten \Phi)(\ket{\psi}\bra{\psi})\in \cB(\cH\ten \cH)\ ,\]
where $\ket{\psi}=\frac{1}{\sqrt{d_\cH}}\sum_{i}\ket{i}\ket{i}$ is the maximally entangled state on $\cH\ten \cH$. It is well-known that
$J_\Phi\ge 0$ if and only if $\Phi$ is completely positive, and
$J_\Phi$ is a state if  and only if $\Phi$ is a quantum channel.

\begin{lemma}\label{lemma:brandao}
Let $E_{\cN}:\cB(\cH)\to \cN$ be the trace preserving conditional expectation onto a subalgebra $\cN$.
Let $\Phi:\cT_1(\cH)\to \cT_1(\cH)$ be a quantum channel such that $\Phi\circ E_{\cN}=E_{\cN}\circ \Phi=E_{\cN}$. Suppose $J_{E_\cN}\ge C_E^{-1} P$ for some $C_E>0$ where $P$ is the support projection of $J_{E_\cN}$.
\begin{enumerate}
    \item [$\operatorname{i)}$] If $J_\Phi$
has support $P_{\Phi}\le P$ and for some $0\le \eps<1$
 \begin{align*}
   \|J_\Phi-J_{E_{\cN}}\|_{\infty}\le \eps\, C_E^{-1}\, ,
 \end{align*}
 then
\[ (1-\epsilon)E_\cN\le_{\operatorname{cp}} \Phi\le_{\operatorname{cp}} (1+\epsilon)E_\cN  \,.\]
\item [$\operatorname{ii)}$] If for each $k$, $J_{\Phi^k}$ has support projection $P$ and $\la:=\norm{\Phi-E_{\cN}:L_2(\cB(\cH),\tr)\to L_2(\cB(\cH),\tr)}{}<1$. Then for
  $k>
  \frac{\ln(C_E)}{-\ln(\lambda)}$,
 \[ (1-\epsilon)E_\cN\le_{\operatorname{cp}} \Phi^k\le_{\operatorname{cp}} (1+\epsilon)E_\cN  \,,\]
where $\epsilon =\lambda^k\,C_E<1$.
\end{enumerate}
\end{lemma}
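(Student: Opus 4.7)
The plan is to reduce both parts to the elementary fact that complete positivity of $\Phi_2 - \Phi_1$ is equivalent to the positivity of the Choi--Jamio\l kowski matrix $J_{\Phi_2} - J_{\Phi_1}$. Thus the cp-inequalities $(1-\epsilon)E_\cN\le_{\operatorname{cp}} \Phi \le_{\operatorname{cp}} (1+\epsilon) E_\cN$ in i) translate to the two operator inequalities
\begin{equation*}
(1-\epsilon) J_{E_\cN} \,\le\, J_\Phi \,\le\, (1+\epsilon) J_{E_\cN}.
\end{equation*}

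For part i), the idea is to sandwich $J_\Phi - J_{E_\cN}$ between multiples of $J_{E_\cN}$ via the projection $P$. Since $J_\Phi$ has support $P_\Phi \le P$ and $J_{E_\cN}$ has support $P$, the difference $J_\Phi - J_{E_\cN}$ is supported on $P$. The norm hypothesis then gives $-\epsilon C_E^{-1}P \le J_\Phi - J_{E_\cN} \le \epsilon C_E^{-1} P$. Combining with the assumption $J_{E_\cN} \ge C_E^{-1} P$ yields $\epsilon C_E^{-1} P \le \epsilon J_{E_\cN}$, hence the desired two-sided bound by $\pm \epsilon J_{E_\cN}$ on $J_\Phi - J_{E_\cN}$, which is exactly the Choi form of i).

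For part ii), the key algebraic input is that the assumptions $\Phi\circ E_\cN = E_\cN\circ \Phi = E_\cN$ imply $\Phi^k - E_\cN = (\Phi - E_\cN)^k$, and in particular $J_{\Phi^k} - J_{E_\cN} = J_{(\Phi-E_\cN)^k}$. I then want to convert the given $L_2\to L_2$ bound $\lambda^k$ on $(\Phi-E_\cN)^k$ into an $\|\cdot\|_\infty$ bound on the corresponding Choi matrix. To do this I represent each map $\Psi:\cB(\cH)\to\cB(\cH)$ by its ``transfer matrix'' $T_\Psi\in \mathbb{M}_{d^2}$, characterized by $(T_\Psi)_{(k,l),(i,j)} = \langle k|\Psi(|i\rangle\langle j|)|l\rangle$, and observe the standard reshuffle identity
\begin{equation*}
\|J_\Psi\|_2 \,=\, \tfrac{1}{d}\,\|T_\Psi\|_2,
\end{equation*}
where $\|\cdot\|_2$ denotes the Hilbert--Schmidt (Frobenius) norm of a $d^2\times d^2$ matrix. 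Since $T_\Psi$ is precisely the matrix of $\Psi$ acting on $L_2(\cB(\cH),\tr)$, its spectral norm coincides with the $L_2\to L_2$ norm of $\Psi$. Applying this to $\Psi := (\Phi - E_\cN)^k$ and using the trivial bound $\|M\|_2 \le d \,\|M\|_\infty$ for $M\in \mathbb{M}_{d^2}$, I obtain
\begin{equation*}
\|J_{\Phi^k} - J_{E_\cN}\|_\infty \,\le\, \|J_{\Phi^k} - J_{E_\cN}\|_2 \,=\, \tfrac{1}{d}\,\|T_{\Phi-E_\cN}^k\|_2 \,\le\, \lambda^k.
\end{equation*}
Combined with the hypothesis that the support of $J_{\Phi^k}$ equals $P$, part i) applies with $\epsilon := \lambda^k C_E$, which is strictly less than $1$ exactly when $k > \ln(C_E)/(-\ln \lambda)$.

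The only mildly delicate point is the reshuffle identity together with the dimensional conversion from spectral to Hilbert--Schmidt norm; all other steps are essentially bookkeeping once one works in Choi form. In particular, the support hypothesis on $J_{\Phi^k}$ is not something we need to derive, it is part of the stated assumptions, which makes the argument entirely modular: part i) is the ``hard'' structural content (using the lower bound $J_{E_\cN}\ge C_E^{-1}P$), and part ii) is a quantitative $L_2$-to-cp conversion using the reshuffle trick.
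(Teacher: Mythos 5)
Your proof is correct and follows essentially the same route as the paper's: part i) is the identical sandwich of $J_\Phi-J_{E_\cN}$ between $\pm\eps C_E^{-1}P\le \pm\eps J_{E_\cN}$, and part ii) reduces to i) via $\Phi^k-E_\cN=(\Phi-E_\cN)^k$ and the bound $\|J_{\Phi^k}-J_{E_\cN}\|_\infty\le\|J_{\Phi^k}-J_{E_\cN}\|_2\le\lambda^k$. The only cosmetic difference is that you obtain the last estimate through the reshuffle identity $\|J_\Psi\|_2=d^{-1}\|T_\Psi\|_2$ and $\|M\|_2\le d\|M\|_\infty$ on $\mathbb{M}_{d^2}$, whereas the paper gets it directly from $\|(\id\otimes\Psi)(\ket{\psi}\bra{\psi})\|_2\le\|\Psi\|_{L_2\to L_2}$ since the maximally entangled state has unit Hilbert--Schmidt norm; the two computations are equivalent.
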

\begin{proof}If $J_\Phi$ has support $P_\Phi\le P$, then
\begin{align*}
 J_\Phi-J_{E_{\cN}}\,\le\, \| J_\Phi-J_{E_{\cN}}\|_\infty\, P\,\le \,\|J_\Phi-J_{E_{\cN}}\|_\infty\, C_E\,J_{E_{\cN}}\le \eps\, J_{E_{\cN}}\ ,
 \end{align*}
Similarly,
\[ J_{E_{\cN}}-J_\Phi\le \eps \, J_{E_{\cN}}\ ,\]
Thus we have
\[ (1-\eps)J_{E_{\cN}}\le J_\Phi\le (1+\eps)J_{E_{\cN}}\]
which implies i). For ii), we note that the maximally entangled state $\ket{\psi}\bra{\psi}$ is a pure state, so that
 \begin{align*}
 \norm{J_{\Phi^k}-J_{E_{\cN}}}{\infty}\le& \norm{J_{\Phi^k}-J_{E_{\cN}}}{2}
=\norm{(\id \ten(\Phi^k-E_\cN))(\ket{\psi}\bra{\psi})}{2}
 =\norm{(\id \ten(\Phi-E_\cN)^k)(\ket{\psi}\bra{\psi})}{2}
\\ \le&  \norm{(\id \ten(\Phi-E_\cN)^k)(\ket{\psi}\bra{\psi})}{2}
\\ \le &\norm{\Phi-E_\cN:L_2(\cB(\cH),\tr)\to L_2(\cB(\cH),\tr)}{}^k=\la^k
 \end{align*}
Then assertion ii) follows from i).
\end{proof}
We end this appendix by comparing the approaches proposed in Lemmas \ref{lemmageneral} and \ref{lemma:brandao} in two cases.

\begin{example}[(Trivial subalgebra)] Consider the trivial subalgebra $\mathbb{C}1\subset \cB(\cH)$. Then for $\dim(\cH)=d$, $$C_{\operatorname{cb}}(\cB(\cH),\mathbb{C})=C_E=d^2\ . $$
\end{example}

\begin{example}[(Unitary Representations of compact groups)]\label{exam:urep}
{\rm Let $G$ be a compact group and let $\pi:G\to \cB(\cH)$ be a unitary representation. The fixed point space for the conjugation action $\al:G\curvearrowright \cB(\cH)$ defined as $\al_g(x)=\pi(g)x\pi(g)^\dagger $ is
$\cN=\pi(G)'=\{x\in \cB(\cH)\ |\ x\pi(g)=\pi(g)x\ \}$, that is the commutant of the representation. Suppose $\pi$ admits the decomposition $\pi=\oplus_{i\in \operatorname{Irr}(G)}\big(\pi_{i}\ten \id_{n_i}\big)$ as a direct sum of irreducible representations $\pi_i:G\to \mathbb{M}_{m_i}$ with multiplicity $n_i$. Then $\cN= \oplus_{i\in \operatorname{Irr}(G)}\big(\mathbb{C}\Id_{m_i}\ten \mathbb{M}_{n_i}\big)$ is a direct sum of matrix algebras $\mathbb{M}_{n_i}$ with multiplicity $m_i$.
The trace preserving conditional expectation is $$E_{\cN}=\oplus_{i}\\ \big(\tr_{m_i}\ten \id_{n_i}\big)\,,$$
where $\tr_{m_i}(x)=\tr(x)\frac{\Id}{m_i}$ is the partial trace map and $\id_{n_i}$ is the identity map on $\mathbb{M}_{n_i}$. The Choi-Jamiolkowski state of
$E_{\cN}$ is
\[J_{E_{\cN}}=\frac{1}{d_\cH}\oplus_{i}\Big( \frac{n_i}{m_i}(\Id_{m_i}\ten \Id_{m_i})\ten \ket{\psi_{n_i}}\bra{\psi_{n_i}}\Big)\]
where $\Id_{m_i}\in \mathbb{M}_{m_i}$ is the identity operator and $\ket{\psi_{n_i}}=\frac{1}{\sqrt{n_i}}\sum_{j=1}^{n_i}\ket{j}\ket{j}$ is the maximally entangled state in $\mathbb{M}_{n_i}\ten \mathbb{M}_{n_i}$.
The support projection is
\[ P=\oplus_{i}\Big( (\Id_{m_i}\ten \Id_{m_i})\ten \ket{\psi_{n_i}}\bra{\psi_{n_i}}\Big)\ .\]
Then $J_{E_{\cN}}\ge C_E^{-1} P$ for
\begin{align*}
    C_E=d_\cH\max_{i}\frac{m_i}{n_i}\,.
\end{align*}
 On the other hand, the cb-index is
\[C_{\operatorname{cb}}(\cB(\cH):\cN)=\sum_{i}m_i^2\pl.\]
 In particular, when $G$ is a finite group and $\pi_L: G\to \cB(l_2(G))$ is its the left regular representation, we have by Schur-Weyl Theorem that $n_i=m_i$ and
 \[C_E=d_\cH=\sum_{i}m_i^2=C_{\operatorname{cb}}(\cB(l_2(G)):\cN),\]
 where $\cN$ is isomorphic to the group algebra $\mathbb{C}G$.}
\end{example}

\end{appendix}


\begin{thebibliography}{100}

\bibitem{aharonov2009detectability}
D.~Aharonov, I.~Arad, Z.~Landau, and U.~Vazirani.
\newblock The detectability lemma and quantum gap amplification.
\newblock In {\em Proceedings of the forty-first annual ACM symposium on Theory
  of computing}, pages 417--426, 2009.

\bibitem{aharonov2010quantum}
D.~Aharonov, I.~Arad, Z.~Landau, and U.~Vazirani.
\newblock Quantum {H}amiltonian complexity and the detectability lemma.
\newblock {\em arXiv preprint arXiv:1011.3445}, 2010.

\bibitem{AG76}
R.~Ahlswede and P.~G{\'a}cs.
\newblock Spreading of sets in product spaces and hypercontraction of the
  {M}arkov operator.
\newblock {\em The annals of probability}, pages 925--939, 1976.

\bibitem{anshu2016simple}
A.~Anshu, I.~Arad, and T.~Vidick.
\newblock Simple proof of the detectability lemma and spectral gap
  amplification.
\newblock {\em Physical Review B}, 93(20):205142, 2016.

\bibitem{Audenaert06}
K.~Audenaert.
\newblock A digest on representation theory of the symmetric group.
\newblock {\em Found at http://personal. rhul. ac.
  uk/usah/080/qitnotes\_files/irreps\_ v06. pdf}, 2, 2006.

\bibitem{baillet1988indice}
M.~Baillet, Y.~Denizeau, and J.-F. Havet.
\newblock Indice d'une esp{\'e}rance conditionnelle.
\newblock {\em Compositio mathematica}, 66(2):199--236, 1988.

\bibitem{bardet2017estimating}
I.~Bardet.
\newblock Estimating the decoherence time using non-commutative functional
  inequalities.
\newblock {\em arXiv preprint arXiv:1710.01039}, 2017.

\bibitem{bardet2020approximate}
I.~Bardet, A.~Capel, and C.~Rouz{\'e}.
\newblock Approximate tensorization of the relative entropy for noncommuting
  conditional expectations.
\newblock {\em arXiv preprint arXiv:2001.07981}, 2020.

\bibitem{bardet2019group}
I.~Bardet, M.~Junge, N.~Laracuente, C.~Rouze, and D.~S. Franca.
\newblock Group transference techniques for the estimation of the decoherence
  times and capacities of quantum {M}arkov semigroups.
\newblock {\em {IEEE} Transactions on Information Theory}, 67(5):2878--2909,
  May 2021.

\bibitem{Baudoin09}
F.~Baudoin and M.~Bonnefont.
\newblock The subelliptic heat kernel on $\operatorname{SU}(2)$:
  representations, asymptotics and gradient bounds.
\newblock {\em Mathematische Zeitschrift}, 263(3):647--672, 2009.

\bibitem{beigi2020quantum}
S.~Beigi, N.~Datta, and C.~Rouz{\'e}.
\newblock Quantum reverse hypercontractivity: its tensorization and application
  to strong converses.
\newblock {\em Communications in Mathematical Physics}, 376(2):753--794, 2020.

\bibitem{berta2010uncertainty}
M.~Berta, M.~Christandl, R.~Colbeck, J.~M. Renes, and R.~Renner.
\newblock The uncertainty principle in the presence of quantum memory.
\newblock {\em Nature Physics}, 6(9):659--662, 2010.

\bibitem{berta2019quantum}
M.~Berta, D.~Sutter, and M.~Walter.
\newblock Quantum {B}rascamp-{L}ieb dualities.
\newblock {\em arXiv preprint arXiv:1909.02383}, 2019.

\bibitem{blahut1974hypothesis}
R.~Blahut.
\newblock Hypothesis testing and information theory.
\newblock {\em IEEE Transactions on Information Theory}, 20(4):405--417, 1974.

\bibitem{bobkov2003modified}
S.~Bobkov and P.~Tetali.
\newblock Modified log-{S}obolev inequalities, mixing and hypercontractivity.
\newblock In {\em Proceedings of the thirty-fifth annual ACM symposium on
  Theory of computing}, pages 287--296, 2003.

\bibitem{BG08}
J.~Bourgain and A.~Gamburd.
\newblock On the spectral gap for finitely-generated subgroups of su (2).
\newblock {\em Inventiones mathematicae}, 171(1):83--121, 2008.

\bibitem{BG12}
J.~Bourgain and A.~Gamburd.
\newblock A spectral gap theorem in su $(d) $.
\newblock {\em Journal of the European Mathematical Society}, 14(5):1455--1511,
  2012.

\bibitem{brandao2016local}
F.~G. Brandao, A.~W. Harrow, and M.~Horodecki.
\newblock Local random quantum circuits are approximate polynomial-designs.
\newblock {\em Communications in Mathematical Physics}, 346(2):397--434, 2016.

\bibitem{brannan2020complete}
M.~Brannan, L.~Gao, and M.~Junge.
\newblock Complete logarithmic {S}obolev inequalities via {R}icci curvature
  bounded below {I}.
\newblock {\em arXiv preprint arXiv:2007.06138}, 2020.

\bibitem{brannan2020complete2}
M.~Brannan, L.~Gao, and M.~Junge.
\newblock Complete logarithmic {S}obolev inequalities via {R}icci curvature
  bounded below {II}.
\newblock {\em arXiv preprint arXiv:2007.06138}, 2020.

\bibitem{bubley1997path}
R.~Bubley and M.~Dyer.
\newblock Path coupling: A technique for proving rapid mixing in {M}arkov
  chains.
\newblock In {\em Proceedings 38th Annual Symposium on Foundations of Computer
  Science}, pages 223--231. IEEE, 1997.

\bibitem{capel2020modified}
{\'A}.~Capel, C.~Rouz{\'e}, and D.~S. Fran{\c{c}}a.
\newblock The modified logarithmic {S}obolev inequality for quantum spin
  systems: classical and commuting nearest neighbour interactions.
\newblock {\em arXiv preprint arXiv:2009.11817}, 2020.

\bibitem{carbone}
R.~Carbone and A.~Martinelli.
\newblock Logarithmic {S}obolev inequalities in non-commutative algebras.
\newblock {\em Infinite Dimensional Analysis, Quantum Probability and Related
  Topics}, 18(02):1550011, 2015.

\bibitem{carlen2000many}
E.~Carlen, M.~C. Carvalho, and M.~Loss.
\newblock Many-body aspects of approach to equilibrium.
\newblock {\em Journ{\'e}es Equations aux d{\'e}riv{\'e}es partielles}, pages
  1--12, 2000.

\bibitem{carlen2011kinetic}
E.~Carlen, M.~C. Carvalho, and M.~Loss.
\newblock Kinetic theory and the {K}ac master equation.
\newblock {\em Entropy \& Quantum II, Contemp. Math}, 552:1--20, 2011.

\bibitem{carlen2019chaos}
E.~A. Carlen, M.~C. Carvalho, and M.~P. Loss.
\newblock Chaos, ergodicity, and equilibria in a quantum {K}ac model.
\newblock {\em Advances in Mathematics}, 358:106827, 2019.

\bibitem{Carlen1993}
E.~A. Carlen and E.~H. Lieb.
\newblock Optimal hypercontractivity for {F}ermi fields and related
  non-commutative integration inequalities.
\newblock {\em Communications in Mathematical Physics}, 155(1):27--46, July
  1993.

\bibitem{carlen2014analog}
E.~A. Carlen and J.~Maas.
\newblock An analog of the 2-{W}asserstein metric in non-commutative
  probability under which the {F}ermionic fokker--planck equation is gradient
  flow for the entropy.
\newblock {\em Communications in mathematical physics}, 331(3):887--926, 2014.

\bibitem{carlen2017gradient}
E.~A. Carlen and J.~Maas.
\newblock Gradient flow and entropy inequalities for quantum {M}arkov
  semigroups with detailed balance.
\newblock {\em Journal of Functional Analysis}, 273(5):1810--1869, 2017.

\bibitem{carlen2020non}
E.~A. Carlen and J.~Maas.
\newblock Non-commutative calculus, optimal transport and functional
  inequalities in dissipative quantum systems.
\newblock {\em Journal of Statistical Physics}, 178(2):319--378, 2020.

\bibitem{carlen2020recovery}
E.~A. Carlen and A.~Vershynina.
\newblock Recovery map stability for the data processing inequality.
\newblock {\em Journal of Physics A: Mathematical and Theoretical},
  53(3):035204, 2020.

\bibitem{cesi2001quasi}
F.~Cesi.
\newblock Quasi-factorization of the entropy and logarithmic {S}obolev
  inequalities for {G}ibbs random fields.
\newblock {\em Probability Theory and Related Fields}, 120(4):569--584, 2001.

\bibitem{cipriani1997dirichlet}
F.~Cipriani.
\newblock Dirichlet forms and {M}arkovian semigroups on standard forms of von
  {N}eumann algebras.
\newblock {\em journal of functional analysis}, 147(2):259--300, 1997.

\bibitem{Coles17}
P.~J. Coles, M.~Berta, M.~Tomamichel, and S.~Wehner.
\newblock Entropic uncertainty relations and their applications.
\newblock {\em Reviews of Modern Physics}, 89(1):015002, 2017.

\bibitem{coles2011information}
P.~J. Coles, L.~Yu, V.~Gheorghiu, and R.~B. Griffiths.
\newblock Information-theoretic treatment of tripartite systems and quantum
  channels.
\newblock {\em Physical Review A}, 83(6):062338, 2011.

\bibitem{Csiszar67}
I.~Csisz{\'a}r.
\newblock Information-type measures of difference of probability distributions
  and indirect observation.
\newblock {\em studia scientiarum Mathematicarum Hungarica}, 2:229--318, 1967.

\bibitem{dai2002entropy}
P.~Dai~Pra, A.~M. Paganoni, G.~Posta, et~al.
\newblock Entropy inequalities for unbounded spin systems.
\newblock {\em The Annals of Probability}, 30(4):1959--1976, 2002.

\bibitem{datta2009min}
N.~Datta.
\newblock Min-and max-relative entropies and a new entanglement monotone.
\newblock {\em IEEE Transactions on Information Theory}, 55(6):2816--2826,
  2009.

\bibitem{datta2020relating}
N.~Datta and C.~Rouz{\'e}.
\newblock Relating relative entropy, optimal transport and {F}isher
  information: a quantum {HWI} inequality.
\newblock In {\em Annales Henri Poincar{\'e}}, pages 1--36. Springer, 2020.

\bibitem{davies1992non}
E.~B. Davies and J.~M. Lindsay.
\newblock Non-commutative symmetric {M}arkov semigroups.
\newblock {\em Mathematische Zeitschrift}, 210(1):379--411, 1992.

\bibitem{diaconis1981generating}
P.~Diaconis and M.~Shahshahani.
\newblock Generating a random permutation with random transpositions.
\newblock {\em Zeitschrift f{\"u}r Wahrscheinlichkeitstheorie und verwandte
  Gebiete}, 57(2):159--179, 1981.

\bibitem{Dobrushin56}
R.~L. Dobrushin.
\newblock Central limit theorem for nonstationary {M}arkov chains. i.
\newblock {\em Theory of Probability \& Its Applications}, 1(1):65--80, 1956.

\bibitem{Dobrushin56II}
R.~L. Dobrushin.
\newblock Central limit theorem for nonstationary {M}arkov chains. ii.
\newblock {\em Theory of Probability \& Its Applications}, 1(4):329--383, 1956.

\bibitem{fannes1992finitely}
M.~Fannes, B.~Nachtergaele, and R.~F. Werner.
\newblock Finitely correlated states on quantum spin chains.
\newblock {\em Communications in mathematical physics}, 144(3):443--490, 1992.

\bibitem{frank2013extended}
R.~L. Frank and E.~H. Lieb.
\newblock Extended quantum conditional entropy and quantum uncertainty
  inequalities.
\newblock {\em Communications in Mathematical Physics}, 323(2):487--495, 2013.

\bibitem{frigerio1982long}
A.~Frigerio and M.~Verri.
\newblock Long-time asymptotic properties of dynamical semigroups on
  {W}$^*$-algebras.
\newblock {\em Mathematische Zeitschrift}, 180(3):275--286, 1982.

\bibitem{gao2003exponential}
F.~Gao, J.~Quastel, et~al.
\newblock Exponential decay of entropy in the random transposition and
  {B}ernoulli-{L}aplace models.
\newblock {\em Annals of Applied Probability}, 13(4):1591--1600, 2003.

\bibitem{gao2017unifying}
L.~Gao, M.~Junge, and N.~LaRacuente.
\newblock Unifying entanglement with uncertainty via symmetries of observable
  algebras.
\newblock {\em arXiv preprint arXiv:1710.10038}, 2017.

\bibitem{gao2018uncertainty}
L.~Gao, M.~Junge, and N.~LaRacuente.
\newblock Uncertainty principle for quantum channels.
\newblock In {\em 2018 IEEE International Symposium on Information Theory
  (ISIT)}, pages 996--1000. IEEE, 2018.

\bibitem{gao2020fisher}
L.~Gao, M.~Junge, and N.~LaRacuente.
\newblock Fisher information and logarithmic {S}obolev inequality for
  matrix-valued functions.
\newblock In {\em Annales Henri Poincar{\'e}}, volume~21, pages 3409--3478.
  Springer, 2020.

\bibitem{gaoindex}
L.~Gao, M.~Junge, and N.~LaRacuente.
\newblock Relative entropy for von {N}eumann subalgebras.
\newblock {\em International Journal of Mathematics}, 31(06):2050046, 2020.

\bibitem{GJL21}
L.~Gao, M.~Junge, and H.~Li.
\newblock Geometric approach to complete log-{S}obolev inequalities on matrix
  algebras.
\newblock {\em arXiv preprint}, 2021.

\bibitem{gross1975hypercontractivity}
L.~Gross.
\newblock Hypercontractivity and logarithmic {S}obolev inequalities for the
  {C}lifford-{D}irichlet form.
\newblock {\em Duke Mathematical Journal}, 42(3):383--396, 1975.

\bibitem{gross2014hypercontractivity}
L.~Gross.
\newblock Hypercontractivity, logarithmic {S}obolev inequalities, and
  applications: a survey of surveys.
\newblock {\em Diffusion, quantum theory, and radically elementary
  mathematics}, 47:45--73, 2014.

\bibitem{Halverson2019set}
T.~Halverson.
\newblock Set-partition tableaux, symmetric group multiplicities, and partition
  algebra modules.
\newblock {\em S{\'e}min. Loth. Combin}, 2019.

\bibitem{harrow2009random}
A.~W. Harrow and R.~A. Low.
\newblock Random quantum circuits are approximate 2-designs.
\newblock {\em Communications in Mathematical Physics}, 291(1):257--302, 2009.

\bibitem{hastings2009superadditivity}
M.~B. Hastings.
\newblock Superadditivity of communication capacity using entangled inputs.
\newblock {\em Nature Physics}, 5(4):255--257, 2009.

\bibitem{hiai1991proper}
F.~Hiai and D.~Petz.
\newblock The proper formula for relative entropy and its asymptotics in
  quantum probability.
\newblock {\em Communications in mathematical physics}, 143(1):99--114, 1991.

\bibitem{hiai2016contraction}
F.~Hiai and M.~B. Ruskai.
\newblock Contraction coefficients for noisy quantum channels.
\newblock {\em Journal of Mathematical Physics}, 57(1):015211, 2016.

\bibitem{hoory2006expander}
S.~Hoory, N.~Linial, and A.~Wigderson.
\newblock Expander graphs and their applications.
\newblock {\em Bulletin of the American Mathematical Society}, 43(4):439--561,
  2006.

\bibitem{hunter2019unitary}
N.~Hunter-Jones.
\newblock Unitary designs from statistical mechanics in random quantum
  circuits.
\newblock {\em arXiv preprint arXiv:1905.12053}, 2019.

\bibitem{janvresse2001spectral}
E.~Janvresse et~al.
\newblock Spectral gap for {K}ac's model of {B}oltzmann equation.
\newblock {\em Annals of probability}, 29(1):288--304, 2001.

\bibitem{junge2019stability}
M.~Junge, N.~LaRacuente, and C.~Rouz{\'e}.
\newblock Stability of logarithmic {S}obolev inequalities under a
  noncommutative change of measure.
\newblock {\em arXiv preprint arXiv:1911.08533}, 2019.

\bibitem{junge2010mixed}
M.~Junge and J.~Parcet.
\newblock {\em Mixed-norm inequalities and operator space $ L\_p $ embedding
  theory}.
\newblock American Mathematical Soc., 2010.

\bibitem{junge2018universal}
M.~Junge, R.~Renner, D.~Sutter, M.~M. Wilde, and A.~Winter.
\newblock Universal recovery maps and approximate sufficiency of quantum
  relative entropy.
\newblock In {\em Annales Henri Poincar{\'e}}, volume~19, pages 2955--2978.
  Springer, 2018.

\bibitem{Kac}
M.~{Kac}.
\newblock {Foundations of Kinetic Theory}.
\newblock In {\em Third Berkeley Symposium on Mathematical Statistics and
  Probability}, pages 171--197, Jan. 1956.

\bibitem{kastoryano2016quantum}
M.~J. Kastoryano and F.~G. Brandao.
\newblock Quantum {G}ibbs samplers: the commuting case.
\newblock {\em Communications in Mathematical Physics}, 344(3):915--957, 2016.

\bibitem{kastoryano2013quantum}
M.~J. Kastoryano and K.~Temme.
\newblock Quantum logarithmic {S}obolev inequalities and rapid mixing.
\newblock {\em Journal of Mathematical Physics}, 54(5):052202, 2013.

\bibitem{kosaki1998type}
H.~Kosaki.
\newblock {\em Type III factors and index theory}, volume~43.
\newblock Research Institute of Mathematics, Global Analysis Research Center,
  Seoul, 1998.

\bibitem{kullback1951information}
S.~Kullback and R.~A. Leibler.
\newblock On information and sufficiency.
\newblock {\em The annals of mathematical statistics}, 22(1):79--86, 1951.

\bibitem{laracuente2019quasi}
N.~LaRacuente.
\newblock Quasi-factorization of quantum relative entropy for subalgebras with
  scalar intersection.
\newblock {\em arXiv preprint arXiv:1912.00983}, 2019.

\bibitem{ledoux2011analytic}
M.~Ledoux.
\newblock Analytic and geometric logarithmic {S}obolev inequalities.
\newblock {\em Journ{\'e}es {\'e}quations aux d{\'e}riv{\'e}es partielles},
  pages 1--15, 2011.

\bibitem{lee1998logarithmic}
T.-Y. Lee, H.-T. Yau, et~al.
\newblock Logarithmic {S}obolev inequality for some models of random walks.
\newblock {\em The Annals of Probability}, 26(4):1855--1873, 1998.

\bibitem{lesniewski1999monotone}
A.~Lesniewski and M.~B. Ruskai.
\newblock Monotone {R}iemannian metrics and relative entropy on noncommutative
  probability spaces.
\newblock {\em Journal of Mathematical Physics}, 40(11):5702--5724, 1999.

\bibitem{li2020complete}
H.~Li.
\newblock Complete {S}obolev type inequalities.
\newblock {\em arXiv preprint arXiv:2008.09278}, 2020.

\bibitem{li2020graph}
H.~Li, M.~Junge, and N.~LaRacuente.
\newblock Graph {H}\"{o}rmander systems.
\newblock {\em arXiv preprint arXiv:2006.14578}, 2020.

\bibitem{lieb1973convex}
E.~H. Lieb.
\newblock Convex trace functions and the {W}igner-{Y}anase-{D}yson conjecture.
\newblock {\em Les rencontres physiciens-math{\'e}maticiens de
  Strasbourg-RCP25}, 19:0--35, 1973.

\bibitem{lieb1973proof}
E.~H. Lieb and M.~B. Ruskai.
\newblock Proof of the strong subadditivity of quantum-mechanical entropy.
\newblock {\em Les rencontres physiciens-math{\'e}maticiens de
  Strasbourg-RCP25}, 19:36--55, 1973.

\bibitem{Liese06}
F.~Liese and I.~Vajda.
\newblock On divergences and informations in statistics and information theory.
\newblock {\em IEEE Transactions on Information Theory}, 52(10):4394--4412,
  2006.

\bibitem{lindblad1975completely}
G.~Lindblad.
\newblock Completely positive maps and entropy inequalities.
\newblock {\em Communications in Mathematical Physics}, 40(2):147--151, 1975.

\bibitem{low2010pseudo}
R.~A. Low.
\newblock Pseudo-randomness and learning in quantum computation.
\newblock {\em arXiv preprint arXiv:1006.5227}, 2010.

\bibitem{lu1993spectral}
S.~L. Lu and H.-T. Yau.
\newblock Spectral gap and logarithmic {S}obolev inequality for {K}awasaki and
  {G}lauber dynamics.
\newblock {\em Communications in Mathematical Physics}, 156(2):399--433, 1993.

\bibitem{maasen1988uncertainty}
H.~Maasen and J.~B.~M. Uffink.
\newblock Generalized entropic uncertainty relations.
\newblock {\em Physical Review Letters}, 60:1103--1106, 1988.

\bibitem{Milnor76}
J.~Milnor.
\newblock Curvatures of left invariant metrics on {L}ie groups, 1976.

\bibitem{muller2017monotonicity}
A.~M{\"u}ller-Hermes and D.~Reeb.
\newblock Monotonicity of the quantum relative entropy under positive maps.
\newblock In {\em Annales Henri Poincar{\'e}}, volume~18, pages 1777--1788.
  Springer, 2017.

\bibitem{muller2016entropy}
A.~M{\"u}ller-Hermes, D.~Stilck~Fran{\c{c}}a, and M.~M. Wolf.
\newblock Entropy production of doubly stochastic quantum channels.
\newblock {\em Journal of Mathematical Physics}, 57(2):022203, 2016.

\bibitem{nachtergaele1996spectral}
B.~Nachtergaele.
\newblock The spectral gap for some spin chains with discrete symmetry
  breaking.
\newblock {\em Communications in mathematical physics}, 175(3):565--606, 1996.

\bibitem{nelson1973free}
E.~Nelson.
\newblock The free {M}arkoff field.
\newblock {\em Journal of Functional Analysis}, 12(2):211--227, 1973.

\bibitem{ogawa2005strong}
T.~Ogawa and H.~Nagaoka.
\newblock Strong converse and {S}tein's lemma in quantum hypothesis testing.
\newblock In {\em Asymptotic Theory Of Quantum Statistical Inference: Selected
  Papers}, pages 28--42. World Scientific, 2005.

\bibitem{Oliveira2009}
R.~I. Oliveira.
\newblock On the convergence to equilibrium of {K}ac's random walk on matrices.
\newblock {\em The Annals of Applied Probability}, 19(3), June 2009.

\bibitem{olkiewicz1999hypercontractivity}
R.~Olkiewicz and B.~Zegarlinski.
\newblock Hypercontractivity in noncommutative ${L}_p$ spaces.
\newblock {\em Journal of functional analysis}, 161(1):246--285, 1999.

\bibitem{onorati2017mixing}
E.~Onorati, O.~Buerschaper, M.~Kliesch, W.~Brown, A.~Werner, and J.~Eisert.
\newblock Mixing properties of stochastic quantum hamiltonians.
\newblock {\em Communications in Mathematical Physics}, 355(3):905--947, 2017.

\bibitem{paschke1973inner}
W.~L. Paschke.
\newblock Inner product modules over $\mathcal{B}$*-algebras.
\newblock {\em Transactions of the American Mathematical Society},
  182:443--468, 1973.

\bibitem{perez2006matrix}
D.~Perez-Garcia, F.~Verstraete, M.~M. Wolf, and J.~I. Cirac.
\newblock Matrix product state representations.
\newblock {\em arXiv preprint quant-ph/0608197}, 2006.

\bibitem{Petz91}
D.~Petz.
\newblock On certain properties of the relative entropy of states of operator
  algebras.
\newblock {\em Mathematische Zeitschrift}, 206(1):351--361, 1991.

\bibitem{petz1996monotone}
D.~Petz.
\newblock Monotone metrics on matrix spaces.
\newblock {\em Linear algebra and its applications}, 244:81--96, 1996.

\bibitem{popapimser}
M.~Pimsner and S.~Popa.
\newblock Entropy and index for subfactors.
\newblock {\em Annales scientifiques de l'\'Ecole Normale Sup\'erieure}, Ser.
  4, 19(1):57--106, 1986.

\bibitem{polyanskiy2017strong}
Y.~Polyanskiy and Y.~Wu.
\newblock Strong data-processing inequalities for channels and bayesian
  networks.
\newblock In {\em Convexity and Concentration}, pages 211--249. Springer, 2017.

\bibitem{popa83}
S.~Popa.
\newblock Orthogonal pairs of *-subalgebras in finite von {N}eumann algebras.
\newblock {\em Journal of Operator Theory}, pages 253--268, 1983.

\bibitem{Raginsky16}
M.~Raginsky.
\newblock Strong data processing inequalities and ${\Phi}$-{S}obolev
  inequalities for discrete channels.
\newblock {\em IEEE Transactions on Information Theory}, 62(6):3355--3389,
  2016.

\bibitem{rouze2019concentration}
C.~Rouz{\'e} and N.~Datta.
\newblock Concentration of quantum states from quantum functional and
  transportation cost inequalities.
\newblock {\em Journal of Mathematical Physics}, 60(1):012202, 2019.

\bibitem{smith2008quantum}
G.~Smith and J.~Yard.
\newblock Quantum communication with zero-capacity channels.
\newblock {\em Science}, 321(5897):1812--1815, 2008.

\bibitem{stroock1992equivalence}
D.~W. Stroock and B.~Zegarlinski.
\newblock The equivalence of the logarithmic {S}obolev inequality and the
  {D}obrushin-{S}hlosman mixing condition.
\newblock {\em Communications in mathematical physics}, 144(2):303--323, 1992.

\bibitem{TKRWV}
K.~Temme, M.~J. Kastoryano, M.~B. Ruskai, M.~M. Wolf, and F.~Verstraete.
\newblock The $\chi_2$-divergence and mixing times of quantum {M}arkov
  processes.
\newblock {\em Journal of Mathematical Physics}, 51(12):122201, 2010.

\bibitem{uhlmann1977relative}
A.~Uhlmann.
\newblock Relative entropy and the {W}igner-{Y}anase-{D}yson-{L}ieb concavity
  in an interpolation theory.
\newblock {\em Communications in Mathematical Physics}, 54(1):21--32, 1977.

\bibitem{umegaki1962conditional}
H.~Umegaki.
\newblock Conditional expectation in an operator algebra, iv (entropy and
  information).
\newblock In {\em Kodai Mathematical Seminar Reports}, volume~14, pages 59--85.
  Department of Mathematics, Tokyo Institute of Technology, 1962.

\bibitem{villani2003cercignani}
C.~Villani.
\newblock Cercignani's conjecture is sometimes true and always almost true.
\newblock {\em Communications in mathematical physics}, 234(3):455--490, 2003.

\bibitem{wirth2020complete}
M.~Wirth and H.~Zhang.
\newblock Complete gradient estimates of quantum {M}arkov semigroups.
\newblock {\em arXiv preprint arXiv:2007.13506}, 2020.

\bibitem{wolf2012quantum}
M.~M. Wolf.
\newblock Quantum channels \& operations: Guided tour.
\newblock {\em Lecture notes available at http://www-m5. ma. tum.
  de/foswiki/pub M}, 5, 2012.

\end{thebibliography}
\end{document}